\def\showauthornotes{0}
\def\showkeys{0}
\def\showdraftbox{0}
\def\showcolorlinks{1}
\def\usemicrotype{1}
\def\showfixme{0}
\title{Agreement theorems for high dimensional expanders in the low acceptance regime: the role of covers}
\author{Yotam Dikstein\thanks{Weizmann Institute of Science, ISRAEL. email: yotam.dikstein@weizmann.ac.il.} \;and Irit Dinur\thanks{Weizmann Institute of Science, ISRAEL. email: irit.dinur@weizmann.ac.il. Both authors are supported by Irit Dinur's ERC grant 772839, and ISF grant 2073/21.}}
\date{\today}
\providecommand{\NN}{\mathbb{N}}
\newcommand{\calD}{{\mathcal{D}}}
\newcommand{\calF}{{\mathcal{F}}}
\newcommand{\calG}{{\mathcal{G}}}
\newtheorem{theorem}{Theorem}[section]
\newtheorem*{theorem*}{Theorem}
\newtheorem{proposition}[theorem]{Proposition}
\newtheorem*{proposition*}{Proposition}
\newtheorem{lemma}[theorem]{Lemma}
\newtheorem*{lemma*}{Lemma}
\newtheorem{corollary}[theorem]{Corollary}
\newtheorem*{corollary*}{Corollary}
\newtheorem*{conjecture*}{Conjecture}
\newtheorem*{fact*}{Fact}
\newtheorem*{hypothesis*}{Hypothesis}
\newtheorem{conjecture}[theorem]{Conjecture}
\theoremstyle{definition}
\newtheorem{definition}[theorem]{Definition}
\newtheorem*{definition*}{Definition}
\newtheorem{example}[theorem]{Example}
\theoremstyle{remark}
\newtheorem{claim}[theorem]{Claim}
\newtheorem*{claim*}{Claim}
\newtheorem{remark}[theorem]{Remark}
\newtheorem*{remark*}{Remark}
\newtheorem*{observation*}{Observation}
\newcommand{\savehyperref}[2]{\texorpdfstring{\hyperref[#1]{#2}}{#2}}
\newcommand{\Sref}[1]{\hyperref[#1]{\S\ref*{#1}}}
\newcommand{\Authornote}[2]{{\sffamily\small\color{red}{[#1: #2]}}}
\newcommand{\Authornotecolored}[3]{{\sffamily\small\color{#1}{[#2: #3]}}}
\newcommand{\Authorcomment}[2]{{\sffamily\small\color{gray}{[#1: #2]}}}
\newcommand{\Authorstartcomment}[1]{\sffamily\small\color{gray}[#1: }
\newcommand{\Authorfnote}[2]{\footnote{\color{red}{#1: #2}}}
\newcommand{\Authorfixme}[1]{\Authornote{#1}{\textbf{??}}}
\newcommand{\Authormarginmark}[1]{\marginpar{\textcolor{red}{\fbox{\Large #1:!}}}}
\newcommand{\Authornote}[2]{}
\newcommand{\Authornotecolored}[3]{}
\newcommand{\Authorcomment}[2]{}
\newcommand{\Authorstartcomment}[1]{}
\newcommand{\Authorfnote}[2]{}
\newcommand{\Authorfixme}[1]{}
\newcommand{\Authormarginmark}[1]{}
\newcommand{\brac}[1]{[#1]}
\newcommand{\Brac}[1]{\left[#1\right]}
\newcommand{\abs}[1]{\lvert#1\rvert}
\newcommand{\Abs}[1]{\left\lvert#1\right\rvert}
\newcommand{\card}[1]{\lvert#1\rvert}
\newcommand\sett[2]{\left\{ #1 \left| \; \vphantom{#1 #2} \right. #2  \right\}}
\newcommand{\set}[1]{\{#1\}}
\newcommand{\F}{\mathbb{F}}
\newcommand{\Esymb}{\mathbb{E}}
\newcommand{\Psymb}{\mathbb{P}}
\DeclareMathOperator*{\E}{\Esymb}
\DeclareMathOperator*{\ProbOp}{\Psymb}
\renewcommand{\Pr}{\ProbOp}
\def\one{{\mathbf{1}}}
\def\wt{{\mathrm{wt}}}
\newcommand{\prob}[1]{\Pr \left[ {#1} \right] }
\newcommand{\Prob}[2][]{\Pr_{{#1}}\left[#2\right]} 
\newcommand{\cProb}[3]{\Pr_{{#1}}\left[ #2 \left| \; \vphantom{#2 #3} \right. #3  \right]} 
\newcommand{\ex}[1]{\E\brac{#1}}
\newcommand{\Ex}[2][]{\E_{{#1}}\Brac{#2}}
\newcommand{\ve}{\;\hbox{and}\;}
\newcommand{\textparen}[1]{\text{(#1)}}
\newcommand{\because}[1]{\textparen{because #1}}
\renewcommand{\because}[1]{\textparen{because #1}}
\newcommand{\bits}{\{0,1\}}
\newcommand\bdot\bullet
\DeclareMathOperator{\poly}{poly}
\DeclareMathOperator{\supp}{supp}
\DeclareMathOperator{\dist}{dist}
\renewcommand{\leq}{\leqslant}
\renewcommand{\geq}{\geqslant}
\let\epsilon=\varepsilon
\numberwithin{equation}{section}
\newcommand{\MYstore}[2]{%
  \global\expandafter \def \csname MYMEMORY #1 \endcsname{#2}%
}
\newcommand{\MYload}[1]{%
  \csname MYMEMORY #1 \endcsname%
}
\newcommand{\MYnewlabel}[1]{%
  \newcommand\MYcurrentlabel{#1}%
  \MYoldlabel{#1}%
}
\newcommand{\MYdummylabel}[1]{}
\newcommand{\torestate}[1]{%
  \let\MYoldlabel\label%
  \let\label\MYnewlabel%
  #1%
  \MYstore{\MYcurrentlabel}{#1}%
  \let\label\MYoldlabel%
}
\newcommand{\restatetheorem}[1]{%
  \let\MYoldlabel\label
  \let\label\MYdummylabel
  \begin{theorem*}[Restatement of \prettyref{#1}]
    \MYload{#1}
  \end{theorem*}
  \let\label\MYoldlabel
}
\newcommand{\restatelemma}[1]{%
  \let\MYoldlabel\label
  \let\label\MYdummylabel
  \begin{lemma*}[Restatement of \prettyref{#1}]
    \MYload{#1}
  \end{lemma*}
  \let\label\MYoldlabel
}
\newcommand{\restateprop}[1]{%
  \let\MYoldlabel\label
  \let\label\MYdummylabel
  \begin{proposition*}[Restatement of \prettyref{#1}]
    \MYload{#1}
  \end{proposition*}
  \let\label\MYoldlabel
}
\newcommand{\restateclaim}[1]{%
  \let\MYoldlabel\label
  \let\label\MYdummylabel
  \begin{claim*}[Restatement of \prettyref{#1}]
    \MYload{#1}
  \end{claim*}
  \let\label\MYoldlabel
}
\newcommand{\restatecorollary}[1]{%
  \let\MYoldlabel\label
  \let\label\MYdummylabel
  \begin{corollary*}[Restatement of \prettyref{#1}]
    \MYload{#1}
  \end{corollary*}
  \let\label\MYoldlabel
}
\newcommand{\restatefact}[1]{%
  \let\MYoldlabel\label
  \let\label\MYdummylabel
  \begin{fact*}[Restatement of \prettyref{#1}]
    \MYload{#1}
  \end{fact*}
  \let\label\MYoldlabel
}
\newcommand{\restatedefinition}[1]{
\let\MYoldlabel\label 
\let\label\MYdummylabel 
\begin{definition*}[Restatement of \prettyref{#1}] 
    \MYload{#1} 
\end{definition*} 
\let\label\MYoldlabel 
} 
\newcommand{\restate}[1]{%
  \let\MYoldlabel\label
  \let\label\MYdummylabel
  \MYload{#1}
  \let\label\MYoldlabel
}
\newcommand{\eps}{\epsilon}
\newcommand{\xrightarrowdbl}[2][]{%
  \xrightarrow[#1]{#2}\mathrel{\mkern-14mu}\rightarrow
}
\let\origparagraph\paragraph
\renewcommand{\paragraph}[1]{\origparagraph{#1.}}
\newcommand{\dunion}{\mathbin{\mathaccent\cdot\cup}}
\let\pref=\prettyref
\newcommand{\dir}[1]{\overset{\to}{#1}}
\newcommand{\bproof}[1]{\begin{proof}[Proof of \pref{#1}]}
\newcommand{\eproof}{\end{proof}}
\newcommand{\coboundary}{{\delta}}
\newcommand{\agr}{{\textrm {Agree}}}
\newcommand{\FX}[1]{\mathrm{F}^{#1}\!}
\begin{document}\clearpage\thispagestyle{empty}

\maketitle
\begin{abstract}
Let $X$ be a family of $k$-element subsets of $[n]$ and let $\sett{f_s:s\to\Sigma}{s\in X}$ be an ensemble of local functions, each defined over a subset $s\subset [n]$. Is there a global function $G:[n]\to\Sigma$ such that $f_s = G|_s$ for all $s\in X$ ? An agreement test is a randomized property tester for this question. 
One such test is the V-test, that chooses a random pair of sets $s_1,s_2\in X$ with prescribed intersection size and accepts if $f_{s_1},f_{s_2}$ agree on the elements in $s_1\cap s_2$. 

The low acceptance (or $1\%$) regime is concerned with the situation that the test succeeds with low but non-negligible probability $\agr (\set{f_s}) \geq \eps>0$. A ``classical'' low acceptance agreement theorem says  
   \begin{equation*} \tag{$*$}
    \agr (\set{f_s}) > \eps  \quad \Longrightarrow \quad \exists G:[n]\to\Sigma,\quad \Pr_s[f_s\overset{0.99}{\approx} G|_s]\geq\poly(\eps).
\end{equation*}
Such statements are motivated by PCP questions. The case $X=\binom{[n]}k$ is well-studied and known as ``direct product testing'', which is related to the parallel repetition theorem. Finding sparser families $X$ that satisfy $(*)$ is known as derandomized direct product testing. Prior to this work, the sparsest family satisfying $(*)$ had $|X|\approx  n^{25}$, and we show $X$ with $|X|\approx n^2$. 

We study the general behavior of high dimensional expanders with respect to agreement tests in the low acceptance regime. High dimensional expanders, even  very sparse ones with $|X|=O(n)$, are known to satisfy the high acceptance variant (where $\epsilon =1-o(1)$). It has been an open challenge to analyze the low acceptance regime. 
Surprisingly, topological covers of $X$ play an important role. We show that:
\begin{enumerate}
    \item If $X$ has no connected covers, then $(*)$ holds, provided that $X$ satisfies an additional expansion property, called swap cosystolic expansion.  
    \item If $X$ {\em has} a connected cover, then $(*)$  fails. 
    \item If $X$ has a connected cover (and swap-cosystolic-expansion), 
    we replace $(*)$ by a statement that takes covers into account:
    \begin{align*}\tag{$**$}
        \agr(\set{f_s})> \eps\Longrightarrow \quad &\exists \hbox{ cover }\rho:Y\twoheadrightarrow X,\hbox{ and }G:Y(0)\to\Sigma, \hbox{ such that }\\ 
   &\Pr_{{\tilde s\twoheadrightarrow s}}[f_s \overset{0.99}{\approx} G|_{\tilde s}] \geq\poly(\eps),
    \end{align*} where ${\tilde s\twoheadrightarrow s}$ means that $\rho(\tilde s)=s$. 
    \end{enumerate} 
    The property of swap-cosystolic-expansion holds for quotients of the Bruhat Tits buildings. As a corollary we derive $(*)$ for $X$ being a spherical building, yielding a derandomized family with $|X| \approx n^2$.
    We also derive $(**)$ for LSV complexes $X$, for which $|X|=O(n)$. 
\end{abstract}

\newpage
\tableofcontents
    
\newpage\pagestyle{plain}
\setcounter{page}{1}

\section{Introduction}
Let $G:[n]\to\Sigma$ be a function. We often specify $G$ as a length $n$ string over an alphabet $\Sigma$. Alternatively, $G$ can be specified by providing its restrictions to a pre-determined family of subsets of $[n]$. Namely, given a family of subsets $X= \set{s\subset [n]}$, the function $G$ is represented by a table with $|X|$ entries, where the $s$-th entry specifies the local function $G|_s$. Formally, we have  an ensemble of ``local functions'' $\sett{f_s:s\to\Sigma}{s\in X}$ such that $f_s = G|_s$. Such a representation has built-in redundancy which  potentially can be used for amplifying distances while providing local testability. This is why such encodings are prevalent within PCP reductions, where a global PCP proof is broken up into many possibly-overlapping pieces and each piece is further encoded by an inner PCP gadget (see below for more details and references). The PCP verifier is tasked, among other things, with testing whether the ensemble of pieces corresponds to a coherent global proof. 

The agreement testing question is as follows. Given an ensemble $\sett{f_s}{s\in X}$, test whether there is some global function $G:[n]\to\Sigma$ such that $f_s = G|_s$ for most $s$.
The PCP setting imposes a stringent requirement that the number of queries be as small as possible, with two queries being the golden standard, as dictated by applications to tight hardness of approximation. 
The natural two-query test is to select a pair of overlapping subsets $s_1,s_2$ according to a pre-specified distribution\footnote{Throughout the introduction we suppress the precise distribution of the two-query test. For concreteness, think of the distribution given by selecting two $k$-sets that intersect on some $k'$ elements, and for example $k'=\sqrt k$. The distributions we use are precisely defined in \pref{sec:agreement-tests}.} and to check whether $f_{s_1},f_{s_2}$ agree. Namely, accept iff $f_{s_1}(v) = f_{s_2}(v)$ for all $v\in s_1\cap s_2$. Such a test is called an \emph{agreement test} and we denote the success probability of the test by $\agr(\set{f_s})$. 

Where agreement tests are concerned, there are two main regimes of interest. The ``$99\%$'' or high acceptance regime which is natural in the world of property testing, and the ``$1\%$'' or low acceptance regime which is the regime of interest in the world of PCPs.
\begin{itemize}
    \item \textit{The ``$99\%$'' \textbf{H}igh \textbf{A}cceptance regime}. We are given an ensemble $\set{f_s}_{s\in X}$ that passes the agreement test with probability $1-\epsilon$, for small $\epsilon>0$. We want to conclude that there is a function $G:[n]\to\Sigma$ such that $\set{G|_s}_s \approx \set{f_s}_s$. Formally, an agreement testing theorem describes a distribution $\calD$ over pairs of subsets from $X$ for which
    \begin{equation}\label{eq:99p}\tag{$HA$}
        \agr (\set{f_s})> 1-\epsilon \quad \Longrightarrow  \quad \exists G:[n]\to\Sigma,\quad \prob{f_s = G|_s} \geq 1-O(\varepsilon) .
    \end{equation}
    \item \textit{The ``$1\%$'' \textbf{L}ow \textbf{A}cceptance regime}. Here we assume that the test succeeds with much smaller probability $\epsilon>0$ for a small positive constant $\eps$, say $\eps=0.01$, or even $\eps = o_k(1)$. The hope is that even this modest success probability implies that the ensemble $\set{f_s}$ has some global structure.
    The $1\%$ regime, also called the small-soundness regime, is especially important in PCP settings, where small $\eps$ translates to a larger gap between completeness and soundness of the PCP.  
    We say that $X$ satisfies \eqref{eq:LA} if for any ensemble $\set{f_s}_{s\in X}$, \eqref{eq:LA} holds,
    namely, for every ensemble $\set{f_s}_{s\in X}$, 
    \begin{equation}\label{eq:LA}\tag{$LA$}
    \agr (\set{f_s}) > \eps  \quad \Longrightarrow \quad \exists G:[n]\to\Sigma,\quad \Pr_s[f_s\overset{0.99}{\approx} G|_s]\geq\poly(\eps).
\end{equation}
In words, \eqref{eq:LA} says that $\epsilon$ agreement implies global structure. 
{Of course, to be completely formal we must equip \eqref{eq:LA} with parameters. This is omitted in the introduction and is done in \pref{def:distribution-soundness}.} 

Note that the conclusion allows approximate equality between $f_s$ and $G|_s$ (meaning that they are equal on almost\footnote{The notation \(f_s \overset{1-\zeta}{\approx} G|_s\) throughout the paper means that \(f(v)=G(v)\) on a \(1-\zeta\)-fraction of \(v\in s\).} all of $s$). This is unavoidable, see discussion in \cite{DinurG2008}. 
\end{itemize}

Proving that \eqref{eq:LA} holds is quite non-trivial even when $X=\binom{[n]}{k}$, where we think of $k$ as a sufficiently large constant while $n\to\infty$. This is known as the (symmetrized version of the) ``direct product testing question''. In this case the function $G:[n]\to\Sigma$ is represented by its restriction to all possible $k$-element subsets. The ensemble $\set{G|_s}_s$ is called the (symmetrized) direct product encoding of $G$. This encoding has been studied both in the context of hardness amplification (see, e.g. \cite{IJKW08}), and as a generalization of PCP low degree tests, as initiated by \cite{GolSaf97}. Following a sequence of works, \cite{GolSaf97,DinurR06, DinurG2008,ImpagliazzoKW2012, DinurS14, DinurL2017}, it is known, by now, that $\binom{[n]}k$ satisfies \eqref{eq:LA}. Moreover, the tradeoffs between $k,\epsilon$ and $1-\zeta$, as well as the significant difference between two and three queries, have all been relatively well studied, see further discussion below. 

Goldreich and Safra \cite{GolSaf97} and later Impagliazzo et. al \cite{ImpagliazzoKW2012} studied derandomized direct products. Namely, finding smaller families $X\subset \binom{[n]}k$, such that $|X| \ll n^k$ yet $X$ still satisfies \eqref{eq:99p} or \eqref{eq:LA}. The motivation for this question is that the number of subsets in $X$ controls the efficiency of the encoding. Impagliazzo et. al \cite{ImpagliazzoKW2012} showed that for every $\epsilon>0$ there exists $k$ and a sequence of families $\set{X_n}$ that satisfy \eqref{eq:LA}, where $|X_n|\leq O(n^{c})$ for $c\geq 25$ an absolute constant independent of $k,\epsilon$. It is natural to ask about the limits of such derandomization. Can the same hold true when $|X_n| = O(n)$? 

The recent ``discovery'' of bounded-degree high dimensional expanders by the theoretical computer science community has lead to a hope that these might provide a path towards such results.
In \cite{DinurK2017} (and later improved in \cite{DiksteinD2019}) it was shown that if $X$ is a spectral high dimensional expander (see \pref{sec:preliminaries} for precise definitions), then it supports a $99\%$ agreement theorem as in \eqref{eq:99p}. This gave the first sequence $\set{X_n}$ of linear-size families satisfying \eqref{eq:99p}. 

In this paper we address the same question for \eqref{eq:LA}. Are there families $X_n$ that satisfy \eqref{eq:LA} and whose size is linear in $n$ ? This has been mentioned as an open question in \cite{DinurK2017}. We conjecture the following: 
\begin{conjecture}\label{conj:1}
    For every $\epsilon>0$ there exists an integer $k(\epsilon)$, such that for all $k>k(\epsilon)$, there is a sequence $\set{X_n}_{n\to\infty}$ of families of $k$-element subsets of $n$ that satisfy \eqref{eq:LA} and such that $|X_n| \leq O_{k,\epsilon}(n)$.
\end{conjecture}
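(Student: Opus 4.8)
Since \pref{conj:1} is stated as a conjecture and is explicitly left open by the paper, what follows is the natural program I would pursue toward it, built entirely on the three main results announced in the abstract. The first step is a reduction: by the first main result, any complex $X$ that has \emph{no} nontrivial connected topological cover (i.e.\ is simply connected) and is a swap cosystolic expander with suitable quantitative parameters already satisfies \eqref{eq:LA}. Hence the conjecture reduces to a pure construction problem — produce, for every $\epsilon>0$ and all large $k$, a sequence $\set{X_n}$ of bounded-degree $k$-dimensional simplicial complexes with $|X_n| = O_{k,\epsilon}(n)$ that are simultaneously (a) simply connected and (b) swap cosystolic expanders (which in particular entails the local-spectral expansion used by that machinery). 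The entire weight of the conjecture sits here, because every presently known linear-size high dimensional expander — LSV complexes, Ramanujan complexes, the coset complexes of Kaufman--Oppenheim — is a finite quotient of an affine Bruhat--Tits building by an infinite torsion-free lattice, hence has a huge fundamental group and many connected covers; by the second main result such complexes \emph{fail} \eqref{eq:LA}, so one must either modify them or build something new.

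The first candidate route is surgery: kill $\pi_1$ of an LSV or Ramanujan complex $X$ while keeping degree and size under control. Pick a generating set of $\pi_1(X)$ represented by closed walks of bounded length — such a set of size $O(|X|)$ exists because $X$ is bounded-degree — and attach to each generator a bounded-size contractible simplicial ``cap'' of dimension $k$ glued along that walk, producing $X'$ with $\pi_1(X')=1$, $|X'| = O(|X|)$, and bounded degree. One would then have to prove that local-spectral expansion and, crucially, swap cosystolic expansion survive this modification, presumably via a Mayer--Vietoris / gluing estimate showing that adding a sparse, well-separated family of bounded-size contractible pieces perturbs the relevant (co)systolic norms only slightly, together with the observation that every vertex link changes only inside a bounded ball. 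The second candidate route goes through covers: the third main result already gives a statement in the spirit of $(**)$ on a cover $Y \twoheadrightarrow X$ of a fixed LSV complex, and on a tower of covers with (co)systolic distance growing to infinity the cover obstruction becomes invisible below a scale $r_i \to \infty$, so an agreement test whose two sets meet in a bounded-diameter region should be unable to ``see'' it, forcing genuine global structure and hence \eqref{eq:LA}. The fatal quantitative defect of this second route is that a degree-$D$ cover of a bounded-degree complex has (co)systolic distance only $O(\log D)$, so killing covers up to scale $r$ forces $|X^{(i)}| \geq 2^{\Omega(r)}$ — polynomial or worse, never linear.

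I therefore expect the main obstacle to be exactly this Gromov-type tension: ``no connected covers'' is a global topological demand that, via the covering-space dictionary, wants large systoles and hence large complexes, whereas ``bounded degree and linear size'' pushes in the opposite direction. If the conjecture is true, I would bet the resolution comes from the surgery route with a carefully engineered family of caps — ideally caps that are themselves high dimensional expanders, glued along walks that are spread out so that a spectral-mixing argument keeps them from interacting and keeps swap cosystolic expansion from degrading — or from an entirely new algebraic source of simply connected bounded-degree swap cosystolic expanders, for instance Cayley-type complexes of a sequence of finite groups admitting a bounded presentation with bounded-length relators (so that the attached $2$-cells make the complex simply connected by design) together with enough higher-dimensional structure to be high dimensional expanders. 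In any of these scenarios one must also re-prove swap cosystolic expansion for the new family, since the abstract only establishes it for quotients of Bruhat--Tits buildings; and I would regard the construction of the simply connected linear-size family, not the verification of \eqref{eq:LA} afterwards, as the single hardest step.
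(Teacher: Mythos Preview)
The statement is a conjecture that the paper explicitly leaves open, so there is no proof to compare against; what the paper does offer is a one-paragraph program toward it, and your proposal misses the key quantitative relaxation that program hinges on.

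Your first route demands full simple connectivity. The paper observes that this is overkill: the cover $Y\twoheadrightarrow X$ produced by the main theorem \eqref{eq:CLA} is an $\ell$-cover with $\ell=\poly(1/\eps)$. Hence, for a \emph{fixed} $\eps$, it suffices that $X$ have no connected $\ell$-cover for any $\ell\le\poly(1/\eps)$ --- a finite list of forbidden cover-degrees --- in which case the cover $Y$ is forced to be $\ell$ disjoint copies of $X$ and \eqref{eq:CLA} collapses to \eqref{eq:LA}. This is a condition on the \emph{index spectrum} of $\pi_1(X)$ (no subgroups of index $\le\poly(1/\eps)$), not on $\pi_1$ being trivial. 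Your surgery route, attaching caps to kill all of $\pi_1$, is attacking a strictly harder problem and would have to preserve swap cosystolic expansion through an unbounded number of $2$-cell attachments.

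Your second route is closer in spirit but conflates two different invariants. You argue that in a tower of covers the \emph{systolic distance} grows only like $O(\log D)$, forcing superlinear size. But the relevant quantity is not systole length; it is the minimal index of a proper subgroup of $\pi_1(X)$. A complex can have short systoles and still have no connected $2$-cover (e.g.\ if $\pi_1$ is a simple group with no index-$2$ subgroup). So the ``fatal quantitative defect'' you identify does not apply: one is not trying to push systoles to infinity, only to arrange that $\pi_1(X)$ has no subgroup of index $\le\poly(1/\eps)$, which for fixed $\eps$ is a bounded condition. Whether there exist linear-size swap-cosystolic expanders whose fundamental group has this ``large minimal index'' property is exactly what the paper isolates as the remaining open problem.
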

It turns out that as far as \eqref{eq:LA} is concerned, the story is more complicated. 
Our first result is that spectral high dimensional expansion is not enough.  
We show that many complexes $X$ that satisfy \eqref{eq:99p}, fail to satisfy \eqref{eq:LA} in quite a strong way. In fact, as long as a complex has a connected topological cover, it fails to satisfy \eqref{eq:LA}. We remark that many known sparse high dimensional expanders fall under this category as the constructions of both \cite{LubotzkySV2005a} and \cite{KaufmanO181} give a sequence of high dimensional expanders $\set{X^{(i)}}_{i\to \infty}$ where $X^{(i)}$ is covered by $X^{(i+1)}$. 
\begin{lemma}\torestate{\label{lem:cex}
    Let 
    \(\zeta < \frac{1}{2}\), $k\in \mathbb{N}$, and let \(\lambda \leq \frac{0.99}{k}\). Let \(X\) be a $k$-dimensional \(\lambda\)-two-sided high dimensional expander, and assume $X$ has a connected \(\ell\)-cover for some $\ell>1$ (e.g. $\ell=2$). Then there exists an ensemble of functions \(\mathcal{F} = \sett{f_r:r \to \set{0,1}}{r \in X(k)}\) such that \(Agree(\mathcal{F}) \geq \frac{1}{\ell}\), and yet for every \(G:X(0) \to \set{0,1}\) it holds that 
    \[\Prob[r \in X(k)]{f_r \overset{1-\zeta}{\approx} G|_r} \leq\exp(-\Omega_\zeta(k)).\]}
\end{lemma}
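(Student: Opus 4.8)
The plan is to turn the hypothesised connected $\ell$-cover $\rho\colon Y\twoheadrightarrow X$ into an ensemble that is locally consistent (because a cover is locally trivial) but has no global explanation (because $Y$ is connected, and because it inherits enough expansion from $X$). I will use the standard facts about covers of simplicial complexes: $\rho$ is $\ell$-to-one and measure preserving on $i$-faces for every $i$ (so a uniformly random $k$-face of $Y$ maps to a uniformly random $k$-face of $X$, and each of the $\ell$ lifts of a face $r$ carries $\tfrac1\ell$ of its weight); every face $s\in X$, being a simplex, has exactly $\ell$ pairwise disjoint lifts in $Y$, and for $\emptyset\ne t\subseteq s$ the lift of $t$ determines the lift of $s$, giving a canonical bijection between the lifts of $s$ and the lifts of $t$; and the link of any non-empty face of $Y$ is isomorphic, via $\rho$, to the link of its image in $X$. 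Since $Y$ is connected and its proper links are links of $X$ (hence two-sided $\lambda$-expanders), Oppenheim's trickle-down theorem, in its two-sided form, shows $Y$ is itself a two-sided $\lambda'$-HDX with $\lambda'\le\frac{\lambda}{1-\lambda}=O(1/k)$ and — crucially — $k\lambda'=1-\Omega(1)$, because $k\lambda\le 0.99$.

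\textbf{The ensemble and the agreement bound.} Choose $H\colon Y(0)\to\set{0,1}$ balanced on each fibre: of the $\ell$ lifts of each $v$, half get value $0$ and half get value $1$ (for $\ell=2$, exactly one of each; for odd $\ell$, as balanced as possible). For each $r\in X(k)$ independently pick a uniformly random lift $\tilde r$ of $r$ and set $f_r(v):=H(\tilde r_v)$, where $\tilde r_v$ is the vertex of $\tilde r$ above $v$. Given a test pair $(r_1,r_2)$ with intersection $t$, the string $f_{r_i}|_t$ is $H$ pushed forward along the lift $\tilde t_i:=\tilde r_i\cap\rho^{-1}(t)$ of $t$; as $\tilde r_i$ is uniform over the $\ell$ lifts of $r_i$, the induced $\tilde t_i$ is uniform over the $\ell$ lifts of $t$, and the two are independent, so $\tilde t_1=\tilde t_2$ with probability $1/\ell$ — and then $f_{r_1}|_t=f_{r_2}|_t$, so the test accepts. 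Hence $\E[\agr(\mathcal F)]\ge 1/\ell$ over the choice of lifts, and we fix a choice of lifts with $\agr(\mathcal F)\ge 1/\ell$.

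\textbf{No global function.} Fix any $G\colon X(0)\to\set{0,1}$ and put $A_G:=\set{\tilde v\in Y(0):H(\tilde v)=G(\rho(\tilde v))}$; by construction $f_r\overset{1-\zeta}{\approx}G|_r$ iff $|\tilde r\cap A_G|\ge(1-\zeta)(k+1)$. Since distinct $k$-faces of $X$ were assigned distinct lifts, the $r$'s with this property inject into $\set{\tilde\rho\in Y(k):|\tilde\rho\cap A_G|\ge(1-\zeta)(k+1)}$, so by the measure-preserving, $\ell$-to-one property of $\rho$,
\[
\Prob[r\in X(k)]{f_r\overset{1-\zeta}{\approx}G|_r}\ \le\ \ell\cdot\Prob[\tilde\rho\in Y(k)]{|\tilde\rho\cap A_G|\ge(1-\zeta)(k+1)}.
\]
Because $H$ is balanced on fibres, $A_G$ has density $\mu=1/2$ in $Y(0)$ (for $\ell=2$), which is strictly below $1-\zeta$ since $\zeta<\tfrac12$. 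Now apply the Chernoff/sampling bound for two-sided high dimensional expanders to $Y$: a uniformly random $k$-face samples any vertex set with exponentially small tails at a rate proportional to the spectral gap, giving that the right-hand probability is at most $\exp\!\big(-\Omega\big((1-\zeta-\mu)^2(1-k\lambda')\,k\big)\big)=\exp(-\Omega_\zeta(k))$. Absorbing the factor $\ell$ finishes the proof.

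\textbf{Main obstacle.} Everything but the last step is bookkeeping about covers; the crux is the quantitative sampling bound on $Y$ — one needs genuinely exponential decay in $k$ with a rate depending only on $\zeta$, and this is precisely why the hypothesis reads $\lambda\le 0.99/k$ rather than $\lambda\le 1/k$: the slack $1-k\lambda=\Omega(1)$ must survive trickle-down to $Y$ so that the exponent of the sampling bound stays positive. Two secondary technical points: checking that $Y$ inherits two-sided expansion of the right quality (in particular controlling its most negative eigenvalue after trickle-down), and, for odd $\ell$, a little extra care in the choice of $H$ so that even the adversary's best $G$ yields a set $A_G$ of density below $1-\zeta$; the case $\ell=2$ (where this is automatic, $\mu(A_G)=\tfrac12$, and the cover is normal so one can alternatively use the deck transformation to see that at most one lift of each $r$ can be $(1-\zeta)$-inside the section $A_G$) is the cleanest and already exhibits the phenomenon.
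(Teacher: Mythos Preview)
Your proof is correct and follows essentially the same route as the paper's: push down a fiber-balanced $H$ on $Y$ along independently chosen random lifts, use local triviality of the cover to get agreement $\ge 1/\ell$, and apply trickle-down plus HDX sampling on $Y$ to rule out any global $G$. The paper carries this out explicitly only for $\ell=2$ (relegating general $\ell$ to a remark, with alphabet $[\ell]$), so your flagged concern about odd $\ell$ with a binary alphabet --- where the adversary can force $\mu(A_G)=\lceil\ell/2\rceil/\ell>1/2$ --- is a genuine wrinkle in the lemma as stated, but the argument for $\ell=2$ is exactly right.
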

(We are using standard notation where $X(0)$ are the vertices of $X$, and $X(k)$ are the $k$-faces of $X$). This lemma reveals connected $2$-covers (and more generally connected $\ell$-covers) to be an obstruction to $1\%$ agreement theorems.  The notion of a connected cover is a basic notion in topology, generalizing the more familiar lift of a graph to higher dimensions. In a nutshell, a complex $Y$ is a $\ell$-cover of a complex $X$ whenever there is a $\ell$-to-$1$ homomorphism $\rho: Y \twoheadrightarrow X$ such that the preimage of every face of $X$ is $\ell$ pairwise disjoint faces in $Y$,  (see \pref{sec:preliminaries} for details). 

\pref{lem:cex} shows that high dimensional expanders $X$ that have connected covers necessarily fail to satisfy \eqref{eq:LA}. In contrast, previous works \cite{DinurG2008, ImpagliazzoKW2012, DinurS14, DinurL2017} show that the complete complex $X = \binom{[n]}k$, which is known to be a superb high dimensional expander, does satisfy \eqref{eq:LA}. This seeming contradiction is resolved by observing that the complete complex \emph{has no connected covers}. This suggests that sparser high dimensional expanders that are cover-free might satisfy \eqref{eq:LA}, as we show further below in \pref{cor:spherical-building-list-agreement}.
\\

Our main result is concerned with the more general case. We introduce a new variant of (LA) that {\em is} satisfied by high dimensional expanders, as long as we extend the definition of a global function to allow covers.  We essentially show that the obstruction uncovered in \pref{lem:cex} is the only possible obstruction. Our theorem has an additional condition, however, which is that $X$ is also required to have swap-coboundary-expansion (see \pref{def:swapce}). In a companion paper we prove that many familiar complexes, including spherical buildings as well as the Ramanujan complexes of \cite{LubotzkySV2005b}, have this property (see \pref{thm:DD-faces} below). 

\begin{theorem}[Main Theorem, see \pref{thm:main} for a formal statement]\label{thm:abstract-general} Let $k\in \mathbb{N}$, and let $\eps >\Omega(1/\log k)$. Let $d>k$ be sufficiently large and let $X$ be a $d$-dimensional high dimensional expander with sufficiently good swap-cosystolic-expansion. 
    Let $\sett{f_s:s\to\Sigma}{s\in X(k)}$ be an ensemble of local functions on $X(k)$. 
\begin{equation}\label{eq:CLA}\tag{$CLA$}
    \agr (\set{f_s}) > \eps  \quad \Longrightarrow \quad \exists Y\xrightarrowdbl{\rho} X, \exists G:Y(0)\to\Sigma,\quad \Pr_s[f_s\hbox{ is explained by }G]\geq\poly(\eps).
\end{equation} 
where $\rho:Y\to X$ is a $\ell=\poly(1/\epsilon)$ covering map. 
\end{theorem}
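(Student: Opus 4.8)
The plan is to construct the cover $\rho\colon Y\twoheadrightarrow X$ and the function $G\colon Y(0)\to\Sigma$ directly from the agreement data, following the two standard moves of a $1\%$ agreement argument --- first \emph{extract local structure} from the hypothesis $\agr(\set{f_s})>\eps$, then \emph{globalize} it --- with the new phenomenon that globalization lands not on $X$ but on a cover, the obstruction to descending to $X$ being precisely a nontrivial cohomology class (consistent with the counterexample in \pref{lem:cex}). Concretely I would organize the argument into four phases: (A) pass from $\eps$-agreement on $X(k)$ to a short \emph{list} of candidate partial global functions on the links of small faces; (B) assemble these lists into a datum consistent on overlaps up to $o(1)$ error, i.e.\ an approximate $1$-cocycle of the relevant swap complex; (C) invoke swap-cosystolic-expansion (\pref{def:swapce}) to round this approximate cocycle to an exact one and read off $(Y,G)$; (D) track error through (A)--(C) to certify the $\poly(\eps)$ explanation probability in \eqref{eq:CLA} and the bound $\ell=\poly(1/\eps)$ on the number of sheets.

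For phase (A) I would condition the two-query test on a random ``seed'' face $t$ of a suitable intermediate dimension, so that the conditional test becomes the agreement test of the restricted ensemble $\set{f_s|_{X_t}}$ inside the link $X_t$, itself a high dimensional expander. The sampling/spectral properties of $X$ imply that $\eps$-agreement is inherited in expectation by the restricted ensemble, so a $\poly(\eps)$ fraction of seeds $t$ retain restricted agreement $\geq\poly(\eps)$; on each such link one bootstraps --- as in the $1\%$ direct-product arguments of \cite{ImpagliazzoKW2012,DinurS14,DinurL2017}, eventually reaching a high-acceptance regime where the HDX agreement theorems of \cite{DinurK2017,DiksteinD2019} apply --- to extract a list $\cL_t$ of partial global functions on $X_t(0)$ with $|\cL_t|\leq\poly(1/\eps)$, such that a $\poly(\eps)$ fraction of the $f_s$ with $s\supseteq t$ are $(1-o(1))$-close to some member of $\cL_t$. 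Preserving every ``$\poly(\eps)$'' and ``$1-o(1)$'' through these restrictions, without the constants degrading with $k$, is the delicate accounting here, and is what forces the hypotheses $\eps>\Omega(1/\log k)$ and $d\gg k$.

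For phases (B)--(C) I would first render the family $\bigcup_t\cL_t$ \emph{pairwise consistent}: using a third random face and another round of agreement plus expansion, show that for most overlapping popular $t,t'$ a member of $\cL_t$ matches a member of $\cL_{t'}$ on $X_t\cap X_{t'}$ off an $o(1)$ fraction, hence (after light pruning) exactly. The bookkeeping of which list element over $t$ matches which over $t'$ is a $1$-cochain of the swap/faces complex valued in the symmetric group on the list --- the classifying datum of a degree-$\ell$ cover --- and it is a cocycle up to the $o(1)$ error. Swap-cosystolic-expansion now does two things: it lets us round this approximate cocycle to a genuine cocycle $z$, and it classifies the outcome by whether the class $[z]$ vanishes. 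If it does, $z$ is a coboundary and trivializes to an honest $G\colon X(0)\to\Sigma$ (the no-cover case $\ell=1$); if not, $[z]$ is the monodromy of a connected cover $\rho\colon Y\twoheadrightarrow X$ on which $z$ pulls back to a coboundary, and $G$ is the induced function on $Y(0)$. The number of sheets $\ell$ is at most the list size, hence $\poly(1/\eps)$.

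Phase (D) is pure accounting: the set of top faces $s$ that contain a popular seed $t$, have $f_s|_{X_t}$ matching the list member selected by $z$, and follow the branch recorded by $z$, has measure $\geq\poly(\eps)$; for each such $s$ there is a lift $\tilde s\twoheadrightarrow s$ in $Y$ with $f_s\overset{0.99}{\approx}G|_{\tilde s}$, which is exactly \eqref{eq:CLA} (equivalently $(\ast\ast)$ of the abstract, formalized in \pref{thm:main}). I expect phase (C) to be the genuine obstacle: converting an only-approximately-consistent family of local decodings into an \emph{exactly} consistent cocycle while simultaneously bounding its number of sheets cannot be done by elementary, purely local arguments --- this is exactly where ordinary spectral expansion is insufficient (\pref{lem:cex}) and where swap-cosystolic-expansion becomes indispensable --- with the error management of phase (A), and its $\eps=\Omega(1/\log k)$ price, a close second.
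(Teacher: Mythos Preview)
Your high-level architecture (lists $\to$ approximate cocycle on the faces complex $\to$ exact cocycle via swap cosystolic expansion $\to$ cover $\to$ global function) is exactly the one the paper uses, so phase-by-phase there is no fundamentally different route. A few points, however, are not quite right and one of them is a genuine gap.

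First, a small but recurring confusion: the local decoding does not take place in the link $X_t$. For a face $s\in X(d_1)$ you restrict to the sub-ensemble $\{f_r: r\in X(k),\, r\subseteq s\}$, which is an ensemble on the \emph{complete complex} on the vertices of $s$; the list members $L_s^i$ are functions $s\to\Sigma$, not functions on $X_s(0)$. The agreement theorem invoked locally is the one for the complete complex (\cite{ImpagliazzoKW2012,DinurG2008}), not the HDX $99\%$ theorem; the latter is only used at the very end, on $Y$, to produce $G$.

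The substantive gap is in your quantitative claims in phases~(A)--(B). You write that ``a $\poly(\eps)$ fraction of seeds $t$ retain restricted agreement'' and that the matching will be ``a cocycle up to $o(1)$ error''. In the paper's argument one needs, and proves, the much stronger statement that a $1-\exp(-\poly(\eps)d_1/k)$ fraction of faces $s$ carry a list of \emph{fixed} length $\ell$, and that for a $1-\exp(-\poly(\eps)d_1/k)$ fraction of pairs $s\subset t$ the restriction map induces a \emph{bijection} $\pi_{t,s}\colon[\ell]\to[\ell]$ between the lists. Showing that the list lengths are equal and that the correspondence is surjective (not merely a partial injective matching) is the hardest part of the paper (\pref{lem:from-1-percent-to-list-agreement} and \pref{prop:good-parameters}); it requires a careful iterative choice of the thresholds $(\tau,\zeta,\gamma)$ and a potential argument to rule out ``new'' list members appearing upon restriction. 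Only with this $1-o(1)$ control does $\wt(\delta\psi)$ beat the swap cosystolic constant $\beta=\exp(-\alpha d_1/k)$ so that \pref{cor:good-cocycle} applies. With merely $\poly(\eps)$ control, the approximate-cocycle error is $O(1)$ and cosystolic expansion says nothing.

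Second, the cocycle you obtain lives on $\FX{d_1}X$, hence yields a cover $\widetilde{\FX{d_1}X}\twoheadrightarrow \FX{d_1}X$, not a cover of $X$. Descending to an $\ell$-cover $Y\twoheadrightarrow X$ with $\FX{d_1}Y\cong\widetilde{\FX{d_1}X}$ is a separate step (\pref{lem:Y-is-an-honest-to-god-cover-of-X}) that uses the additional well-connectedness hypothesis (\pref{def:well-connected-complex}): the face-complexes $\FX{d_1}X_v$ of vertex links must be simply connected so that the preimage of each $\FX{d_1}X_v$ splits into $\ell$ sheets, which then serve as the fibre over $v$.

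Finally, a minor correction on the dichotomy in (C): the alternative is not ``$\ell=1$ versus $\ell>1$''. The number of sheets $\ell$ is always the common list length, hence $\poly(1/\eps)$; what distinguishes the coboundary case is that the resulting $\ell$-cover is \emph{disconnected} (trivial), so $G$ decomposes into $\ell$ honest functions on $X(0)$ and one recovers \eqref{eq:LA}.
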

Here the phrase ``$f_s$ is explained by $G$'' informally means that $G|_{s'}\approx f_s$ for some $s'\in Y$ that covers $s$. More formally, the covering map $\rho$ gives a bijection from $s'\in Y(k)$ to $s\in X(k)$, and by $G|_{s'}\approx f_s$ we mean that for (almost) every $v\in s'$, $G(v) = f_s(\rho(v))$. 

This theorem should be viewed as a new type of structure theorem, replacing \eqref{eq:LA} with its cover version, \eqref{eq:CLA}: Every $\epsilon$-agreeing ensemble on $X$ comes from a global function $G$, perhaps not on the vertices of $X$, but on the vertices of a cover $Y$ of $X$, which is not much larger than $X$.

One might wonder whether any progress has been made by replacing the quest for complexes that satisfy $1\%$ agreement theorems with complexes having swap-coboundary-expansion. In a companion paper \cite{DiksteinD2023b}
we show that there are important families of complexes that have swap-coboundary-expansion.

\begin{theorem}[Main theorem in \cite{DiksteinD2023b}]\label{thm:DD-faces}
    Let $X$ be a \(d\)-dimensional simplicial whose links are spherical buildings for a sufficiently large prime \(q\). Then for every \(d_1 \leq d\), \(X\) is a \(\exp(-\sqrt{d_1})\)-swap cosystolic expander. 
\end{theorem}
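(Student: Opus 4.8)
The plan is to obtain swap-cosystolic-expansion of $X$ by decoupling it into a \emph{local} statement and a \emph{local-to-global} statement. Unpacking the definition, swap-cosystolic-expansion of $X$ at level $d_1$ is cosystolic expansion (over $\mathbb{F}_2$) of the faces complex $\FX{d_1}$ of $X$ — the complex whose vertices are the $d_1$-faces of $X$ and whose higher faces are tuples of $d_1$-faces whose union is a face of $X$ — together with the analogous statement for the faces complexes sitting inside every link of $X$. So the goal is to show that each such faces complex is a cosystolic expander with expansion and cosystole parameters $\exp(-O(\sqrt{d_1}))$. I would prove this via the standard cosystolic-expansion template (Evra--Kaufman, Kaufman--Kazhdan--Lubotzky): if every link of a complex $Z$ is a $\beta$-coboundary expander and $Z$ is a sufficiently good one-sided local spectral expander, then $Z$ is a cosystolic expander with parameter $\mathrm{poly}(\beta)$ and exponentially small cosystoles. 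Thus it suffices to verify the spectral hypothesis and to establish $\beta \ge \exp(-O(\sqrt{d_1}))$ coboundary expansion for the links of $\FX{d_1}(X)$, coherently across all levels.

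The spectral side is the easy half. Since the links of $X$ are spherical buildings, Garland's method together with Oppenheim's trickling-down gives that $X$ is a $\lambda(q)$-two-sided high dimensional expander with $\lambda(q)\to 0$ as $q\to\infty$; taking $q$ large we may assume $\lambda$ is as small as needed. The known spectral bounds for the swap/faces walks (from the authors' earlier work, and in the spirit of Alev--Jeronimo--Tulsiani bounds) then transfer this expansion to $\FX{d_1}(X)$ and to the faces complexes appearing in links of $X$, so the spectral hypothesis of the local-to-global theorem is met with room to spare at every level.

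The heart of the argument is the local part: showing $\FX{d_1}(Z)$ is an $\exp(-O(\sqrt{d_1}))$-coboundary expander for $Z$ a spherical building. I would do this in two moves. First, spherical buildings over $\mathbb{F}_q$ are coboundary expanders with a dimension-free (or at worst slowly decaying) expansion constant once $q$ is large; the natural route is the ``cone'' method — use the contractibility/apartment structure of the building to construct an explicit cone-filling of any cocycle and bound the filling cost, in the spirit of Lubotzky--Meshulam--Mozes and subsequent work on coboundary expansion from symmetry. Second, transfer coboundary expansion from $Z$ to its faces complex $\FX{d_1}(Z)$: the complex $\FX{d_1}$ is built from $\FX{d_1-1}$ by adjoining one more face-coordinate, and each such step admits a chain homotopy (a ``color restriction'') that preserves coboundary expansion up to a constant factor. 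Iterating one coordinate at a time costs $\exp(-O(d_1))$; grouping the coordinates into $O(\sqrt{d_1})$ blocks of size $O(\sqrt{d_1})$ and interleaving the restriction with a partite-to-general reduction brings the total loss down to $\exp(-O(\sqrt{d_1}))$.

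Finally I would assemble the pieces: feed the $\exp(-O(\sqrt{d_1}))$ coboundary expansion of all links of $\FX{d_1}(X)$ and the spectral bound into the local-to-global cosystolic theorem to conclude that $\FX{d_1}(X)$ is a cosystolic expander with the claimed parameters, the cosystole bound coming out at $\exp(-O(\sqrt{d_1}))$ as well. The main obstacle I anticipate is the second move of the local part — transferring coboundary expansion to the faces complex while losing only a $\sqrt{d_1}$-type factor, and, crucially, making the induction close up: the links inside $\FX{d_1}(X)$ are themselves faces complexes of links of spherical buildings, so the hypothesis ``all links are good coboundary expanders'' must be proved simultaneously for the whole tower of levels rather than one complex at a time. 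A secondary technical chore is checking that the Evra--Kaufman / Kaufman--Kazhdan--Lubotzky machinery, originally written for ordinary simplicial cochains, goes through verbatim for swap cochains on $\FX{d_1}(X)$.
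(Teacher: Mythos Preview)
The paper does not contain a proof of this theorem. It is explicitly stated as the main theorem of the companion paper \cite{DiksteinD2023b}, and the authors write: ``the proof of swap cosystolic expansion gives rise to other significant challenges so we defer the proof of the swap cosystolic expansion to a companion paper.'' So there is no proof in this paper to compare your proposal against; the present paper only \emph{uses} the result (in \pref{thm:spherical-building-and-LSV-complexes-are-suitable}) as a black box.

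That said, one technical slip in your proposal is worth flagging: you write that swap-cosystolic expansion is ``cosystolic expansion (over $\mathbb{F}_2$) of the faces complex.'' In this paper the relevant notion (\pref{def:def-of-cosyst-exp} and \pref{def:swapce}) is cosystolic expansion with coefficients in $Sym(\ell)$, i.e.\ non-abelian $1$-cochains, not $\mathbb{F}_2$. This matters because the Evra--Kaufman / Kaufman--Kazhdan--Lubotzky local-to-global machinery and the cone method you invoke were developed for abelian coefficients; the non-abelian setting requires additional care, and you should not assume the arguments go through verbatim. Your proposal does acknowledge this as a ``secondary technical chore,'' but given the coefficient group is $Sym(\ell)$ throughout, it is more central than secondary.

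Beyond that, your high-level decomposition---spectral expansion of $\FX{d_1}X$ from swap-walk bounds, coboundary expansion of links via cone methods on spherical buildings, and a local-to-global cosystolic theorem to glue them---is a plausible architecture, and the paper's brief remarks (e.g.\ in the proof of \pref{thm:spherical-building-and-LSV-complexes-are-suitable}) are consistent with the links-are-coboundary-expanders picture. But whether the companion paper actually follows this route, and how it handles the $\sqrt{d_1}$ loss and the non-abelian coefficients, cannot be determined from the text provided.
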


The bounded-degree high dimensional expanders of \cite{LubotzkySV2005a, LubotzkySV2005b} satisfy the requirement of \pref{thm:DD-faces} and therefore also of \pref{thm:abstract-general}. Thus, we get the following lift-decoding theorem.

\begin{corollary}\label{cor:general} Let $k\in \mathbb{N}$, and let $\eps >\Omega( 1/\log k)$. Let $d>k$ be sufficiently large and let $X$ be a $d$-dimensional $\lambda=\frac{1}{d^2}$ high dimensional expander, whose vertex links are spherical buildings.

For any ensemble $\set{f_s}_{s\in X(k)}$ that satisfies 
   $\agr(\set{f_s})> \eps$, there must exist a $\poly(1/\eps)$-cover $\rho:Y\twoheadrightarrow X$, and a global function $G:Y(0)\to\Sigma$, such that 
   \[\Pr_{s}[f_s\hbox{ is explained by }G]\geq\poly(\eps).\]
\end{corollary}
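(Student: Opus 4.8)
The plan is to obtain \pref{cor:general} by combining the abstract main theorem \pref{thm:abstract-general} (in its precise form \pref{thm:main}) with the swap-cosystolic-expansion estimate \pref{thm:DD-faces}; the corollary is exactly the concrete instantiation of the main theorem on the one family of complexes for which we can certify its extra hypothesis. The first step is to upgrade the hypothesis ``vertex links are spherical buildings'' to ``all links of all dimensions are spherical buildings'': a link of a link of $X$ is again a link of $X$, and a link inside a spherical building is again a spherical building, so by induction on the face dimension every link of $X$ is a spherical building. This is the form required by \pref{thm:DD-faces}, which we then apply to conclude that for every $d_1\le d$ the complex $X$ is an $\exp(-\sqrt{d_1})$-swap cosystolic expander (here we use that the spherical buildings in question are over a sufficiently large prime $q$, which is part of the hypothesis invoked through \pref{thm:DD-faces} and which the bounded-degree constructions of \cite{LubotzkySV2005a,LubotzkySV2005b} can be taken to satisfy).

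The second step is to verify that the two quantitative inputs now in hand — spectral expansion $\lambda=1/d^2$ and swap-cosystolic-expansion $\exp(-\sqrt{d_1})$ — meet the thresholds imposed by \pref{thm:main}. The key point is the quantifier order: in \pref{cor:general} one fixes $k$, then $\eps>\Omega(1/\log k)$, and only afterwards takes $d>k$ ``sufficiently large''. The thresholds that \pref{thm:main} puts on the spectral gap and on the swap-cosystolic-expansion are functions of $k$ and $\eps$ alone, whereas $1/d^2$ and $\exp(-\sqrt{d_1})$ both tend to $0$ as $d\to\infty$. Hence, choosing $d$ (and a suitable $d_1\le d$, say $d_1=d$) large enough as a function of $k$ and $\eps$ forces both quantities below the required thresholds, and all the ``sufficiently large'' clauses — the one in \pref{thm:abstract-general}, the one in \pref{thm:DD-faces}, and the largeness of $q$ — can be met simultaneously, since each constrains a parameter that is chosen later in this same order.

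With the hypotheses of \pref{thm:abstract-general} verified for $X$, the conclusion is immediate: given any ensemble $\set{f_s}_{s\in X(k)}$ with $\agr(\set{f_s})>\eps$, the implication \eqref{eq:CLA} produces a covering map $\rho:Y\twoheadrightarrow X$ of degree $\ell=\poly(1/\eps)$ together with a global function $G:Y(0)\to\Sigma$ such that $\Pr_s[f_s\text{ is explained by }G]\ge\poly(\eps)$, which is precisely the statement of \pref{cor:general}.

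I do not expect a genuine obstacle, since this is a corollary in the literal sense. The only place warranting care is the bookkeeping in the second step: checking that the precise dependence of the thresholds in \pref{thm:main} on $(k,\eps)$, the $\exp(-\sqrt{d_1})$ decay from \pref{thm:DD-faces}, and the largeness requirement on $q$ are mutually compatible under the quantifier order ``$k$, then $\eps$, then $d$, then $q$, then $n\to\infty$'' implicit in \pref{cor:general} — in particular that nothing forces $d$ to grow with $n$ or $q$ to grow with $d$ in a way that would break the bounded-degree conclusion $|X(k)|=O_{k,\eps}(n)$.
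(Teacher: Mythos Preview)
Your approach is essentially the paper's: verify that the hypotheses of the main theorem (\pref{thm:main}) hold for such $X$ via \pref{thm:DD-faces}, then invoke it. The paper packages this verification into \pref{thm:spherical-building-and-LSV-complexes-are-suitable}, which shows $X$ is \emph{suitable} (\pref{def:suitable-complexes}); note that suitability also requires well-connectedness and the clique-complex property, which you did not mention but which are checked there.

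One bookkeeping slip: you cannot take $d_1=d$. The suitability condition requires $k^3\le d_1\le d\exp(-\alpha d_1/k)$, so $d_1$ must be much smaller than $d$; the paper takes $d_1=k^3$ and then $d$ large enough. With this choice the comparison $\exp(-\sqrt{d_1})\ge \exp(-\alpha d_1/k)$ needed to match \pref{thm:DD-faces} against the threshold in \pref{def:suitable-complexes} goes through once $k$ is large as a function of $\eps$. Also, the bounded-degree remark at the end is extraneous: \pref{cor:general} makes no size claim on $X$; that is the content of \pref{cor:existance-of-sparse-agreement-tests}.
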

As a special case of this corollary, we get a new $1\%$ agreement theorem for spherical buildings. These are relatively sparse simplicial complexes that are known to have no connected covers. The only possible $\ell$-cover of such a complex $X$ consists of $\ell$ disjoint isomorphic copies of $X$. In this case, the global function on $Y(0) = X(0)\times [\ell]$ can be interpreted as a list of $\ell$ global functions on $X(0)$.
\begin{corollary}[Agreement for spherical buildings]\label{cor:spherical-building-list-agreement}
    Let $\epsilon>0$, and let $k>\exp(\poly(1/\epsilon))$ be an integer. Let $d>k$ be sufficiently large and let $X$ be a $d$-dimensional spherical building that is a $\frac{1}{d^2}$ high dimensional expander. Then $X$ satisfies \eqref{eq:LA}.
\end{corollary}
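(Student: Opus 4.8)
The plan is to derive Corollary~\ref{cor:spherical-building-list-agreement} from Corollary~\ref{cor:general} (equivalently, from the Main Theorem~\ref{thm:abstract-general}), by showing that for a spherical building the cover $Y$ produced by that theorem is forced to be trivial, so that \eqref{eq:CLA} collapses to \eqref{eq:LA} after a pigeonhole step.

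First I would verify the hypotheses of Corollary~\ref{cor:general}. A $d$-dimensional spherical building that is a $\frac{1}{d^2}$-high dimensional expander is, in particular, a $d$-dimensional $\lambda=\frac{1}{d^2}$ HDX, and the link of every vertex of a spherical building is again a spherical building (standard building theory), so the requirement ``vertex links are spherical buildings'' holds; this is also exactly the hypothesis under which Theorem~\ref{thm:DD-faces} supplies the swap-cosystolic-expansion that Theorem~\ref{thm:abstract-general} needs. The quantitative condition $\epsilon>\Omega(1/\log k)$ follows from the assumption $k>\exp(\poly(1/\epsilon))$: taking logarithms gives $\log k>\poly(1/\epsilon)\geq \Omega(1/\epsilon)$, hence $\epsilon>\Omega(1/\log k)$. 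The remaining condition that $d>k$ is sufficiently large is assumed directly.

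Next, given an ensemble $\set{f_s}_{s\in X(k)}$ with $\agr(\set{f_s})>\epsilon$, apply Corollary~\ref{cor:general} to obtain a covering map $\rho:Y\twoheadrightarrow X$ of degree $\ell=\poly(1/\epsilon)$ and a function $G:Y(0)\to\Sigma$ with $\Pr_s[f_s\text{ is explained by }G]\geq\poly(\epsilon)$. Now I invoke the topological input: a spherical building of rank larger than $2$ is simply connected (by the Solomon--Tits theorem it is homotopy equivalent to a wedge of $(d-1)$-spheres), and therefore has no connected topological covers; since $d>k>\exp(\poly(1/\epsilon))\gg 2$, this applies. Hence $Y$ is a disjoint union of $\ell$ isomorphic copies of $X$, say $Y=X\times[\ell]$, and $G$ is equivalent to a list $G_1,\dots,G_\ell:X(0)\to\Sigma$ of global functions on $X$ itself. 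Because every $\tilde s\in Y(k)$ lies entirely inside a single copy, and $\rho$ restricts to a bijection $\tilde s\to s$, the phrase ``$f_s$ is explained by $G$'' unwinds to: there is an index $i(s)\in[\ell]$ with $f_s\overset{0.99}{\approx} G_{i(s)}|_s$.

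Finally I would pigeonhole over the at most $\ell$ possible values of $i(s)$: there is a single $i^\star\in[\ell]$ with
\[
  \Pr_s\bigl[f_s\overset{0.99}{\approx} G_{i^\star}|_s\bigr]\;\geq\;\frac{\poly(\epsilon)}{\ell}\;=\;\poly(\epsilon),
\]
using $\ell=\poly(1/\epsilon)$. Setting $G=G_{i^\star}$ yields exactly the conclusion of \eqref{eq:LA}, so $X$ satisfies it. The only genuinely non-routine points are (i) recording that ``explained by a disconnected cover'' is literally ``agrees up to a $1\%$ fraction with one function from a short list'', which is where the triviality of covers of a simply connected complex enters, and (ii) the appeal to simple connectivity of high-rank spherical buildings, which I would cite as a known fact (as the text already asserts) rather than reprove. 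Everything else is bookkeeping of the $\poly$ factors, all of which remain polynomial in $\epsilon$ since $\ell$ is itself $\poly(1/\epsilon)$.
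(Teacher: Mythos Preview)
Your proposal is correct and follows essentially the same route as the paper: apply the main cover-agreement result (Corollary~\ref{cor:general}/Theorem~\ref{thm:main}) to obtain an $\ell$-cover $Y\to X$ and a global function $G$ on $Y$, use simple connectivity of the spherical building to force $Y$ to be $\ell$ disjoint copies of $X$, and then pigeonhole over the resulting list $G_1,\dots,G_\ell$ to extract a single global function witnessing \eqref{eq:LA}. The only cosmetic difference is that the paper's formal proof (Corollary~\ref{cor:spherical-building-agreement-expansion}) phrases the first step via suitability and Theorem~\ref{thm:main} rather than via Corollary~\ref{cor:general}, and cites \cite{DiksteinD2023} for simple connectivity where you cite Solomon--Tits; the content is the same.
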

This puts spherical buildings as the current best known derandomized direct product encoding that satisfies \eqref{eq:LA}, improving the size of the family from $|X_n| \leq O(n^{25})$ in \cite{ImpagliazzoKW2012} to $|X_n| = O(n^2)$.

Finally, we revisit \pref{conj:1}. There is still some hope for a positive answer via high dimensional expanders. Even when $X$ has connected covers,  there is a quantitative aspect connecting the agreement parameter $\eps$ and the size $\ell$ of the cover, which may allow us to move from \eqref{eq:CLA} to \eqref{eq:LA}. 
I.e., if $X$ has the additional property that all of its $\ell$-covers for $\ell < \poly(1/\eps)$ are necessarily disconnected, then we immediately get conclusion a la \eqref{eq:LA}. 
This motivates the study and construction of complexes that are free of connected $\ell$-covers for \emph{small} $\ell$.  A construction of such a family of complexes would give a linear-size family of complexes supporting \eqref{eq:LA}. %
\subsection*{Covers and Agreement} 
Let us describe how topological covers give rise to the counter-example driving \pref{lem:cex}. 
\begin{enumerate}
    \item Let $X$ be a simplicial complex, and let $Y\xrightarrowdbl{\rho} X$ be a $\ell$-cover with $\ell=2$. Explicitly, $Y$ is a simplicial complex and we have a $2$-to-$1$ map $\rho:Y(0)\to X(0)$ such that the image of each $Y$-face is a $X$-face, and the preimage of each $X$-face is two disjoint $Y$-faces. 
    
    \item Fix a global function $H:Y(0)\to\bits$ with the property that every pair of vertices  $(v,0),(v,1)\in Y(0)$ that map to the same $v\in X(0)$, are given distinct values $H(v,0)\neq H(v,1)$. 

    \item Every face $s\in X(k)$ is covered by two faces $s_1, s_2 \in Y(k)$, i.e. \(\rho({s}_i)=s\).
    Thus, we naturally get two possible assignments to $f_s$ that biject down from either $H|_{s_1}$ or $H|_{s_2}$. By construction, these two assignments are negations of one another. 

    \begin{figure}
    \centering
    \includegraphics[scale=0.4]{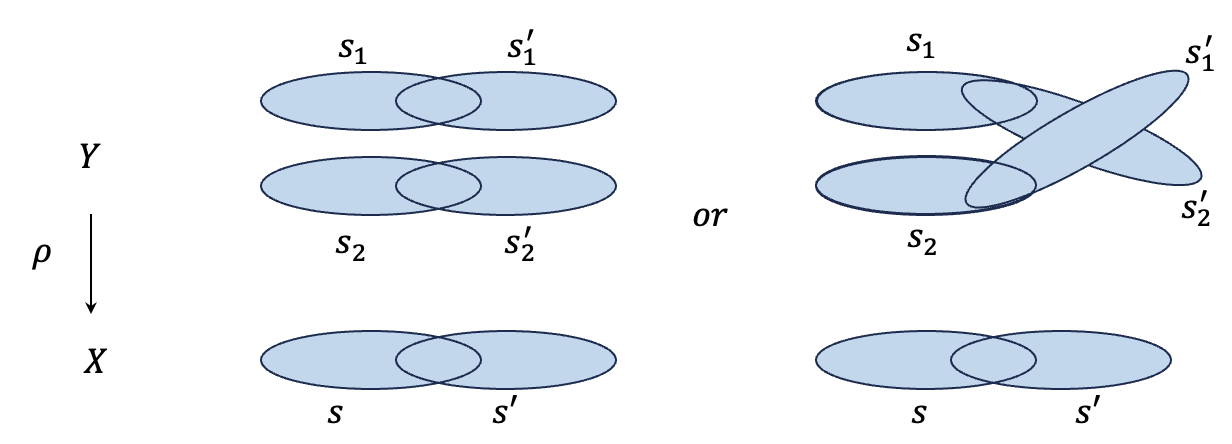}
    \caption{Possible ways two intersecting sets $s,s'\in X$ are covered by $Y$.}
    \label{fig:cover-intro}
\end{figure}
    Moreover, suppose, as in Figure \ref{fig:cover-intro},  that $s,s'\in X(k)$ intersect, and let $s_1,s_2\in Y(k)$ cover $s$ and $s'_1,s'_2\in Y(k)$ cover $s'$. Then $s_1$ intersects exactly one of $s'_1,s'_2$, and similarly for $s_2$. Namely, the intersection pattern gives a bijection between $\set{s_1,s_2}$ and $\set{s'_1,s'_2}$.
    \item Our final ensemble for $X$ is obtained by choosing, for each $s$ independently at random, one of the two possible assignments. For any fixed pair of intersecting sets $s,s'\in X(k)$, the probability that $f_{s}$ agrees with $f_{s'}$ is exactly $1/2$. Thus, in total, the average agreement probability of the entire ensemble $\set{f_s}$ is about $1/2$. Moreover, sampling arguments (relying on the high dimensional expansion of $X,Y$) will show that no global function $G:X(0)\to\bits$ can agree with more than $\exp(-\Omega(k))$ of the subsets.
\end{enumerate}
What this construction shows is that whenever a complex has a connected cover, it cannot satisfy \eqref{eq:LA}, but it could still possibly satisfy \eqref{eq:CLA}.
More details can be found in \pref{sec:proof-of-counterexample}.

\subsection*{Proof of main theorem - Overview}

We next turn to describing the technical ingredients that go into the proof of our main theorem,  \pref{thm:abstract-general}, which essentially reverses the construction described just above. 
Like many proofs in the world of high dimensional expansion, our proof has a local-to-global structure. We essentially show how to ``lift'' agreement theorems from the complete complex to high dimensional expanders. 

An agreement test on the complete complex is a distribution over $k$-element subsets of $[n]$. Let us focus for concreteness on the \(V\)-test distribution from \cite{ImpagliazzoKW2012}, in which two $k$-sets are sampled uniformly, conditioned on their intersection size being $\sqrt k$. 

Moving to a high dimensional expander $X$, the corresponding distribution $\mathcal{D}_X$ is obtained by selecting a face of size $2k-\sqrt k$, and inside it two faces $r_1,r_2$ whose sizes are $k$ and such that they intersect on a set of size $\sqrt k$. Our proof goes through the following steps.

\subsubsection*{Step 1 - Constructing a local list for every face}
Fix \(k \ll d_1 \ll d\). 
Here we treat every \(s \in X(d_1)\) as a ground set whose set of $k$-element subsets form a copy of the complete complex on $d_1+1$ vertices. We describe \(\mathcal{D}_X\) in an equivalent local way:
\begin{enumerate}
    \item Choose $s\in X(d_1)$ at random.
    \item Choose $r_1,r_2\subset s$ according to the original test distribution $\mathcal{D}$ on the complete complex on \(|s|\)-vertices.
\end{enumerate} 
Given an ensemble with $\agr(\set{f_s})>\eps$, a standard sampling argument will show that for a typical face $s\in X(d_1)$ the success of the test conditioned on the two faces being contained in $s$, is still about $\eps$ for nearly all sets $s\in X(d_1)$. So for each of these local ensembles, we can use the previous agreement theorem from \cite{ImpagliazzoKW2012}. We obtain a global function $g_s$ for each $s$ that agrees with about $\epsilon$ of the local functions in \(\set{f_r}_{r \subseteq s}\).

An important complication is that there could be several equally valid choices for $g_s$, each supported on a different $\epsilon$ fraction of the local functions $\set{f_r}_{r\subset s}$. Therefore, we must take into account an entire \textbf{list} of functions for $s$, which we denote by $\bar L(s)$.

Similarly to all $s\in X(d_1)$, we find such lists for faces $t \in X(2d_1+1)$ and $u \in X(3d_1+2)$.

\subsubsection*{Step 2 - Showing these lists match each other}
Next, we carefully modify the previous argument so that the obtained lists will be compatible with each other. More precisely, we show that one can take lists such that there is some integer \(\ell\) so that \(\abs{\overline{L}(s)}=\ell\) for nearly all \(s\). Moreover, denoting \(\overline{L}(s)=\set{L_s^1,L_s^2,\dots,L_s^\ell}\), for most pairs $s \subseteq t$, there is a permutation \(\pi_{s,t}:[\ell] \to [\ell]\) where \(\pi_{s,t}(i)=j\) if and only if \(L_s^i \approx L_t^j|_s\).

We remark that these two steps are similar to \cite{DinurHKLT2018}, where lists of assignments were derived for every face. However, while it is easy to see that a noticeable fraction of the lists match (which was what \cite{DinurHKLT2018} showed), this is not enough for us, and we must show that they match with probability $99\%$ for the next steps to work. This causes significant technical complications but turns out to be possible.

\subsubsection*{Step 3 - Unique games on the \emph{faces} complex}Before stating the next steps, we shift our point of view from \(X\) to its \emph{faces complex}.
\begin{definition}[The Faces Complex] \torestate{\label{def:face-complex-intro}
        Let \(X\) be a \(d\)-dimensional simplicial complex. Let \(d_1 \leq d\). We denote by \(\FX{d_1}X\) the simplicial complex whose vertices are \(\FX{d_1}X(0)=X(d_1)\) and whose faces are all \(\sett{\set{s_0,s_1,...,s_j}}{s_0\dunion s_1 \dunion \dots \dunion s_j \in X}\). }
\end{definition}  

The $1$-skeleton of $\FX{d_1}X$ is a graph whose vertices are the $d_1$-sets. Each vertex $s_1 \in \FX{d_1}X$ has a list of assignments, and (almost) every edge between $s_1$ and $s_2$ carries a bijection between the assignments of $s_1$ and the assignments of $s_2$. The bijection is just the composition \(\psi(s_1,s_2) = \pi_{t,s_2}^{-1} \circ \pi_{t,s_1}\) where \(t=s_1 \dunion s_2\). This can be viewed as a unique games instance over $\FX{d_1}X$. We show that for most triangles $s_1 s_2 s_3\in \FX{d_1}X$, there are no local contradictions. Namely, if we denote by $\psi_{s_i,s_j}$ the permutation from the list of $s_i$ to the list of $s_j$, then 
    \begin{equation} \label{eq:no-local-contradictions}
        \psi(s_2, s_3)\circ \psi(s_1, s_2)=\psi(s_1, s_3)
    \end{equation}
nearly always. In topological language this collection of bijections gives us a $1$-cochain in $C^1(\FX{d_1},Sym(\ell))$ where $Sym(\ell)$ is the groups of permutations over $\ell$ elements. The consistency implies that this $1$-cochain is nearly a cocycle.

\subsubsection*{Step 4 - Swap cosystolic expansion}
At this point, we wish to tweak the collection of bijections that are nearly consistent, and create a new collection that is consistent everywhere. This is possible if and only if the faces complex is a cosystolic expander (and this is precisely our definition of swap cosystolic expansion, see \pref{def:swapce}). 

It remains an open question to characterize complexes with sufficiently strong swap cosystolic expansion. We don't know if cosystolic expansion implies swap cosystolic expansion. Nevertheless, as stated above, we prove in a companion paper \cite{DiksteinD2023b} that every high dimensional expander whose links are spherical buildings, has \(\FX{d_1}X\) has swap-cosystolic expansion (\pref{thm:DD-faces}).  

\subsubsection*{Step 5 - Constructing the cover}
It is well known that any collection of permutations \(\set{\pi_{uv}}_{uv \in \FX{d_1}X(1)}\) with \emph{no} local contradictions\footnote{I.e. \eqref{eq:no-local-contradictions} holds for every triangle.} corresponds to a cover \(\rho_{\FX{d_1}X}:\widetilde{\FX{d_1}X} \to \FX{d_1}X\) (see e.g. \cite{Surowski1984}). Let us describe the underlying graph of the cover. Its vertices are pairs \((s,i)\) for every \(s \in \FX{d_1}X(0)\) and \(i \in [\ell]\). We connect \((s_1,i)\) to \((s_2,j)\) by an edge in \(\widetilde{\FX{d_1}X}\) if \(s_1 s_2 \in \FX{d_1}X(1)\) and \(\pi_{s_1,s_2}(i)=j\). The fact that \eqref{eq:no-local-contradictions} holds for every triangle is the reason that this underlying graph is indeed a graph skeleton of a cover \(\rho_{\FX{d_1}X}:\widetilde{\FX{d_1}X} \to \FX{d_1}X\) (see \pref{def:induced-cover} for the formal construction).
    
The cover we get from this step is a cover of the faces complex \(\FX{d_1}X\). The next step is to construct a cover $Y$ of $X$ from the cover \(\widetilde{\FX{d_1}X}\) of \(\FX{d_1}X\). Moreover, the complex $Y$ is such that its faces complex is isomorphic to \(\widetilde{\FX{d_1}X}\), \(\iota:\FX{d_1}Y \overset{\sim}{\to} \widetilde{\FX{d_1}X}\). This complex \(Y\) is the cover on which we define the global function. 

The use of cosystolic expansion for constructing covers from permutations with \emph{few local contradictions} was developed in \cite{DinurM2019}. Here we instantiate this idea, using the permutations defined by the first two steps.

\subsubsection*{Step 6 - From a cover to a global function via \(99\%\)-agreement}
We are nearly done, but we still need to construct the global function \(G:Y(0) \to \Sigma\). We define an ensemble of functions \(\set{h_{\tilde{s}}}_{\tilde{s} \in Y(d_1)}\), where \(h_{\tilde{s}}\) colors the vertices according to \(L_s^i \in \overline{L}(s)\), such that \(\iota(\tilde{s})=(s,i)\). Opening up the definition of the isomorphism \(\iota\) and the permutations \(\pi_{s_1,t}\) above, we show that there is a agreement distribution \(\mathcal{D}_Y\) where \[\agr_{\mathcal{D}_Y} (\set{h_{\tilde{s}}}) \geq 99\%.\] 
Then all that remains is to use a known \(99\%\)-agreement theorem a la \eqref{eq:99p} to obtain a global function \(G:Y(0) \to \Sigma\) that agrees with most \(h_{\tilde{s}}\). This is (finally!) the global function that explains the original distribution \(\set{f_r}\). 

Note that the quantitative cosystolic expansion needed in \pref{thm:abstract-general} is rather modest; it is allowed to decay to zero exponentially in \(d_1/k\). The local spectral expansion required is inverse polynomial in the dimension \(d\).

\subsection*{Related work}
\subsubsection*{PCPs and agreement tests} Agreement tests in the low acceptance ``$1\%$'' regime appear in several PCP constructions, such as in the plane-versus-plane \cite{RaSa} and line-versus-line low degree tests \cite{ArSu}. Analysis in the low acceptance regime translates to a PCP with perfect completeness and small soundness $\epsilon$. The fact that the test makes only two queries is essential for getting a two-query PCP such as label cover, which is an essential feature for downstream applications for hardness of approximation. 

The most popular large-gap PCP (also known as label cover) used in hardness of approximation reductions is the based on the parallel repetition theorem \cite{Raz-parrep}. Parallel repetition is essentially the direct product encoding, and the analysis is highly related to analysis of agreement tests on $X=\binom{[n]}k$, as can be seen both in \cite{ImpagliazzoKW2012} who give a black-box conversion from an agreement-test theorem to a parallel repetition theorem, see also \cite{DinurS14, DinurS14-parrep}.

A major motivation for studying derandomized agreement tests is towards construction of more length-efficient PCPs that still have a large gap between the completeness and soundness cases. The length parameter is important for various reasons ranging from practical implementations of PCPs to theoretical implications for hardness of approximation. An agreement test can often be transformed into a PCP construction, although such a transformation is not automatic. Indeed, Impagliazzo et. al  \cite{ImpagliazzoKW2012} constructed a derandomized family $X$ that satisfies \eqref{eq:LA}, but it took more work \cite{DinurM11} to transform it into an length-efficient PCP theorem with a large gap. Moshkovitz and Raz \cite{MR10} constructed a label cover PCP with nearly-linear length $n \cdot 2^{(\log n)^{0.99}}$, motivated by applications to hardness of approximation (efficient-length reductions give a much stronger connection between exponential hardness assumptions on exact and approximation problems). Getting the length further down (possibly to $n\cdot\poly\log n$) is an open question. 
Agreement tests for highly derandomized  families $X$ may lead to progress in this direction. 
\\

More recently agreement tests appear in the proof of the 2:2 theorem \cite{DKKMS,KMS}, where $X$ is the collection of all $\ell$-dimensional subspaces of a vector space (the Grassmannian), and the agreement test compares two subspaces that overlap on an $\ell-1$ dimensional subspace. This test constitutes the 2:2 inner verifier. It was shown \cite{BarakKS19} that in this setting agreement follows from small set expansion of the underlying complex, which was then proven in \cite{KMS}. Interestingly, the 2:2 agreement test cannot satisfy \eqref{eq:LA}, but rather a weakened version of it, that takes into account ``zoom-ins'' and ``zoom-outs''. 

Here as well as in other low degree tests the local functions $f_s$ are not arbitrary but rather restricted to having additional structure, namely being linear or having low degree. This gives the theorems a slightly different flavor (for example, it is much easier to get separation between list elements). 

\subsubsection*{Bogdanov's counter-example}
The gap amplification proof of the PCP theorem \cite{Dinur07} is based on repeated applications of an agreement test in the high acceptance regime. Once the soundness has reached some constant $1-\zeta$, it is natural to wonder if one (or a few) additional steps can lead to arbitrarily small soundness $\epsilon>0$. Bogdanov shows \cite{Bogdanov-comment05} an example of a PCP instance where the soundness cannot drop below $1/2$. We revisit his example in \pref{sec:bog} and show that it is a special case of our \pref{lem:cex}. This suggests, perhaps, high dimensional expanders as candidates for gap amplification with soundness below $1/2$.

\subsubsection*{Parameters of Agreement tests}
Direct product tests, or agreement tests for the complete complex $\binom{[n]}{k}$, have been relatively well studied. For two queries, $\eqref{eq:LA}$ holds for $\eps = \poly(1/k)$, and fails when $\epsilon\ll 1/k$ \cite{DinurG2008}. For three queries, $\eqref{eq:LA}$ holds for $\eps$ even exponentially small in $k$ \cite{ImpagliazzoKW2012,DinurS14, DinurL2017}, and beyond that it necessarily fails. The parameters we get for our main theorem are significantly weaker, with $\eps = \Omega(1/\log k)$; and further investigation of this tradeoff is left for future work. 

\subsubsection*{List-agreement and cosystolic expansion}
The idea behind the last step in our proof of \pref{thm:abstract-general} is close in spirit to the list decoding argument in \cite{GotlibK2022}. Their work builds upon \cite{DinurM2019} to list-decode ensembles of functions that have a \(99\%\)-regime list-agreement guarantee, using a closely related notion of coboundary expansion. They derive a complex similar (but not identical) to the faces complex, which they call the representation complex. Then they use their list-agreement test to construct permutations that have almost no local contradictions, and use coboundary expansion to transform this to a list of global functions on \(X\). Unfortunately, we could not have adapted their framework to our purposes; the notion of \(99\%\)-regime list-agreement they require is too strong to hold in our case. In addition, coboundary expansion does not capture complexes that are not simply connected, and we require a method to deal with such complexes.

\subsubsection*{Comparison with \cite{BafnaM2023}}
Independent work by Bafna and Minzer \cite{BafnaM2023} also gave a theorem characterizing low acceptance agreement in high dimensional expanders. 

In their work they give a necessary and sufficient condition for a complex satisfying \eqref{eq:LA}, that is similar to the swap-cosystolic expansion condition. Let us explain the condition by explaining the ``sufficient'' part of their proof. The first part of their proof resembles steps \(1-3\) of ours. Both works construct lists of functions on faces (using previous works on agreement tests) together with permutations on edges of the faces complex. 

The two proofs diverge in step four. Roughly, in their work, the condition they impose on \(X\) is that the faces complex is a swap-coboundary expander with respect to permutations \emph{that also come with lists of functions as in step \(3\)} (see \cite[Definitions 1.7,1.8]{BafnaM2023}). Let us call this condition $(\star)$. Using $(\star)$, they are able to construct local functions with \(99\%\) agreement on \(X\). Then they patch them up to a global function in a way similar to our step \(6\).

In our proof, step \(4\) uses swap-cosystolic expansion to do almost the same thing - but instead of constructing local functions on \(X\) we do so on a cover \(Y\) of \(X\). In the special case where \(X\) is a swap-{\em coboundary} expander (which is strictly stronger than being a swap-{\em cosystolic} expander), the only possible covers of \(X\) are disjoint copies of \(X\). That is, in the swap-coboundary expander case, our proof gives the same construction as theirs and shows \eqref{eq:LA}. We also mention, that in retrospect, our proof works {\em as is}, if we replace our assumptions by $(\star)$.

Our proof applies to a more general situation. It is common for a high dimensional expander to have covers (and not be a swap-coboundary expander) especially as known constructions of sparse high dimensional expanders are attained via families of complexes that cover one another. In this case, we manage to derive a structural characterization that is necessarily weaker than \eqref{eq:LA} yet still encapsulates the global structure hiding behind functions that pass the agreement test, i.e. \eqref{eq:CLA}.

In addition, the work of \cite{BafnaM2023} shows that $(\star)$ is a necessary condition for satisfying \eqref{eq:LA}. They show that if an ensemble fails $(\star)$, it must give rise to a counter example. %

Thus, their condition $(\star)$ is a {\em characterization} of complexes satisfying \eqref{eq:LA}. In contrast, our condition of swap-cosystolic expansion is not equivalent, and even swap-coboundary expansion seems to be stronger than \eqref{eq:LA}, so in any case not a characterization. Nevertheless, our condition applies to well-known constructions, allowing us to derive concrete corollaries (\pref{cor:general} and \pref{cor:spherical-building-list-agreement}) from our abstract theorem (for these derivations we rely on \pref{thm:DD-faces} from the companion paper). 

In terms of concrete examples of complexes that fail \eqref{eq:LA}, both works point to the same issue. Their work also observes that the \cite{LubotzkySV2005b} complexes sometimes fail to satisfy \eqref{eq:LA}. The counterexample they give is in fact the same counterexample we give in \pref{sec:proof-of-counterexample} - they use a cocycle that comes from a cover to produce the counterexample.

We remark that the work of \cite{BafnaM2023} raises an interesting open question: Does swap-coboundary expansion follow from their (a priori) weaker variant of $(\star)$?

\subsection*{Acknowledgements}
We thank Roy Meshulam and Amitay Kamber for insightful early discussions on the topic.

\section{Preliminaries} \label{sec:preliminaries}
\subsection{High dimensional expanders}
Most of the definitions in this subsection are standard, with the exception of the definition of the non-lazy up down walk \pref{def:non-lazy-up-down}.

A pure \(d\)-dimensional simplicial complex \(X\) is a hypergraph that consists of an arbitrary collection of sets of size \((d+1)\) together with all their subsets. The sets of size \(i+1\) in \(X\) are denoted by \(X(i)\). The vertices of \(X\) are denoted by \(X(0)\) (we identify between a vertex \(v\) and its singleton \(\set{v}\)). We will sometimes omit set brackets and write for example \(uvw\in X(2)\) instead of \(\set{u,v,w}\in X(2)\). As a convention \(X(-1) = \set{\emptyset}\).
Let \(X\) be a pure \(d\)-dimensional simplicial complex. Let \(k \leq d\). We denote the set of oriented \(k\)-faces in \(X\) by \(\dir{X}(k) = \sett{(v_0,v_1,...,v_k)}{\set{v_0,v_1,...,v_k} \in X(k)}\).

For \(k \leq d\) we denote by \(X^{\leq k} = \bigcup_{j=-1}^k X(j)\) the \(k\)-skeleton of \(X\). When \(k=1\) we call this complex \emph{the underlying graph of \(X\)}, since it consists of the vertices and edges in \(X\) (as well as the empty face).

A \emph{clique complex} is a simplicial complex such that if \(s \subseteq X(0)\) has that if \(s\) is a clique, that is, for every two vertices \(v,u \in s\) the edge \(vu \in X(1)\), then \(s \in X\).

A \((d+1)\)-partite \(d\)-dimensional simplicial complex is a generalizeation of a bipartite graph. It is a complex \(X\) such that one can decompose \(X(0) = A_0 \dunion A_1 \dunion \dots \dunion A_d\) such that for every \(s \in X(d)\) and \(i \in \set{0,\ldots,d}\) it holds that \(\abs{s \cap A_i} = 1\).

\subsubsection*{Probability over simplicial complexes}
Let \(X\) be a simplicial complex and let \(\Pr_d:X(d)\to (0,1]\) be a density function on \(X(d)\) (that is, \(\sum_{s \in X(d)}\Pr_d(s)=1\)). This density function induces densities on lower level faces \(\Pr_k:X(k)\to (0,1]\) by \(\Pr_k(t) = \frac{1}{\binom{d+1}{k+1}}\sum_{s \in X(d),s \supset t} \Pr_d(s)\). We can also define a probability over directed faces, where we choose an ordering uniformly at random. Namely, for \(s\in \dir{X}(k)\), \(\Pr_k(s) = \frac{1}{(k+1)!}\Pr_k(set(s))\) (where \(set(s)\) is the set of vertices participating in \(s\)). When clear from the context, we omit the level of the faces, and just write \(\Pr[T]\) or \(\Prob[t \in X(k)]{T}\) for a set \(T \subseteq X(k)\).

\subsubsection*{Links in a high dimensional expander} Let \(X\) be a \(d\)-dimensional simplicial complex and let \(s \in X\) be a face. The link of \(s\) is the \(d'=d-|s|\)-dimensional complex
\[X_s = \sett{t \setminus s}{t \in X, t \supseteq s}.\]
For a simplicial complex \(X\) with a measure \(\Pr_d:X(d) \to (0,1]\), the induced measure on \(\Pr_{d',X_s}:X_s(d-|s|)\to (0,1]\) is
\[\mathbb{P}_{d',X_s}(t \setminus s) = \frac{\Pr_d(t)}{\sum_{t' \supseteq s} \Pr_d(t')}.\]

We denote by \(\lambda(X_s)\) the (normalized) second largest eigenvalue of the adjacency operator of \(X_s^{\leq 1}\). We denote by \(\abs{\lambda}(X_s)\) to be the (normalized) second largest (in absolute value) eigenvalue of the adjacency operator of \(X_s^{\leq 1}\).

\begin{definition}[High dimensional expander]
    Let \(X\) be a \(d\)-dimensional simplicial complex and let \(\lambda \in (0,1)\). We say that \(X\) is a \emph{\(\lambda\)-one sided high dimensional expander} if for every \(s \in X^{\leq d-2}\) it holds that \(\lambda(X_s) \leq \lambda\). We say that \(X\) is a \emph{\(\lambda\)-two sided high dimensional expander} if for every \(s \in X^{\leq d-2}\) it holds that \(\abs{\lambda}(X_s) \leq \lambda\).
\end{definition}
We stress that this definition includes \(s= \emptyset\), which also implies that \(X^{\leq 1}\) itself should have a small second largest eigenvalue.

\subsubsection*{Walks on high dimensional expanders}
Let \(X\) be a \(d\)-dimensional simplicial complex. Let \(\ell \leq k \leq d\). The \((k,\ell)\)-containment graph \(G_{k,\ell}=G_{k,\ell}(X)\) is the bipartite graph whose vertices are \(L=X(k), R=X(\ell)\) and whose edges are all \((t,s)\) such that \(t \supseteq s\). The probability of choosing such an edge is by choosing \(t \sim X(k)\) and a uniform \(s \subseteq t\) of size \(\ell+1\).

\begin{theorem}[\cite{KaufmanO2020}] \label{thm:eignevalues-of-walk}
    Let \(X\) be a \(d\)-dimensional \(\lambda\)-one sided high dimensional expander. Let \(\ell \leq k \leq d\). Then the second largest eigenvalue of \(G_{k,\ell}(X)\) is upper bounded by \(\lambda(G_{k,\ell}(X)) \leq \frac{\ell+1}{k+1} + O(k\lambda)\).
\end{theorem}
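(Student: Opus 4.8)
The plan is to recognize $G_{k,\ell}$ as a two-step random walk and to bound its second eigenvalue by diagonalizing that walk, approximately, against the harmonic decomposition of functions on faces. Write $U_i\colon \mathbb{R}^{X(i)}\to\mathbb{R}^{X(i+1)}$ for the averaging ``up'' operator and $D_{i+1}$ for its adjoint ``down'' operator; then the quantity to bound, $\lambda(G_{k,\ell})$, is the second largest eigenvalue of the positive self-adjoint operator $W := D_{\ell+1}\cdots D_k\, U_{k-1}\cdots U_\ell$ on $\mathbb{R}^{X(\ell)}$ (go down from a $k$-face to a uniform $\ell$-subface and back up; one may equivalently work with $U_\ell\cdots U_{k-1}D_k\cdots D_{\ell+1}$ on $\mathbb{R}^{X(k)}$), using that a uniform $(\ell+1)$-subset is an iterated uniform single-element deletion, so $D_{\ell+1}\cdots D_k$ is exactly the direct ``descent'' operator. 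First I would set up the decomposition $\mathbb{R}^{X(\ell)}=\bigoplus_{j=-1}^{\ell} U_{j\to\ell}(H_j)$ with $H_j=\ker(D_j)\subseteq\mathbb{R}^{X(j)}$ and $H_{-1}=\mathbb{R}$; this is a direct sum in any pure weighted complex, and in a HDX the summands are nearly orthogonal. The target is to show that $W$ is, up to a small perturbation, block-diagonal for this decomposition and acts on the $j$-th block as a scalar $\beta_j$.

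The technical heart, and the step I expect to be the main obstacle, is the approximate commutation relation for HDX, $\bigl\| D_{i+1}U_i - \tfrac{1}{i+2}I - \tfrac{i+1}{i+2}U_{i-1}D_i \bigr\|_{\mathrm{op}} = O(\lambda)$ for every $i<d$. This is a Garland-type local-to-global statement --- the displayed difference is an average over the vertex links of operators whose norm is controlled by the second eigenvalue of the link graph --- and I would prove it by induction on the dimension, the base case being the spectral bound on link graphs supplied by the hypothesis. Given this, one iterates it to move every $D$ in $W$ past every $U$: acting on $U_{j\to\ell}(H_j)$, each commutation either peels off a scalar factor or creates a term $\cdots U_{i-1}D_i\cdots$, and the key point is that once the descending operators reach level $j$ the factor $D_j$ annihilates $H_j$, so no lower blocks are produced and $W$ collapses on that block to $\beta_j\cdot\mathrm{Id}$ up to error. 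Keeping the accumulated error at $O(k\lambda)$, rather than a larger power of $k$, requires care: grouping the $U$'s and $D$'s so that only $O(k)$ genuine crossings contribute, and measuring the error in operator norm using the near-orthogonality of the decomposition.

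Finally I would pin down the scalars $\beta_j$ by evaluating the walk on the complete complex, where by $S_n$-symmetry it is exactly block-diagonal on harmonic spaces: the block-$j$ eigenvalue is the probability that a fixed set of $j+1$ ``active'' coordinates all survive the descent from level $k$ to level $\ell$, i.e.\ $\binom{k-j}{\ell-j}\big/\binom{k+1}{\ell+1}$, which is $1$ for $j=-1$, equals $\tfrac{\ell+1}{k+1}$ for $j=0$, and is strictly decreasing in $j$. Hence $\max_{j\ge 0}\beta_j=\tfrac{\ell+1}{k+1}$, and combined with the block-diagonalization every eigenvalue of $W$ orthogonal to the constants is at most $\tfrac{\ell+1}{k+1}+O(k\lambda)$, which is the claim. (One can also simply read the bound off from the known HDX random-walk decomposition theorems; the route above is what I would write if I wanted it self-contained modulo the commutation relation.)
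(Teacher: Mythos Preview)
The paper does not prove this theorem; it is quoted from \cite{KaufmanO2020} without proof. Your outline is the standard route taken in that literature (harmonic decomposition of $\mathbb{R}^{X(\ell)}$, the Garland-type approximate commutation $\|D_{i+1}U_i-\tfrac{1}{i+2}I-\tfrac{i+1}{i+2}U_{i-1}D_i\|=O(\lambda)$, and then reading off the block eigenvalues from the complete-complex computation), and it is essentially correct. One small remark: the $O(k\lambda)$ error term as stated in the paper is exactly what one gets from the $k-\ell$ commutations needed to push the $D$'s past the $U$'s, so your concern about controlling the accumulated error is warranted but the bookkeeping is routine once the single-step commutation bound is in hand; the near-orthogonality of the harmonic pieces is not even needed for the eigenvalue bound, only for refining it to a genuine spectral decomposition.
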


A corollary proven in \cite{DinurK2017} is that this graph is also a good sampler.
\begin{corollary}[\cite{DinurK2017}] \label{cor:good-sampler-from-expansion}
    Let \(A \subseteq X(k)\), \(\ell \leq k\) and \(\zeta > 0\). Let
    \[B(A) = \sett{s \in X(\ell)}{\Abs{\Prob[t\supseteq s]{A} - \prob{A}} > \zeta}.\]
    Then \(\prob{B(A)} = O(\frac{\ell+1}{(k+1) \zeta^2}\prob{A})\).
\end{corollary}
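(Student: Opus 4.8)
The plan is to recognize this as the standard ``spectral expansion implies sampling'' estimate, applied to the bipartite containment graph $G_{k,\ell}(X)$ and fed by the eigenvalue bound of \pref{thm:eignevalues-of-walk}. Set $\alpha = \prob{A}$, let $f = \one_A \in L^2(X(k))$, and define $g\in L^2(X(\ell))$ by $g(s) = \Prob[t \supseteq s]{A}$, where $t$ ranges over $k$-faces containing $s$ with the edge-conditional distribution of $G_{k,\ell}$. Since the edge distribution of $G_{k,\ell}$ has marginals $\Pr_k$ on $X(k)$ and $\Pr_\ell$ on $X(\ell)$ (a short check with the induced measures from \pref{sec:preliminaries}, using the binomial identity $\binom{d+1}{k+1}\binom{k+1}{\ell+1}=\binom{d+1}{\ell+1}\binom{d-\ell}{k-\ell}$), we get $\Ex[s\in X(\ell)]{g(s)} = \alpha$. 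Then $B(A) = \set{s : \abs{g(s)-\alpha} > \zeta}$, so by Chebyshev's inequality $\prob{B(A)} \le \Var_s[g]/\zeta^2$, and it suffices to prove $\Var_s[g] = O\!\left(\frac{\ell+1}{k+1}\,\alpha\right)$.

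Next I would compute the second moment of $g$ spectrally. Let $D\colon L^2(X(k)) \to L^2(X(\ell))$ be the down-averaging operator $(Df)(s) = \E_{t\supseteq s}[f(t)]$, so $g = Df$, and let $U = D^{*}\colon L^2(X(\ell)) \to L^2(X(k))$ be its adjoint with respect to the weighted inner products, which is the uniform up-averaging $(Uh)(t) = \E_{s\subseteq t}[h(s)]$. Then $\Ex[s]{g(s)^2} = \iprod{Df, Df} = \iprod{f, UDf}$, where $UD$ is the down-up walk on $X(k)$. Its top eigenvector is the constant function with eigenvalue $1$, and on the orthogonal complement of the constants its eigenvalues are exactly the squares of the nontrivial singular values of $G_{k,\ell}(X)$, hence at most $\lambda(G_{k,\ell}(X))^2$. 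Decomposing $f = \alpha + f^{\perp}$ with $f^{\perp}\perp\one$ gives
\[ \Ex[s]{g(s)^2} \;=\; \alpha^2 + \iprod{f^{\perp}, UD f^{\perp}} \;\le\; \alpha^2 + \lambda(G_{k,\ell}(X))^2\,\snormt{f^{\perp}} \;=\; \alpha^2 + \lambda(G_{k,\ell}(X))^2(\alpha-\alpha^2), \]
so $\Var_s[g] \le \lambda(G_{k,\ell}(X))^2\,\alpha$.

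Finally I would substitute the bound from \pref{thm:eignevalues-of-walk}, namely $\lambda(G_{k,\ell}(X)) \le \frac{\ell+1}{k+1} + O(k\lambda)$. Since $\frac{\ell+1}{k+1}\le 1$ and $\lambda$ is small enough that the additive error is absorbed, $\lambda(G_{k,\ell}(X))^2 = O\!\left(\frac{\ell+1}{k+1}\right)$, whence $\prob{B(A)} \le \Var_s[g]/\zeta^2 = O\!\left(\frac{\ell+1}{(k+1)\zeta^2}\,\alpha\right)$, as claimed.

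The only points requiring care are the bookkeeping ones: (i) checking measure-consistency so that $g$ genuinely has mean $\alpha$ and $U = D^{*}$ is the honest up-operator in the correct weighted $L^2$ spaces, and (ii) extracting the first-power dependence on $\frac{\ell+1}{k+1}$ from the squared second singular value, which relies on $\frac{\ell+1}{k+1}\le 1$ together with the smallness of $\lambda$ to swallow the $O(k\lambda)$ term. Beyond these, the argument is the textbook variance/Chebyshev sampler bound for spectral expanders, and I expect no substantive obstacle.
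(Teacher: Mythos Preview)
Your argument is correct and is exactly the standard spectral-sampler derivation: Chebyshev on $g = D\one_A$, then bounding $\Var_s[g]$ by the squared second singular value of the containment operator times $\alpha$, and finally plugging in \pref{thm:eignevalues-of-walk}. The paper does not give its own proof of this statement---it is stated as a corollary and attributed to \cite{DinurK2017}---so there is nothing further to compare; your writeup is essentially the proof one finds in that reference.
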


A related walk is the swap walk. Let \(k,\ell,d\) be integers such that \(\ell+k\leq d-1\). The \(k,\ell\)-swap walk \(S_{k,\ell}=S_{k,\ell}(X)\) is the bipartite graph whose vertices are \(L=X(k), R=X(\ell)\) and whose edges are all \((t,s)\) such that \(t \dunion s \in X\) (and in particular \(t,s\) are disjoint). The probability of choosing such an edge is the probability of choosing \(u \in X(k+\ell+1)\) and then uniformly at random partitioning it to \(u=t\dunion s\). This walk has been defined and studied independently by \cite{DiksteinD2019} and by \cite{AlevFT2019}, who bounded its spectral expansion.

\begin{theorem}[\cite{DiksteinD2019,AlevFT2019}]
    Let \(X\) be a \(\lambda\)-two sided high dimensional expander. Then the second largest eigenvalue of \(S_{k,\ell}(X)\) is upper bounded by \(\lambda(S_{k,\ell}(X)) \leq (k+1)(\ell+1)\lambda\).
\end{theorem}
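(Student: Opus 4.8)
The plan is to reduce the statement to an operator-norm bound and then run an induction that passes to vertex links. Let \(A_{k,\ell}=A_{k,\ell}(X)\colon C(X(k))\to C(X(\ell))\) be the biadjacency operator of \(S_{k,\ell}\), so \((A_{k,\ell}g)(s)=\Ex[t\,:\,t\sqcup s\in X]{g(t)}\); since \(A_{k,\ell}\) and its adjoint both fix the constant functions, \(\lambda(S_{k,\ell})\) is the operator norm of \(A_{k,\ell}\) restricted to mean-zero functions, and it suffices to show \(\norm{A_{k,\ell}g}\le (k+1)(\ell+1)\lambda\norm g\) for all mean-zero \(g\). I would prove this in two stages: first the case \(\ell=0\) for all \(k\), by induction on \(k\); then the general case by induction on \(\ell\) with \(k\) held fixed. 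In both stages the base case is \(S_{0,0}\), whose biadjacency operator is literally the normalized adjacency operator of the \(1\)-skeleton, of second singular value \(|\lambda|(X)\le\lambda\) by two-sided expansion.

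The basic move is a ``peeling'' identity that localizes \(S_{k,\ell}(X)\) to vertex links. Given \(s\in X(\ell)\) and \(w\in s\), writing \(u=t\sqcup s\in X(k+\ell+1)\) and deleting \(w\) exhibits \(t\) and \(s\setminus w\) as a swap-walk pair in the link \(X_w\), between \(X_w(k)\) and \(X_w(\ell-1)\); summing over \(w\) gives, as operators, \(A_{k,\ell}(X)=\sum_w \iota_w\circ A_{k,\ell-1}(X_w)\circ r_w\), where \(r_w\) restricts a function on \(X(k)\) to \(X_w(k)\) and \(\iota_w\) reinserts \(w\) (carrying the \(1/(\ell+1)\) weight). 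Given mean-zero \(g\), split each restriction \(r_w g = c_w\mathbf 1 + g^{(w)}_\perp\), with \(c_w=\Ex[t\in X_w(k)]{g(t)}\) and \(g^{(w)}_\perp\) mean-zero on \(X_w(k)\). The remainder piece \(\sum_w\iota_w(A_{k,\ell-1}(X_w)g^{(w)}_\perp)\) I would bound via Jensen together with the measure-consistency identity \(\Ex[w]{\bignorm{r_w g}^2}=\norm g^2\), getting norm at most \(\max_w\lambda(S_{k,\ell-1}(X_w))\cdot\norm g\le(k+1)\ell\lambda\norm g\) by the inductive hypothesis and \(\lambda(X_w)\le\lambda\). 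The ``constant'' piece \(s\mapsto\Ex[w\in s]{c_w}\) is the key: since \(A_{k,0}(X)\) averages \(g\) over the \(k\)-faces of \(X_w\), one has \(c_w=(A_{k,0}(X)g)(w)\), so this piece equals \(U_{0\to\ell}\big(A_{k,0}(X)g\big)\), where \((U_{0\to\ell}h)(s)=\Ex[v\in s]{h(v)}\); as \(A_{k,0}(X)g\) is again mean-zero and \(U_{0\to\ell}\) is an \(L^2\)-contraction, its norm is at most \(\norm{A_{k,0}(X)g}\le(k+1)\lambda\norm g\) by the already-proved case \(\ell=0\). Adding the two pieces gives \(\norm{A_{k,\ell}g}\le (k+1)\ell\lambda\norm g+(k+1)\lambda\norm g=(k+1)(\ell+1)\lambda\norm g\), i.e. exactly the telescoping \((k+1)(\ell+1)=(k+1)\ell+(k+1)\) that closes the induction. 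The \(\ell=0\) stage runs identically, peeling a vertex off the \(k\)-side (working with the adjoint of \(A_{k,0}\)): there the ``constant'' piece reduces to the normalized adjacency operator of \(X\) applied to the input, with mean-zero norm \(\le\lambda\norm g\), and the recursion bottoms out at \(S_{0,0}\) as above.

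I expect the one genuinely delicate point to be choosing \emph{which} coordinate to peel. A natural but wrong choice makes the ``constant'' piece a \emph{down} operator applied to \(g\), whose mean-zero norm is only a constant (of order \(1/\sqrt{k+1}\)), and then the estimate degenerates to something useless. The arrangement above is set up so that the ``constant'' piece always reduces to an operator whose action on mean-zero functions is genuinely \(O(\lambda)\) — the adjacency operator, or the already-controlled \((k,0)\)-swap walk — which is what yields the clean factor \((k+1)(\ell+1)\). The remaining ingredients are routine and standard for the balanced measures on a simplicial complex: that \(g\mapsto(r_w g)_w\) and \(s\mapsto(w,s\setminus w)_{w\in s}\) are scaled isometries, and that averaging operators are \(L^2\)-contractions. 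Notably this route uses only two-sided local expansion and contractivity; it does not need the containment-walk spectral bound of \pref{thm:eignevalues-of-walk}.
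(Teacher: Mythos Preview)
The paper does not include a proof of this theorem; it is quoted as a known result from \cite{DiksteinD2019,AlevFT2019}. Your argument is correct and is essentially the localization-to-links induction carried out in \cite{DiksteinD2019}: peel a vertex, split the localized function into its link-mean and its mean-zero part, bound the mean-zero part by the inductive hypothesis on the link (which is again a \(\lambda\)-two-sided expander, since its links are links of \(X\)), and recognize the link-mean part as \(U_{0\to\ell}\circ A_{k,0}\) (respectively the adjacency operator in the \(\ell=0\) stage), which is \(O(\lambda)\) on mean-zero inputs. The only phrasing to tighten is that you need \(X_w\) to be a \(\lambda\)-two-sided high dimensional expander (i.e.\ all of \emph{its} links satisfy the bound), not merely \(\lambda(X_w)\le\lambda\); this holds because every link of \(X_w\) is a link of \(X\).
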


We also define another random walk call the non-lazy up down walk.
\begin{definition}[Non lazy up down walk] \label{def:non-lazy-up-down}
    Let \(X\) be a \(d\)-dimensional simplicial complex. Let \(d_1\leq d_2 \leq d\) such that \(2d_1 \geq d_2-1\). The \((d_1,d_2)\)-non lazy up down walk is the distribution \((s_1,s_2) \sim DU_{n}\) of \(X(d_1)\) where \(s_1,s_2\) are chosen by first choosing \(t \in X(d_2)\) and then uniformly sampling \(s_1,s_2 \in X(d_1)\) such that \(s_1 \cup s_2 = t\).
\end{definition}
The reason this is called non-lazy, is because we do not allow \(s_1 \cup s_2 \subsetneq t\) as in the usual up-down walk defined in \cite{DiksteinDFH2018} for example. We note that \(\Abs{s_1 \cap s_2} = 2d_1-d_2+1\), i.e. the intersection size doesn't depend on the pair chosen. We can decompose this walk also as sampling \(r= s_1 \cap s_2 \in X(2d_1-d_2)\), then sampling \(p_1,p_2 \in X_r(d_2-d_1)\) according to the swap walk, and then outputting \(s_1 = r \cup p_1, s_2 = r \cup p_2\).

\subsubsection*{The spherical building}
Let \(d \in \NN\) and \(q\) be a prime power. 
\begin{definition}\label{def:sph-bldg}
    The spherical building (sometimes called the \(SL_d(\mathbb{F}_q)\)-spherical building), is the complex \(X\) whose vertices are all non-trivial linear subspaces of \(\mathbb{F}_q^d\). Its higher dimensional faces are all flags \(\sett{W_0 \subseteq W_1 \subseteq \dots \subseteq W_m}{W_0,W_1,\dots,W_m \subseteq \mathbb{F}_q^d}\). 
\end{definition}
This complex is \((d-2)\)-dimensional.

\begin{claim}[\cite{EvraK2016}] \label{claim:spherical-building-hdxness}
    Let \(X\) be a \(SL_d(\mathbb{F}_q)\)-spherical building. Then \(X\) is a \(O(\frac{1}{\sqrt{q}})\)-one sided high dimensional expander. Moreover, \(X^{\leq k}\) is a \(\max \set{O(\frac{1}{\sqrt{q}}), \frac{1}{d-k}}\)-two sided high dimensional expander.
\end{claim}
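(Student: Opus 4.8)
The plan is to leverage the recursive structure of the spherical building: the link of any flag $s$ in the $SL_d(\mathbb{F}_q)$-building $X = B_d$ is again a simplicial join of smaller spherical buildings, one per ``gap'' of $s$. Concretely, if $s$ is the flag $W_1 \subsetneq \cdots \subsetneq W_j$ with $\dim W_i = a_i$, and we set $a_0 = 0$, $a_{j+1} = d$, then $X_s \cong B_{d_1} \ast \cdots \ast B_{d_{j+1}}$ with $d_i = a_i - a_{i-1}$, where $B_r$ denotes the $SL_r(\mathbb{F}_q)$-building (so $B_1 = \{\emptyset\}$ and $B_2$ is the $q+1$ lines of $\mathbb{F}_q^2$ with no edges). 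Proving this structural lemma --- a routine exercise in linear algebra, equivalently the theory of residues in buildings --- reduces everything to two ingredients: (a) the spectral gap of the $1$-skeleton of a single building $B_r$ with its natural measure, and (b) the behaviour of the eigenvalues of a weighted join in terms of its factors.

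For ingredient (a) the base case is $r = 3$: the $1$-skeleton of $B_3$ is the point--line incidence graph of $PG(2,q)$, which is $(q+1)$-regular and bipartite with $NN^\top = qI + J$ for its biadjacency matrix $N$, hence adjacency spectrum $\{\pm(q+1), \pm\sqrt q\}$ and $\lambda(B_3) = \tfrac{\sqrt q}{q+1} = O(1/\sqrt q)$. The one-sided statement then follows by trickling-down (Garland's method, in the form of Oppenheim's theorem) applied to $X = B_d$, a complex of dimension $d-2$: one checks that every link of a face of codimension $\geq 2$ is connected --- such a link is a join of buildings that is either a single building of rank $\geq 3$ (thick buildings of rank $\geq 2$ are connected) or has $\geq 2$ nonempty factors (a join of nonempty complexes is connected), the gap arithmetic ruling out the degenerate cases --- and that every codimension-$2$ link is either $B_3$ (gap pattern $(3,1,\dots,1)$) or the complete bipartite graph $K_{q+1,q+1} = B_2 \ast B_2$ (gap pattern $(2,2,1,\dots,1)$), both having $\lambda \leq O(1/\sqrt q)$. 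Trickling-down propagates this to $\lambda(X_s) \leq O(1/\sqrt q)$ for every link (the $O(\cdot)$ absorbing the mild $d$-dependence of the $\tfrac{\lambda_0}{1-(d-k-1)\lambda_0}$ factor, harmless for $q$ large; alternatively one runs the recursion on the rank of the building).

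For the two-sided bound on $X^{\leq k}$ the extra work is controlling negative eigenvalues, and here one uses that $X$ (hence $X^{\leq k}$) is $(d-1)$-partite, the parts being the dimension classes $V_1, \dots, V_{d-1}$ of subspaces. A link of $X^{\leq k}$ that is the $1$-skeleton of a single building $B_r$ is then $(r-1)$-partite, so its adjacency operator differs from the weighted complete $(r-1)$-partite graph on the dimension classes only by a term controlled by the one-sided gap of $B_r$ (each dimension-$i$ to dimension-$j$ block being itself an $O(1/\sqrt q)$-expanding bipartite incidence graph); hence its smallest normalized eigenvalue is at least $-\tfrac{1}{r-2} - O(1/\sqrt q)$, i.e. $|\lambda| \leq \max\{O(1/\sqrt q), \tfrac{1}{r-2}\}$. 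A codimension-$2$ link of $X^{\leq k}$ comes from a flag with $k-1$ subspaces, i.e. $k$ gaps summing to $d$; the worst case is all-but-one gap equal to $1$, giving the $1$-skeleton of a single building of rank $r = d-k+1$ and so $|\lambda| \leq \max\{O(1/\sqrt q), \tfrac{1}{d-k-1}\}$. Having $\geq 2$ nonempty gaps only helps, since the complete-multipartite structure between the blocks of the join contributes an eigenvalue of magnitude $O(1/(\#\text{active blocks}))$ while scaling the within-block (possibly bipartite) eigenvalues down by the same factor. Feeding this base case into the two-sided version of trickling-down gives $|\lambda|\bigl((X^{\leq k})_s\bigr) \leq \max\{O(1/\sqrt q), \tfrac{1}{d-k}\}$ for all $s$.

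I expect the main obstacle to be ingredient (b) carried out carefully: the measure a link inherits from the top faces of $X$ (or of $X^{\leq k}$) is not uniform, so bounding the spectrum of a weighted join $B_{d_1} \ast \cdots \ast B_{d_m}$ requires tracking the ``stay within a block'' versus ``cross to another block'' probabilities and then verifying that the single-active-block configuration really is the bottleneck for the negative eigenvalue. Secondary difficulties are the degenerate base cases ($B_1$, $B_2$, and the rank-$2$/rank-$3$ boundary) in the connectivity check, and making the $d$-dependence in the trickling-down recursion explicit enough to land exactly on $\max\{O(1/\sqrt q), \tfrac{1}{d-k}\}$ rather than a weaker bound.
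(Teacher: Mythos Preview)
The paper does not supply its own proof of this claim; it is quoted from \cite{EvraK2016} (and the proof of \pref{thm:spherical-building-and-LSV-complexes-are-suitable} later cites \cite{AbdolazimiO2022} for the fact that the expansion is uniform in the dimension). So there is nothing in the paper to compare your argument against line by line.

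Your outline follows the standard route---links are joins of smaller spherical buildings, the rank-$3$ base case is the $PG(2,q)$ incidence graph, and the multipartite structure controls the negative eigenvalues---and is correct in spirit. Two points deserve more care. First, naive trickling-down from the codimension-$2$ links accumulates a factor that grows with $d$ (your own formula $\lambda_0/(1-(d-k-1)\lambda_0)$ shows this), so it does not by itself yield the $O(1/\sqrt q)$ bound uniformly in $d$; one really needs to analyze the $1$-skeleton of each link directly as a weighted join of buildings and show each factor already has $\lambda = O(1/\sqrt q)$ independent of its rank. Second, your last step---going from a codimension-$2$ bound of $1/(d-k-1)$ to an overall bound of $1/(d-k)$ ``by trickling-down''---is stated backwards: trickling-down can only weaken a bound, not sharpen it. The way to get $1/(d-k)$ is again to analyze every link directly: a $j$-face ($j\le k-2$) has $j+2$ gaps summing to $d$, and the worst multipartite configuration for the negative eigenvalue gives roughly $-1/(d-j-2)\ge -1/(d-k)$. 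Your identified obstacle, the weighted-join spectral computation, is indeed where the real work lies.
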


\subsection{Agreement tests} \label{sec:agreement-tests}

Let \(k<d\) and let \(X\) be a \(d\)-dimensional simplicial complex. Let \(\Sigma\) be some fixed alphabet and suppose we have an ensemble of functions \(\mathcal{F} = \sett{f_r:r\to \Sigma}{r \in X(k)}\).%

An two-query agreement test is a distribution \(\mathcal{D}\) over pairs \(r_1,r_2 \in X(k)\). The agreement of an ensemble is
\begin{equation}
    \agr_{\mathcal{D}}(\mathcal{F}) = \Prob[r_1,r_2\sim \mathcal{D}]{f_{r_1} = f_{r_2}}.
\end{equation}
When we write \(f_{r_1} = f_{r_2}\) we mean that \(f_{r_1}(v)=f_{r_2}(v)\) for every \(v \in r_1 \cap r_2\).

More generally, a \(q\)-ary agreement test is a distribution of \(r_1,r_2,...,r_q \in X(k)\), where the agreement of the ensemble
\begin{equation}
    \agr_{\mathcal{D}}(\mathcal{F}) = \Prob[r_1,r_2,\dots r_q \sim \mathcal{D}]{\forall i,j \; f_{r_i} = f_{r_j}}.
\end{equation}

Let \(\Delta_k(d)\) be the \(k\)-dimensional complete complex over \(d\) vertices. Let \(\mathcal{D}\) be a \(q\)-ary agreement test on \(\Delta_k(d)\) and assume that it is symmetric\footnote{i.e. for every permutation \(\pi:[d] \to [d]\) it holds that \(\prob{s_1,s_2,...,s_q} = \prob{\pi(s_1),\pi(s_2),...,\pi(s_q)}.\)}. Let \(X\) be another \(d\)-dimensional simplicial complex. We define the \emph{extension} \(\mathcal{D}_X\) of \(\mathcal{D}\) to an agreement test on \(X\), as follows:
\begin{enumerate}
    \item Sample \(t \in X(d)\).
    \item Query \(r_1,r_2,...,r_q \subseteq t\) according to \(\Delta_k(d)\).
\end{enumerate}
We note that by the symmetry of \(\mathcal{D}\) the second step doesn't depend on the way we identify the vertices of \(t\) with the vertices of \(\Delta_k(d)\). Let us give two examples for such tests, that were considered in previous works. %
\begin{definition}[Two-query \(V\)-test]
    Let \(d = 2k-\sqrt{k+1}+1\).
    \begin{enumerate}
        \item Sample some \(t \in X(d)\).
        \item Sample uniformly \(s_1,s_2 \in X(k)\) such that \(s_1, s_2 \subseteq t\), conditioned on \(\abs{s_1 \cap s_2} = \sqrt{k+1}\).
    \end{enumerate}
\end{definition}

\begin{definition}[Three-query \(Z\)-test]
    Let \(d=3k-2\sqrt{k+1}+2)\).
    \begin{enumerate}
        \item Sample some \(t \in X(d)\).
        \item Sample three \(s_1,s_2,s_3 \in X(k)\) such that \(s_1, s_2, s_3 \subseteq t\), conditioned on \(\abs{s_1 \cap s_2}, \abs{s_2 \cap s_3} = \sqrt{k+1}\) and \(s_1 \cap s_3 = \emptyset\).
    \end{enumerate}
\end{definition}

A sound distribution is a distribution that supports an agreement theorem.
\begin{definition} \label{def:distribution-soundness}
    Let \(X\) be a simplicial complex and let \(\mathcal{D}\) be an agreement distribution on \(X\). Let \(\eta,\varepsilon_0,e > 0\) be constants. We say that \(\mathcal{D}\) is \emph{\((\eta,\varepsilon_0,e)\)-sound} if for every ensemble of functions \(\mathcal{F}\) such that \(\agr_{\mathcal{D}}(\mathcal{F}) = \varepsilon \geq \varepsilon_0\), there exists a function \(G:X(0)\to \Sigma\), such that
    \begin{equation} \label{eq:dist-soundness}
        \Prob[r_1,r_2,\dots,r_q \sim \mathcal{D}]{\forall j \; G|_{r_j} \overset{1-\eta}{\approx} f_{r_j} \ve \forall i,j \; f_{r_i} = f_{r_j}} \geq \frac{1}{2}\varepsilon^e.    
    \end{equation}
    
\end{definition}
Here \(f \overset{1-\eta}{\approx} g\) means that \(f,g\) differ on at most a \(\eta\)-fraction of their coordinates, or stated differently \(\dist(f,g) \leq \eta\).

We extend this definition to soundness with respect to covers in \pref{def:distribution-cover-soundness}, after we introduce simplicial covers formally.

Here are some examples of such distributions.
\begin{example} \label{ex:sound-distributions}
    \begin{enumerate}
    \item Dinur and Goldenberg showed that the \(V\)-test extended to \(X=\Delta_k(n)\) is \((\sqrt{k},k^{-c},1)\)-sound for \(d \geq k^3\)  and \(c > 0\) \cite{DinurG2008}\footnote{In fact, \cite{DinurG2008} considered a \(V\)-test with a smaller intersection size, but the same result hold for \(\sqrt{k}\) too. See \cite{ImpagliazzoKW2012} for a proof for the intersection size of \(\sqrt{k}\).}.
    \item The \(1\%\)-agreement theorem by Impagliazzo, Kabanets and Wigderson, showed that the \(Z\)-test extended to \(X=\Delta_k(n)\) together with the techniques used in \cite[Theorem 5.1]{DinurG2008} show that the \(Z\)-test is \((k^{-0.2},exp(-\Omega(k^{1/2}),O(1))\)-sound.
    \item Dinur and Livni-Navon, together with the techniques used in \cite[Theorem 5.1]{DinurG2008} show that the \(Z\)-test is \((\lambda,exp(-\Omega(k)),O(1))\)-sound for every constant \(\lambda > 0\) \cite{DinurL2017}.
\end{enumerate}
\end{example}

Here \(n \gg k\) in all cases. We remark that the exact distributions in the works in the example were not described the way we described them (the conditioning in the second step required the intersection to be at least a certain size, not exactly a certain size). However, when the complex is large enough the TV-distance between the distributions in all previous works, and the distributions we described above, is negligible in the regime of parameters we consider. We therefore ignore this small point.

We also stress that while previous works give very strong agreement results, that apply for \(\varepsilon_0 = \exp(-\Omega(k))\), our technique currently yields results on for \((\eta,\varepsilon_0,O(1))\) such that \(\eta \exp(\poly(1/\varepsilon)) \ll 1\), so in particular, we will always think of \(\varepsilon = 1/\log(k)\) in the main theorem.

\subsection{Covering maps}
In this subsection we give a short introduction to covers and their connection to \(1\)-cohomology. We stress that everything we state in this subsection is well known. For a more in depth discussion, see \cite{Surowski1984}.

\begin{definition}[Covering map]\label{def:cover}
    Let \(Y,X\) be simplicial complexes. We say that a map \(\rho:Y(0) \to X(0)\) is a covering map if the following holds.
    \begin{enumerate}
        \item \(\rho\) is a surjective homomorphism.
        \item For every \(v \in X(0)\), and \((v,i) \in \rho^{-1}(\set{v})\) it holds that \(\rho|_{Y_{(v,i)}}:Y_{(v,i)}(0) \to X_v(0)\) is an isomorphism.
    \end{enumerate}
    We often denote \(\rho:Y\to X\). We say that \(\rho\) is an \(\ell\)-cover if for every \(v \in X(0)\) it holds that \(\Abs{\rho^{-1}(\set{v})}=\ell\). If there exists such a covering map \(\rho:Y\to X\) we say that \(Y\) covers \(X\).
\end{definition}

Let us see two examples for covers.
\begin{example} \label{ex:trivial-cover}
    Let \(X\) be any simplicial complex. Let \(Y\) be \(\ell\)-disconnected copies of \(X\). Then the projection of every to \(X\), \(\rho:Y \to X\), is a covering map. This cover a called the trivial cover.
\end{example}

\begin{figure}
    \centering
    \includegraphics[scale=0.5]{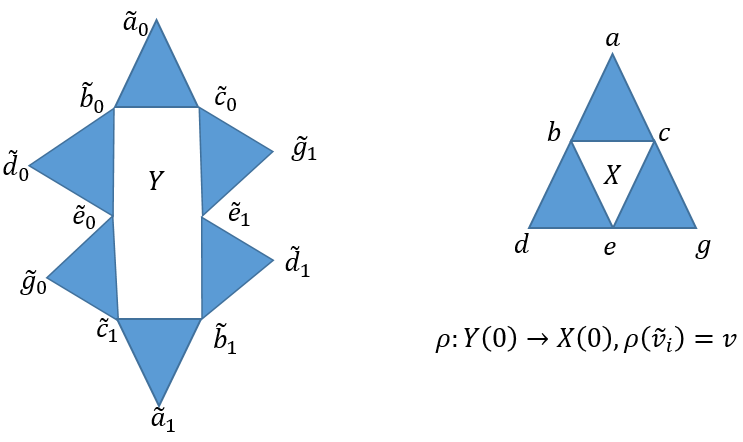}
    \caption{Non trivial connected cover}
    \label{fig:cover-example}
\end{figure}
\begin{example}
    \pref{fig:cover-example} contains an example of a \(\rho:Y \to X\) such that \(Y\) is connected.
\end{example}
Next we develop some basic properties of covers that are both necessary for our result, and give some intuition of the structure of covers.

The first property we show is that a cover \(\rho:Y \to X\) induces a permutations \(\pi_{uv}\) for every directed edge \(uv \in \tilde{X}(1)\) that ``encode'' the covers information as in the claim below.
\begin{claim} \label{claim:permutations-from-covers}
    Let \(X\) be a simplicial complex and let \(\rho:Y \to X\) be a cover. Let \(vu \in \dir{X}(1)\). Let \(\rho^{-1}(v) = \set{(v,1),(v,2),\dots,(v,\ell)}\) and \(\rho^{-1}(u) = \set{(u,1),(u,2),\dots,(u,\ell')}\). Then \(\ell = \ell'\) and there exists \(\pi_{uv}:[\ell]\to [\ell]\) such that \(\pi_{uv}(i)=j\) if and only if \(\set{(v,i), (u,j)} \in Y(1)\).
\end{claim}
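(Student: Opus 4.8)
The plan is to read off the permutation directly from the local-isomorphism axiom, condition~2 of \pref{def:cover}, applied to the links of the endpoints of the edge. First note that since $vu \in \dir{X}(1)$ we have $u \in X_v(0)$ and, symmetrically, $v \in X_u(0)$. I would then fix $(v,i) \in \rho^{-1}(v)$ and use that $\rho|_{Y_{(v,i)}} : Y_{(v,i)}(0) \to X_v(0)$ is an isomorphism: there is a \emph{unique} vertex of $Y_{(v,i)}(0)$ mapping to $u$, and since it is a neighbor of $(v,i)$ lying over $u$ it has the form $(u,j)$ for a unique $j$. Define $\pi_{uv}(i) := j$. By construction $\{(v,i),(u,j)\} \in Y(1)$; conversely, if $\{(v,i),(u,j')\} \in Y(1)$ with $(u,j') \in \rho^{-1}(u)$, then $(u,j') \in Y_{(v,i)}(0)$ maps to $u$, so by the uniqueness just used $j' = j$. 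This gives the claimed ``if and only if''.

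Next I would show $\pi_{uv}$ is a bijection $[\ell] \to [\ell']$, which simultaneously forces $\ell = \ell'$ and makes $\pi_{uv}$ a permutation of $[\ell]$. For injectivity, suppose $\pi_{uv}(i) = \pi_{uv}(i') = j$. Then both $(v,i)$ and $(v,i')$ lie in $Y_{(u,j)}(0)$ and both map to $v$ under $\rho$; since $\rho|_{Y_{(u,j)}} : Y_{(u,j)}(0) \to X_u(0)$ is an isomorphism (in particular injective), $(v,i) = (v,i')$. For surjectivity, take any $(u,j) \in \rho^{-1}(u)$; applying condition~2 of \pref{def:cover} to the link $Y_{(u,j)}$ together with $v \in X_u(0)$, there is a unique $(v,i) \in Y_{(u,j)}(0)$ over $v$, so $\{(v,i),(u,j)\} \in Y(1)$ and hence $\pi_{uv}(i) = j$ by the equivalence established above.

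I do not expect a genuine obstacle here; the statement is essentially an unwinding of the covering axioms. The only point requiring care is bookkeeping about \emph{which} link is invoked where — the link of $(v,i)$ to define $\pi_{uv}$, prove the ``iff'', and prove surjectivity, and the link of $(u,j)$ to prove injectivity — together with checking that every $Y$-edge through $(v,i)$ lying over the edge $vu$ is captured by the isomorphism's uniqueness clause. An equivalent and perhaps cleaner packaging would be to observe that the cover axioms applied to the single face $\{v,u\} \in X(1)$ say that $\rho^{-1}(\{v,u\})$ is a disjoint union of $\ell$ edges of $Y$, each meeting $\rho^{-1}(v)$ and $\rho^{-1}(u)$ in exactly one vertex; the permutation $\pi_{uv}$ is then just the induced perfect matching between $\rho^{-1}(v)$ and $\rho^{-1}(u)$, and the ``iff'' is immediate.
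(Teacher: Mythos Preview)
Your proposal is correct and follows essentially the same approach as the paper: both arguments use the local-isomorphism axiom at the link of $(v,i)$ to define $\pi_{uv}$ and establish the ``iff'', and then invoke the link at $(u,j)$ to get bijectivity. The only cosmetic difference is that the paper packages the bijectivity step by defining the symmetric map $\pi_{vu}$ and observing that $\pi_{uv}$ and $\pi_{vu}$ are mutual inverses, whereas you prove injectivity and surjectivity separately---but these are the same argument unwound.
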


Using these permutations we can show that all covers of a connected simplicial complex are \(\ell\)-covers.
\begin{corollary} \label{cor:every-cover-is-regular}
    If \(X\) is connected then any cover of \(X\) is an \(\ell\)-cover for some \(\ell \in \NN \cup \set{\infty}\).
\end{corollary}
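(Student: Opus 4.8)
The plan is to observe that the fiber-size function on vertices, $N(v) := \lvert \rho^{-1}(v)\rvert \in \NN \cup \set{\infty}$, is constant along every edge of the underlying graph $X^{\le 1}$, and then to upgrade this to global constancy using the hypothesis that $X$ is connected. This reduces the corollary to a one-line local statement plus a standard connectedness propagation.

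The first step is to record why $N$ is edge-constant; this is essentially the content of \pref{claim:permutations-from-covers}. Fix a directed edge $vu \in \dir X(1)$ and a preimage $(v,i) \in \rho^{-1}(v)$. Since $u$ is a neighbour of $v$ in $X$, we have $u \in X_v(0)$, and because $\rho$ restricts to an isomorphism $\rho|_{Y_{(v,i)}} : Y_{(v,i)}(0) \to X_v(0)$, there is exactly one vertex $w \in Y_{(v,i)}(0)$ with $\rho(w) = u$; as $\rho$ is a homomorphism, $w \in \rho^{-1}(u)$ and $\set{(v,i),w} \in Y(1)$. This assigns to each $(v,i)$ a well-defined element of $\rho^{-1}(u)$, and running the same construction from $u$ to $v$ produces a two-sided inverse, so we get a bijection $\pi_{uv}\colon \rho^{-1}(v) \to \rho^{-1}(u)$. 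In particular $\lvert \rho^{-1}(v)\rvert = \lvert \rho^{-1}(u)\rvert$, i.e. $N(v) = N(u)$. (The only point of care here is the infinite-fiber case: one should read this conclusion as an equality of cardinalities, i.e. "$\pi_{uv}$ is a bijection between the two fibers'', rather than as the existence of a permutation of $[\ell]$ — the argument above gives exactly this.)

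The second step is the propagation: $N\colon X(0) \to \NN\cup\set\infty$ takes the same value on the two endpoints of every edge of $X^{\le 1}$, hence is constant on each connected component of $X^{\le 1}$. Since $X$ is connected, $X^{\le 1}$ has a single component, so $N$ is constant; call its value $\ell \in \NN \cup \set\infty$. Then $\lvert\rho^{-1}(\set v)\rvert = \ell$ for every $v \in X(0)$, which is exactly the definition of an $\ell$-cover in \pref{def:cover}, completing the proof.

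There is no real obstacle here; the statement is a routine consequence of \pref{claim:permutations-from-covers} together with connectedness. The only thing to be mildly careful about is phrasing the edge-local bijection in a way that also covers $\ell = \infty$, and making sure "connected'' is interpreted at the level of the $1$-skeleton $X^{\le 1}$ (equivalently, of the underlying graph of $X$), which is the sense in which the fiber bijections connect vertices.
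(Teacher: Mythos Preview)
Your proof is correct and follows essentially the same approach as the paper: both use \pref{claim:permutations-from-covers} to show the fiber size is constant along edges and then invoke connectedness to conclude it is globally constant. The paper phrases the second step as a cut/contradiction argument while you phrase it as direct propagation, but this is only a cosmetic difference.
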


\begin{proof}[Proof of \pref{claim:permutations-from-covers}]
    First we note that for every \((v,i)\) there is a unique \((u,j)\) such that \(\set{(v,i), (u,j)} \in Y(1)\). This is because \(\rho|_{\tilde{X}_{(v,i)}}:X_{(v,i)}(0)\to X_v(0)\) is an isomorphism so there is a single preimage of \(u\) in \(X_{(v,i)}(0)\). Hence there are functions \(\pi_{uv}:[\ell] \to [\ell']\) and \(\pi_{uv}:[\ell'] \to [\ell]\) such that \(\pi_{uv}(i)=j\) if and only if \(\set{(v,i), (u,j)} \in Y(1)\). These functions invert each other from their definition, i.e. \(\pi_{uv}(i)=j\) if and only if \(\set{(v,i), (u,j)} \in Y(1)\) if and only if \(\pi_{vu}(j)=i\) hence \(\ell=\ell'\) and this is the required permutation.
\end{proof}

\begin{proof}[Proof of \pref{cor:every-cover-is-regular}]
    Let \(X\) be a connected complex and assume that there exists a cover \(\rho:Y\to X\) such that \(Y\) is not an \(\ell\)-cover for any \(\ell\). Let \(\ell\) be the number of images of some arbitrary vertex \(v \in X(0)\). Let \(B = \sett{v' \in X(0)}{\Abs{\rho^{-1}(v')} = \ell} \ne \emptyset\). If \(X(0) \setminus B\) is not empty then the cut between \(B\) and \(X(0) \setminus B\) has an edge crossing it. But the number of preimages for both sides of the edge is equal which lead to a contradiction.
\end{proof}

Finally, let us show that in an \(\ell\)-cover, every \(s \in X\) has \(\ell\)-inverse images.
\begin{claim} \label{claim:inverse-image-of-l-cover}
    Let \(\rho:Y \to X\) be an \(\ell\)-cover. Then for every non-empty \(s \in X\),
    \[\abs{\rho^{-1}(s)} = \ell.\]
\end{claim}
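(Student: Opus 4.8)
The plan is to prove \pref{claim:inverse-image-of-l-cover} by directly exhibiting the $\ell$ preimages of $s$, using only the link-isomorphism clause in the definition of a covering map (\pref{def:cover}); no induction on dimension is actually needed. Fix a non-empty face $s \in X$ and pick a vertex $v \in s$, so that $s \setminus \set{v}$ is a face of the link $X_v$ (by the convention $X(-1)=\set{\emptyset}$ this is fine even when $s$ itself is a vertex, in which case $s\setminus\set v = \emptyset$).

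First I would construct $\ell$ distinct preimages. Write $\rho^{-1}(\set v) = \set{(v,1),\dots,(v,\ell)}$. For each $i\in[\ell]$, the restriction $\rho|_{Y_{(v,i)}} : Y_{(v,i)}(0) \to X_v(0)$ is an isomorphism of complexes, so there is a unique face $t_i$ of $Y_{(v,i)}$ with $\rho(t_i) = s\setminus\set v$. Set $\tilde s_i = \set{(v,i)} \cup t_i$; since $t_i \in Y_{(v,i)}$, this is a face of $Y$, and $\rho(\tilde s_i) = \set v \cup (s \setminus \set v) = s$. The faces $\tilde s_1,\dots,\tilde s_\ell$ are pairwise distinct because they contain the distinct vertices $(v,1),\dots,(v,\ell)$.

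Next I would show these are \emph{all} the preimages. Let $\tilde s \in Y$ be any face with $\rho(\tilde s)=s$. Since $v \in s = \rho(\tilde s)$, some vertex $(v,i) \in \tilde s$. Then $\tilde s \setminus \set{(v,i)}$ is a face of the link $Y_{(v,i)}$, so $\rho$ maps it into $X_v$; in particular $v \notin \rho(\tilde s \setminus \set{(v,i)})$. Combining this with $\rho(\tilde s) = \set v \cup \rho(\tilde s \setminus \set{(v,i)}) = s$ forces $\rho(\tilde s \setminus \set{(v,i)}) = s \setminus \set v$. By the uniqueness of the preimage under the isomorphism $\rho|_{Y_{(v,i)}}$, we conclude $\tilde s \setminus \set{(v,i)} = t_i$, hence $\tilde s = \tilde s_i$. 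Thus $\rho^{-1}(s) = \set{\tilde s_1,\dots,\tilde s_\ell}$, which has size exactly $\ell$.

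I do not expect a genuine obstacle here. The only point worth stating carefully is that a preimage of $s$ cannot be a face of strictly larger size, i.e.\ that $\rho$ is injective on each face of $Y$; this again comes from the link-isomorphism clause: if $(v,i),(v,j)\in\tilde s$ with $i\neq j$, then $(v,j)$ would be a vertex of $Y_{(v,i)}$ mapped by $\rho$ to $v$, which is not a vertex of $X_v$, contradicting that $\rho|_{Y_{(v,i)}}$ maps into $X_v$. (In the argument above this is implicitly used when we picked a single index $i$ with $(v,i)\in\tilde s$, and it is also consistent with the conclusion $\tilde s = \tilde s_i$, since distinct $\tilde s_i$ cannot coincide.)
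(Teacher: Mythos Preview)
Your proof is correct and follows essentially the same approach as the paper: fix a vertex $v\in s$, use the link isomorphism at each $(v,i)$ to produce a unique preimage $\tilde s_i$ of $s$ containing $(v,i)$, and then show any preimage must be one of these. Your added remark on why $\rho$ is injective on each face is a nice point that the paper leaves implicit.
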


\begin{proof}[Proof of \pref{claim:inverse-image-of-l-cover}]
    Let \(s \in X\). Let \(v \in s\) be any vertex inside \(s\) and we write \(s = \set{v} \dunion r\). By definition, there are \(\ell\)-vertices \((v,1),(v,2),\dots,(v,\ell)\) such that \(\rho((v,i)) = v\). By the local isomorphism property between \(X_v\) and every \(Y_{(v,i)}\), there are faces \(\tilde{r}_i \in Y_{(v,i)}\) such that \(\rho(\tilde{r}_i) = r\). The faces \(\tilde{s}_i = \tilde{r}_i \dunion \set{(v,i)}\) are \(\ell\)-inverse images of \(s\). This shows that \(\abs{\rho^{-1}(s)} \geq \ell\). Let us see that every preimage of \(s\) it one of these \(\tilde{s}_i\). Indeed, let \(\hat{s} \in \rho^{-1}(s)\). Let \((v,i) \in \hat{s}\) be the preimage of \(v\) in \(\hat{s}\). Then \(\hat{r} = \hat{s} \setminus \set{(v,i)} \in Y_{(v,i)}\) maps to \(r\). By the fact that \(\rho|_{Y_{(v,i)}}:Y_{(v,i)}(0) \to X_v(0)\), this implies that \(\hat{r}=\tilde{r}_i\). Thus \(\hat{s}= \tilde{s}_i\).
\end{proof}

\subsubsection{The induced function}
Let \(\rho:Y \to X\) be an \(\ell\)-cover. Without loss of generality we identify 
\[Y(0) = X(0) \times [\ell] = \sett{(v,i)}{v \in X(0), \; i=1,2,\dots,\ell}\]
where \(\rho((v,i)) = v\). We define \emph{the induced function} 
\begin{equation} \label{eq:induced-function}
    \psi_{\rho}:\dir{X}(1) \to Sym(\ell), \; \psi_{\rho}(vu)=\pi_{uv}
\end{equation}
where \(\pi_{uv}(i)=j\) are such that \(\set{(v,i), (u,j)} \in Y(1)\) (as in \pref{claim:permutations-from-covers}). The first thing we notice is that this function is asymmetric, i.e., \(\psi(uv)=\psi(vu)^{-1}\) for every edge \(uv \in \dir{X}(1)\). Let 
\[C^1(X,Sym(\ell)) = \sett{\psi:\dir{X}(1)\to Sym(\ell)}{f \text{ is asymmetric}}\]
be the space of asymmetric functions. These functions are sometimes referred to as non-abelian cochains in the literature. One may suspect if there is a bijection between covers \(\rho\) and \(\psi \in C^1(X,Sym(\ell))\), but there are many \(\psi \in C^1(X,Sym(\ell))\) whose permutations don't give rise to a cover. It turns out that and asymmetric function \(\psi \in C^1(X,Sym(\ell))\) corresponds to a cover if and only if for every triangle \(uvw \in \dir{X}(2)\) it holds that \begin{equation} \label{eq:cover-condition}
    \psi_{\rho}(vw) \circ \psi_{\rho}(uv)=\psi_{\rho}(uw).
\end{equation}
We denote by \(Z^1(X,Sym(\ell)) \subseteq C^1(X,Sym(\ell))\) 
\[ Z^1(X,Sym(\ell)) = \sett{\psi \in  C^1(X,Sym(\ell))}{\forall uvw \in \dir{X}(2), \; \psi(vw) \circ \psi(uv)=\psi(uw)}.\]

\begin{claim} \label{claim:necessary-conditionfor-covers}
    Let \(\rho:Y \to X\) be an \(\ell\)-cover. Then \(\psi_{\rho} \in Z^1(X,Sym(\ell))\).
\end{claim}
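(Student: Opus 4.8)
The plan is to unwind the definition of the induced function $\psi_\rho$ and exploit the local isomorphism property of $\rho$ on vertex links in order to lift each oriented triangle of $X$ to a genuine triangle of $Y$; the cocycle identity will then be read off directly from the three edges of that lifted triangle. Since it was already noted that $\psi_\rho$ is asymmetric, hence $\psi_\rho \in C^1(X,Sym(\ell))$, the only thing to check is the triangle condition defining $Z^1(X,Sym(\ell))$.

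First I would fix an oriented triangle $uvw \in \dir X(2)$ and an arbitrary index $i \in [\ell]$, and look at the vertex $(u,i) \in \rho^{-1}(u)$. Because $\rho|_{Y_{(u,i)}} : Y_{(u,i)}(0) \to X_u(0)$ is an isomorphism and $\set{v,w} \in X_u(1)$, there is a \emph{unique} edge $\tilde e \in Y_{(u,i)}(1)$ with $\rho(\tilde e) = \set{v,w}$; write $\tilde e = \set{(v,j),(w,k)}$, where $(v,j)$ and $(w,k)$ are the unique vertices of $\tilde e$ mapping to $v$ and to $w$ respectively. Then $\tilde e \cup \set{(u,i)} = \set{(u,i),(v,j),(w,k)} \in Y(2)$, so all three of its edges lie in $Y(1)$. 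This is the same lifting mechanism already used in the proof of \pref{claim:inverse-image-of-l-cover}.

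Next I would convert each of these three edge memberships into a statement about the permutations of \pref{claim:permutations-from-covers}, which is precisely how $\psi_\rho$ is defined: $\set{(u,i),(v,j)} \in Y(1)$ gives $\psi_\rho(uv)(i) = j$; $\set{(v,j),(w,k)} \in Y(1)$ gives $\psi_\rho(vw)(j) = k$; and $\set{(u,i),(w,k)} \in Y(1)$ gives $\psi_\rho(uw)(i) = k$. Composing the first two, $\psi_\rho(vw)\bigl(\psi_\rho(uv)(i)\bigr) = \psi_\rho(vw)(j) = k = \psi_\rho(uw)(i)$. As $i \in [\ell]$ was arbitrary this yields $\psi_\rho(vw)\circ\psi_\rho(uv) = \psi_\rho(uw)$ for the chosen triangle, and since the triangle was arbitrary, $\psi_\rho \in Z^1(X,Sym(\ell))$.

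The one subtlety — and the closest thing to an obstacle in an otherwise definition-chasing argument — is consistency of the three lifts: the preimage $(v,j)$ used to witness $\psi_\rho(uv)(i)=j$ must be the same $(v,j)$ used to witness $\psi_\rho(vw)(j)=k$, and likewise for $(w,k)$. This is exactly why I lift the whole $2$-face $\tilde e \cup \set{(u,i)}$ at once rather than lifting the three edges $uv$, $vw$, $uw$ independently: all three lifted edges are then forced to be faces of a single face of $Y$, so their endpoints automatically agree. Everything else is a routine unfolding of the definitions of covering map, link isomorphism, and $\psi_\rho$.
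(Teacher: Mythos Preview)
Your proof is correct and takes essentially the same approach as the paper's: both arguments fix a triangle $uvw$, pick a basepoint (you choose $(u,i)$, the paper chooses the middle vertex), and use the link isomorphism at that basepoint to lift the opposite edge of the triangle, thereby obtaining a single lifted $2$-face whose three edges encode the required identity. Your explicit discussion of the consistency subtlety is a nice addition, but the underlying argument is the same.
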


\begin{proof}[Proof of \pref{claim:necessary-conditionfor-covers}]
    Let \(\rho:Y \to X\) and fix \(uvw \in X(2)\), let us show that \(\psi_{\rho}(vw) \circ \psi_{\rho}(uv)=\psi_{\rho}(uw)\). Indeed, by definition \(\psi_{\rho}(xy) = \pi_{yx}\) where \(\pi_{yx}(i)=j\) if and only if \(\set{(x,i), (y,j)} \in Y(1)\). Indeed, let \(\pi_{uv}(i)=j\) and \(\pi_{wv}(i)=k\) and we need to show that \(\pi_{wu}(j)=k\), which is equivalent to showing that if \(\set{(v,i),(u,j)}, \set{(v,i),(w,k)} \in Y(1)\) then \(\set{(u,j),(w,k)} \in Y(1)\). \(\rho:Y_{(v,i)}(0) \to X_v(0)\) is an isomorphism, it holds that \(\set{(u,j), (w,k)} \in Y_{(v,i)}(1)\) and in particular the edge is in \(Y(1)\).
\end{proof}

If \(\psi \in Z^1(X,Sym(\ell))\), then we can construct a cover \(\rho=\rho_\psi:Y \to X\) such that \(\psi_{\rho} = \psi\) as follows.
\begin{definition} \label{def:induced-cover}
    Let \(\psi \in C^1(X,Sym(\ell))\) and denote by \(\psi(vu)=\pi_{uv}\). The induced cover is the complex \(Y=Y_\psi\) and mapping \(\rho_\psi:Y \to X\) defined by
    \begin{enumerate}
        \item \(Y(0) = X(0) \times [\ell]\) and \(\rho_\psi((v,i))=v\).
        \item \(Y(1) = \sett{\set{(v,i),(u,\pi_{uv}(i))}}{v \in X(0), vu \in \dir{X}(1), i \in [\ell]}\).
        \item \(\tilde{s}=\set{(v_0,i_0),(v_1,i_1),\dots, (v_m,i_m)} \in Y(m)\) if and only if \(s=\rho(\tilde{s})=\set{v_0,v_1,\dots,v_m} \in X(m)\) and for every \(v_{p},v_{q} \in s\) it holds that \(\pi_{v_q,v_p}(i_p)=i_q\).
    \end{enumerate}
\end{definition}

\begin{claim} \label{claim:reconstruction of a cover}
    Let \(\psi \in Z^1(X,Sym(\ell))\). Then \(\rho_\psi:Y \to X\) is an \(\ell\)-cover of \(X\) and \(\psi_{\rho}=\psi\).
\end{claim}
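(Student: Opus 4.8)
The plan is to verify directly the three conditions in \pref{def:cover}, plus the two additional assertions that $\rho_\psi$ is an $\ell$-cover and that $\psi_{\rho_\psi}=\psi$, by unwinding \pref{def:induced-cover} and repeatedly applying the cocycle identity \eqref{eq:cover-condition} to propagate index labels consistently across faces. There is no deep idea: the whole argument is a careful bookkeeping exercise, and the one place that needs the most attention is the local isomorphism condition.

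First I would check that $Y=Y_\psi$ is a pure $d$-dimensional simplicial complex and that $\rho_\psi$ is a surjective homomorphism. Downward closure is immediate from item 3 of \pref{def:induced-cover}: if $\tilde s\in Y$ and $\tilde r\subseteq \tilde s$ then $\rho_\psi(\tilde r)\subseteq \rho_\psi(\tilde s)\in X$ is a face of $X$, and the constraint $\pi_{v_q,v_p}(i_p)=i_q$ defining membership in $Y$ is inherited by subsets. Note also that $\rho_\psi$ is injective on every face, since membership in $Y(m)$ requires $\rho_\psi(\tilde s)=\set{v_0,\dots,v_m}\in X(m)$ to consist of $m+1$ distinct vertices; and that the asymmetry $\psi(uv)=\psi(vu)^{-1}$ (which holds because $\psi\in Z^1\subseteq C^1$) is exactly what makes $Y(1)$ well defined as an undirected graph, i.e. $j=\pi_{uv}(i)\iff i=\pi_{vu}(j)$. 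That $\rho_\psi$ is a homomorphism is built into item 3. For surjectivity onto $X(m)$: given $s=\set{v_0,\dots,v_m}\in X(m)$, fix $i_0\in[\ell]$ and set $i_p:=\pi_{v_p,v_0}(i_0)$; applying \eqref{eq:cover-condition} to the triangle $v_0v_pv_q\in X(2)$ gives $\pi_{v_q,v_p}\circ\pi_{v_p,v_0}=\pi_{v_q,v_0}$, hence $\pi_{v_q,v_p}(i_p)=i_q$ (the pairs involving $v_0$ being immediate from the definitions and asymmetry), so $\tilde s:=\set{(v_p,i_p)}_p\in Y(m)$ and $\rho_\psi(\tilde s)=s$. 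The same computation, extending a labeling from one vertex of a face to all vertices of a $d$-face containing it, shows every face of $Y$ lies inside a $d$-face, giving purity.

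The step that requires the most care is the local isomorphism: for each $(v,i)\in Y(0)$ one must show $\rho_\psi|_{Y_{(v,i)}}:Y_{(v,i)}(0)\to X_v(0)$ is an isomorphism of complexes. Being a well-defined homomorphism is clear. Injectivity on vertices: if $(u,j_1),(u,j_2)\in Y_{(v,i)}(0)$ both lie above $u$ then by item 2 of \pref{def:induced-cover} $j_1=\pi_{uv}(i)=j_2$. For surjectivity, and for the fact that the inverse carries faces to faces: given a face $r=\set{u_0,\dots,u_m}$ of $X_v$, i.e. $r\dunion\set{v}\in X$, put $j_p:=\pi_{u_p,v}(i)$ and consider $\tilde r:=\set{(u_p,j_p)}_p$; invoking \eqref{eq:cover-condition} on the triangles $v\,u_p\,u_q\in X(2)$ yields $\pi_{u_q,u_p}(j_p)=\pi_{u_q,v}(i)=j_q$, so $\tilde r\dunion\set{(v,i)}\in Y$ and therefore $\tilde r\in Y_{(v,i)}$ with $\rho_\psi(\tilde r)=r$. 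Combining, $\rho_\psi|_{Y_{(v,i)}}$ is a bijection on vertices that sends faces to faces in both directions, hence an isomorphism. Along the way this also confirms that $Y$ is a simplicial complex in the paper's sense, completing the verification of \pref{def:cover}.

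It remains to record the two bookkeeping claims. Since, by item 1 of \pref{def:induced-cover}, $\rho_\psi^{-1}(\set{v})=\set{v}\times[\ell]$ has size $\ell$ for every $v\in X(0)$, the map $\rho_\psi$ is an $\ell$-cover. Finally $\psi_{\rho_\psi}=\psi$: by \eqref{eq:induced-function}, $\psi_{\rho_\psi}(vu)$ is the permutation sending $i$ to the unique $j$ with $\set{(v,i),(u,j)}\in Y(1)$, and by item 2 of \pref{def:induced-cover} that $j$ is exactly $\pi_{uv}(i)=\psi(vu)(i)$. This finishes the plan; as noted, the only genuine content is that the cocycle condition is precisely what guarantees the labels $i_p=\pi_{v_p,v_0}(i_0)$ assigned around a face of $X$ are mutually consistent, i.e. that $Y$ has the intended faces above each face of $X$.
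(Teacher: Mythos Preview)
Your proposal is correct and follows essentially the same approach as the paper: both arguments verify the local isomorphism by propagating labels from a base vertex via $j_p=\pi_{u_p,v}(i)$ and then invoking the cocycle identity on the triangles $v\,u_p\,u_q$ to check mutual consistency, with the remaining items (surjectivity, $\ell$-cover, $\psi_{\rho_\psi}=\psi$) read off directly from \pref{def:induced-cover}. You are a bit more explicit than the paper about downward closure and purity of $Y$, but this is only added detail, not a different method.
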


\begin{proof}[Proof of \pref{claim:reconstruction of a cover}]
It is obvious that \(\rho\) is surjective and that it is a homomorphism from the definition. Let \(v \in X(0)\) and \((v,i) \in Y(0)\). We show that \(\rho|_{Y_{(v,i)}(0)}:Y_{(v,i)}(0) \to X_v(0)\) is an isomorphism. First let us make sure that this is a bijection. By definition, for every edge \(\set{v,u} \in X(1)\) there is an edge \(\set{(v,i),(u,\pi_{uv}(i))} \in Y(1)\) so \(\rho|_{Y_{(v,i)}(0)}\) is surjective. A priori, there could have been more edges \(\set{(v,i),(u,k)}\) if \(i=\pi_{vu}(k)\), but because \(\psi\) is asymmetric, \(i=\pi_{vu}(k)\) implies that \(\pi_{uv}(i)=k\) so the restriction is also injective.

We continue by showing that this is an isomorphism. First note that this bijection is a homomorphism by definition (there are no faces \(\tilde{s} \in Y(m)\) unless their projection \(\rho(\tilde{s}) \in X(m)\), so the same holds for the link of \((v,i)\)). Thus it remains to show that for every \(s \in X_v(m)\) there is some \(\tilde{s} \in X_{(v,i)}(m)\) such that \(\rho(\tilde{s})=s\). By definition of the link, \(s \dunion \set{v}=t \in X(m+1)\). We will show that
\(\tilde{t}=\set{(v,i)} \dunion \sett{(u,j_u)}{u \in s, j_u=\pi_{uv}(i)}\) is in \(Y(m+1)\), which implies that \(\sett{(u,j_u)}{u \in s, j_u=\pi_{uv}(i)}=\tilde{s} \in Y_{(v,i)}(m)\) maps to \(s\).

As \(\rho(\tilde{t})=t\) this amounts to showing that all possible edges \(xy \in Y(1)\) for \(x,y \in \tilde{t}\). One case is when, say, \(x=(v,i)\) and \(y=(u,j_u)\). In this case, \(\set{(v,i),(u,\pi_{uv}(i))} \in Y(1)\) and \(\pi_{uv}(i)=j_u\) by definition, hence the \(xy \in Y(1)\). The other case is when \(x=(u,{j_u})\) and \(y=(w,{j_w})\) for some \(u,w \in s\). in this case we note that \(\pi_{uv}(i)=j_u, \pi_{wv}(i)=j_w\). By \eqref{eq:cover-condition} applied for \(vuw \in X(2)\), it holds that \(j_w=\pi_{wv}(i)=\pi_{wu}(\pi_{uv}(i))=\pi_{wu}(j_u)\). Hence \(\set{(u,{j_u}),(w,{j_w})} \in Y(1)\) and the statement is proven.

The fact that \(\psi_{\rho}=\psi\) follows directly from the definition of the edges in \(Y\).
\end{proof}

\subsubsection{Connectivity of covers}
A simply connected complex is a connected complex such that every cover looks like \pref{ex:trivial-cover}, i.e., a bunch of disconnected copies of the original complex.
\begin{definition}[simply connected complex]
    Let \(X\) be a connected simplicial complex. We say that \(X\) is \emph{simply connected} if it is connected and if for every \(\ell\)-cover \(\rho:Y \to X\) there exists a partition of \(Y\) to \(\ell\) disconnected components \(Y= Y_1 \dunion Y_2 \dunion ... \dunion Y_\ell\) such that \(\rho|_{Y_i}:Y_i \to X\) is an isomorphism.
\end{definition}

\subsubsection{Further properties of covers}
\begin{claim} \label{claim:restriction-of-cover}
Let \(\rho:Y \to X\) be an \(\ell\)-cover. Let \(A \subseteq X(0)\) and let \(X'\) be the induced subcomplex over the vertices of \(A\). Let \(Y'=\rho^{-1}(A)\). Then \(\rho|_{Y'}:Y' \to X'\) is an \(\ell\)-cover.
\end{claim}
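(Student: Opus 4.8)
The plan is to verify directly the two defining properties of a covering map from \pref{def:cover} for the restricted map $\rho|_{Y'}$, exploiting that $Y'$ and $X'$ are \emph{induced} subcomplexes and that $\rho$ restricted to vertex-links is already known to be an isomorphism.

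First I would record the set-up. By definition $X'(0)=A$ and $Y'(0)=\rho^{-1}(A)$, and both $X'$ and $Y'$ consist precisely of those faces of $X$ (resp.\ $Y$) all of whose vertices lie in $A$ (resp.\ in $\rho^{-1}(A)$). From this it is immediate that $\rho|_{Y'}$ is a homomorphism $Y'\to X'$: a face $\tilde s\in Y'$ satisfies $\tilde s\in Y$ and $\tilde s\subseteq\rho^{-1}(A)$, hence $\rho(\tilde s)\in X$ and $\rho(\tilde s)\subseteq A$, so $\rho(\tilde s)\in X'$. Surjectivity onto $X'(0)=A$ holds because for each $v\in A$ the $\ell$ vertices of $\rho^{-1}(\{v\})$ lie in $Y'(0)$. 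Moreover, for $v\in A$ we have $(\rho|_{Y'})^{-1}(\{v\})=\rho^{-1}(\{v\})$ since $\rho^{-1}(\{v\})\subseteq\rho^{-1}(A)=Y'(0)$, so this preimage has size $\ell$; this will give the $\ell$-to-$1$ conclusion once we know $\rho|_{Y'}$ is a covering map.

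The heart of the argument is the local-isomorphism condition. Fix $v\in A$ and $(v,i)\in\rho^{-1}(\{v\})$. I would first note that links behave well under passing to induced subcomplexes: unwinding the definition of a link (and using that subsets of faces are faces), $X'_v$ is exactly the induced subcomplex of $X_v$ on the vertex set $X_v(0)\cap A$, and likewise $Y'_{(v,i)}$ is the induced subcomplex of $Y_{(v,i)}$ on $Y_{(v,i)}(0)\cap\rho^{-1}(A)$. Now $\phi:=\rho|_{Y_{(v,i)}}\colon Y_{(v,i)}\to X_v$ is an isomorphism by hypothesis (property $2$ of \pref{def:cover} for the cover $\rho$). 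Since $\phi$ agrees with $\rho$ on vertices, a vertex $w\in Y_{(v,i)}(0)$ satisfies $\phi(w)\in A$ iff $w\in\rho^{-1}(A)$; hence $\phi$ carries $W:=Y_{(v,i)}(0)\cap\rho^{-1}(A)$ bijectively onto $X_v(0)\cap A$. Finally I would invoke the elementary fact that a simplicial isomorphism $\phi\colon K\to L$ restricts, for any vertex subset $W\subseteq K(0)$, to an isomorphism from the induced subcomplex $K[W]$ onto $L[\phi(W)]$ — indeed $s\subseteq W$ is a face of $K[W]$ iff $s\in K$ and $s\subseteq W$ iff $\phi(s)\in L$ and $\phi(s)\subseteq\phi(W)$ iff $\phi(s)$ is a face of $L[\phi(W)]$. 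Applying this with $K=Y_{(v,i)}$, $L=X_v$ and $W$ as above yields that $\rho|_{Y_{(v,i)}}$ restricts to an isomorphism $Y'_{(v,i)}\to X'_v$, which is exactly property $2$ of \pref{def:cover} for $\rho|_{Y'}$. Together with the previous paragraph, this shows $\rho|_{Y'}\colon Y'\to X'$ is an $\ell$-cover.

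I do not expect a genuine obstacle here; the only points needing care are the two bookkeeping facts used above — that the link of a vertex in an induced subcomplex is the induced subcomplex of the link on the surviving neighbours, and that isomorphisms restrict to isomorphisms of induced subcomplexes — both of which are purely formal consequences of the definitions.
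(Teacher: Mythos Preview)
Your proof is correct. The paper takes a different, shorter route: it invokes the correspondence between $\ell$-covers and cocycles in $Z^1(X,Sym(\ell))$ established in \pref{claim:necessary-conditionfor-covers} and \pref{claim:reconstruction of a cover}. Writing $Y=X^{\psi}$ for some $\psi\in Z^1(X,Sym(\ell))$, the paper simply observes that the restriction $\psi|_{X'(1)}$ still lies in $Z^1(X',Sym(\ell))$ (every triangle of $X'$ is a triangle of $X$) and that the cover it induces is exactly $Y'$ with the map $\rho|_{Y'}$. Your argument is more elementary and self-contained, directly unwinding \pref{def:cover} without passing through the cocycle machinery; the paper's argument is a one-liner, but it leans on the nontrivial bijection between covers and cocycles developed earlier in the section. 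Both are perfectly valid, and your direct approach has the advantage of making explicit the two formal facts (links in induced subcomplexes, restriction of isomorphisms to induced subcomplexes) that are implicitly swept under the rug in the cocycle version.
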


\begin{proof}[Proof of \pref{claim:restriction-of-cover}]
    Let \(\psi \in Z^1(X,Sym(\ell))\) such that \(Y=X^\psi\) is the induced cover as in \pref{def:induced-cover}. Then \(\psi|_{X'(1)} \in Z^1(X',Sym(\ell))\) and \(\rho|_{Y'}\) is the induced cover of this restriction.
\end{proof}

\begin{claim}\label{claim:cover-of-clique-complex}
    Let \(X\) be a clique complex and let \(\rho:Y\to X\) be a cover of \(X\). Then \(Y\) is a clique complex.
\end{claim}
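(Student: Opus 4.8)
The plan is to realize $\rho$ as the cover induced by its non-abelian $1$-cochain and then read off the claim from the membership criterion of \pref{def:induced-cover}. Since $\rho:Y\to X$ is an $\ell$-cover, \pref{claim:necessary-conditionfor-covers} gives $\psi:=\psi_\rho\in Z^1(X,Sym(\ell))$, and, after identifying $Y(0)=X(0)\times[\ell]$ as in the proof of \pref{claim:restriction-of-cover}, the complex $Y$ is the induced cover $Y_\psi$. In particular, writing $\psi(v_pv_q)=\pi_{v_q,v_p}$, a set $\tilde s=\set{(v_0,i_0),\dots,(v_m,i_m)}$ lies in $Y$ if and only if $\rho(\tilde s)=\set{v_0,\dots,v_m}\in X$ and $\pi_{v_q,v_p}(i_p)=i_q$ for all $p,q$.

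Now let $\tilde s\subseteq Y(0)$ be a clique in $Y$; we must show $\tilde s\in Y$. The cases $\abs{\tilde s}\le1$ are immediate, so assume $\tilde s=\set{(v_0,i_0),\dots,(v_m,i_m)}$ with $m\ge1$. First I would check that $\rho$ is injective on $\tilde s$: if $(v,i),(v,j)\in\tilde s$ with $i\ne j$, then $\set{(v,i),(v,j)}\in Y(1)$, so $(v,j)\in Y_{(v,i)}(0)$; but $\rho$ maps $Y_{(v,i)}(0)$ isomorphically onto $X_v(0)$, which does not contain $v=\rho((v,j))$ — a contradiction. Hence $s:=\rho(\tilde s)=\set{v_0,\dots,v_m}$ has $m+1$ distinct vertices. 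Next, since $\rho$ is a homomorphism it carries each edge $\set{(v_p,i_p),(v_q,i_q)}$ of the clique $\tilde s$ to an edge $\set{v_p,v_q}\in X(1)$, so $s$ is a clique in $X$; because $X$ is a clique complex, $s\in X(m)$. Finally, $\set{(v_p,i_p),(v_q,i_q)}\in Y(1)$ forces $\pi_{v_q,v_p}(i_p)=i_q$ by \pref{claim:permutations-from-covers}, so $\tilde s$ meets the membership criterion above, and therefore $\tilde s\in Y(m)$.

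If one prefers to avoid invoking $Y=Y_\psi$, the same conclusion follows directly from the link structure: having shown $s\in X(m)$ as above, fix $(v,i)\in\tilde s$ and set $\tilde r:=\tilde s\setminus\set{(v,i)}$. Every vertex of $\tilde r$ is a neighbor of $(v,i)$ in $Y$, so $\tilde r\subseteq Y_{(v,i)}(0)$, and $\rho(\tilde r)=s\setminus\set v=:r\in X_v(m-1)$. Since $\rho|_{Y_{(v,i)}}:Y_{(v,i)}\to X_v$ is an isomorphism, there is a unique face $\tilde r^*\in Y_{(v,i)}(m-1)$ with $\rho(\tilde r^*)=r$; and since $\rho|_{Y_{(v,i)}(0)}$ is a bijection onto $X_v(0)$, the subset $\tilde r$ and the vertex set of $\tilde r^*$ are both the (unique) preimage of $r$, so $\tilde r=set(\tilde r^*)$ is a face and $\tilde s=\tilde r\cup\set{(v,i)}\in Y(m)$.

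The argument is short; the only points needing care are the injectivity of $\rho$ on a $Y$-clique — which rests on $v\notin X_v(0)$ — and, in the direct variant, the fact that the lift of $r$ inside $Y_{(v,i)}$ is forced to equal $\tilde r$ rather than some other preimage, which is exactly where one uses that $\rho$ restricted to $Y_{(v,i)}(0)$ is a bijection and not merely a surjection.
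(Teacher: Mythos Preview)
Your proposal is correct, and your second (``direct'') argument is essentially the paper's own proof: pick a vertex $(v,i)\in\tilde s$, push the remaining clique $\tilde r$ into the link $Y_{(v,i)}$, and use the isomorphism $Y_{(v,i)}\cong X_v$ to conclude that $\tilde r$ is a face there. You are in fact more careful than the paper on two points the paper glosses over: (i) you check that $\rho$ is injective on a $Y$-clique, so that $s=\rho(\tilde s)$ really has $m+1$ vertices and not fewer, and (ii) you explain why the unique lift of $r$ in $Y_{(v,i)}$ coincides with $\tilde r$ rather than some other preimage.

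Your first route, via the identification $Y=Y_{\psi_\rho}$ and the explicit membership criterion of \pref{def:induced-cover}, is a legitimate alternative. One caveat: the paper never states as a standalone claim that an arbitrary cover $\rho:Y\to X$ satisfies $Y=Y_{\psi_\rho}$ (only the converse, \pref{claim:reconstruction of a cover}, is proved; the proof of \pref{claim:restriction-of-cover} you cite simply assumes it). The identification is true and easy, but proving it amounts to exactly the link argument you give in your second approach, so the two routes are not really independent.
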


\begin{proof}[Proof of \pref{claim:cover-of-clique-complex}]
    Let \(\tilde{s} = \set{(v_0,i_0),(v_1,i_1),...,(v_k,i_k)} \subseteq Y(0)\) be clique. Thus \(\rho(\tilde{s}) = s \subseteq X(0)\) is also a clique, and thus \(s \in X(k)\), or equivalently \(s \setminus \set{v_0} \in X_{v_0}(k-1)\). The link of \(v_0\) in \(X\) is isomorphic to the link of \((v_0,i_0)\) via \(\rho\). In particular this implies that \(\tilde{s} \setminus \set{(v_0,i_0)} \in Y_{(v_0,i_0)}(k-1)\) which is equivalent to \(\tilde{s} \in Y(k)\).
\end{proof}

\begin{claim} \label{claim:expansion-of-cover}
    Let \(X\) be a \(\lambda\)-one or two sided high dimensional expander. Then any connected cover is a \(\frac{\lambda}{1-\lambda}\)-one or two sided spectral expander respectively.
\end{claim}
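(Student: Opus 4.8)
The plan is to reduce the statement, via the defining local-isomorphism property of covers, to a single application of the trickle-down theorem; throughout we take $d \geq 2$, which is the regime of interest (for $d=1$ the claim would be about lifts of expander graphs, which need not remain expanders). The crucial structural input is that a covering map restricts to an \emph{isomorphism of complexes} on every link, not merely on vertex links: for every non-empty face $\tilde s \in Y$ one has $Y_{\tilde s} \cong X_{\rho(\tilde s)}$ via $\rho$. This is standard (see \cite{Surowski1984}) and follows by induction on $\abs{\tilde s}$ from \pref{def:cover}: writing $\tilde s = \set{(v,i)} \dunion \tilde r$ gives $Y_{\tilde s} = (Y_{(v,i)})_{\tilde r}$, and the isomorphism $\rho\colon Y_{(v,i)} \to X_v$ carries $\tilde r$ to $r := \rho(\tilde r)$ and hence $(Y_{(v,i)})_{\tilde r}$ isomorphically onto $(X_v)_r = X_{\rho(\tilde s)}$ (and $\dim Y = d$ by \pref{claim:inverse-image-of-l-cover} applied to a top face of $X$). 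Consequently every link of $Y$ except possibly the $1$-skeleton already inherits the expansion of $X$: for $\emptyset \ne \tilde s \in Y$ with $\abs{\tilde s}\leq d-2$ we get $\lambda(Y_{\tilde s}) = \lambda(X_{\rho(\tilde s)}) \leq \lambda$ (resp. $\abs{\lambda}(Y_{\tilde s}) \leq \lambda$ in the two-sided case), and since $\lambda<1$ each such link is connected; in particular this holds for every vertex link $Y_v$, while $Y_\emptyset = Y^{\leq 1}$ is connected by the hypothesis that $Y$ is connected.

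It then remains only to bound the second eigenvalue of $Y^{\leq 1}$, which is exactly what the trickle-down theorem provides: a pure $d$-dimensional complex ($d\geq 2$) with connected $1$-skeleton all of whose vertex links are connected with $\lambda(Y_v)\leq \lambda$ satisfies $\lambda(Y^{\leq 1})\leq \frac{\lambda}{1-\lambda}$, and the two-sided version gives $\abs{\lambda}(Y^{\leq 1})\leq\frac{\lambda}{1-\lambda}$ under the two-sided hypothesis. Combining this with the previous paragraph, every link $Y_{\tilde s}$ with $\abs{\tilde s}\leq d-2$ --- now including $\tilde s=\emptyset$ --- has (absolute) second eigenvalue at most $\frac{\lambda}{1-\lambda}$, which is precisely the statement that $Y$ is a $\frac{\lambda}{1-\lambda}$-one-sided (resp. two-sided) high dimensional expander, and in particular its underlying graph is a $\frac{\lambda}{1-\lambda}$-spectral expander.

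I expect the trickle-down step --- especially its two-sided form --- to be the only real content; the rest is bookkeeping. The point worth emphasizing is that $Y^{\leq 1}$ is a lift of $X^{\leq 1}$, and arbitrary lifts of expander graphs can acquire ``new'' eigenvalues arbitrarily close to $1$, so nothing of this shape can follow from the spectrum of $X$ alone. What rescues the statement is that trickle-down is a purely local argument, using only the vertex links and the connectivity of $Y$, and is therefore insensitive to the global topology of the cover --- which is exactly why the bound survives even for the complexes $X$ with connected covers that are this paper's main concern.
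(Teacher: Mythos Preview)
Your proof is correct and follows essentially the same approach as the paper: transfer the link expansion from $X$ to $Y$ via the local isomorphism property of covers, then invoke Oppenheim's trickle-down theorem (the paper's \pref{thm:trickle-down}) together with the assumed connectivity of $Y$ to handle the $1$-skeleton. Your write-up is somewhat more explicit --- spelling out the link isomorphism for all non-empty faces rather than only vertices, and noting why the graph-lift case $d=1$ is excluded --- but the argument is the same.
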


The proof of this claim relies on the trickling down theorem by Oppenheim \cite{Oppenheim2018}.

\begin{theorem}[\cite{Oppenheim2018}] \label{thm:trickle-down}
    Let \(X\) be a \emph{connected} simplicial complex and assume that for any vertex \(v \in X(0)\) it holds that \(X_v\) is a \(\lambda\)-one or two sided high dimensional expander. Then the underlying graph of \(X\) is a \(\frac{\lambda}{1-\lambda}\)-one or two sided spectral expander (respectively), which implies that \(X\) is a \(\frac{\lambda}{1-\lambda}\)-one or two sided high dimensional expander (respectively).
\end{theorem}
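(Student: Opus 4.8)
The plan is to prove the one- and two-sided bounds in parallel, via a localization identity for the Dirichlet form followed by a short eigenvalue computation. Note first that the entire argument only involves the $2$-skeleton of $X$ and the $1$-skeletons of vertex links, so the only piece of the hypothesis used for the main conclusion is that each $X_v^{\leq 1}$ is a $\lambda$-expander (one- resp. two-sided). Assume $\dim X\geq 2$. Let $A$ be the normalized adjacency operator of $X^{\leq 1}$ with respect to $\Pr_1$, so $(Af)(v)=\Ex[u\sim X_v(0)]{f(u)}$ for a $\Pr_1$-weighted neighbor $u$ of $v$, and write $\mathcal E(f)=\tfrac12\Ex[\set{u,w}\sim X(1)]{(f(u)-f(w))^2}$ for the Dirichlet form; with all norms and inner products taken in $L^2(\Pr_0)$ one has the standard identity $\mathcal E(f)=\norm f^2-\Iprod{Af,f}$. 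For $v\in X(0)$ let $f_v$ be the restriction of $f$ to $X_v(0)$, let $\mathcal E_v$ and $\Var_v$ denote the Dirichlet form and variance of the link $X_v$ with its induced measures, and observe $\overline{f_v}=(Af)(v)$.

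The first step is to establish the two ``trickling'' identities. \emph{(i) Localization of the Dirichlet form:} $\mathcal E(f)=\Ex[v\sim X(0)]{\mathcal E_v(f_v)}$. This follows purely from the consistency of the induced measures: sampling $v\sim\Pr_0$ and then a $\Pr_{1,X_v}$-edge $\set{u,w}$ of the link $X_v$ produces a $2$-face $\set{v,u,w}$ whose marginal law is $\Pr_2$ and in which the distinguished ``root'' $v$ is uniform among the three vertices (one checks $\Pr_0(v)\Pr_{1,X_v}(\set{u,w})\propto\sum_{t\in X(d),\,t\supseteq\set{v,u,w}}\Pr_d(t)$, which is symmetric in $v,u,w$); averaging over the root therefore coincides with averaging over the three edges of a $\Pr_2$-face, which by the identity $\Pr_1(e)=\tfrac13\sum_{s\in X(2),\,s\supseteq e}\Pr_2(s)$ is exactly $\mathcal E(f)$. \emph{(ii) Localization of the variance:} $\Ex[v]{\Var_v(f_v)}=\norm f^2-\norm{Af}^2$, since $\Var_v(f_v)=\Ex[u\sim X_v(0)]{f(u)^2}-(Af)(v)^2$, and taking $\Ex[v]{\cdot}$ turns the first term into $\norm f^2$ by reversibility of the neighbor walk with respect to $\Pr_0$ and the second into $\norm{Af}^2$.

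Now feed in the link hypothesis. In the one-sided case, ``$X_v$ is a $\lambda$-one-sided HDX'' gives (the $s=\emptyset$ instance) that $X_v^{\leq 1}$ has second eigenvalue $\le\lambda$, equivalently $\mathcal E_v(f_v)\ge(1-\lambda)\Var_v(f_v)$ for every $f$; combined with (i) and (ii) this yields $\norm f^2-\Iprod{Af,f}\ge(1-\lambda)\bigl(\norm f^2-\norm{Af}^2\bigr)$. In the two-sided case we additionally get $\mathcal E_v(f_v)\le(1+\lambda)\Var_v(f_v)$, hence $\norm f^2-\Iprod{Af,f}\le(1+\lambda)\bigl(\norm f^2-\norm{Af}^2\bigr)$. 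Take $f$ to be an eigenvector of $A$ with eigenvalue $\mu$ and $\Ex[\Pr_0]{f}=0$ (for the one-sided bound the largest such $\mu$; for the two-sided bound also the smallest). Connectivity of $X$ forces $\mu<1$, so $1-\mu>0$ and the first inequality reads $1-\mu\ge(1-\lambda)(1-\mu)(1+\mu)$; dividing by $1-\mu$ gives $1+\mu\le\tfrac1{1-\lambda}$, i.e.\ $\mu\le\tfrac{\lambda}{1-\lambda}$. The two-sided inequality at $\mu=-1$ would read $2\le 0$, so $\mu>-1$, and dividing $1-\mu\le(1+\lambda)(1-\mu)(1+\mu)$ by $1-\mu$ gives $1+\mu\ge\tfrac1{1+\lambda}$, i.e.\ $\mu\ge-\tfrac{\lambda}{1+\lambda}\ge-\tfrac{\lambda}{1-\lambda}$. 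Thus $\lambda(X^{\leq 1})\le\tfrac{\lambda}{1-\lambda}$ (resp.\ $\abs{\lambda}(X^{\leq 1})\le\tfrac{\lambda}{1-\lambda}$). Finally, since $(X_v)_{s'}=X_{v\cup s'}$, the hypothesis already gives $\lambda(X_t)\le\lambda\le\tfrac{\lambda}{1-\lambda}$ for every nonempty $t\in X^{\le d-2}$, so together with the $t=\emptyset$ bound just proved, $X$ is a $\tfrac{\lambda}{1-\lambda}$-one (resp.\ two) sided high dimensional expander.

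I expect the main obstacle to be the measure bookkeeping in identity (i): one must verify that $\Pr_{0,X_v}$, $\Pr_{1,X_v}$ and $\Pr_2$ line up so that ``vertex, then link-edge'' is genuinely ``$2$-face with a uniformly random root,'' and likewise that the reversibility invoked in (ii) is with respect to the correct stationary measure $\Pr_0$. Once these are pinned down, the remainder is the short algebra above, with the two-sided statement costing only the extra estimate $\mathcal E_v(f_v)\le(1+\lambda)\Var_v(f_v)$ and the symmetric division step.
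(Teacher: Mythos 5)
Your proof is correct. Note that the paper does not prove this statement at all: it is imported verbatim from \cite{Oppenheim2018} as a black box, so there is no internal proof to compare against. Your argument is a faithful reconstruction of the standard ``trickling down'' proof: the measure bookkeeping you flag as the main obstacle does go through exactly as you describe, since with the paper's induced measures one has $\Pr_0(v)\cdot\Pr_{1,X_v}(\set{u,w})=\tfrac13\Pr_2(\set{v,u,w})$ and $\Pr_{0,X_v}(u)=\Pr_1(\set{u,v})/(2\Pr_0(v))$, which give your identities (i) and (ii) verbatim; the link hypothesis then yields $(1\mp\lambda)\Var_v\le\mathcal E_v\le(1+\lambda)\Var_v$, and the eigenvector computation (using connectivity to get $\mu<1$ for eigenvectors orthogonal to constants, with $\lambda\in(0,1)$ so that $1-\lambda>0$) gives $\mu\le\tfrac{\lambda}{1-\lambda}$ and $\mu\ge-\tfrac{\lambda}{1+\lambda}\ge-\tfrac{\lambda}{1-\lambda}$ as you state. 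The only implicit assumptions are the paper's standing conventions (pure complex of dimension at least $2$, weights induced from top faces), and your final reduction $(X_v)_{s'}=X_{\set{v}\cup s'}$ correctly upgrades the $1$-skeleton bound to the full high-dimensional-expansion conclusion, including the empty face.
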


\begin{proof}[Proof of \pref{claim:expansion-of-cover}]
    Let \(\rho:Y \to X\) be a cover such that \(Y\) is connected. For every vertex \(\tilde{v} \in Y\) such that \(\rho(\tilde{v})=v\), it holds that \(Y_{\tilde{v}} \cong X_v\). Thus in particular, if \(X\) is a \(\lambda\)-one or two sided high dimensional expander, this implies that \(Y_{\tilde{v}}\) is a \(\lambda\)-high dimensional expander. By \pref{thm:trickle-down}, \(Y\) is also a \(\frac{\lambda}{1-\lambda}\)-one or two sided high dimensional expander.
\end{proof}

\subsection[Cover soundness à la (CLA)]{Cover-soundness à la \pref{eq:CLA}}
Let us extend the definition of sound agreement tests to accommodate for our new notion of covers.
\begin{definition}[Cover sound] \label{def:distribution-cover-soundness}
    Let \(X\) be a simplicial complex and let \(\mathcal{D}\) be an agreement distribution on \(X\). Let \(\eta,\varepsilon_0,e > 0\) be constants. We say that \(\mathcal{D}\) is \emph{\((\eta,\varepsilon_0,e)\)-cover-sound} if for every ensemble of functions \(\mathcal{F} = \sett{f_r:r \to \Sigma}{r \in X(k)}\) such that \(\agr_{\mathcal{D}}(\mathcal{F}) = \varepsilon \geq \varepsilon_0\),
    there exists a simplicial $\frac{1}{\varepsilon^e}$-cover \(\rho:Y \to X\) and a global function \(G:Y(0)\to \Sigma\) such that 
    \begin{equation} \label{eq:dist-cover-soundness}
        \Prob[r \in Y(k)]{f_{\rho(r)}\circ \rho \overset{1-\eta}{\approx} G|_r} \geq \frac{1}{2}\varepsilon^e.
    \end{equation}
\end{definition}
The probability in \eqref{eq:dist-cover-soundness} is equivalent to choosing \(r \in X(k)\) and then a random preimage \(\tilde{r} \in Y(k)\). Therefore this inequality implies that for at least \(\frac{1}{2}\varepsilon^e\)-fraction of the \(f_r\)'s, the function agrees with \(G|_{\tilde{r}}\) for one of the preimages \(\tilde{r} \in \rho^{-1}(r)\) (i.e. using \(\rho\) to send vertices from \(\tilde{r}\) to \(r\)).

We note that here we did not require the cover to explain most of the agreement, but this definition actually implies such a statement. See discussion in \cite{DinurG2008} (there it is done for the complete complex, but their techniques are general).

\subsection{Cosystolic expansion and cover property testing}
Recall that \(Z^1(X,Sym(\ell)) \subseteq C^1(X,Sym(\ell))\) are all asymmetric functions such that \(\psi(uw)=\psi(vw)\circ \psi(uv)\) for every triangle \(uvw \in \dir{X}(2)\). For our result we will need simplicial complexes where this relation is locally testable. For this we define for every two function \(\psi,\phi: \dir{X}(k) \to Sym(\ell)\) their distance
\begin{equation} \label{eq:def-of-dist}
    \dist(\psi,\phi) = \Prob[s \in \dir{X}(k)]{\psi(s) \ne \phi(s)}.
\end{equation}
We also denote the weight of the function \(\wt(\psi) = \dist(\psi,Id)\) (where \(Id:\dir{X}(k) \to Sym(\ell)\) assigns every face \(s \in \dir{X}(k)\) the identity permutation).

For \(\psi \in C^1(X,Sym(\ell))\) we define \(\coboundary \psi: \dir{X}(2) \to Sym(\ell)\) by
\begin{equation} \label{eq:def-of-coboundary}
    \coboundary(\psi) = \psi(wu)\circ \psi(vw) \circ \psi(uv).
\end{equation}

We are ready to define cosystolic expansion.
\begin{definition} \label{def:def-of-cosyst-exp}
    Let \(X\) be a \(d\)-dimensional simplicial complex for \(d \geq 2\). Let \(\beta >0\). We say that \(X\) is a \(\beta\)-cosystolic expander if for every \(\ell \in \NN\), and every \(\psi \in C^1(X,Sym(\ell))\) there exists some \(\phi \in Z^1(X,Sym(\ell))\) such that
    \begin{equation} \label{eq:def-of-cosyst-exp}
        \beta \dist(\psi,\phi) \leq \wt(\coboundary \psi).
    \end{equation}
    Furthermore, we say that \(X\) is a \(\beta\)-coboundary expander if \(X\) is a \(\beta\)-cosystolic expander, and it is simply connected.
\end{definition}
An equivalent way to define coboundary expansion is that for every, $\phi\in Z^1(X,Sym(\ell))$ there is some $h:X(0)\to Sym(\ell)$, such that $\phi(uv) = h(v)\circ h(u)^{-1}$ for all $uv\in \vec X(1)$. We call \(\phi\)'s that have such an \(h\) coboundaries, hence the name ``coboundary expansion''.

An explanation is in order. We think of the equations \(EQ= \sett{\psi(uw)=\psi(vw)\circ \psi(uv)}{uvw \in \dir{X}(2)}\) as a set of tests and the weight \(wt(\coboundary \psi)\) measures the probability that \(\psi(uw)\ne \psi(vw)\circ \psi(uv)\), or in other words, the probability that \(\psi\) fails the test. When \(X\) is a \(\beta\)-cosystolic expander, this implies that if \(\wt(\psi) = \varepsilon\) then there is a function \(\phi \in Z^1(X,Sym(\ell))\) that is \(\varepsilon/\beta\) close to \(\psi\).

We remark that in other works, many other coefficient groups were used instead of \(Sym(\ell)\). For our result this definition is sufficient. Note that some prior works such as \cite{EvraK2016} also require that \(\psi \in Z^1 \setminus B^1\) have large support. This separate requirement is not necessary in our work.

Dinur and Meshulam already observed that cosystolic expansion (and coboundary expansion) is equivalent testability of covers, which they call \emph{cover stability} \cite{DinurM2019}. 

\subsubsection{Near-cosystols from flag complexes}

The following technical claim will be convenient later. The setup is as follows. Let \(X\) be a simplicial complex. Often is the case where for every edge \(uv \in X(1)\) we have permutations \(\pi_{u,uv}, \pi_{v,uv}\) and we are interested in constructing a co-chain \(\psi(vu)=\pi_{uv,u}^{-1} \circ \pi_{uv,v} = \pi_{u,uv} \circ \pi_{uv,v}\)\footnote{As a notational convention we use \(\pi_{x,y}=\pi_{y,x}^{-1}\).}. There is a natural consistency property that implies that \(\psi\) is (close to) a cocycle: suppose that for every triangle \(t \in X(2)\) and every \(v \in t\) or sub-edge of \(e \subset t\) there are permutations \(\pi_{v,t}, \pi_{e,t}\). Then for a given \(t \in X(2)\), if for every \(v \in e \subset t\) it holds that
\begin{equation}
    \pi_{t,v} = \pi_{t,e} \circ \pi_{e,v},
\end{equation}
then \(\coboundary \psi(uvw) = Id\). 
\begin{figure}
    \centering
    \includegraphics[scale=0.3]{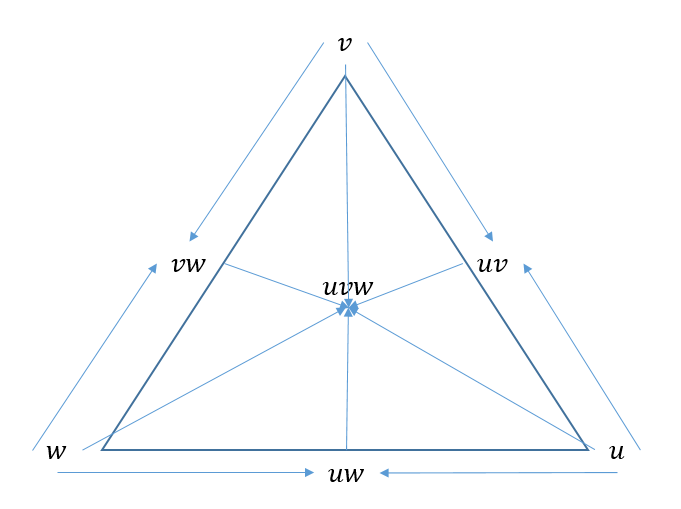}
    \caption{The following diagram should commute for \(\coboundary \psi(uvw)=Id\)}
    \label{fig:commuting}
\end{figure}
See \pref{fig:commuting} for an illustration.

To state this more easily, let us introduce the flag complex.

\begin{definition}[Flag complex] \label{def:flag-complex}
Let \(X\) be a simplicial complex. The flag complex is the complex \(GX\) whose vertices are the faces of \(X\), and \(\set{s_0,s_1,...,s_k} \in GX(k)\) if \(s_0 \subset s_1 \subset ... \subset s_k\) (for some ordering of the faces).
\end{definition}
\begin{claim} \label{claim:cochains-in-flag-complex-correspondence}
    Let \(X\) be a two dimensional simplicial complex and let \(\phi \in C^1(GX,Sym(\ell))\). Let \(\psi=\psi_\phi:X(1)\to Sym(\ell)\) be given by \(\psi(uv)=\phi(uv,v)\phi(u,uv)\). Then
    for every triangle \(t \in X(2)\), if for all six choices of $v\in e\in t$, \(\coboundary \phi(\set{v \in e \subset t}) = Id\), then \(\coboundary \psi(t) = Id\).
\end{claim}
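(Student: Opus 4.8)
The plan is to reduce the statement to an elementary identity in the group $Sym(\ell)$, using nothing beyond the asymmetry relation $\phi(q,p)=\phi(p,q)^{-1}$ and the six vanishing hypotheses. Fix the triangle $t=uvw\in X(2)$, with edges $uv,vw,wu$ and vertices $u,v,w$. I would first \emph{translate the hypotheses into group identities}. For an edge $e\subset t$ and an endpoint $x\in e$, the chain $\{x\}\subset e\subset t$ is a $2$-face of the flag complex $GX$ (see \pref{def:flag-complex}), whose directed sub-edges in $GX$ are $(x,e),(e,t),(x,t)$ and their reverses. Spelling out \eqref{eq:def-of-coboundary} for this $2$-face and using asymmetry of $\phi$, the assumption $\coboundary\phi(\{x\subset e\subset t\})=Id$ becomes the single relation
\[
\phi(x,e)\;=\;\phi(t,e)\circ\phi(x,t).
\]
Here one should check once, and I would note explicitly, that $\coboundary\phi=Id$ on a $2$-face of $GX$ does not depend on the order in which we list the three faces of the chain: reordering replaces $\coboundary\phi$ by a conjugate of it or by its inverse, so the hypothesis in the claim is unambiguous. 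Thus the whole hypothesis says: for every vertex–edge incidence $x\in e$ inside $t$, the ``transport'' $\phi(x,e)$ factors through $t$, exactly as in the commuting diagram of \pref{fig:commuting}.

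Next I would \emph{compute $\psi$ along the boundary of $t$}. By definition $\psi(uv)=\phi(uv,v)\circ\phi(u,uv)$; substituting the two relations above for the incidences $u\in uv$ and $v\in uv$, and using $\phi(uv,v)=\phi(v,uv)^{-1}$ to cancel the two $\phi(\cdot\,,uv)$ factors, one obtains
\[
\psi(uv)=\phi(v,t)^{-1}\circ\phi(u,t),
\]
and symmetrically $\psi(vw)=\phi(w,t)^{-1}\circ\phi(v,t)$ and $\psi(wu)=\phi(u,t)^{-1}\circ\phi(w,t)$. In other words, on the three edges of $t$ the cochain $\psi$ is the coboundary of the single $0$-cochain $x\mapsto\phi(x,t)$ defined on the vertices of $t$ (this incidentally re-confirms that $\psi$ is asymmetric, hence genuinely lies in $C^1(X,Sym(\ell))$). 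Plugging these three expressions into $\coboundary\psi(uvw)=\psi(wu)\circ\psi(vw)\circ\psi(uv)$ from \eqref{eq:def-of-coboundary}, the product telescopes: $\phi(u,t)^{-1}\phi(w,t)\cdot\phi(w,t)^{-1}\phi(v,t)\cdot\phi(v,t)^{-1}\phi(u,t)=\phi(u,t)^{-1}\phi(u,t)=Id$, which is exactly $\coboundary\psi(t)=Id$.

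I do not expect a genuine obstacle here: the statement is a finite diagram chase and the only thing demanding care is orientation bookkeeping — consistently applying $\phi(q,p)=\phi(p,q)^{-1}$, and the order-independence remark above. Everything else is substitution and cancellation in $Sym(\ell)$. The one conceptual point worth stating clearly, since it is reused later, is the intermediate conclusion $\psi(uv)=\phi(v,t)^{-1}\circ\phi(u,t)$: it says that once the vertex-to-$t$ and edge-to-$t$ transports are mutually consistent on each flag, the induced edge cochain $\psi$ looks locally (on each triangle) like a genuine coboundary, which is precisely why its $\coboundary$ vanishes there.
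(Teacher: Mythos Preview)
Your proof is correct and follows essentially the same approach as the paper's. Both arguments derive the key intermediate identity $\psi(uv)=\phi(t,v)\circ\phi(u,t)$ (written in the paper as $\phi(uvw,v)\phi(u,uvw)$) from the six flag-triangle hypotheses, and then observe that the resulting expressions for $\psi(uv),\psi(vw),\psi(wu)$ telescope to the identity; your framing of this as ``$\psi$ is locally the coboundary of $x\mapsto\phi(x,t)$'' is a nice way to say the same thing.
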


\begin{corollary}\torestate{\label{cor:cocycles-in-flag-complex}
    Let \(X\) be a two dimensional simplicial complex and let \(\phi \in Z^1(GX,Sym(\ell))\). Let \(\psi=\psi_\phi:X(1)\to Sym(\ell)\) be given by \(\psi(uv)=\phi(uv,v)\phi(u,uv)\). Then  \(\psi \in Z^1(X,Sym(\ell))\).}
\end{corollary}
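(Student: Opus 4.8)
The plan is to derive this directly from Claim~\ref{claim:cochains-in-flag-complex-correspondence} together with the definition of $Z^1(GX,Sym(\ell))$. First I would check that $\psi=\psi_\phi$ actually lands in $C^1(X,Sym(\ell))$, i.e. that it is asymmetric. Writing $\psi(uv)=\phi(uv,v)\,\phi(u,uv)$ and using that $\phi$ is asymmetric on $GX$ (each edge $(s,s')$ of $GX$, with $s\subset s'$, satisfies $\phi(s',s)=\phi(s,s')^{-1}$), one computes $\psi(vu)=\phi(uv,u)\,\phi(v,uv)=\big(\phi(u,uv)^{-1}\big)\big(\phi(uv,v)^{-1}\big)=\psi(uv)^{-1}$. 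So $\psi$ is a well-defined asymmetric $1$-cochain on $X$.

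Next, to verify the cocycle condition for $\psi$, fix an arbitrary triangle $t\in X(2)$. The six flags $v\subset e\subset t$, where $v$ ranges over the three vertices of $t$ and $e$ over the two edges of $t$ containing $v$, are precisely $2$-faces of the flag complex $GX$. Since $\phi\in Z^1(GX,Sym(\ell))$, by definition $\coboundary\phi$ equals the identity permutation on every $2$-face of $GX$, and in particular on each of these six flags. This is exactly the hypothesis of Claim~\ref{claim:cochains-in-flag-complex-correspondence}, so that claim yields $\coboundary\psi(t)=Id$. As $t\in X(2)$ was arbitrary, $\psi$ satisfies $\psi(uw)=\psi(vw)\circ\psi(uv)$ for every $uvw\in\dir X(2)$, i.e. $\psi\in Z^1(X,Sym(\ell))$, which is what we wanted.

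I do not expect any genuine obstacle: all the real content sits in Claim~\ref{claim:cochains-in-flag-complex-correspondence}, and the corollary is just the observation that a bona fide cocycle on $GX$ automatically satisfies the local commutation hypotheses of that claim at every triangle of $X$ (the commuting-diagram condition of Figure~\ref{fig:commuting} for all six sub-flags). The only thing requiring any care is the bookkeeping with the orientation/asymmetry conventions $\pi_{x,y}=\pi_{y,x}^{-1}$ when translating between the $\phi(uv,v)$-style notation and the $\coboundary$ formula, but this is routine and was essentially already done in the proof of the preceding claim.
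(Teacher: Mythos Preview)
Your proposal is correct and matches the paper's approach: the paper does not even give a separate proof of the corollary, treating it as immediate from \pref{claim:cochains-in-flag-complex-correspondence} since $\phi\in Z^1(GX,Sym(\ell))$ makes the hypothesis of that claim hold at every triangle. Your explicit verification that $\psi$ is asymmetric is a useful detail the paper leaves implicit.
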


\begin{proof}[Proof of \pref{claim:cochains-in-flag-complex-correspondence}]
    The proof is just a calculation. Let \(uvw \in X(2)\) be a triangle. We need to show that \(\psi(wv)\circ \psi(uw)=\psi(uv)\). Note that
    \[\psi(uv)=\phi(uv,v)\phi(u,uv) = \phi(uv,v)\left [\phi(uvw,uv) \phi(uv,uvw) \right ] \phi(u,uv) = \phi(uvw,v)\phi(u,uvw).\]
    Where the last equality follows from \(\coboundary \phi(u,uv,uvw)=Id\) so \(\phi(uv,uvw) \phi(u,uv)=\phi(u,uvw)\) and \(\phi(uv,v)\phi(uvw,uv)=\phi(uvw,v)\).
    Similarly, we have that 
    \begin{equation}
        \psi(uw)=\phi(uvw,w)\phi(u,uvw); \; \psi(wv)=\phi(uvw,v)\phi(w,uvw).
    \end{equation}
    Thus,
    \begin{align*}
        \psi(uv) &= \phi(uvw,v)\phi(u,uvw)\\
        &=\phi(uvw,v)\left [ \phi(w,uvw) \phi(uvw,w) \right ]\phi(u,uvw)\\
        &=\left [\phi(uvw,v) \phi(w,uvw)\right ]\cdot  \left [\phi(uvw,w) \phi(u,uvw)\right ] \\
        &= \psi(wv)\psi(uw).
    \end{align*}
\end{proof}

\subsection{Swap cosystolic expansion and the Faces complex}
For any simplicial complex $X$, we define a related complex called the {\em faces complex} of $X$, denoted $\FX{d_1}X$. This is a complex whose vertices are $d_1$ faces of $X$ and whose edges connect two vertices whose disjoint union is also a face of $X$. Thus, the underlying graph of $\FX{d_1}X$ is the graph of the swap walk on $X$, see \cite{AlevFT2019,DiksteinD2019}. This definition already appeared as \pref{def:face-complex-intro} but we restate it again here.

\begin{definition}[Faces Complex] \label{def:face-complex}
    Let \(X\) be a \(d\)-dimensional simplicial complex. Let \(d_1 \leq d\). We denote by \(\FX{d_1}X=F(X,d_1)\) the simplicial complex whose vertices are \(\FX{d_1}X(0)=X(d_1)\) and whose faces are all \(\sett{\set{s_0,s_1,...,s_j}}{s_0\dunion s_1 \dunion \dots \dunion s_j \in X}\). 
\end{definition}
It is easy to verify that this complex is \(\left ( \lfloor \frac{d+1}{d_1+1} \rfloor - 1 \right )\)-dimensional and that if \(X\) is a clique complex then so is \(\FX{d_1}X\). We omit \(d_1\) from the notation when it is clear. 

For a face $s\in X$ we write $\FX{d_1}X_s = \FX{d_1}(X_s)$. In case \(s \in X(d_1)\) then \(\FX{d_1}X_s\) is actually the link of \(s\) in $\FX{d_1}X$, i.e. \(\FX{d_1}(X_s) = (\FX{d_1}X)_s\), so this notation is not overloaded. We will be interested in $\FX{d_1}X_r$ for faces $r$ not necessarily in \(X(d_1)\). These are sub-complexes of $\FX{d_1}X$ that are not links per se.

\begin{definition}[Well connected] \label{def:well-connected-complex}
    Let \(X\) be a \(d\)-dimensional simplicial complex. Let \(d_1 \leq d+2\). We say that \(X\) is \emph{\(d_1\)-well-connected} if for every \(r \in X^{\leq d_1}\) it holds that \(\FX{d_1}(X_r)=\FX{d_1}X_r\) is connected. Moreover, if \(r \in X(0)\) then we require that \(\FX{d_1}X_r\) is \emph{simply connected}. When \(d_1\) is clear from context we omit it and say that \(X\) is \emph{well connected}.
\end{definition}
We say that a complex $X$ has swap cosystolic expansion if $\FX{d_1}X$ is a cosystolic expander. Namely, recalling \pref{def:def-of-cosyst-exp},
\begin{definition}[Swap cosystolic expansion]\label{def:swapce}
    Let $d_1\in \mathbb{N}$ and let $\beta>0$. A simplicial complex $X$ of dimension $d>d_1$ has $(\beta,d_1)$ swap-cosystolic -expansion if $\FX{d_1}X$ is a $\beta$ cosystolic expander. We further say that $X$ has $(\beta,d_1)$ swap-coboundary-expansion if it has $(\beta,d_1)$ swap-cosystolic-expansion and furthermore, $\FX{d_1}X$ is simply connected. 
\end{definition}
\subsection{Sampling in HDXs}
\begin{theorem}[\cite{DiksteinH2023}] \label{thm:sampling-in-HDXs}
    Let \(k\leq d_1 \leq d\). Let \(\gamma < 1\) and let \(X\) be \(\frac{\gamma}{d}\)-one sided high dimensional expander. Let \(\zeta > 0\) and let \(f:X(k) \to [0,1]\). Let 
    \[B(f) = \sett{t \in X(d_1)}{\Abs{\Ex[r\in X(k), r\subseteq t]{f(r)}-\Ex[r \in X(k)]{f}} > \zeta}.\] 
    Then
    \[\Prob[t \in X(d_1)]{B(f)} \leq \exp(-\poly(\zeta) \frac{d_1}{k}).\]

    In particular, if \(A \subseteq X(k)\) and 
    \[B(A) = \sett{t \in X(d_1)}{\Abs{\cProb{r\in X(k)}{A}{r\subseteq t}-\Prob[r \in X(k)]{A}} > \zeta},\] 
    then
    \[\Prob[t \in X(d_1)]{B(A)} \leq \exp(-\poly(\zeta) \frac{d_1}{k}).\]
\end{theorem}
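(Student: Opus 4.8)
The target is a Chernoff-type tail bound, so the natural strategy is to exhibit a random variable that is (morally) an average of $m := \lfloor (d_1+1)/(k+1)\rfloor \approx d_1/k$ \emph{approximately independent} bounded terms, and then invoke a martingale concentration inequality; the $\exp(-\poly(\zeta)\,d_1/k)$ shape of the bound is exactly the signature of "$d_1/k$ near-independent looks at a $k$-face''. Concretely, for $t\in X(d_1)$ write $\mu_t:=\E_{r\subseteq t}[f(r)]$ and $\mu:=\E_{r\in X(k)}[f]$. Sample $t$ together with a uniformly random ordered partition of a uniformly random $m(k+1)$-subset of $t$ into blocks $r_1\dunion\cdots\dunion r_m$, each of size $k+1$; then each $r_i$ is marginally a uniform $k$-subface of $t$, so $\mu_t=\E\!\left[\tfrac1m\sum_{i=1}^m f(r_i)\,\middle|\,t\right]$. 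Hence it suffices to control the deviations of $Y:=\tfrac1m\sum_{i=1}^m f(r_i)$ over the \emph{joint} randomness of $t$ and the partition, and then transfer back to $\mu_t-\mu$ — either by a direct $(2q)$-th moment comparison $\E_t[(\mu_t-\mu)^{2q}]\le \E[(Y-\mu)^{2q}]$ (Jensen), or by noting that a two-sided tail for $Y$ forces a two-sided tail for $\mu_t$ on a large measure of $t$'s. (The purely spectral sampler \pref{cor:good-sampler-from-expansion}, applied to the containment graph $G_{d_1,k}$, only yields polynomial decay $O(k/(d_1\zeta^2))$, which is why we need the finer partition structure.)

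The single HDX input the argument rests on is: \emph{conditioned on the first $i-1$ blocks}, with union $w=r_1\dunion\cdots\dunion r_{i-1}\in X(i(k+1)-k-2)$, the next block $r_i$ is a uniformly random $k$-face of the link $X_w$ drawn inside a uniformly random large face of $X_w$, and for all but an $\exp(-\poly(\zeta)\,d_1/k)$ fraction of histories $w$ one has $\bigl|\E[f(r_i)\mid w]-\mu\bigr|\le \zeta/2$. Granting this, $M_i:=\sum_{j\le i}\bigl(f(r_j)-\E[f(r_j)\mid r_{<j}]\bigr)$ is a martingale with increments bounded by $1$, so Azuma (or Freedman) gives $\Pr\bigl[|Y-\tfrac1m\sum_j\E[f(r_j)\mid r_{<j}]|>\zeta/2\bigr]\le 2\exp(-\Omega(\zeta^2 m))$; combining with the bound on bad histories and the conditional-mean estimate, $\Pr[|Y-\mu|>\zeta]\le\exp(-\Omega(\zeta^2 d_1/k))$, and the transfer step of the previous paragraph converts this into the stated bound on $\Pr_t[B(f)]$ (the moment-comparison route costs a bit in the exponent of $\zeta$, which is harmless since we only claim $\poly(\zeta)$). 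The final "in particular'' statement for $A\subseteq X(k)$ is the special case $f=\mathbf 1_A$.

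The main obstacle is precisely the conditional-mean estimate $\E[f(r_i)\mid w]\approx\mu$ for typical $w$: this is itself a sampling statement, now \emph{inside the link} $X_w$, so the argument is genuinely bootstrapped. The clean way to close the loop is induction on $d_1$ (equivalently, on the codimension of $w$): when $|w|$ is within $O(k)$ of $d_1$ the link is tiny and the estimate is trivial; when $|w|$ is small, $X_w$ is still a $\tfrac{\gamma}{d}$-one-sided HDX of dimension $\ge d-|w|$ with the induced measure $\Pr_{\cdot,X_w}$ inherited from $\Pr_d$, so the inductive hypothesis applies with a slightly smaller top dimension, and the geometric form of the error makes the accumulation over the recursion absorbable. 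One must be careful that the "random large face of $X_w$'' containing $r_i$ has the correct dimension and weight — this is exactly where the combinatorics of the non-lazy up–down walk (\pref{def:non-lazy-up-down}) and the inheritance of measures to links is used. A recursion-free alternative, which I expect to be more robust but more calculation-heavy, is the pure trace/moment method: expand $\E_t[\mu_t^{2q}]$ as a weighted sum over $2q$-tuples of $k$-faces lying in a common $d_1$-face and bound it as $\mu^{2q}+\bigl(O(k/d_1)\bigr)^{\Omega(q)}$ by iterating the eigenvalue bound of \pref{thm:eignevalues-of-walk} on the containment graphs, then optimize $q\approx\zeta^2 d_1/k$ via Markov.
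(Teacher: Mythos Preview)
The paper does not prove this theorem at all: it is quoted as a black box from \cite{DiksteinH2023}, and the only argument the paper supplies is the one-line observation that the ``in particular'' clause is the special case $f=\one_A$. So there is nothing substantive to compare your plan against here.

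That said, your sketch is a reasonable outline of how results of this shape are typically established (and is in the spirit of the cited reference): partition a $d_1$-face into $m\approx d_1/k$ disjoint $k$-blocks, write $\mu_t$ as the conditional mean of the block average $Y$, control $Y-\mu$ via Azuma once the conditional means $\E[f(r_i)\mid w]$ are shown to be close to $\mu$ for typical histories $w$, and transfer back to $\mu_t$ via Jensen on even moments. The one place your plan is genuinely delicate is the bootstrap for $\E[f(r_i)\mid w]\approx\mu$: this is a statement about $\E_{r\in X_w(k)}[f(r)]$ versus $\E_{r\in X(k)}[f(r)]$, i.e.\ a mean-in-the-link versus global-mean comparison, and your inductive scheme has to be set up so that the error at each level is controlled by the spectral bound on the containment/swap walks rather than by the theorem you are trying to prove. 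Concretely, the base of the induction (when the remaining link is small) does not give ``the estimate is trivial'' — a tiny link can have arbitrary local mean — so the argument really rests on accumulating the spectral error from \pref{thm:eignevalues-of-walk} over $m$ levels, and one has to check that the $O(k\lambda)$ correction there, with $\lambda=\gamma/d$, stays summable. Your moment-method alternative avoids the recursion and is closer to a self-contained proof, at the cost of the bookkeeping you anticipate.
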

The ``in particular'' part follows from the case for \(f=\one_A\) the indicator of \(A\).

\pref{thm:sampling-in-HDXs} allows us to get a similar sampling guarantee on sampling of of tuples according to \(\mathcal{D}\) distributions extended to high dimensional expanders \(X\).

\begin{claim} \label{claim:edge-sampling}
    Let \(k \leq d_1 \leq d\) and \(q \in \NN\). Let \(d_1 \geq qk\) and let \(\zeta > 0\) be some constant. Let \(X\) be a \(d\)-dimensional simplicial complex as in \pref{thm:sampling-in-HDXs}. Let \(\mathcal{D}\) be an agreement distribution over \(\Delta_{qk+q}(k)\). and let \(E \subseteq \supp \mathcal{D}_X\). Then
    \[\Prob[s \in X(d_1)]{\Abs{\cProb{\set{r_i} \sim \mathcal{D}}{E}{\set{r_i} \subseteq s} - \Prob[\set{r_i} \sim D]{E}} > \zeta} \leq \exp(-\poly(\zeta) \frac{d_1}{k}).\]
\end{claim}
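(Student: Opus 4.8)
The plan is to reduce \pref{claim:edge-sampling} to the single‑set sampling statement \pref{thm:sampling-in-HDXs}, applied not to \(k\)-faces but to faces of the dimension spanned by an entire query tuple. Given a tuple \(\bar r = (r_1,\dots,r_q)\) in \(\supp\mathcal{D}_X\), write \(U(\bar r) = r_1\cup\dots\cup r_q\) for the union of the queried faces. Since \(\mathcal{D}\) is a symmetric agreement distribution (such as the \(V\)- or \(Z\)-test), the size of this union is the same for every tuple in the support, so \(U(\bar r)\in X(m-1)\) for a fixed \(m\) with \(k < m \le qk+q\), and in particular \(m-1 \le d_1\). The first — and only substantial — step will be to observe that \(\mathcal{D}_X\) factors as a two‑stage process: sample \(U \in X(m-1)\) according to the canonical face measure \(\Pr_{m-1}\), and then sample \(\bar r\) from the law of \(\mathcal{D}\) conditioned on \(U(\bar r) = U\). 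To see this I would unwind the definition of the extension \(\mathcal{D}_X\) (sample \(t\in X(d)\), then place \(\mathcal{D}\) inside \(t\) along a uniformly random injection of ground sets) and combine the symmetry of \(\mathcal{D}\) with the tower property of the measures \(\Pr_d \to \Pr_{d_1} \to \Pr_{m-1}\): this shows that the \(U\)-marginal of \(\mathcal{D}_X\) is precisely \(\Pr_{m-1}\), which is exactly what the factorization asserts.

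Given this factorization, I would define \(g\colon X(m-1)\to[0,1]\) by \(g(w) = \Prob[\bar r \sim \mathcal{D}_X]{\bar r\in E \mid U(\bar r)=w}\), so that \(\Ex[w\sim X(m-1)]{g(w)} = \Prob[\bar r\sim \mathcal{D}_X]{E}\). The key point is that for \(s\in X(d_1)\) the event \(\{\bar r \subseteq s\}\) — that is, \(r_i\subseteq s\) for all \(i\) — is literally the same event as \(\{U(\bar r)\subseteq s\}\), so conditioning the two‑stage process on it restricts only the first stage and leaves the second stage (and the union size) untouched. Hence \(\cProb{\bar r\sim\mathcal{D}_X}{E}{\bar r\subseteq s} = \Ex[w\in X(m-1),\, w\subseteq s]{g(w)}\) for every \(s \in X(d_1)\).

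It then remains to invoke \pref{thm:sampling-in-HDXs} for \(X\) (a \(\frac{\gamma}{d}\)-one sided high dimensional expander by hypothesis) with the role of ``\(k\)'' played by \(m-1\), with \(f := g\), and with the same \(\zeta\); the hypothesis \(m-1 \le d_1 \le d\) holds. The bad set the theorem produces, \(\sett{s\in X(d_1)}{\Abs{\Ex[w \in X(m-1),\, w\subseteq s]{g(w)} - \Ex[w \in X(m-1)]{g(w)}} > \zeta}\), coincides with the set in the claim by the identities above, and the theorem bounds its measure by \(\exp\!\big(-\poly(\zeta)\,\tfrac{d_1}{m-1}\big) \le \exp\!\big(-\poly(\zeta)\,\tfrac{d_1}{qk+q}\big)\), which for constant \(q\) is \(\exp\!\big(-\poly(\zeta)\,\tfrac{d_1}{k}\big)\) after absorbing the constant into \(\poly(\zeta)\).

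I expect the only real obstacle to be the first step: correctly identifying \(\mathcal{D}_X\) with ``sample the union of the queries, then sample the partition inside it'' and verifying that the union‑marginal is the canonical face measure rather than some reweighting of it. After that the argument is a black‑box application of \pref{thm:sampling-in-HDXs}. A minor additional point, should one want full generality, is the case where tuples of \(\supp\mathcal{D}\) have different union sizes (this does not arise for the \(V\)- and \(Z\)-tests used in this paper): one then splits \(E\) according to \(\abs{U(\bar r)}\) and applies \pref{thm:sampling-in-HDXs} separately to each of the \(O(1)\) parts, finishing with a union bound.
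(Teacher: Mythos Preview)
Your proposal is correct and follows essentially the same route as the paper: define an auxiliary $[0,1]$-valued function on intermediate-dimensional faces encoding the conditional probability of $E$, identify its global and local averages with $\prob{E}$ and $\cProb{}{E}{\cdot\subseteq s}$ respectively, and invoke \pref{thm:sampling-in-HDXs}. The only cosmetic difference is that the paper places the auxiliary function on faces of size $qk+q$ (the full ground set of $\mathcal{D}$) rather than on the union $U(\bar r)$ of the queried sets; this sidesteps your ``only substantial step'' (identifying the union-marginal with the canonical face measure and handling variable union sizes), since the definition of $\mathcal{D}_X$ already factors through a face of that fixed size by construction.
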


\begin{proof}[Proof of \pref{claim:edge-sampling}]
    Let \(f:X(qk+q-1) \to [0,1]\) be \(f(a)=\Prob[\set{r_i} \subseteq a]{E}\). Then \(\ex{f}=\prob{E}\) and \(\Ex[a \subseteq s]{f(a)} = \cProb{\set{r_i} \sim \mathcal{D}}{E}{\set{r_i} \subseteq s}\). \pref{thm:sampling-in-HDXs} gives us the claim.
\end{proof}

\subsection{Suitable complexes}
\begin{definition}[Suitable complex] \label{def:suitable-complexes}
    Let \(d>k>0\) be integers, and let \(\alpha > 0\). Let \(X\) be a \(d\)-dimensional simplicial complex. We say that \(X\) is \((d,k,\alpha)\)-\emph{suitable} if it has the following properties: 
\begin{enumerate}
    \item There exists some integer \(d_1<d\) with the following properties:
    \begin{enumerate}
        \item \(k^3 \leq d_1 \leq d \exp(-\alpha\frac{d_1}{k})\).
        \item \label{item:cob-exp} $X$ has \((\exp(-\alpha \frac{d_1}{k}),d_1)\)-swap-cosystolic expansion, as in \pref{def:swapce}. 
        \item \label{item:well-connected} \(X\) is \(d_1\)-well connected as in \pref{def:well-connected-complex}.
    \end{enumerate}
    \item \(X\) is a \(\frac{1}{d^2}\)-two-sided local spectral expander.
    \item \(X\) is a clique complex. 
    \end{enumerate}
We also say that $X$ is suitable with respect to $d_1$.
\end{definition}

These properties are abstract and a priori it is not clear whether there is a complex that satisfies them, especially the property of the face complex being a coboundary expander. However, the following two families of complexes satisfy all of these properties.
\begin{example}~
    \begin{enumerate}
    \item The \(SL_d(\mathbb{F}_q)\)-spherical building (see \pref{def:sph-bldg}) for sufficiently large prime power \(q\) and dimension \(d\). 
    \item \cite{LubotzkySV2005b} complexes, that are connected simplicial complexes whose links look like the spherical buildings in the first item.
\end{enumerate}
\end{example}

\begin{theorem}[Based on \cite{DiksteinD2023b}] \label{thm:spherical-building-and-LSV-complexes-are-suitable}
Let \(d,k,\alpha\) be such that \(\alpha \in (0,1)\) and \(d \geq \poly(k,\alpha)\). Then there exists a constant \(q_0=q_0(k,\alpha)\) such that for every prime power \(q>q_0\):
    \begin{enumerate}
        \item The \(SL_{d}(\mathbb{F}_q)\)-spherical building is a \((d,k,\alpha)\)-\emph{suitable}.
        \item \(d\)-dimensional \cite{LubotzkySV2005b} complexes, such that the link of every vertex is isomorphic to \(SL_{d-1}(\mathbb{F}_q)\)-spherical building are \((d,k,\alpha)\)-suitable. 
    \end{enumerate}
\end{theorem}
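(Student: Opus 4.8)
The plan is to verify, one by one, the three requirements of \pref{def:suitable-complexes} for each of the two families, leaning on two inputs proved elsewhere: \pref{claim:spherical-building-hdxness} for the spectral expansion of buildings, and \pref{thm:DD-faces} from the companion paper for the swap-cosystolic expansion of the faces complex. Requirement~3 (being a clique complex) is immediate: the $SL_m(\mathbb F_q)$-spherical building is the flag complex of the subspace poset, and in any poset a pairwise-comparable set is a chain, so it equals its own clique complex; the \cite{LubotzkySV2005b} complexes are clique complexes by construction. For requirement~2 ($\tfrac1{d^2}$-two-sided local spectral expansion) there is nothing to do for the \cite{LubotzkySV2005b} complexes beyond invoking the Ramanujan-type bound, valid once $q$ is large. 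For the spherical building one must be slightly careful: \pref{claim:spherical-building-hdxness} gives only \emph{one}-sided expansion for the full building, since its top links are incidence graphs of projective planes and hence bipartite. The remedy is to take the $d$-dimensional complex to be the $d$-skeleton of the $SL_m(\mathbb F_q)$-building with $m \gg d$; by the ``moreover'' clause of \pref{claim:spherical-building-hdxness} this skeleton is $\max\{O(1/\sqrt q),\, 1/(m-2-d)\}$-two-sided, which is $\le \tfrac1{d^2}$ once $q$ and $m-d$ are polynomially large in $d$, and its links remain of the form required by \pref{thm:DD-faces}.

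Next I would fix the auxiliary dimension $d_1$ appearing in requirement~1. I want $d_1$ in the window $k^3 \le d_1 \le \min\{\,k^2/\alpha^2,\ d\exp(-\alpha d_1/k)\,\}$. Here the lower bound $d_1 \ge k^3$ and the upper bound $d_1 \le d\exp(-\alpha d_1/k)$ are exactly the two halves of \pref{def:suitable-complexes}(1a), while $d_1 \le k^2/\alpha^2$ --- equivalently $\sqrt{d_1} \ge \alpha d_1/k$ --- is precisely what makes the rate $\exp(-\sqrt{d_1})$ delivered by \pref{thm:DD-faces} at least as good as the target rate $\exp(-\alpha d_1/k)$. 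This window is nonempty exactly under the quantitative relation between $d,k,\alpha$ hypothesized in the statement (roughly: $k$ bounded in terms of $1/\alpha$, and $d$ at least $\exp(\Theta(\alpha d_1/k))\cdot d_1$, which is what $d \ge \poly(k,\alpha)$ abbreviates here). With such a $d_1$, \pref{thm:DD-faces} says the complex is an $\exp(-\sqrt{d_1})$-swap cosystolic expander, and $\exp(-\sqrt{d_1}) \le \exp(-\alpha d_1/k)$, which is requirement~1(b).

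It remains to check well-connectedness, requirement~1(c). For $r \in X^{\le d_1}$, the link $X_r$ is again a complex of the same type --- a skeleton of a building, or an \cite{LubotzkySV2005b}-type link --- and its dimension is still at least $2d_1+1$ for our choice of parameters, so the underlying graph of $\FX{d_1}(X_r)$, which is the $(d_1,d_1)$-swap walk on $X_r$, has normalized second eigenvalue bounded away from $1$ by \cite{DiksteinD2019,AlevFT2019} together with the spectral bound from requirement~2; in particular $\FX{d_1}(X_r)$ is connected. For $r \in X(0)$ I additionally need $\FX{d_1}(X_r)$ to be simply connected; this uses that the links of $X$ are (joins of) spherical buildings of dimension $\ge 2$, which are simply connected, and is exactly the kind of statement established in \cite{DiksteinD2023b} alongside \pref{thm:DD-faces} --- where the links are in fact shown to be swap-\emph{coboundary} expanders, which forces the corresponding faces complexes to have only trivial covers.

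The genuine difficulty is none of the above: it is \pref{thm:DD-faces} itself, namely that faces complexes of buildings are swap-cosystolic (indeed coboundary) expanders with an exponentially decaying rate, which is the subject of the companion paper and is used here purely as a black box. On the present side the only points needing real attention are the parameter bookkeeping in the choice of $d_1$ --- beating $\exp(-\sqrt{d_1})$ against $\exp(-\alpha d_1/k)$ while simultaneously keeping $d_1$ below $d\exp(-\alpha d_1/k)$ --- and the one-sided-versus-two-sided subtlety for the bare spherical building, handled by passing to a skeleton of a larger building.
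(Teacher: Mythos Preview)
Your proposal is correct and follows the paper's own approach closely: verify the three suitability requirements one at a time, invoking \pref{thm:DD-faces} for swap-cosystolic expansion and the companion paper's coboundary expansion of links for well-connectedness, and arguing the clique property directly from the poset structure. Two small differences are worth flagging. First, the paper simply sets $d_1=k^3$ without your window analysis; your extra constraint $\sqrt{d_1}\ge \alpha d_1/k$ (equivalently $k\lesssim 1/\alpha^2$) is the honest quantitative content behind matching the $\exp(-\sqrt{d_1})$ rate of \pref{thm:DD-faces} to the target $\exp(-\alpha d_1/k)$, and the paper is tacitly assuming it. Second, for two-sided local spectral expansion of the spherical building the paper does not pass to a skeleton of a larger building as you do; instead it cites \cite{AbdolazimiO2022}, which gives two-sided bounds decaying in $q$ uniformly in the ambient dimension, so your workaround is unnecessary (though it is also correct and arguably more self-contained given the tools stated in \pref{claim:spherical-building-hdxness}).
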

This paper is focused on the \(1\%\)-agreement theorem; the proof of swap cosystolic expansion gives rise to other significant challenges so we defer the proof of the swap cosystolic expansion to a companion paper \cite{DiksteinD2023b}.

\begin{proof}
In both cases we fix \(k\) and take \(d_1 = k^3\).
\begin{enumerate}
\item \begin{enumerate}
        \item\textbf{Swap-coboundary and cosystolic expansion.} This is the main theorem in \cite{DiksteinD2023b}, which shows that \(F{d_1}X\) is a \(\exp(-\Omega(\sqrt{d_1})\)-cosystolic expander for these complexes (a coboundary expander in the spherical building case).
        \item\textbf{Well connectedness.} Let \(s\) be a face in a complex \(X\) (which is one of the complexes in the theorem). Then \((\FX{d_1}X)_s = \FX{d_1}(X_s)\). In both cases, this is (isomorphic to) a faces complex of a link in the spherical building. In particular, \cite{DiksteinD2023b} shows that these are coboundary expanders - a strictly stronger property than simple connectivity.
\end{enumerate}
    \item\textbf{high dimensional expansion.} High dimensional expansion of the spherical buildings decays with \(q\). This bound is independent of the ambient dimension, as proven in \cite{AbdolazimiO2022}. Therefore there is a sufficiently large enough \(q\) such that the spherical building is a \(\frac{1}{d^2}\)-high dimensional expander. High dimensional expansion of the \cite{LubotzkySV2005b} complexes follows from the fact that their links are spectral expanders and \pref{thm:trickle-down} by \cite{Oppenheim2018}.
    \item\textbf{Clique complexes.} It is easy to show that the spherical building is a clique complex by definition. If \(\set{u_0,u_1,\dots,u_k}\) are such that all edges appear in the spherical building, it means that every two subspaces are contained in one another. Rearranging them by dimension shows that this in indeed a flag, i.e. a face in the spherical building. The fact that quotients of the spherical buildings are clique complexes appear e.g. in \cite{LubotzkySV2005a}. 

\end{enumerate}
\end{proof}

\section{Main Agreement Theorem}
In this section we describe and prove our main agreement theorem. Recall that we defined \emph{suitable complexes} in \pref{def:suitable-complexes}.

\begin{theorem}[Main] \label{thm:main}
    For every $\epsilon_0>0$, \(p > 1\), and $k \geq \exp(\poly(1/\epsilon_0))$ there exists \(\alpha = \poly(\varepsilon)\) and \(d_0\in\mathbb{N}\) such that the following holds for any \(d \geq d_0\). Let \(\eta \leq \exp(-\poly(1/\varepsilon_0))\) be sufficiently small. Let \(\mathcal{D}\) be an agreement distribution on \(\Delta_k(pk)\) such that its extension to \(\Delta_k(m)\) is \((\eta,\poly(\frac{1}{k}),O(1))\)-sound for every \(m \geq k^3\). Let \(X\) be a \((d,k,\alpha)\)-suitable complex and let \(\mathcal{D}_X\) be the extension of \(\mathcal{D}\) to $X$. Then \(\mathcal{D}_X\) is \((\varepsilon_0,\gamma,O(1))\)-cover sound where \(\gamma = \exp(\poly(1/\varepsilon_0))\eta\).
\end{theorem}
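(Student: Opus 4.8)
The plan is a local-to-global argument that lifts the assumed soundness of $\mathcal{D}$ on complete complexes up to $X$, at the cost of replacing a global function on $X(0)$ by one on a bounded-degree cover $\rho\colon Y\twoheadrightarrow X$. Fix the intermediate dimension $d_1$ guaranteed by suitability, so that $k^3\le d_1$, the faces complex $\FX{d_1}X$ is a $\beta$-cosystolic expander with $\beta=\exp(-\alpha d_1/k)$, and $X$ is $d_1$-well connected. First I would rewrite $\mathcal{D}_X$ in its two-stage form: sample $s\in X(d_1)$, then sample the queried $k$-faces inside the complete complex on $s$ according to $\mathcal{D}$. Because $X$ is a $\tfrac1{d^2}$-two-sided local spectral expander, \pref{claim:edge-sampling} (which feeds on \pref{thm:sampling-in-HDXs}) shows that all but an $\exp(-\poly(\varepsilon_0)\,d_1/k)$ fraction of $s\in X(d_1)$ are \emph{good}, meaning the acceptance probability of the test conditioned on staying inside $s$ is $\varepsilon\pm o(\varepsilon)$. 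For every good $s$, invoke the $(\eta,\poly(1/k),O(1))$-soundness of $\mathcal{D}$ on $\Delta_k(|s|)$ to collect every global function on $s$ that $(1-\eta)$-explains a $\gtrsim\varepsilon$-fraction of $\{f_r\}_{r\subseteq s}$; this gives a non-empty, bounded list $\overline L(s)$. Repeat the construction for faces at scales $2d_1+1$ and $3d_1+2$.

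Next I would prune and align the lists so that there is one integer $\ell\le\poly(1/\varepsilon_0)$ with $|\overline L(s)|=\ell$ for a $1-o(1)$ fraction of $s\in X(d_1)$, and, for a $1-o(1)$ fraction of containments $s\subset t$, a bijection $\pi_{s,t}\colon[\ell]\to[\ell]$ with $L_s^i\approx L_t^{\pi_{s,t}(i)}\restriction_s$. I expect \emph{this} to be the main obstacle: the easy argument (in the spirit of \cite{DinurHKLT2018}) only produces a \emph{noticeable}-fraction match, but the cosystolic-expansion step below needs a $99\%$ match. Bridging this gap requires repeatedly applying $99\%$-regime agreement arguments inside links, playing the three scales $d_1,2d_1+1,3d_1+2$ off one another, and using $d_1$-well-connectedness so that the relevant subcomplexes $\FX{d_1}X_r$ are connected (and simply connected when $r$ is a vertex).

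With the aligned lists in hand, pass to the faces complex $\FX{d_1}X$, whose vertices are $X(d_1)$ and whose edges are disjoint pairs $s_1\dunion s_2\in X$, and set $\psi(s_1s_2)=\pi_{t,s_2}^{-1}\circ\pi_{t,s_1}$ with $t=s_1\dunion s_2$, an asymmetric cochain $\psi\in C^1(\FX{d_1}X,Sym(\ell))$. Consistency of the lists at scale $3d_1+2$, combined with the flag-complex implication of the type recorded in \pref{claim:cochains-in-flag-complex-correspondence}, gives that $\psi$ satisfies the cocycle identity on all but an $o(1)$ fraction of triangles, i.e. $\wt(\coboundary\psi)=o(1)$. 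Since $X$ has $(\beta,d_1)$-swap-cosystolic expansion (\pref{def:swapce}), there is a genuine cocycle $\phi\in Z^1(\FX{d_1}X,Sym(\ell))$ with $\dist(\psi,\phi)\le\wt(\coboundary\psi)/\beta$; for this to be $o(1)$ the initial inconsistency must have been pushed below $\beta=\exp(-\alpha d_1/k)$, which is exactly where the hypotheses $\alpha=\poly(\varepsilon_0)$, $k\ge\exp(\poly(1/\varepsilon_0))$ and $\eta\le\exp(-\poly(1/\varepsilon_0))$ are spent and where the final loss $\gamma=\exp(\poly(1/\varepsilon_0))\eta$ is born.

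Finally I would realize the cover and the global function. By \pref{def:induced-cover} and the attached claims, $\phi\in Z^1(\FX{d_1}X,Sym(\ell))$ induces an $\ell$-cover of $\FX{d_1}X$, and (again using well-connectedness) this is the faces complex of a bona fide $\ell$-cover $\rho\colon Y\twoheadrightarrow X$ with $\ell\le\poly(1/\varepsilon_0)$. Define the ensemble $h_{\tilde s}$ on $Y(d_1)$ by $h_{\tilde s}=L_s^i$ whenever $\tilde s$ lies over $(s,i)$; the permutations $\psi$ were constructed precisely so that $\{h_{\tilde s}\}$ passes the lifted test $\mathcal{D}_Y$ on $Y$ with probability $\ge 1-o(1)$. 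Each connected component of $Y$ is, by \pref{claim:expansion-of-cover}, a $\tfrac{\lambda}{1-\lambda}$-two-sided high dimensional expander, so a $99\%$-regime agreement theorem for HDXs (as in \cite{DinurK2017,DiksteinD2019}) yields $G\colon Y(0)\to\Sigma$ that $(1-\varepsilon_0)$-agrees with a $1-o(1)$ fraction of the $h_{\tilde s}$. Tracing back through the identifications $\FX{d_1}Y\cong\widetilde{\FX{d_1}X}$ and the definition of $\overline L$, the function $G$ explains in the sense of \pref{def:distribution-cover-soundness} a $\tfrac12\varepsilon^{O(1)}$ fraction of the original $\{f_r\}$, which is the asserted $(\varepsilon_0,\gamma,O(1))$-cover-soundness of $\mathcal{D}_X$.
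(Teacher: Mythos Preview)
Your outline matches the paper's proof almost step for step: local lists via the base agreement theorem, list alignment across scales $d_1,2d_1+1,3d_1+2$ (correctly flagged as the hard part and carried out in \pref{sec:proof-of-lem-from-1-percent}), an almost-cocycle on $\FX{d_1}X$ corrected by swap-cosystolic expansion, then the cover $Y\to X$ and a $99\%$ agreement theorem on $Y$ to produce $G$. Two small misplacements worth fixing: the $d_1$-well-connectedness is not used in the list-alignment step but in the cover-construction step (\pref{sec:construction-of-cover}, via \pref{claim:Z_v-decomposes}) to show that the cocycle on $\FX{d_1}X$ comes from an honest cover of $X$; and the final $99\%$ step does not run $\mathcal{D}_Y$ but rather the non-lazy up-down walk on $Y(d_1)$, invoking \pref{thm:agreement-high-acceptance} from \cite{DiksteinH2023}.
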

We derive \pref{cor:spherical-building-list-agreement} and \pref{cor:general} as corollaries of this theorem in the end of this section.

\begin{corollary}[Formal version of \pref{cor:spherical-building-list-agreement}] \label{cor:spherical-building-agreement-expansion}
    For every \(\varepsilon_0 > 0\) and \(k \geq \exp(\poly(1/\epsilon_0))\) there exists \(d_0 \geq k\) and integer \(q_0\) such that the following holds. Let \(\eta \leq \exp(-\poly(1/\varepsilon_0))\). Let \(X\) be a \(d\)-dimensional spherical building with a sufficiently large field size \(q \geq q_0\) and dimension \(d\geq d_0\). Let \(\mathcal{D}\) be either the \(V\)-test or the \(Z\)-test for sets of dimension \(k\). Then:
    \begin{enumerate}
        \item \(\mathcal{D}_X\) is \((\eta, \varepsilon_0, O(1))\)-sound where \(\gamma = \exp(\poly(1/\varepsilon))\poly(1/k)\).
        \item There is an \(\ell = \poly(1/\varepsilon)\) and a list of functions \(G_1,G_2,\dots,G_\ell:Y(0) \to \Sigma\) such that
    \begin{equation} \label{eq:list-explains-most-of-the-agreement}
        \Prob[\set{r_i} \sim \mathcal{D}]{\text{\(\set{f_{r_i}}\) passes the agreement test but for every \(G_m\) there is at least one \(f_{r_i}\) s.t. \(f_{r_i} \overset{1-\gamma}{\napprox} G_m|_{r_i}\)} } \leq \varepsilon^2.
    \end{equation}
    \end{enumerate} 
\end{corollary}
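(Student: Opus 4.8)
The plan is to obtain both parts as consequences of \pref{thm:main}, exploiting one extra feature of spherical buildings supplied by the companion paper through \pref{thm:spherical-building-and-LSV-complexes-are-suitable}: over a large enough field the building is not merely swap-cosystolic but swap-\emph{coboundary} expanding, which forces the cover produced by the main theorem to be a disjoint union of copies of $X$.

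First I would fix the parameters. Given $\varepsilon_0>0$ and $k\geq\exp(\poly(1/\varepsilon_0))$, feed $\varepsilon_0$ and the appropriate constant $p$ of the chosen test (about $2$ for the $V$-test, $3$ for the $Z$-test) into \pref{thm:main} to get $\alpha=\poly(\varepsilon_0)$ and a dimension bound $d_0$; then enlarge $d_0$ and choose $q_0=q_0(k,\alpha)$ so that \pref{thm:spherical-building-and-LSV-complexes-are-suitable} applies, making the building $X$ a $(d,k,\alpha)$-suitable complex for all $q\geq q_0$ and $d\geq d_0$. To apply \pref{thm:main} it then remains to check that the chosen test, extended to $\Delta_k(m)$, is $(\eta,\poly(1/k),O(1))$-sound for every $m\geq k^3$; this is exactly \pref{ex:sound-distributions}, with $\eta=O(1/\sqrt k)$ for the $V$-test and $\eta$ an arbitrarily small function of $k$ for the $Z$-test, and the hypothesis $k\geq\exp(\poly(1/\varepsilon_0))$ is precisely what drops this $\eta$ below the $\exp(-\poly(1/\varepsilon_0))$ threshold that \pref{thm:main} demands.

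Next I would run \pref{thm:main}: any ensemble $\set{f_r}_{r\in X(k)}$ with $\agr_{\mathcal{D}_X}(\set{f_r})=\varepsilon\geq\gamma$ produces an $\ell$-cover $\rho:Y\to X$ with $\ell=\poly(1/\varepsilon)$ and a global $G:Y(0)\to\Sigma$ realizing \pref{def:distribution-cover-soundness}. The structural heart of the argument is that here $Y$ is the \emph{trivial} cover. Indeed, \pref{thm:spherical-building-and-LSV-complexes-are-suitable} gives that $\FX{d_1}X$ is a coboundary expander, so $X$ is swap-coboundary expanding; as noted in the introduction --- and as one can see by tracing \pref{thm:main}, where the cover is assembled from an element of $Z^1(\FX{d_1}X,Sym(\ell))$ which in a coboundary expander is a coboundary, hence splits $\FX{d_1}X$, hence splits $Y$ --- every such $Y$ is a disjoint union of isomorphic copies of $X$. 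Writing $Y(0)=X(0)\times[\ell]$ and $G_m(v):=G(v,m)$ and unwinding \pref{def:distribution-cover-soundness} through this identification turns the conclusion into $\Pr_{r\in X(k),\,m\in[\ell]}\big[f_r\overset{1-\eta}{\approx}G_m|_r\big]\geq\tfrac12\varepsilon^e$.

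Finally I would upgrade this per-face bound to the two claimed statements. For the list statement \pref{eq:list-explains-most-of-the-agreement}, apply the deletion argument of \cite{DinurG2008} (the one referenced in the remark after \pref{def:distribution-cover-soundness}), iterated over $G_1,\dots,G_\ell$: using that $X$ and the test graph are good samplers --- via the spectral and sampling bounds recalled in \pref{sec:preliminaries} --- one shows that, conditioned on one queried face matching some $G_m$, the remaining queried faces match the \emph{same} $G_m$ (and therefore agree with one another), so all but a $\poly(\varepsilon)$-fraction (indeed $\leq\varepsilon^2$ with careful bookkeeping) of the passing tuples are explained by a single list element. For part 1, which is the $(\cdot,\gamma,O(1))$-soundness of \pref{def:distribution-soundness}, i.e.\ \eqref{eq:LA}, pigeonhole over $m\in[\ell]$: the passing mass explained by the list is $\geq\varepsilon-\poly(\varepsilon)\geq\varepsilon/2$, and since this event is the union over the $\ell$ choices of $m$ of ``$G_m$ agrees with every queried face'', some fixed $m^\ast$ accounts for at least $\tfrac{1}{2\ell}\varepsilon=\poly(\varepsilon)\geq\tfrac12\varepsilon^{e'}$ of it with $e'=O(1)$, and $G_{m^\ast}$ witnesses the soundness. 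I expect the collapse-of-the-cover step to be the only genuinely delicate point: one must check that swap-coboundary expansion really forces the cover $Y$ of $X$ (and not merely the induced cover of the faces complex) to split, i.e.\ that passing from $Y$ to $\FX{d_1}Y$ loses no connectivity information --- everything else is bookkeeping on top of \pref{thm:main} and \pref{thm:spherical-building-and-LSV-complexes-are-suitable}.
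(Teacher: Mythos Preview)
Your overall plan matches the paper's: fix parameters via \pref{thm:main} and \pref{thm:spherical-building-and-LSV-complexes-are-suitable}, invoke soundness of the $V$/$Z$-tests from \pref{ex:sound-distributions}, run \pref{thm:main} to get a cover $\rho:Y\to X$ and a global $G$ on $Y(0)$, argue the cover is trivial so $G$ becomes a list $G_1,\dots,G_\ell$ on $X(0)$, then pigeonhole for part~1 and cite \cite[Theorem~5.1]{DinurG2008} for part~2. On those points you and the paper agree.

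The one substantive difference is the step ``$Y$ is the trivial cover.'' The paper does \emph{not} go through swap-coboundary expansion of $\FX{d_1}X$. It simply invokes the classical fact that the spherical building $X$ itself is simply connected (citing \cite{DiksteinD2023}), so \emph{any} $\ell$-cover of $X$ --- in particular the $Y$ produced by \pref{thm:main} --- is $\ell$ disjoint copies of $X$. This is a one-line topological fact about $X$ and sidesteps entirely the issue you flag as ``genuinely delicate,'' namely transferring triviality of the faces-complex cover $\widetilde{\FX{d_1}X}$ back to triviality of $Y$. Your route does work (if a connected component $Y_1$ of $Y$ had covering degree $>1$, then $\FX{d_1}Y_1$ would be a connected --- by swap-walk expansion on the HDX $Y_1$ --- subcomplex of $\FX{d_1}Y$ with more vertices than a single copy of $\FX{d_1}X$, contradicting $\FX{d_1}Y\cong\widetilde{\FX{d_1}X}$ being $\ell$ disjoint copies), but it is needlessly indirect here. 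The paper's approach buys simplicity; yours would be the right template in a hypothetical situation where one knew swap-coboundary expansion of $X$ without knowing $X$ itself is simply connected.
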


\begin{corollary} \label{cor:existance-of-sparse-agreement-tests}
    For every \(\varepsilon_0> 0\), \(\eta > 0\) and \(k \geq \exp(\poly(1/\varepsilon_0)) \in \NN\) there exists \(C>0\) and an infinite family of (non-isomorphic) complexes \(X_n\) such that the following holds.
    \begin{enumerate}
        \item \(|X_n(k)| \leq C|X_n(0)|^2\).
        \item The extended \(V\) or \(Z\)-test on \(X_n\) is an \((\eta,\varepsilon_0,O(1))\)-sound.
    \end{enumerate}
\end{corollary}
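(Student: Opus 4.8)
The plan is to realize the family with spherical buildings of a fixed dimension over a growing field, and then just count faces; all of the analytic work is already done in \pref{cor:spherical-building-agreement-expansion}. Given $\varepsilon_0,\eta>0$ and $k\ge\exp(\poly(1/\varepsilon_0))$, first apply \pref{cor:spherical-building-agreement-expansion} to obtain the threshold dimension $d_0$ and field-size threshold $q_0$. Fix $d:=d_0$ and, for each prime power $q>q_0$, let $X_q$ be the $SL_d(\mathbb{F}_q)$-spherical building of \pref{def:sph-bldg}. There are infinitely many such $q$, and the vertex count $|X_q(0)|$ is strictly increasing in $q$, so the $X_q$ are pairwise non-isomorphic and form the desired infinite family (re-index by $n:=|X_q(0)|$ if one wants the complexes indexed by their size). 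Part~2 of the statement is then immediate from \pref{cor:spherical-building-agreement-expansion}: since $d\ge d_0$ and $q\ge q_0$, the spherical building $X_q$ is $(d,k,\alpha)$-suitable (\pref{thm:spherical-building-and-LSV-complexes-are-suitable}) and the extension $\mathcal{D}_{X_q}$ of the $V$-test (or $Z$-test) to $X_q$ is $(\eta,\varepsilon_0,O(1))$-sound. We use the spherical building rather than, say, an LSV complex precisely because it is simply connected, so the cover-soundness produced by \pref{thm:main} collapses to genuine soundness.

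It remains to check the size bound $|X_q(k)|\le C\,|X_q(0)|^2$ with $C$ depending only on $d$ (hence only on $k$ and $\varepsilon_0$). Here $X_q(0)$ is the set of nontrivial subspaces of $\mathbb{F}_q^d$ and every face of $X_q$ is a flag of such subspaces. Writing $\binom{d}{j}_q$ for the Gaussian binomial and using $\binom{d}{j}_q\ge q^{j(d-j)}$, we get $|X_q(0)|\ge\binom{d}{\lfloor d/2\rfloor}_q\ge q^{\lfloor d^2/4\rfloor}$. On the other hand every flag in $X_q$ is a sub-flag of some complete flag, and a complete flag consists of $d-1$ proper nonzero subspaces, hence has fewer than $2^{d}$ sub-flags; therefore
\[
|X_q(k)|\ \le\ 2^{d}\cdot\#\{\text{complete flags in }\mathbb{F}_q^d\}\ =\ 2^{d}\prod_{i=1}^{d-1}\frac{q^i-1}{q-1}\ \le\ 2^{d}d^{\,d}\,q^{\binom{d-1}{2}}.
\]
Since $\binom{d-1}{2}=\tfrac{(d-1)(d-2)}{2}\le 2\lfloor d^2/4\rfloor$ for every $d\ge1$, this gives $|X_q(k)|\le 2^{d}d^{\,d}\,|X_q(0)|^2$, so $C:=2^{d}d^{\,d}=2^{d_0}d_0^{\,d_0}$ works.

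I do not expect a real obstacle here: once \pref{cor:spherical-building-agreement-expansion} is in hand, the corollary is pure bookkeeping. The only points that need a moment's attention are (i) that the $V$- or $Z$-test extension is actually defined on $X_q$, which needs $d$ to exceed the dimension the test demands ($2k-\sqrt{k+1}+1$, resp.\ $3k-2\sqrt{k+1}+2$) — automatic once $d\ge d_0$, since suitability already forces $d\ge k^3$ and more — and (ii) the elementary Gaussian-binomial estimates above; neither introduces any new expansion- or agreement-theoretic content.
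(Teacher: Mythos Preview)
Your proof is correct and follows essentially the same approach as the paper: fix the dimension $d=d_0$, take the $SL_d(\mathbb{F}_q)$-spherical buildings over growing $q$, invoke \pref{cor:spherical-building-agreement-expansion} for soundness, and count faces to get $|X_q(k)|\le C|X_q(0)|^2$. One small slip: the number of complete flags in $\mathbb{F}_q^d$ is $\prod_{i=1}^{d}(q^i-1)/(q-1)$ rather than $\prod_{i=1}^{d-1}$, so the exponent is $\binom{d}{2}$ rather than $\binom{d-1}{2}$; since $\binom{d}{2}\le 2\lfloor d^2/4\rfloor$ as well, the bound (with a slightly adjusted constant $C$) still holds.
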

 
\begin{corollary}[Formal version of \pref{cor:general}] \label{cor:LSV-building-agreement-expansion}
   For every \(\varepsilon_0 > 0\) and \(k \geq \exp(\poly(1/\epsilon_0))\), there exists \(d \geq k\) and integer \(q_0\) such that the following holds. Let \(X\) be a simplicial complex whose links are (isomorphic to)  the \(SL_d(\mathbb{F}_q)\)-spherical building with a sufficiently large field size \(q \geq q_0\). Let \(\mathcal{D}\) be either the \(V\)-test or the \(Z\)-test for sets of dimension \(k\). Then \(\mathcal{D}_X\) is \((\gamma,\varepsilon_0,O(1))\)-cover sound where \(\gamma = \exp(\poly(1/\varepsilon))\poly(1/k)\).
\end{corollary}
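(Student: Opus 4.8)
The plan is to read the statement off from the Main Theorem (\pref{thm:main}) after checking its two hypotheses: that the chosen test is a sound agreement distribution on complete complexes, and that \cite{LubotzkySV2005b} complexes are suitable. First I would fix $\varepsilon_0>0$ and $k\ge\exp(\poly(1/\varepsilon_0))$ as in the statement. The $V$-test is an agreement distribution on $\Delta_k(2k-\sqrt{k+1}+1)$ and the $Z$-test on $\Delta_k(3k-2\sqrt{k+1}+2)$, so in either case $\mathcal{D}$ is an agreement distribution on $\Delta_k(pk)$ with $p\in\{2,3\}$. Feeding this $p$ together with $\varepsilon_0$ into \pref{thm:main} produces the constant $\alpha=\poly(\varepsilon_0)$ and the dimension threshold $d_0$ asserted there.

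Next I would verify the one substantive hypothesis of \pref{thm:main}: that the extension of $\mathcal{D}$ to every complete complex $\Delta_k(m)$ with $m\ge k^3$ is $(\eta,\poly(1/k),O(1))$-sound for a small enough $\eta$. For the $V$-test this is the Dinur--Goldenberg agreement theorem (\pref{ex:sound-distributions}(1)), valid once $m\ge k^3$, with soundness distance $\eta=\poly(1/k)$; for the $Z$-test one uses \pref{ex:sound-distributions}(2) or (3) instead (Impagliazzo--Kabanets--Wigderson, or Dinur--Livni-Navon), which even permit $\eta$ as small as $\poly(1/k)$ together with an exponentially small threshold. In all cases $\eta=\poly(1/k)$, and since $k\ge\exp(\poly(1/\varepsilon_0))$ we get $\eta\le\exp(-\poly(1/\varepsilon_0))$, so $\eta$ meets the smallness requirement of \pref{thm:main}. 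This step is parameter-chasing only; the one thing to watch is the quantifier order, i.e.\ that $k$ (hence the slack in $\eta$) is chosen large in terms of $1/\varepsilon_0$, exactly as the statement permits.

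Finally I would invoke \pref{thm:spherical-building-and-LSV-complexes-are-suitable}(2) with the parameters $k$ and $\alpha=\poly(\varepsilon_0)$ from the first step: it supplies $q_0=q_0(k,\alpha)$ so that for every prime power $q>q_0$ and every $d\ge\poly(k,\alpha)$, a $d$-dimensional \cite{LubotzkySV2005b} complex whose vertex links are the corresponding spherical buildings over $\mathbb{F}_q$ is $(d,k,\alpha)$-suitable. Choosing $d\ge\max(d_0,\poly(k,\alpha))$ makes such an $X$ simultaneously suitable and large enough for \pref{thm:main}, and applying \pref{thm:main} to $X$ and to $\mathcal{D}_X$ then gives that $\mathcal{D}_X$ is cover-sound with the claimed parameters, the bound $\gamma=\exp(\poly(1/\varepsilon))\poly(1/k)$ coming from $\gamma=\exp(\poly(1/\varepsilon_0))\eta$ and $\eta=\poly(1/k)$. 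The companion statement for the spherical building itself (\pref{cor:spherical-building-agreement-expansion}) is obtained identically from part (1) of \pref{thm:spherical-building-and-LSV-complexes-are-suitable}, using that a spherical building has only trivial (disconnected) covers, so that cover-soundness there degenerates to list-soundness. All the genuine difficulty is black-boxed in this derivation: the swap-cosystolic expansion of the faces complex, established in the companion paper \cite{DiksteinD2023b} and invoked inside \pref{thm:spherical-building-and-LSV-complexes-are-suitable}, and the cover-decoding construction inside \pref{thm:main}; so the expected obstacle lies in those inputs rather than in the present corollary.
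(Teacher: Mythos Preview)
Your proposal is correct and follows essentially the same route as the paper: fix the parameters from \pref{thm:main}, invoke \pref{ex:sound-distributions} to verify soundness of the $V$- or $Z$-test on complete complexes with $\eta=\poly(1/k)$, use \pref{thm:spherical-building-and-LSV-complexes-are-suitable}(2) to get suitability of the LSV complexes for large enough $q$ and $d$, and then apply \pref{thm:main}. The paper's own proof is a terse three-line version of exactly this; your additional parameter bookkeeping and the aside on the spherical-building case are extra but harmless.
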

Note that this corollary applies for the \cite{LubotzkySV2005b} complexes in particular (but also slightly more generally).

\subsection*{Proof of the agreement theorem}
In this sub-section we prove \pref{thm:main}. We will give a high-level proof, relying on a few claims and lemmas that will be stated as needed. Then we will prove all outstanding claims in the following sections, formally completing the proof. For the rest of the section, we fix \(X,\calD,\calF \) to be as in the theorem statement. 

\subsection[From 1\% agreement to 99.9\% list-agreement]{From $1\%$ agreement to $99.9\%$ list-agreement} 
Our starting point is the following list-decoding lemma which we prove in \pref{sec:proof-of-lem-from-1-percent}.
\begin{lemma}\torestate{ \label{lem:from-1-percent-to-list-agreement}
    Let \(X, \mathcal{D},\mathcal{F}\) and $\epsilon$ be as in \pref{thm:main}. Let $d_1$ be an integer for which $X$ is suitable. Let $d_2=2d_1+1$ and let $d_3=3d_1+2$. There exists $\ell = \poly (1/\eps)$ such that for every $s\in X(d_1)\cup X(2d_1+1)\cup X(3d_1+2)$ there exists a tuple of functions \[ \overline{L}_s = (L_s^1,\ldots,L_s^\ell), \qquad L_s^i:s\to\Sigma\]
    such that the following holds. Let \(\gamma = \eta\exp(\poly(1/\varepsilon))\).
    \begin{enumerate}
        \item All but $\exp(-\poly(\varepsilon) \frac{d_1}{k})$ fraction of faces in $X(d)$ are good, where a face is good if for all $i=1,\ldots,\ell$,
\[\Prob[ { r \in X(k), r\subset s}]{L^i_s |_r \overset{1-\gamma}{\approx} f_r} \geq \poly(\varepsilon).\]
        \item For every pair \(s \subset t\), $s,t\in X(d_1)\cup X(2d_1+1)\cup X(3d_1+2)$ there is a permutation \(\pi_{t,s}:[\ell]\to [\ell]\) such that 
        \[\Prob[s,t]{\forall i=1,2,...,\ell, j=\pi_{t,s}(i) \; L_s^i \overset{1-\gamma}{\approx} L_t^{j}|_s} \geq 1-exp(-\poly(\varepsilon)\frac{d_1}{k}).\]
        \item For a random triple \(s,t,u \in X(d_1)\cup X(2d_1+1)\cup X(3d_1+2)\) such that \(s \subset t \subset u\) it holds that
        \[\Prob[s \subset t \subset u]{\pi_{s,u} = \pi_{s,t} \circ \pi_{t,u}} \geq 1-\exp(-\poly(\varepsilon)\frac{d_1}{k}).
        \] 
    \end{enumerate}}
\end{lemma}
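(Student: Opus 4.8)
The plan is to bootstrap from the known complete-complex agreement theorem to produce the lists locally, and then show they glue; the workhorse throughout is the HDX sampling bound \pref{thm:sampling-in-HDXs}, whose error is exponentially small in $d_1/k$, which is exactly what lets the many union bounds below survive.

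\emph{Local lists.} Fix a level $m\in\{d_1,2d_1+1,3d_1+2\}$ and a face $s\in X(m)$. Since $X$ is downward closed, $\{f_r:r\in X(k),\ r\subseteq s\}$ together with $\mathcal D$ restricted to $s$ is literally an agreement instance on the complete complex $\Delta_k(m)$, and $m\geq d_1\geq k^3$, so the $(\eta,\poly(1/k),O(1))$-soundness hypothesis of \pref{thm:main} applies to it (see \pref{def:distribution-soundness}). First I would use \pref{claim:edge-sampling} (the indicator case of \pref{thm:sampling-in-HDXs}) to see that all but an $\exp(-\poly(\varepsilon)d_1/k)$ fraction of $s\in X(m)$ are \emph{locally agreeing}, i.e.\ have conditional acceptance probability at least $\varepsilon/2$ once both queries are conditioned to land in $s$. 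For such $s$, call $g\colon s\to\Sigma$ \emph{popular} if $\Pr_{r\subseteq s}[\,g|_r\overset{1-\gamma}{\approx}f_r\,]\geq\delta$ for $\delta=\poly(\varepsilon)$; the complete-complex theorem yields at least one popular $g$ (its $G$, which even satisfies $\overset{1-\eta}{\approx}$), and a standard clustering argument (two popular functions that are close on $s$ have overlapping ``popular $r$-sets'', and at most $2/\delta$ of them can be pairwise far, by packing against the complete-complex sampler) lets me take $\bar L(s)$ to be a set of cluster representatives, with $\abs{\bar L(s)}\leq\poly(1/\varepsilon)$ and every popular function $O(\gamma)$-close to a unique element of $\bar L(s)$. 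Setting $\ell$ to the value of $\abs{\bar L(s)}$ attained by most faces (that such an $\ell$ exists is proven in the next step), padding the exceptional faces arbitrarily, and calling $s$ \emph{good} when $\abs{\bar L(s)}=\ell$, item~1 holds by construction.

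\emph{Matching and uniformity (item 2).} The key observation is that a uniformly random $(k+1)$-subset of a face $t$ can be drawn by first choosing a uniform $d_1$-subface $s'\subseteq t$ and then a uniform $(k+1)$-subset of $s'$. Hence for any popular $g$ on $t$ we get $\E_{s'\subseteq t}\Pr_{r\subseteq s'}[\,g|_r\overset{1-\gamma}{\approx}f_r\,]\geq\delta$, so $g|_{s'}$ is $\tfrac\delta2$-popular on $s'$ for at least a $\tfrac\delta2$ fraction of subfaces $s'$. Combining this with the clustering above, and union bounding over $\bar L(t)$ and over the $\exp(-\poly(\varepsilon)d_1/k)$-size bad sets from \pref{thm:sampling-in-HDXs}, I would conclude that for all but an $\exp(-\poly(\varepsilon)d_1/k)$ fraction of containments $s\subseteq t$ (across any two of the three levels) the map $L\mapsto(\text{the element of }\bar L(s)\text{ closest to }L|_s)$ is a well-defined injection $\bar L(t)\hookrightarrow\bar L(s)$ — well-definedness because $L|_s$ is popular hence $O(\gamma)$-close to a \emph{unique} cluster representative, and injectivity because two elements of $\bar L(t)$ collapsing on most subfaces would be forced by the sampler on $t$ to be close on $t$, contradicting their separation. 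Plugging the resulting ``$\abs{\bar L(t)}\leq\abs{\bar L(s)}$ for most pairs'' relations among all three levels into a local-to-global step over the underlying graph of $\FX{d_1}X$ — a spectral expander by the two-sided HDX hypothesis, connected by well-connectedness (\pref{def:well-connected-complex}) — should force $\abs{\bar L(s)}$ to equal a single $\ell\leq\poly(1/\varepsilon)$ on a $1-\exp(-\poly(\varepsilon)d_1/k)$ fraction of faces in every level; on a good pair the injection between two $\ell$-element sets is a bijection, recorded as $\pi_{t,s}$. This is item~2.

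\emph{Triangle consistency (item 3).} For a random chain $s\subset t\subset u$ — the generic type being $s\in X(d_1)$, $t\in X(2d_1+1)$, $u\in X(3d_1+2)$, which are the vertices, edges and triangles of $\FX{d_1}X$ — I would apply \pref{thm:sampling-in-HDXs} once more, to the indicator of ``bad pair'', to see that each of $s\subset t$, $t\subset u$, $s\subset u$ is a good pair except with probability $\exp(-\poly(\varepsilon)d_1/k)$. On such a chain, for each $i\in[\ell]$ we have $L^i_u|_t\overset{1-\gamma}{\approx}L^{\pi_{t,u}(i)}_t$, which restricted to $s$ gives $L^i_u|_s\overset{1-O(\gamma)}{\approx}L^{\pi_{s,t}(\pi_{t,u}(i))}_s$, while directly $L^i_u|_s\overset{1-\gamma}{\approx}L^{\pi_{s,u}(i)}_s$; since $O(\gamma)$ sits well below the cluster separation, the two indices agree, i.e.\ $\pi_{s,u}=\pi_{s,t}\circ\pi_{t,u}$. (This composition order is exactly the bookkeeping of \pref{claim:cochains-in-flag-complex-correspondence}, so $\{\pi_{t,s}\}$ is a near-cocycle on the flag complex of $\FX{d_1}X$, which is what the subsequent steps consume.)

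\emph{Where the difficulty lies.} The hard part is getting items~2 and~3 with failure probability $\exp(-\poly(\varepsilon)d_1/k)$ rather than merely ``a noticeable fraction'': producing an actual cover, rather than just a near-cover, downstream requires the matching to be almost-everywhere consistent. This is why the argument needs the full exponential strength of \pref{thm:sampling-in-HDXs} (an ordinary second-eigenvalue sampler is too lossy for the nested union bounds) and a genuinely careful, mutually consistent choice of the local lists and of the constant $\ell$ across all three levels. I expect essentially all of the real work to be concentrated there.
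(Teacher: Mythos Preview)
Your outline for items~1 and~3 is close to the paper's, and the injection part of item~2 is fine. The real gap is the \emph{surjectivity / uniform list size} step, where you wave at ``a local-to-global step over the underlying graph of $\FX{d_1}X$'' to force $|\bar L(s)|$ to be a single $\ell$. The only relations you have are one-sided: for $s\subset t$ you get an injection $\bar L(t)\hookrightarrow\bar L(s)$, hence $|\bar L(t)|\leq|\bar L(s)|$. For an edge $\{s_1,s_2\}$ of $\FX{d_1}X$ with $t=s_1\dunion s_2$, this yields only $|\bar L(t)|\leq\min(|\bar L(s_1)|,|\bar L(s_2)|)$, which says nothing about $|\bar L(s_1)|$ versus $|\bar L(s_2)|$; spectral expansion does not convert such one-sided inequalities into near-constancy. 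Concretely, nothing in your argument rules out a ``new'' popular function on $s$ that is not close to any restriction from $t$: a $g$ that is $\delta$-popular on $s$ may well be only $(\delta-\varepsilon^{50})$-popular on $t$ and hence absent from $\bar L(t)$, and your single fixed popularity threshold has no mechanism to absorb such boundary losses.

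The paper handles exactly this. Working first at the top level $t\in X(d)$, it runs a multi-scale potential argument (\pref{claim:good-tau-and-delta-for-every-t}): along a sequence $(\tau_m,\zeta_m)$ with $\tau_{m+1}$ slightly below $\tau_m$ and $\zeta_{m+1}=20\zeta_m$, if the current list $L_m$ fails to be $20\zeta_{m-1}$-dense in $\bar L_{\tau_m,\zeta_{m-1}}(u)$ for a nonnegligible set of subfaces $u$, then the complete-complex soundness applied to the ensemble with $\bigcup_{g\in L_m}\supp_{O(\zeta_m)}(g)$ \emph{erased} (this is \pref{claim:density-of-smaller-tau}) produces a genuinely new function to add, which increases the potential $p(m)=\Pr[\bigcup_j A_{9\zeta_{j-1}}(g_j)]$ by $\poly(\varepsilon)$; since $p\leq 1$, this terminates after $\poly(1/\varepsilon)$ steps. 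A separate scale-selection (\pref{claim:good-gamma-for-every-t-and-delta}) gives a $\gamma$-dense, $23\gamma$-separated list. Then a pigeonhole over the finitely many $(\tau,\zeta,\gamma,\ell)$ yields one choice good for a $\poly(\varepsilon)$ fraction of $t\in X(d)$, and one propagates down to $X(d_i)$ using that $d_i/d=\exp(-\poly(\varepsilon)d_1/k)$ (\pref{prop:good-parameters}); the lists $\bar L(s)$ at levels $d_1,d_2,d_3$ are then \emph{defined} as restrictions $\bar L(t)|_s$ from such a good $t$, which is why they all have the same size $\ell$ and why the matchings are bijections. This erase-and-reapply use of the agreement theorem, together with the multi-scale thresholds, is the missing idea in your proposal.

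A smaller point on item~3: from $L_u^i|_t\overset{1-\gamma}{\approx}L_t^{\pi_{t,u}(i)}$ you cannot simply ``restrict to $s$'' and keep the $1-O(\gamma)$ closeness; you must first argue that $s$ samples the disagreement set well. The paper isolates this as \pref{claim:most-sets-sample-well} and \eqref{eq:dist-preserved-in-list-restriction}. This is routine but should be stated.
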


\subsection[A cover of FX]{A cover $\widetilde{\FX{d_1}X}\twoheadrightarrow \FX{d_1}X$}
Using \pref{lem:from-1-percent-to-list-agreement} we construct an almost-cycle \(\psi\in C^1(\FX{d_1}X,Sym(\ell))\) as follows.
For each edge \((s_1,s_2)\) of \(\FX{d_1}X\), denoting \(t=s_1\dunion s_2\), let \[\psi((s_1,s_2))= \pi_{t,s_2}^{-1}\circ  \pi_{t,s_1}.\] %
    \begin{lemma}\label{lem:propreties-of-permutations-on-FX}~
    \begin{equation}\label{eq:almost-cocycle}
        \wt(\zeta(\psi))  = \Prob[(s_1,s_2,s_3)\in \FX{d_1}X(2)]{\psi(s_1,s_2)\circ \psi(s_2,s_3) \neq \psi(s_1,s_3)}= exp(-\poly(\varepsilon)\frac{d_1}{k}).
    \end{equation}
    \end{lemma}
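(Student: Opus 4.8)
The plan is to show that for all but an $\exp(-\poly(\varepsilon)\frac{d_1}{k})$ fraction of triangles $(s_1,s_2,s_3)\in\FX{d_1}X(2)$ one has $\coboundary\psi(s_1,s_2,s_3)=Id$, where $\coboundary$ is as in \pref{eq:def-of-coboundary}; since $\psi$ is asymmetric this is exactly the complement of the event measured by $\wt(\coboundary\psi)$ in \pref{eq:almost-cocycle}. Given a triangle, write $t_{ij}=s_i\dunion s_j\in X(2d_1+1)$ for each edge and $u=s_1\dunion s_2\dunion s_3\in X(3d_1+2)$, and note each of the six chains $s_i\subset t_{ij}\subset u$ lives among the levels $X(d_1),X(2d_1+1),X(3d_1+2)$ handled by \pref{lem:from-1-percent-to-list-agreement}, so the permutations $\pi_{t_{ij},s_i}$, $\pi_{u,t_{ij}}$ and $\pi_{u,s_i}$ are all defined.

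First I would prove a deterministic statement. Call a triangle \emph{good} if the compatibility relation of item~3 of \pref{lem:from-1-percent-to-list-agreement} holds for all six chains $s_i\subset t_{ij}\subset u$, i.e.\ $\pi_{u,s_i}=\pi_{u,t_{ij}}\circ\pi_{t_{ij},s_i}$. For a good triangle, $\coboundary\psi(s_1,s_2,s_3)=Id$: rearranging gives $\pi_{t_{ij},s_i}=\pi_{u,t_{ij}}^{-1}\circ\pi_{u,s_i}$, hence
\[
\psi((s_i,s_j))=\pi_{t_{ij},s_j}^{-1}\circ\pi_{t_{ij},s_i}=\pi_{u,s_j}^{-1}\circ\pi_{u,t_{ij}}\circ\pi_{u,t_{ij}}^{-1}\circ\pi_{u,s_i}=\pi_{u,s_j}^{-1}\circ\pi_{u,s_i},
\]
so that $\coboundary\psi(s_1,s_2,s_3)=\psi((s_3,s_1))\circ\psi((s_2,s_3))\circ\psi((s_1,s_2))=(\pi_{u,s_1}^{-1}\circ\pi_{u,s_3})\circ(\pi_{u,s_3}^{-1}\circ\pi_{u,s_2})\circ(\pi_{u,s_2}^{-1}\circ\pi_{u,s_1})=Id$. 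This is the probabilistic analogue of the commuting-diagram computation behind \pref{claim:cochains-in-flag-complex-correspondence}, with $X(d_1),X(2d_1+1),X(3d_1+2)$ in the roles of vertices, edges and triangles and $u$ the face through which everything is factored.

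It then remains to bound the probability that a uniformly random triangle of $\FX{d_1}X$ fails to be good. The point is that the induced measure on triangles of $\FX{d_1}X$ is sampled by drawing $u\in X(3d_1+2)$ from its induced measure and then choosing a uniformly random ordered partition $u=s_1\dunion s_2\dunion s_3$ into three $(d_1+1)$-blocks; consequently the marginal of the chain $(s_1,s_1\dunion s_2,u)$ is precisely the uniform distribution over chains $s\subset t\subset u$ with $s\in X(d_1)$, $t\in X(2d_1+1)$, $u\in X(3d_1+2)$ — exactly the distribution under which item~3 of \pref{lem:from-1-percent-to-list-agreement} is stated. By the symmetry among $s_1,s_2,s_3$, each of the six chains $s_i\subset t_{ij}\subset u$ has this same marginal, so item~3 makes each of the six compatibility relations fail with probability at most $\exp(-\poly(\varepsilon)\frac{d_1}{k})$. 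A union bound over the six chains shows a random triangle is not good with probability at most $6\exp(-\poly(\varepsilon)\frac{d_1}{k})=\exp(-\poly(\varepsilon)\frac{d_1}{k})$, and combining with the deterministic step yields $\wt(\coboundary\psi)\le\exp(-\poly(\varepsilon)\frac{d_1}{k})$, as claimed.

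The only genuinely delicate point is the first assertion of the last paragraph: one must verify that the combinatorial weights of the faces-complex measure are exactly those that make the marginal of $(s_1,s_1\dunion s_2,u)$ coincide with (or have negligible total-variation distance from) the uniform-chain measure of item~3 — that is, that passing from a random top face of $\FX{d_1}X$ down to a random triangle and then to $(s_1,t_{12},u)$ reproduces the $X$-measure on $X(3d_1+2)$ together with uniform block partitions. Everything else is the short algebraic telescoping above together with a union bound, so I expect this measure-matching bookkeeping to be the main (though routine) obstacle.
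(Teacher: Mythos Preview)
Your proposal is correct and follows essentially the same approach as the paper. The paper packages your telescoping computation into \pref{claim:cochains-in-flag-complex-correspondence} by introducing a cochain $\phi$ on the flag complex $G\FX{d_1}X$ (with $\phi(\{s_1\},\{s_1,s_2\})=\pi_{s_1,t}$, $\phi(\{s_1\},\{s_1,s_2,s_3\})=\pi_{s_1,u}$, $\phi(\{s_1,s_2\},\{s_1,s_2,s_3\})=\pi_{t,u}$), observes that item~3 of \pref{lem:from-1-percent-to-list-agreement} bounds $\wt(\coboundary\phi)$, and then invokes the claim to transfer this to $\wt(\coboundary\psi)$; your direct factoring $\psi((s_i,s_j))=\pi_{u,s_j}^{-1}\circ\pi_{u,s_i}$ for good triangles is exactly the content of that claim unrolled, and both arguments rely on the same union bound over the six chains and the same (glossed-over, in both cases) measure-matching you flag.
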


    \begin{proof}[Proof of \pref{lem:propreties-of-permutations-on-FX}]
        Recall \pref{def:face-complex} of the faces complex $\FX{d_1}X$ and recall \pref{def:flag-complex} of the flags complex $GX$. We wish to use \pref{claim:cochains-in-flag-complex-correspondence} for the flag complex of the faces complex, namely for $G(\FX{d_1}X) = G\FX{d_1}X$.  We define \(\phi:G\FX{d_1}X(1) \to Sym(\ell)\) as follows. Let \(\set{\set{s_1},\set{s_1,s_2},\set{s_1,s_2,s_3}}\) be a triangle in the flag complex. Let \(t = s_1 \dunion s_2\) and let \(u = s_1 \dunion s_2 \dunion s_3\). Then
        \[\phi(\set{s_1}, \set{s_1,s_2})=\pi_{s_1,t}, \quad \phi(\set{s_1}, \set{s_1,s_2,s_3})=\pi_{s_1,u}, \quad \phi(\set{s_1,s_2}, \set{s_1,s_2,s_3})=\pi_{t,u}.\]
        Where the permutations are the ones given in \pref{lem:from-1-percent-to-list-agreement}. By the third item of \pref{lem:from-1-percent-to-list-agreement}, it holds that with probability \(\wt(\coboundary \phi) = \exp(-\poly(\varepsilon)\frac{d_1}{k})\) and by \pref{claim:cochains-in-flag-complex-correspondence} the same holds for \(\coboundary(\psi)\).
    \end{proof}
Using \pref{lem:propreties-of-permutations-on-FX} and \(\beta=\exp(-\alpha\frac{d_1}{k})\)-cosystolic expansion of \(\FX{d_1}X\) we have that
\begin{corollary} \label{cor:good-cocycle}
    There exists a cocycle \(\psi' \in Z^1(\FX{d_1}X,Sym(\ell))\) such that $\dist(\psi',\psi) = \exp(-\poly(\varepsilon)\frac{d_1}{k} + \alpha \frac{d_1}{k})$, namely,
    \[\Prob[s_1,s_2]{\psi'(s_1,s_2) = \pi_{s_2,t}^{-1}\circ  \pi_{t,s_1}} \geq 1-\exp(-\poly(\varepsilon)d_1/k). \qed\]
\end{corollary}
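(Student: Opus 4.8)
The plan is to obtain $\psi'$ directly from the cosystolic-expansion guarantee provided by suitability, applied to the near-cocycle $\psi$ built above. Concretely, I would first record the two inputs: (i) by \pref{lem:propreties-of-permutations-on-FX}, the cochain $\psi\in C^1(\FX{d_1}X,Sym(\ell))$ defined by $\psi((s_1,s_2))=\pi_{t,s_2}^{-1}\circ\pi_{t,s_1}$ (with $t=s_1\dunion s_2$) has $\wt(\coboundary\psi)=\exp(-\poly(\varepsilon)\frac{d_1}{k})$; and (ii) since $X$ is $(d,k,\alpha)$-suitable, \pref{item:cob-exp} of \pref{def:suitable-complexes} together with \pref{def:swapce} says $\FX{d_1}X$ is a $\beta$-cosystolic expander with $\beta=\exp(-\alpha\frac{d_1}{k})$. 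Before invoking \pref{def:def-of-cosyst-exp} I would note the trivial side conditions it needs: $\FX{d_1}X$ has dimension at least $2$ (suitability forces $d$ much larger than $3d_1$, so $3(d_1+1)\le d+1$), and cosystolic expansion is quantified over all coefficient groups, in particular $Sym(\ell)$ for the value $\ell=\poly(1/\varepsilon)$ coming from \pref{lem:from-1-percent-to-list-agreement}.

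The main step is then a one-liner: \pref{def:def-of-cosyst-exp} hands us a genuine cocycle $\psi'\in Z^1(\FX{d_1}X,Sym(\ell))$ with $\beta\cdot\dist(\psi,\psi')\le\wt(\coboundary\psi)$, so
\[\dist(\psi,\psi')\le \frac{\wt(\coboundary\psi)}{\beta}=\exp\!\left(-\poly(\varepsilon)\tfrac{d_1}{k}+\alpha\tfrac{d_1}{k}\right).\]
Unwinding $\dist$ via its definition \eqref{eq:def-of-dist} turns this into the displayed probability bound over edges $(s_1,s_2)$ of $\FX{d_1}X$, which is the content of the corollary.

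The only point needing care — bookkeeping, not a genuine obstacle — is the interaction of the two constants in the exponent: I must make sure the $\alpha\frac{d_1}{k}$ term coming from the expansion rate is dominated by the $\poly(\varepsilon)\frac{d_1}{k}$ term from \pref{lem:propreties-of-permutations-on-FX}, so that the two exponentials do not cancel and the bound collapses to $\exp(-\poly(\varepsilon)\frac{d_1}{k})$ for a (possibly smaller) polynomial in $\varepsilon$. This is exactly why \pref{thm:main} fixes $\varepsilon_0$ first and only afterwards sets $\alpha=\poly(\varepsilon)$; with that quantifier order one simply chooses the exponent of $\alpha$ larger than the exponent appearing in the weight bound. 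All the real work — constructing $\psi$ in \pref{lem:from-1-percent-to-list-agreement} and \pref{lem:propreties-of-permutations-on-FX}, and (in the companion paper \cite{DiksteinD2023b}) establishing swap-cosystolic expansion — is already in hand, so nothing further is required.
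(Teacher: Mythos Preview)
Your proposal is correct and follows exactly the paper's approach: apply the $\beta=\exp(-\alpha\frac{d_1}{k})$ swap-cosystolic expansion of $\FX{d_1}X$ (from suitability) to the near-cocycle $\psi$ of \pref{lem:propreties-of-permutations-on-FX}, and then note that the choice $\alpha=\poly(\varepsilon)$ ensures the $\alpha\frac{d_1}{k}$ term is absorbed into the $\poly(\varepsilon)\frac{d_1}{k}$ exponent. The paper's ``proof'' is literally the one-line invocation you describe, together with the remark that this is where $\beta$ must be large relative to $\exp(-\poly(\varepsilon)\frac{d_1}{k})$.
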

Here is where we require that \(\beta\) is large enough relative \(\exp(-\poly(\varepsilon)\frac{d_1}{k})\). By \pref{def:induced-cover}, \(\psi'\) gives rise to an \(\ell\)-cover \(\nu_{\psi'}:\widetilde{\FX{d_1}X} \to \FX{d_1}X\).
\subsection[Construction of a cover Y to X]{Construction of a cover \(Y \twoheadrightarrow X\)} Recall that our goal is to construct a cover \(Y \twoheadrightarrow X\) (together with a global function on the vertices of \(Y\) such that \eqref{eq:dist-cover-soundness} holds). So far we managed to construct a cover \(\nu = \nu_{\psi'}:\widetilde{\FX{d_1}X} \to \FX{d_1}X\). It seems natural to expect that every such cover \(\widetilde{\FX{d_1}X}\) is (isomorphic to) a face complex \(FY\) of some complex \(Y\). When \(X\) is a well-connected clique complex, this is indeed the case.

\begin{lemma} \label{lem:Y-is-an-honest-to-god-cover-of-X}
    Let \(\nu:\widetilde{\FX{d_1}X} \to \FX{d_1}X\) be an \(\ell\)-covering map. Then there exists an \(\ell\)-covering map \(\rho:Y \to X\) such that \(FY\) and \(\widetilde{\FX{d_1}X}\) are isomorphic.
\end{lemma}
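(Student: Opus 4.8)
The plan is to reconstruct the cover $\rho\colon Y\to X$ from $\nu$ by first manufacturing a non‑abelian $1$‑cocycle $\phi\in Z^1(X,Sym(\ell))$ out of $\nu$, then setting $Y:=Y_\phi$ via \pref{def:induced-cover}, and finally verifying that the faces‑complex construction applied to $Y$ reproduces $\widetilde{\FX{d_1}X}$. For \emph{Step 1} (the fibres over vertices), fix $v\in X(0)$ and identify $\FX{d_1}(X_v)$ with the subcomplex of $\FX{d_1}X$ induced on $A_v:=\sett{s'\in X(d_1)}{s'\cup\set v\in X(d_1+1)}$; this is legitimate because $X$, hence $\FX{d_1}X$, is a clique complex, so $s'_0\dunion\dots\dunion s'_j\dunion\set v\in X$ whenever all $s'_i\dunion s'_{i'}\dunion\set v\in X$. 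By \pref{claim:restriction-of-cover}, $\nu$ restricts to an $\ell$‑cover of $\FX{d_1}(X_v)$, and $d_1$‑well‑connectedness (\pref{def:well-connected-complex}) makes this base \emph{simply connected}, so $\nu^{-1}(A_v)$ splits into $\ell$ connected components, each mapped isomorphically onto $\FX{d_1}(X_v)$. Declare these $\ell$ components to be the fibre of $\rho$ over $v$; after an arbitrary labelling, $Y(0)=X(0)\times[\ell]$ and $\rho(v,i)=v$.

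For \emph{Step 2} (the cocycle), given $uv\in\dir{X}(1)$ choose $s'\in\FX{d_1}(X_{uv})(0)$, i.e.\ a $d_1$‑face with $s'\cup\set{u,v}\in X$ and $s'$ disjoint from $\set{u,v}$ — such $s'$ exists since $X$ is pure $d$‑dimensional with $d\gg d_1$. The $\ell$ lifts of $s'$ in $\widetilde{\FX{d_1}X}$ meet each component of $\nu^{-1}(A_u)$ and of $\nu^{-1}(A_v)$ exactly once, and pairing up components that share a common lift of $s'$ defines $\phi(uv)\in Sym(\ell)$. I would then check three things: (i) $\phi(uv)$ is independent of $s'$, because $\FX{d_1}(X_{uv})$ is connected ($\set{u,v}\in X^{\le d_1}$, well‑connectedness) and, by path‑lifting, moving $s'$ along an edge of $\FX{d_1}(X_{uv})$ does not change the pairing; (ii) $\phi$ is asymmetric, by symmetry of the construction; (iii) on a triangle $uvw\in X(2)$, a common choice $s'\in\FX{d_1}(X_{uvw})(0)$ — legitimate because $\set{u,v,w}\in X^{\le d_1}$ makes $\FX{d_1}(X_{uvw})$ connected and nonempty — shows a single lift of $s'$ pins down mutually compatible components over $u,v,w$, giving $\phi(vw)\circ\phi(uv)=\phi(uw)$. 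Hence $\phi\in Z^1(X,Sym(\ell))$.

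For \emph{Step 3} (matching the faces complexes), set $Y:=Y_\phi$, $\rho:=\rho_\phi$; this is an $\ell$‑cover of $X$ by \pref{claim:reconstruction of a cover}, and $Y$ is a clique complex since $X$ is. I would first record the standard fact that a simplicial covering map restricts to an isomorphism on the link of \emph{every} face (induct on the face size from the vertex case in \pref{def:cover}); combined with $(\FX{d_1}Y)_{\tilde s}=\FX{d_1}(Y_{\tilde s})$ for $\tilde s\in Y(d_1)$ and \pref{claim:inverse-image-of-l-cover}, this shows $\FX{d_1}\rho\colon\FX{d_1}Y\to\FX{d_1}X$, $\tilde s\mapsto\rho(\tilde s)$, is an $\ell$‑covering map. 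Since $\FX{d_1}X$, $\FX{d_1}Y$ and $\widetilde{\FX{d_1}X}$ are all clique complexes (\pref{claim:cover-of-clique-complex} and the fact that $\FX{d_1}(-)$ preserves clique complexes), each of the two covers of $\FX{d_1}X$ is determined by its $1$‑skeleton, hence by its induced $1$‑cocycle in $Z^1(\FX{d_1}X,Sym(\ell))$ (\pref{claim:permutations-from-covers}), so it suffices to produce a fibrewise bijection $\FX{d_1}Y(0)\to\widetilde{\FX{d_1}X}(0)$ under which these two cocycles agree. The bijection I would use sends a $d_1$‑face $\tilde s=\set{(v,i_v)}_{v\in s}$ of $Y$ to the unique lift of $s$ in $\widetilde{\FX{d_1}X}$ whose component over a fixed $v_0\in s$ is the one labelled $i_{v_0}$; unwinding the definitions of $\phi$ and of $Y_\phi$ then shows the cocycle of $\FX{d_1}\rho$ on an edge $\set{s_1,s_2}$ of $\FX{d_1}X$ becomes, under this identification, exactly the permutation that $\nu$ records over $\set{s_1,s_2}$. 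This gives $\FX{d_1}Y\cong\widetilde{\FX{d_1}X}$, proving the lemma.

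The main obstacle is the final identification in Step 3: $Y_\phi$ is assembled from a cocycle on $X$, whereas $\widetilde{\FX{d_1}X}$ is assembled from a cocycle on the entirely different vertex set $X(d_1)$, and one must show the faces‑complex operation turns the former back into the latter. Concretely this amounts to checking that for lifts $\tilde s_1,\tilde s_2\in Y(d_1)$ of disjoint $d_1$‑faces $s_1,s_2$ the union $\tilde s_1\dunion\tilde s_2$ is a face of $Y$ precisely when the corresponding vertices of $\widetilde{\FX{d_1}X}$ are adjacent — i.e.\ that the family of permutations $\phi$ on all edges inside $s_1\cup s_2$ is exactly encoded by the single permutation $\nu$ attaches to the edge $\set{s_1,s_2}$ — after which higher faces follow from the clique‑complex property. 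Throughout, $d_1$‑well‑connectedness is invoked exactly to make the relevant vertex‑, edge‑, and triangle‑links of $X$ (simply) connected, which is what renders the auxiliary choices of $s'$ harmless and lets the local pairings glue into a global cocycle and a global isomorphism.
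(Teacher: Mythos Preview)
Your overall strategy is exactly the paper's: split $\nu^{-1}(\FX{d_1}X_v)$ into $\ell$ components using simple connectedness of $\FX{d_1}X_v$, extract a cocycle $\phi\in Z^1(X,Sym(\ell))$ from how these components nest along edges and triangles of $X$, set $Y=Y_\phi$, and then identify $\FX{d_1}Y$ with $\widetilde{\FX{d_1}X}$. Your Steps~1--2 are correct; the paper packages the triangle check via the flag-complex corollary (\pref{cor:cocycles-in-flag-complex}) and the containment permutations $\pi_{r,s}$ of \pref{claim:Z_v-decomposes}, while you argue it directly by choosing a common witness $s'\in\FX{d_1}(X_{uvw})$---these are equivalent.

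The one genuine gap is your fibrewise bijection in Step~3. You send $\tilde s=\{(v,i_v)\}_{v\in s}$ to ``the unique lift of $s$ whose component over a fixed $v_0\in s$ is the one labelled $i_{v_0}$''---but $v_0\in s$ forces $s\notin A_{v_0}=\FX{d_1}(X_{v_0})(0)$, so no lift of $s$ lies in \emph{any} component of $\nu^{-1}(A_{v_0})$, and the phrase has no direct meaning. The paper's fix (\pref{claim:unique-link-of-s}) is to pass to the \emph{link} of $s$: since $\FX{d_1}X_s\subseteq\FX{d_1}X_{v}$ for each $v\in s$, each component $Z_s^j$ of $\nu^{-1}(\FX{d_1}X_s)$ does sit inside a unique $Z_{v}^{i}$, and one shows there is a unique $j$ with $Z_s^j\subseteq\bigcap_{v\in s}Z_v^{i_v}$. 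Then $\iota(\tilde s)$ is the unique lift of $s$ \emph{adjacent} to this $Z_s^j$. With $\iota$ defined this way, your reduction (both covers are clique complexes over $\FX{d_1}X$, so matching edges suffices) goes through; the paper carries out that edge-check directly in \pref{lem:iota-is-an-isomorphism} by exhibiting, for an edge $\{\tilde s_1,\tilde s_2\}$ of $\FX{d_1}Y$, a third vertex $(s_3,k)\in Z_{s_1}^{j_1}\cap Z_{s_2}^{j_2}$ witnessing adjacency of $\iota(\tilde s_1),\iota(\tilde s_2)$ in $\widetilde{\FX{d_1}X}$. Your ``main obstacle'' paragraph correctly locates the difficulty; what is missing is precisely this one extra passage through $Z_s^j$.
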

\begin{center}
\begin{tikzcd}[arrows=rightarrow]
Y\ar[d,"\rho"]\ar[r,"faces"]  & \quad FY\stackrel{\iota}\cong \widetilde{\FX{d_1}X}\ar[d,"\nu"]  \\
		 X\ar[r,"faces"] & \quad \FX{d_1}X 
\end{tikzcd}
\end{center}
We denote by \(\iota: FY \overset{\sim}{\to} \widetilde{\FX{d_1}X}\) the isomorphism. We prove this lemma in \pref{sec:construction-of-cover}.
\subsection[The global function G]{The global function \(G\)}\label{sec:globalfunc}
Given \pref{lem:Y-is-an-honest-to-god-cover-of-X}, we can define an ensemble of functions for \(FY\) to be \(h_{\tilde{s}}:\tilde{s} \to \Sigma\), \(h_{\tilde{s}}=L_s^j\circ \rho\) such that \(\iota(\tilde{s}) =(s,j)\) is the identification promised in \pref{lem:Y-is-an-honest-to-god-cover-of-X}. We define a majority (plurality) function \(G:Y(0) \to \Sigma\) by \(G(v) = plurality_{s \ni v} h_s(v)\). We prove that
\begin{claim} \torestate{\label{claim:agreement-with-majority}
    There exists a universal constant \(c>0\) such that
    \(\Prob[\tilde{s}\in Y(d_1)]{h_{\tilde{s}} \overset{1-O(\gamma)}{\approx} G|_{\tilde{s}}} \geq 1-\exp(-\poly(\varepsilon)\left ( \frac{d_1}{k} \right )^c ) - \frac{1}{d_1^c}.\)}
\end{claim}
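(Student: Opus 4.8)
The plan is to prove \pref{claim:agreement-with-majority} by a plurality–decoding argument on the cover $Y$, of exactly the kind used to prove $99\%$-regime agreement theorems on high dimensional expanders; the new content is that all the inputs must be transported from $X$ to $Y$ through the isomorphism $\iota:FY\to\widetilde{\FX{d_1}X}$. The first step is to turn the list data we already have on $X$ into a genuine $99.9\%$-agreement statement for the ensemble $\set{h_{\tilde s}}_{\tilde s\in Y(d_1)}$ on $Y$ itself. Consider the two-query test on $Y(d_1)$ that picks $\tilde t\in Y(2d_1+1)$ and then two faces $\tilde s_1,\tilde s_2\subset \tilde t$ in $Y(d_1)$ conditioned on $\abs{\tilde s_1\cap \tilde s_2}=\lceil d_1/2\rceil$. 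Writing $\iota(\tilde s_i)=(s_i,j_i)$, $\iota(\tilde t)=(t,k_0)$ with $s_i=\rho(\tilde s_i)$, $t=\rho(\tilde t)$, the defining property of the induced cover (\pref{def:induced-cover}) together with \pref{cor:good-cocycle} give that for all but an $\exp(-\poly(\varepsilon)d_1/k)$-fraction of such draws $\pi_{t,s_1}(j_1)=\pi_{t,s_2}(j_2)=:m$; combining this with \pref{lem:from-1-percent-to-list-agreement}(2) forces $h_{\tilde s_1}=L^{j_1}_{s_1}\circ\rho$ and $h_{\tilde s_2}=L^{j_2}_{s_2}\circ\rho$ to both be $(1-\gamma)$-close to the pullback of the single function $L^{m}_t$ restricted to $\tilde s_i$. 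Since $\tilde s_1\cap\tilde s_2$ is a constant fraction of each $\tilde s_i$, a Markov/union-bound step upgrades this to $h_{\tilde s_1}\overset{1-O(\gamma)}{\approx}h_{\tilde s_2}$ on $\tilde s_1\cap\tilde s_2$ for all but a $\exp(-\poly(\varepsilon)d_1/k)$-fraction of test draws.

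Next I would run the plurality decoding on $Y$. Since $Y$ is an $\ell$-cover of the $\tfrac1{d^2}$-two-sided local spectral expander $X$, \pref{claim:expansion-of-cover} shows that each connected component of $Y$ is a $\tfrac1{d^2-1}$-two-sided HDX (and a clique complex, by \pref{claim:cover-of-clique-complex}), so one argues component by component and applies the sampling estimates of \pref{thm:sampling-in-HDXs} and \pref{cor:good-sampler-from-expansion} to the containment and up-down walks of $Y$. Transporting the agreement statement of the previous paragraph to the link of a vertex $v\in Y(0)$ via sampling, all but a $\exp(-\poly(\varepsilon)(d_1/k)^c)+d_1^{-c}$ fraction of vertices $v$ are \emph{stable}, meaning that for two $d_1$-faces $\tilde s_1,\tilde s_2$ of $Y$ drawn to both contain $v$ one has $h_{\tilde s_1}(v)=h_{\tilde s_2}(v)$ with probability $\geq 1-\delta'$ for $\delta'=\exp(-\poly(\varepsilon)d_1/k)+O(\gamma)$; the unavoidable polynomial term $d_1^{-c}$ here is the loss of \pref{cor:good-sampler-from-expansion} at the vertex-to-$d_1$-face scale, where the walk is only an $O(1/d_1)$-sampler. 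For a stable $v$ the usual pigeonhole bound gives $\Prob[\tilde s\ni v]{h_{\tilde s}(v)\ne G(v)}\leq O(\delta')$ by definition of $G$ as the plurality of $\set{h_{\tilde s}(v): \tilde s\ni v}$. A final averaging step over $v\in\tilde s$, using that the $(0,d_1)$-containment graph of $Y$ is a sampler, then shows that for all but a $\exp(-\poly(\varepsilon)(d_1/k)^c)+d_1^{-c}$ fraction of $\tilde s\in Y(d_1)$ at most an $O(\gamma)$ fraction of $v\in\tilde s$ have $h_{\tilde s}(v)\ne G(v)$, i.e. $h_{\tilde s}\overset{1-O(\gamma)}{\approx}G|_{\tilde s}$. (Alternatively, one may simply invoke the $99\%$-regime agreement theorem of \cite{DinurK2017,DiksteinD2019} on each component of $Y$; inspecting its proof, the decoded global function is, up to a negligible perturbation, precisely the plurality $G$.)

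The main obstacle will be the error bookkeeping: choosing the intermediate face dimensions between $1$, $d_1$ and $2d_1+1$ so that every walk invoked is simultaneously a good enough sampler for the exponential bound of \pref{thm:sampling-in-HDXs} to apply wherever possible, while accepting the weaker polynomial bound $d_1^{-c}$ only at the vertex level, and then verifying that all these error terms collapse to the stated $\exp(-\poly(\varepsilon)(d_1/k)^c)+d_1^{-c}$. The second delicate point is that the local functions $L^i_s$ are only $\gamma$-approximate equal to one another, so one must track that this internal slack propagates to only an $O(\gamma)$ fraction of disagreeing coordinates in the end; this works precisely because every overlap used in the argument (the $\tilde s_1\cap\tilde s_2$ above, the vertex $v$ inside $\tilde s$, the link of $v$) is a constant fraction of the faces involved, so the bad coordinate counts add up linearly rather than amplifying. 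A minor additional point is to make sure the weighting of faces $\tilde s\ni v$ defining the plurality $G$ matches the weighting under which \pref{cor:good-sampler-from-expansion} is stated, or else to pass between the two at the cost of another negligible error.
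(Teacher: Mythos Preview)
Your high–level plan coincides with the paper's: show that the ensemble $\{h_{\tilde s}\}_{\tilde s\in Y(d_1)}$ passes a $99\%$-type two-query agreement test on $Y$, then invoke a black-box $99\%$ agreement theorem (the paper uses \cite[Claim 8.5]{DiksteinH2023}, which plays exactly the role of your ``plurality decoding'' step). The second half of your proposal is therefore fine.

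The gap is in your first step. You pick $\tilde s_1,\tilde s_2\subset\tilde t$ with $\tilde s_1\cap\tilde s_2\ne\emptyset$ and then assert that ``the defining property of the induced cover together with \pref{cor:good-cocycle} give $\pi_{t,s_1}(j_1)=\pi_{t,s_2}(j_2)$''. But \pref{cor:good-cocycle} concerns the cocycle $\psi'\in Z^1(\FX{d_1}X,Sym(\ell))$, and edges of $\FX{d_1}X$ are pairs of \emph{disjoint} $d_1$-faces. Your $s_1,s_2$ intersect, so $\{s_1,s_2\}$ is not an edge of $\FX{d_1}X$, $\psi'$ carries no information about this pair, and there is nothing linking the cover indices $j_1,j_2$ (which are defined purely via $\iota$ and the cocycle) to the list permutations $\pi_{t,s_i}$. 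Relatedly, your notation $\iota(\tilde t)=(t,k_0)$ is undefined: $\iota$ is only a map $FY(0)=Y(d_1)\to\widetilde{\FX{d_1}X}(0)$, and no canonical index on $Y(2d_1+1)$ has been set up.

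The paper resolves exactly this tension by introducing an auxiliary face: one samples $(\tilde s_1,\tilde s_2)$ via the $(d_1,\tfrac32 d_1)$ non-lazy up-down walk and then a third face $\tilde s_3\in Y(d_1)$ \emph{disjoint} from $\tilde s_1\cup\tilde s_2$. Now $(\tilde s_i,\tilde s_3)$ are genuine edges of $FY$, so $\iota$ sends them to edges $((s_i,j_i),(s_3,j_3))$ of $\widetilde{\FX{d_1}X}$; applying \pref{cor:good-cocycle} on these edges gives $j_3=\pi_{t_i,s_3}^{-1}\circ\pi_{t_i,s_i}(j_i)$ with $t_i=s_i\dunion s_3$, whence $\pi_{t_1,s_3}(j_3)=\pi_{t_1,s_1}(j_1)=:j_1'$ and similarly for $i=2$. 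This lets one compare $h_{\tilde s_1}$ and $h_{\tilde s_2}$ through the common function $L_{s_3}^{j_3}$ (both $L_{t_i}^{j_i'}|_{s_3}$ are $\eta$-close to it), and a Chernoff bound then transports agreement on $s_3$ to agreement on $r=s_1\cap s_2$. Without a detour through a disjoint face you have no mechanism to tie the cover labelling to the list labelling, and your claimed identity $\pi_{t,s_1}(j_1)=\pi_{t,s_2}(j_2)$ is unsupported.
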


This claim is proven in \pref{sec:global-func}. Combining all the pieces we can finally prove the theorem. Consider the probability \(\Prob[\tilde{r} \in Y(k)]{f_{r}\circ \rho \overset{1-O(\eta)}{\napprox} G|_{\tilde{r}}}\). Sampling \(\tilde{r} \in Y(k)\) is the marginal of sampling a pair \((\tilde{r} \subseteq \tilde{s})\) in \(Y\) such that \(\tilde{s} \in Y(d_1)\). Denote by \(\iota(\tilde{s})=(s,j)\). If \(f_r \overset{1-O(\gamma)}{\approx} L_s^j|_{r}\), and \(h_{\tilde{s}}|_{\tilde{r}} \overset{1-O(\gamma)}{\approx} G|_{\tilde{r}}\) then
\[f_r \circ \rho \overset{1-O(\gamma)}{\approx} L_s^j|_{r} \circ \rho = h_{\tilde{s}}|_{\tilde{r}} \overset{1-O(\gamma)}{\approx} G|_{\tilde{r}}.\]

Thus we define the ``bad events''
\begin{enumerate}
    \item \(E_1\) the event that \(f_r \overset{1-\gamma}{\napprox} L_s^j|_{r}\).
    \item \(E_2\) the event that \(h_{\tilde{s}}|_{\tilde{r}} \overset{1-O(\gamma)}{\napprox} G|_{\tilde{r}}\).
\end{enumerate}
and we have that 
\[\Prob[\tilde{r} \in Y(k)]{f_{r}\circ \rho \overset{1-O(\gamma)}{\napprox} G|_{\tilde{r}}} \leq \prob{E_1} + \prob{E_2}.\]

By \pref{lem:from-1-percent-to-list-agreement} \(\prob{E_1} \leq 1-\poly(\varepsilon)\). Thus to show the theorem it is enough to bound \(\prob{E_2} \leq \exp(-\poly(\gamma)k) + \exp(-\poly(\varepsilon)\left ( \frac{d_1}{k} \right )^c )+\frac{1}{d_1^c}\) for the \(c>0\) in \pref{claim:agreement-with-majority} (\(k\) and \(d_1\) are chosen large enough such that this is much smaller than the upper bound on \(\prob{E_1}\)). Indeed, 
\[\prob{E_2} \leq \Prob[\tilde{s},\tilde{r}]{h_{\tilde{s}} \overset{1-O(\gamma)}{\napprox} G|_{\tilde{s}} } + \Prob[\tilde{s},\tilde{r}]{h_{\tilde{s}} \overset{1-O(\gamma)}{\approx} G|_{\tilde{s}} \ve h_{\tilde{s}}|_{\tilde{r}} \overset{1-O(\gamma)}{\approx} G|_{\tilde{r}}}.\]
The first item in this sum is bounded by \(\exp(-\poly(\varepsilon)\left ( \frac{d_1}{k} \right )^c )\) by \pref{claim:agreement-with-majority}. The second is bounded by \(\exp(-\poly(\gamma)k)\) by a standard Chernoff argument.

Thus 
\[\Prob[\tilde{r} \in Y(k)]{f_{r}\circ \rho \overset{1-O(\eta)}{\approx} G|_{\tilde{r}}} \geq \poly(\varepsilon) - \exp(-\poly(\varepsilon)\left ( \frac{d_1}{k} \right )^c )-\frac{1}{d_1^c} - \exp(-\poly(\gamma)k) = \poly(\varepsilon)\]
and \pref{thm:main} follows.
    
\subsection{Deriving the corollaries}

\begin{proof}[Proof of \pref{cor:spherical-building-agreement-expansion}]
We begin with the first item, i.e. showing that there exists \emph{one} global function \(G:X(0) \to \Sigma\) for \(\varepsilon^{O(1)}\) of the \(f_r\)'s, \(G|_r \approx f_r\).

Fix \(\varepsilon_0\) and let \(d,k,\alpha\) be as in \pref{thm:main} (for agreement distributions for either \(2\) or \(3\) sets). By \pref{thm:spherical-building-and-LSV-complexes-are-suitable} there is a \(q_0\) such that for \(q \geq q_0\), \(X\) is suitable. As seen in \pref{ex:sound-distributions}, both the \(V\)-test and the \(Z\)-test are \((\varepsilon_1,\eta)\)-sound for \(\varepsilon_1, \eta = \poly(1/k)\). We take \(k\) to be large enough so that \(\varepsilon_0 \geq \varepsilon_1\) and so that \(\gamma = \exp(1/\poly(\varepsilon_0))\eta < 1\). By \pref{thm:main}, there exists a cover \(\rho:Y \to X\) and a global function such that \(G':Y(0) \to \Sigma\) agrees with the ensemble as in \pref{def:distribution-cover-soundness}.

It is well known that the spherical building is simply connected (see e.g. \cite{DiksteinD2023} for a proof of this fact). By simple connectivity, every connected component in the cover is isomorphic to \(X\), so a cover is in fact composed of many disconnected components, each isomorphic to \(X\). Hence we can think of the global function \(G':Y(0) \to \Sigma\) as a list of functions \(G_1,G_2,\dots,G_\ell:X(0) \to \Sigma\), each is a restriction of \(G'\) to one of the components isomorphic to \(X\). From this we get that there is at least function \(G=G_i\) such that \eqref{eq:dist-cover-soundness} holds conditioned on sampling \(\tilde{r}\) in \(G_i\)'s component.

\medskip

Turning to \eqref{eq:list-explains-most-of-the-agreement}, by \cite[Theorem 5.1]{DinurG2008}, if
\[\agr_{\mathcal{D}}(\mathcal{F}) = \varepsilon \qquad \Rightarrow \qquad \exists G:X(0) \to \Sigma \; \Prob[r \in X(k)]{f_r \overset{1-\gamma}{\approx} G|_r} \geq \poly(\varepsilon)\]
holds, then there is an (inefficient) algorithm that outputs a list of functions \(G_1,G_2,\dots,G_\ell\) for \(\ell = \poly(1/\varepsilon)\) such that \eqref{eq:list-explains-most-of-the-agreement} holds. 
\end{proof}

\begin{proof}[Proof sketch of \pref{cor:existance-of-sparse-agreement-tests}]
    By \pref{cor:spherical-building-agreement-expansion} for every \(\varepsilon_0,\eta > 0\) and \(k\) sufficiently large, there is an integer \(d_0\) and a prime power \(q_0\) such that the following holds. Let \(X_q\) be the \(d_0\)-dimensional spherical building over any \(\F_q\) for \(q\geq q_0\). Then the \(V\) or \(Z\) tests on \(X\) are \((\eta,\varepsilon_0,O(1))\)-sound. 
    
    Let us show that \(|X_q(k)| = O_{d_0,k}|X_q(0)|\). The number of flags in a spherical building is $|X(d_0)|\approx q^{d^2/2}$ and for $k<d$ clearly $|X(k)|\leq 2^d\cdot|X(d)|$. However, the number of elements in \(X(0)\), which are subspaces in $|X(0)| \geq \binom{d_0}{d_0/2}_q \geq q^{d^2/4}$. The corollary follows.
\end{proof}

\begin{proof}[Proof of \pref{cor:LSV-building-agreement-expansion}]
Fix \(\varepsilon_0\) and let \(d,k,\alpha\) be as in \pref{thm:main} (for agreement distributions for either \(2\) or \(3\) sets). By \pref{thm:spherical-building-and-LSV-complexes-are-suitable} there is a \(q_0\) such that for \(q \geq q_0\), \(X\) is suitable. As seen in \pref{ex:sound-distributions}, both the \(V\)-test and the \(Z\)-test are \((\varepsilon_1,\eta,O(1))\)-sound for \(\varepsilon_1, \eta = \poly(1/k)\). We take \(k\) to be large enough so that \(\varepsilon_0 \geq \varepsilon_1\) and so that \(\gamma = \exp(1/\poly(\varepsilon_0))\eta < 1\). The corollary follows directly from \pref{thm:main}.
\end{proof}

\section{From one percent agreement to list agreement} \label{sec:proof-of-lem-from-1-percent}
In this section we prove \pref{lem:from-1-percent-to-list-agreement}. 
\restatelemma{lem:from-1-percent-to-list-agreement}
Fix \(X\) to be a suitable simplicial complex and fix \(\mathcal{F} = \sett{f_r:r\to \Sigma}{r \in X(k)}\) to be an ensemble of functions on \(X(k)\). The distribution \(\mathcal{D}\) is \((\varepsilon_0,\eta)\)-sound. We also assume that \(\agr_{\mathcal{D}}(\mathcal{F})^{M}=\varepsilon^{M} \geq \varepsilon_0\) for some constant integer \(M > 0\) (which we do not explicitly calculate).

For a function \(g:X(0) \to \Sigma\) We denote by \(\supp_{\zeta}(g) = \sett{r \in X(k)}{f_r \overset{1-\zeta}{\approx} g}\) and call this set the \(\zeta\)-support of \(g\). We also denote by \(A_\zeta(g) = \sett{\set{r_1,r_2}}{r_1,r_2 \in \supp_\zeta (g) \ve f_{r_1}=f_{r_2}}\) the ``agreeing edge set of \(g\)'' and by \(\alpha_{\zeta}(g) = \Prob[\set{r_1,r_2} \sim D]{\set{r_1,r_2} \in A_{\zeta}(g)}\). Here \(D\) is the agreement distribution. When the ensemble of functions is not clear from context we denote \(\supp_{\zeta}^{\mathcal{F}}(g), A_{\zeta}^{\mathcal{F}}(g)\) etc.

For a face \(t \in X\), we denote by \(\mathcal{F}_t = \sett{f_r \in \mathcal{F}}{r \subseteq t}\). The set of functions whose agreeing set of edges is large is denoted by \(\overline{L}_{\tau,\zeta}(t) = \sett{g:t \to \Sigma}{\alpha_{\zeta}^{\mathcal{F}_t} (g) \geq \tau}\). In a lot of the following section we will consider partial functions \(g:t \to \Sigma\) (instead of all \(X(0)\)). In this case when we write \(\supp_\zeta(g), \alpha_\zeta(g)\) or \(A_\zeta (g)\) we will mean the support or agreement of \(g\) on its domain with respect to \(\mathcal{F}_t\).

\subsection{Overview - constructing the permutations in \pref{lem:from-1-percent-to-list-agreement}}
The idea is to use the the agreement theorem at hand to create lists of functions \(\overline{L}(s)=\set{L_s^1,L_s^2,\dots,L_s^\ell} \subseteq \overline{L}_{\tau,\zeta}(s)\) for every face \(s\). 
Then one needs to show that there is a matching between most lists \(\overline{L}(s), \overline{L}(t)\) such that \(s \subseteq t\). The matching we construct is so that \(\pi_{s,t}(i)=j\) if and only if \(L_t^i|_s \overset{1-\gamma}{\approx} L_t^j\) for some small \(\gamma>0\).

There are three degrees of freedom here, namely the choice of \(\tau,\zeta\) and \(\gamma\), and even after choosing them, it is not clear a priori which functions should appear in the list. We need the size of the lists to be \(\poly(\frac{1}{\varepsilon})\), so we cannot take all functions in \(\overline{L}_{\tau,\zeta}^0 (s)\). This gives rise to many problematic issues. The first two things we need to consider are:
\begin{enumerate}
    \item If \(f_r \overset{1-\gamma_1}{\approx} g|_r\) and \(g|_r \overset{1-\gamma_2}{\approx} g'|_r\) then \(f_r \overset{1-\gamma_1-\gamma_2}{\approx} g'|_r\). Hence, if two functions \(g,g':t \to \Sigma\) are close (in Hamming distance), then their support will have a lot of intersection. In particular, if \(L_t^i \in \overline{L}(t)\) we should exclude from the list a small Hamming ball around \(L_t^i\). Otherwise, if \(L_t^i|_s\) is \(\gamma\)-close to both \(L_s^j,L_s^{j'}\) we will have a difficulty of choosing whether \(\pi_{s,t}(i)=j\) or \(\pi_{s,t}(i)=j'\) in a way that will ensure that \(\pi_{s,u} = \pi_{s,t} \circ \pi_{t,u}\) for most \(s \subseteq t \subseteq u\).
    \item We want the lists to be exhaustive - meaning that every function \(g\) that has non-trivial agreement with the \(f_r\)'s has a close by function in the list. That is, if \(g:s \to \Sigma\) has that \(f_r \approx g|_r\) for \(\geq \tau=\poly(\varepsilon)\) fraction of the \(r \in X(k)\) inside \(s\), then either \(g \in \overline{L}(s)\) or \(g' \in \bar{L}(s)\) for some \(g' \approx g\). Otherwise, this could also lead to a difference in \(\overline{L}(s),\overline{L}(t)\)'s sizes, since for instance, it may be that \(g \in \overline{L}(t)\) but its restriction \(g|_s\) is not close to any element in the list of \(\overline{L}(s)\).
\end{enumerate}
Towards dealing with these issues we define \emph{density} and \emph{separation} of a set.
\begin{definition}
    Let \(\overline{L}_1,\overline{L}_2 \subseteq \set{g:s \to \Sigma}\). Let \(\gamma > 0\).
    \begin{enumerate}
        \item We say that \(\overline{L}_1\) is \(\gamma\)-separated if for every two \(g_1,g_2 \in \overline{L}_1\) it holds that \(\dist(g_1,g_2) > \gamma\).
        \item We say that \(\overline{L}_1\) is \(\gamma\)-dense in \(\overline{L}_2\) if for every \(h \in \overline{L}_2\) there exists \(g \in \overline{L}_1\) such that \(\dist(g,h) \leq \gamma\) (or as we denoted \(g \overset{1-\gamma}{\approx} h\)).
    \end{enumerate}
\end{definition}
Thus, for every \(s\) we will need to find a list \(\overline{L}(s)\) that is both separated (to overcome the first issue) and dense inside \(\overline{L}_{\tau,\zeta}(s)\) (to overcome the second issue). In fact, a method by \cite{DinurHKLT2018} allows us to find lists that are \(\gamma\)-dense but \(10\gamma\)-separated (for a smartly chosen \(\gamma\)). This is the content of \pref{claim:good-gamma-for-every-t-and-delta}. We will need this property in order to match between lists. If \(\overline{L}(s)\) is \(\gamma\)-dense in \(\overline{L}_{\tau,\zeta}(s)\) and \(\overline{L}(t)|_s = \sett{L_t^i|_s}{L_t^i \in \overline{L}(t)}\) is contained in \(\overline{L}_{\tau,\zeta}(s)\) then for every \(L_t^i \in \overline{L}(t)\) there exists some \(L_s^j \in \overline{L}(s)\) such that \(L_s^j \overset{1-\gamma}{\approx} L_t^i|_s\).\footnote{There is a subtlety here that even if \(\alpha_{\zeta}(L_t^i|_s) \geq \tau\), it could be that \(\alpha_{\zeta}(L_t^i) < \tau\). We will explain in the actual proof how to overcome this using sampling arguments, but for this overview let us just ignore this issue and assume that \(\alpha_{\zeta}(L_t^i|_s) = \alpha_{\zeta}(L_t^i)\).} %

The \(10\gamma\)-separation will promise that there is only a single such \(L_s^j\) for every \(L_t^i\), because if \(L_s^{j},L_s^{j'}\) are both \(\gamma\)-close to \(L_t^i\) then they are \(2\gamma\)-close to one another, which is a contradiction to \(10\gamma\)-separation. If we have such lists this implies that \(\pi_{s,t}(i)=j \Leftrightarrow L_t^i|_s \overset{1-\gamma}{\approx} L_s^j\) is a well defined function.

By making sure that \(\overline{L}(t)\) is also \(10\gamma\)-separated, we get that this function \(\pi_{s,t}\) is injective for most \(s \subseteq t\). If \(L_t^i,L_t^{i'}\) are \(10\gamma\)-separated, then for most \(s \subseteq t\), it will hold that \(L_t^i|_s, L_t^{i'}|_s\) will still be more than \(2\gamma\)-far. Hence, both \(L_t^i|_s, L_t^{i'}|_s\) cannot be \(\gamma\)-close to the same \(L_s^j\) by the triangle inequality. Finally, we show in the proof that the separation of all lists will ensure that their size stays \(\poly(\frac{1}{\varepsilon})\). For more details on the necessary condition for constructing these injections, see \pref{claim:injective-function} and its proof.

\medskip
Up until now we have explained in high level how to construct injective functions from \(\overline{L}(t)\) to \(\overline{L}(s)\) for \(t \supseteq s\). Surjectivity requires more care because \(\pi_{s,t}\) may still not be surjective even if \(\overline{L}(s),\overline{L}(t)\) are dense and separated. It could be that there is a ``new'' function \(L_s^j \in \overline{L}_{\tau,\zeta}(s)\) that is not close to \emph{any} function in the \(L_t^i|_s\). Another way of saying this is that \(\overline{L}(t)|_s\) may not be dense in \(\overline{L}_{\tau,\zeta}(s)\). %

To overcome this we will first show a structural property. We show that for some fixed \(t\), if \(\pi_{s,t}\) is not surjective for a non-negligible fraction of the \(s \subseteq t\), this implies that there is some \(\zeta' = O(\zeta)\), and a function \(g:t \to \Sigma\) such that \(\alpha_{\zeta'}(g) \geq \tau - \poly(\varepsilon)\), that is \(\zeta'\)-far from all functions in \(\overline{L}_{\tau,\zeta}(t)\). Showing this is where the previous agreement theorems come into play; if it is true that \(\overline{L}_{\tau,\zeta}(t)\) is not dense for many of the \(s \subseteq t\), this means that even after rerandomizing \(f_r\) for \(r \in \bigcup_{g \in \overline{L}_{\tau,\zeta}(t)} \supp_{\zeta}(g)\), the ensemble will pass the agreement test with non-negligible probability. By the agreement soundness guarantee, there is a new function \(g\) such that \(A_\zeta (g) \geq \tau^3\). We then show that by scaling \(\zeta\) by a constant factor, it also holds that \(A_{\zeta'} (g) \geq \tau-\poly(\varepsilon)\) (this happens because \(g\) needs to agree with the functions \(f_r\) for \(r \notin \bigcup_{g \in \overline{L}_{\tau,\zeta}(t)} \supp_{\zeta}(g)\), since these are the functions that were not rerandomized, although this needs to be meticulously argued).

On the other hand (keeping \(t\) fixed), we will show that there exists \(\zeta, \tau\) such that there are no new functions like this: We take any sequence \(\set{(\tau_i,\zeta_i)}_{i=1}^m\) where \(\tau_1=\varepsilon^2, \tau_{i+1}= \tau_i - \varepsilon^{50}\) and \(\zeta_{i+1} = 20\zeta_i\). Suppose that for every such pair there is a ``new'' \(g\) such that \(\alpha_{\zeta_{i+1}}(g) \geq \tau_{i+1} \geq \varepsilon^{10}\) that is far from all functions in \(\overline{L}_{\tau_i,\zeta_i}(t)\). This will imply that \(\prob{A_{\zeta_i}(g_i) \setminus \bigcup_{j=1}^{i-1} A_{\zeta_j}(g)} \geq \varepsilon^{10}\). This is because if \(g_i,g_{j}\) are far away from one another then most \(h_r\) that are close to \(g_i\) will be far from \(g_{j}\). Thus 
\[m\varepsilon^{10}\leq \sum_{i=1}^m \Prob[r_1,r_2]{A_{\zeta_i}(g_i) \setminus \bigcup_{j=1}^{i-1} A_{\zeta_j}(g_j)} = \Prob[r_1,r_2]{\bigcup_{i=1}^m A_{\zeta_i}(g_i)} \leq 1.\] 
We conclude that \(m \leq \varepsilon^{-10}\), i.e. that in every taking a long enough such sequence will result in some \(\set{(\tau_i,\zeta_i)}\) where this doesn't happen, which implies that \(\overline{L}_{\tau_i,\zeta_i}(t)\) is dense inside most \(\overline{L}_{\tau_i,\zeta_i}(s)\). This argument is depicted in \pref{claim:good-tau-and-delta-for-every-t}.

We use the pigeonhole principle to find a pair \((\tau_i,\zeta_i)\) and \(M_{(\tau_i,\zeta_i)} \subseteq X(d)\) of relative size \(\poly(\varepsilon)\), such that for every \(t \in M_{(\tau_i,\zeta_i)}\) and for all but a negligible fraction of \(s \subseteq t\) it holds that \(\overline{L}_{\tau_i,\zeta_i}(t)|_s\) is dense inside \(\overline{L}_{\tau_{i+1},\zeta_{i+1}}(s)\). Finally, using this set, we show that if \(t \in M_{(\tau_i,\zeta_i)}\) then most \(s \subseteq t\) also have the same property (i.e. that \(\overline{L}_{\tau_i,\zeta_i}(s)|_u\) is dense inside most \(\overline{L}_{\tau_{i+1},\zeta_{i+1}}(u)\) for most \(u \subseteq s\)). Using sampling, we conclude that this property propagates, i.e. that for a small enough dimension \(d' \ll d\) this density property holds for most \(t' \in X(d')\). This argument is in \pref{prop:good-parameters}.

Thus to conclude, we find our lists \(\overline{L}(s)\) by first finding the aforementioned \((\tau_i,\zeta_i)\). Then we find lists \(\overline{L}(s)\) that are \(\gamma\)-dense and \(10\gamma\)-separated inside \(\overline{L}_{\tau_{i+1},\zeta_{i+1}}(s)\) and define the permutation \(\pi_{s,t}(i)=j\) if and only if \(L_t^i|_s \overset{1-\gamma}{\approx} L_s^j\). These are well defined for most \(s \subseteq t\), and we use these for showing \pref{lem:from-1-percent-to-list-agreement}.

\subsection{Proof of \pref{lem:from-1-percent-to-list-agreement}}
This definition will be useful for the rest of this section.
\begin{definition}
    Let \(s \subseteq t\), \(\zeta > 0\) and \(L\) be a list of functions on \(t\). %
    We say that \(s\) \emph{samples \(L\) well} if %
    for every \(g_1,g_2 \in L\) it holds that \(\Abs{\dist(g_1|_s,g_2|_s) - \dist(g_1,g_2)} \leq \varepsilon^{100}\). We say that \(s\) \emph{samples \(L\)'s agreement} well if \(s\) samples \(L\) well and in addition, for every \(g \in L\) it holds that \(\Abs{\prob{A_{\zeta}(g)}- \prob{A_{\zeta}(g|_s)}} \leq \varepsilon^{100}\).
\end{definition}

We also define the requirements we need from the lists \(\overline{L}(s)\).
\begin{definition} \label{def:mi}
    Let \(d_1 \leq d_2 \leq d_3 \leq d\) be dimensions. Let \(\tau,\zeta,\gamma > 0\) be constants and let \(\ell\) be integers. Let \(M_i = M(\tau,\zeta,\gamma,\ell) \subseteq X(d_i)\) be all the \(t \in X(d_i)\) such that there exists some \(\overline{L}(t) \subseteq \overline{L}_{\tau,\zeta}(t)\) such that:
    \begin{enumerate}
        \item \(\overline{L}(t)\) is \(9\gamma\)-separated.
        \item \(\overline{L}(t)\) is \(\gamma\)-dense in \(\overline{L}_{\tau - \varepsilon^{100},\zeta}(t)\).
        \item \(|\overline{L}(t)|=\ell\).
        \item For \(i=2,3\), and \(i'< i\). Then the following set \(P_t(i')\) has probability \(\Prob[s \subseteq t, s \in X(i')]{P_t(i')} \geq 1-\exp(-\Omega(\poly(\varepsilon)\frac{d_1}{k}))\). \(P_t(i')\) is all the \(s \subseteq t, s \in X(i')\) such that \(s\) samples \(\overline{L}(t)\)'s agreement well.
    \end{enumerate}    
\end{definition}

The proof of \pref{lem:from-1-percent-to-list-agreement} relies on the following proposition.
\begin{proposition} \label{prop:good-parameters}
    There exists parameters \(\tau = \poly(\varepsilon),\zeta,\gamma \leq \eta\exp(\poly(1/\varepsilon))\) and \(\ell=\poly(1/\varepsilon)\) such that the following holds. Let \(d_1\leq d_2 \leq d_3 \leq d\) be such that \(\frac{d_3}{d} = \exp(-\poly(\varepsilon)\frac{d_1}{k})\). Then \(\prob{M_i} \geq 1- \exp(-\poly(\varepsilon)\frac{d_1}{k}) \).
\end{proposition}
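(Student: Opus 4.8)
The plan is to assemble three ingredients that are already in place — the existence of a good separation scale $\gamma$ (\pref{claim:good-gamma-for-every-t-and-delta}), the fact that every top-dimensional face admits a good threshold pair $(\tau,\zeta)$, produced by the bounded-length sequence argument (\pref{claim:good-tau-and-delta-for-every-t}), and the strong sampling estimate for high dimensional expanders (\pref{thm:sampling-in-HDXs}) — and then run a downward-propagation-plus-boosting argument. \emph{Fixing the parameters.} Run the sequence of pairs $(\tau_j,\zeta_j)$ with $\tau_1=\varepsilon^2$, $\tau_{j+1}=\tau_j-\varepsilon^{100}$, $\zeta_{j+1}=20\zeta_j$ and $\zeta_1$ a small multiple of $\eta$, truncated so that $j$ ranges over at most $\poly(1/\varepsilon)$ values. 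By \pref{claim:good-tau-and-delta-for-every-t} each $t\in X(d)$ admits some index $i(t)$ in this range for which $\overline{L}_{\tau_{i(t)},\zeta_{i(t)}}(t)$ is dense inside $\overline{L}_{\tau_{i(t)+1},\zeta_{i(t)+1}}(s)$ for most sub-faces $s$; by pigeonhole over the $\poly(1/\varepsilon)$ indices there is a single index $i$ such that the set $M^{(d)}\subseteq X(d)$ of faces admitting $i$ has $\Prob[t\in X(d)]{M^{(d)}}\geq\poly(\varepsilon)$. Then \pref{claim:good-gamma-for-every-t-and-delta} fixes $\gamma\leq\eta\exp(\poly(1/\varepsilon))$ and a common list size $\ell=\poly(1/\varepsilon)$ so that on $M^{(d)}$ the lists $\overline{L}(t)\subseteq\overline{L}_{\tau_i,\zeta_i}(t)$ are $\gamma$-dense and $10\gamma$-separated. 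These $(\tau,\zeta,\gamma,\ell):=(\tau_i,\zeta_i,\gamma,\ell)$ are the parameters claimed, and $M^{(d)}$ is the level-$d$ analogue of the sets $M_i$ of \pref{def:mi}.

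\emph{Propagating goodness downward.} The first real step is: if $t\in X(d)$ is good (admits a valid list for the fixed pair) then for every $d'\le d$ all but an $\exp(-\poly(\varepsilon)\frac{d_1}{k})$-fraction of sub-faces $s\subseteq t$ with $s\in X(d')$ are good, with list $\overline{L}(s)$ essentially $\overline{L}(t)|_s$ (after the index shifts built into the sequence). This uses \pref{thm:sampling-in-HDXs} to see that most $s$ sample $\overline{L}(t)$ and $\overline{L}_{\tau-\varepsilon^{100},\zeta}(t)$ well — so restriction preserves $9\gamma$-separation and $\gamma$-density, the density property of $t$ ruling out a genuinely new function appearing on $s$ (cf.\ the injectivity/surjectivity discussion around \pref{claim:injective-function}) — and again to check the nested ``sub-faces sample well'' conditions of \pref{def:mi} at levels $d_1$ and $d_2$. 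Since $d'\gg k$ the cumulative error stays $\exp(-\poly(\varepsilon)\frac{d_1}{k})$.

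\emph{Boosting the density.} Now fix $i\in\{1,2,3\}$ and let $B_i=X(d_i)\setminus M_i$; we must show $\Prob{B_i}\le\exp(-\poly(\varepsilon)\frac{d_1}{k})$. Suppose not. By \pref{thm:sampling-in-HDXs} the set of $t\in X(d)$ that fail to sample $B_i$ to within $\tfrac12\Prob{B_i}$ has density at most $\exp(-\poly(\varepsilon)\frac{d}{k})$, which by the hypothesis $d_3/d=\exp(-\poly(\varepsilon)\frac{d_1}{k})$ — choosing the polynomials so that $d$ dwarfs the sampling loss — is strictly below $\poly(\varepsilon)\le\Prob{M^{(d)}}$. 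So some $t^\star\in M^{(d)}$ samples $B_i$ well, giving $\Prob[s\subseteq t^\star,\,s\in X(d_i)]{s\in B_i}\ge\tfrac12\Prob{B_i}$; but the downward-propagation step says a good $t^\star$ has $\Prob[s\subseteq t^\star]{s\notin M_i}\le\exp(-\poly(\varepsilon)\frac{d_1}{k})$, a contradiction once the polynomials are matched. Hence $\Prob{M_i}\ge1-\exp(-\poly(\varepsilon)\frac{d_1}{k})$ for each $i$, as required.

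\emph{Main obstacle.} The conceptual crux is the boosting step: the pigeonhole only yields a $\poly(\varepsilon)$-dense good set at the top level, yet a single good top face already controls almost all of its sub-faces, so nonemptiness of $M^{(d)}$ together with sampling upgrades this to an almost-all statement at the much lower levels $d_i$. The chief technical burden is the parameter bookkeeping — arranging that the polynomials in $\varepsilon$ governing $\ell$, the separation/density scales, the propagation error of the middle step, and the sampling error of the last step compose consistently, so that both errors are simultaneously smaller than the target $\exp(-\poly(\varepsilon)\frac{d_1}{k})$ and smaller than $\Prob{M^{(d)}}=\poly(\varepsilon)$ — which is exactly what the assumption $d_3/d=\exp(-\poly(\varepsilon)\frac{d_1}{k})$ is there to guarantee.
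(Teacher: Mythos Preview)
Your proposal is correct and follows essentially the same three-stage architecture as the paper: pigeonhole over the bounded sequence of $(\tau_j,\zeta_j)$ pairs (and over the $\gamma$'s and list sizes $\ell$) to get a $\poly(\varepsilon)$-dense good set $N_d\subseteq X(d)$, then push goodness down to levels $d_1,d_2,d_3$ via sampling and the density guarantee of \pref{claim:good-tau-and-delta-for-every-t}.

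The one organizational difference is in the final boosting step. The paper argues \emph{upward}: for each $s\in X(d_i)$ it uses the weak spectral sampler \pref{cor:good-sampler-from-expansion} to show that all but an $O(d_i/(d\cdot\poly(\varepsilon)))$ fraction of $s$ see a nontrivial fraction of $N_d$ above them, then a Markov argument shows most such $s$ sit under some $t\in N_d$ with all the required sampling/density properties; the list for $s$ is explicitly $\overline{L}(t)|_s$. You instead argue \emph{downward} by contradiction: if $\Prob{B_i}$ were large, a good $t^\star$ would sample $B_i$ well and hence contain too many bad sub-faces, contradicting downward propagation. Both routes are valid, but the paper's version is cleaner precisely at the point you identify as the main burden: the bound $O(d_i/d)$ from \pref{cor:good-sampler-from-expansion} plugs directly into the hypothesis $d_3/d=\exp(-\poly(\varepsilon)d_1/k)$ with a \emph{fixed} accuracy $\zeta=\poly(\varepsilon)$, avoiding the circularity in your contradiction argument where the sampling accuracy $\tfrac12\Prob{B_i}$ itself depends on the quantity you are bounding. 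Your approach still closes (one chooses the $\poly(\varepsilon)$ in the $d_3/d$ hypothesis to dominate the one in the target bound), but the paper's ``look up'' formulation makes the bookkeeping disappear.
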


We will also use this claim,
\begin{claim} \label{claim:injective-function}
    Let \(L_1 = \set{g_1,g_2,\dots,g_{\ell_1}}, L_2=\set{h_1,h_2,\dots,h_{\ell_2}}\) be such that \(L_1\) is \(\gamma\)-dense in \(L_2\) and such both \(L_1\) and \(L_2\) are \(2\gamma\)-separated. Then there exists an injective function \(\pi:[\ell_1]\to [\ell_2]\) such that \(\pi(j)=k\) if and only if \(g_j \overset{1-\gamma}{\approx} h_k\).
\end{claim}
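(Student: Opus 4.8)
The plan is to read the conclusion as a statement about the bipartite ``$\gamma$-closeness'' graph between the two lists and to show that this graph is essentially a perfect matching. Concretely, I would form the bipartite graph $H$ with parts $[\ell_1]$ and $[\ell_2]$, placing an edge $j\sim k$ exactly when $\dist(g_j,h_k)\le\gamma$, i.e.\ $g_j\overset{1-\gamma}{\approx}h_k$. Then $\pi$ is to be defined by $\pi(j)=k \iff j\sim k$, so the entire task reduces to verifying that $H$ is the graph of an injective function.

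The core consists of three one-line triangle-inequality observations about $H$. (i) Every $k\in[\ell_2]$ has degree at least $1$: this is exactly the hypothesis that $L_1$ is $\gamma$-dense in $L_2$. (ii) Every $k\in[\ell_2]$ has degree at most $1$: if $g_j\sim k$ and $g_{j'}\sim k$ with $j\ne j'$, then $\dist(g_j,g_{j'})\le\dist(g_j,h_k)+\dist(h_k,g_{j'})\le 2\gamma$, contradicting the $2\gamma$-separation of $L_1$. (iii) Every $j\in[\ell_1]$ has degree at most $1$: symmetrically, $j\sim k$ and $j\sim k'$ with $k\ne k'$ would give $\dist(h_k,h_{k'})\le 2\gamma$, contradicting the $2\gamma$-separation of $L_2$. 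Facts (ii) and (iii) say $H$ is a partial matching, and fact (i) says it saturates $[\ell_2]$; reading it from left to right thus gives a well-defined injective map, which by construction satisfies $\pi(j)=k \iff g_j\overset{1-\gamma}{\approx}h_k$, as required.

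The single point that needs care — and the only real obstacle — is totality of $\pi$ on all of $[\ell_1]$: density of $L_1$ in $L_2$ forces every $h_k$ to be matched, but not every $g_j$, and facts (i)--(iii) together only give $\ell_2\le\ell_1$. So a priori the matching defines a \emph{total} injection $[\ell_1]\to[\ell_2]$ precisely when $\ell_1=\ell_2$. In the way the claim is used this is automatic, since all the lists $\overline L(\cdot)$ are arranged to have exactly $\ell$ elements (see \pref{def:mi}); then the partial matching is perfect on both sides and $\pi$ is the desired (necessarily bijective) injection. If one wants a version free of that assumption, the same argument instead produces an injection $\sigma\colon[\ell_2]\hookrightarrow[\ell_1]$ with $\sigma(k)=j \iff g_j\overset{1-\gamma}{\approx}h_k$, and one sets $\pi=\sigma^{-1}$ on its image; I would add a clause to the statement to reflect this, rather than leave the totality of $\pi$ implicit.
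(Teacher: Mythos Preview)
Your approach is essentially the same as the paper's: both arguments boil down to the triangle inequality plus $2\gamma$-separation to get uniqueness and injectivity. The paper's proof is the same three lines you wrote, just without the bipartite-graph packaging.

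The point you flag about totality of $\pi$ is well taken, and in fact you have caught a small slip in the paper. The claim statement says ``$L_1$ is $\gamma$-dense in $L_2$'', which by the paper's definition means every $h_k\in L_2$ has a nearby $g_j\in L_1$; as you note, this alone does not make $\pi:[\ell_1]\to[\ell_2]$ total. The paper's own proof, however, opens with ``By density of $L_2$ in $L_1$, it holds that for every $g_j$ there exists some $h_k$\ldots'', i.e.\ it silently uses the reverse direction of density, which is also the direction that is actually established at the point of application (there $\overline L(x)$ is shown to be dense in $\overline L(y)|_x$, with $L_1=\overline L(y)|_x$ and $L_2=\overline L(x)$). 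So the intended hypothesis is ``$L_2$ is $\gamma$-dense in $L_1$'', and with that the totality issue disappears without needing $\ell_1=\ell_2$. Your patch via $\ell_1=\ell_2$ is also correct for the way the claim is used, but the cleaner fix is simply to swap $L_1$ and $L_2$ in the density hypothesis.
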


\begin{proof}[Proof of \pref{lem:from-1-percent-to-list-agreement}]
Recall the definition of \(M_i\) is \pref{def:mi}. Given \pref{prop:good-parameters}, for every \(s \in M_1, t \in M_2\) and \(u \in M_3\) we take such a list \(\overline{L}(x) = \set{L_x^1,L_x^2,\dots,L_x^\ell}\), where \(x\) is \(s,t,u\) respectively (for \(x \notin M_i\) we take arbitrary lists). By definition, for every \(x \in M_i\) and every function in \(\overline{L}_x\) satisfies item \((1)\) in \pref{lem:from-1-percent-to-list-agreement} by definition. Let us verify the second and third items. We show the following for pairs \(s \subseteq t\), \(s \subseteq u\) and \(t \subseteq u\), we denote such a pair as \(x \subseteq y\) for short. For such a pair where \(y \in M_i, x \in P_y(i') \cap M_{i'}\) (where \(i,i'\) are the corresponding sets) we define \(\pi_{x,y}:[\ell]\to [\ell]\) by
\begin{equation} \label{eq:def-of-permutation}
    \pi_{x,y}(j)=k \Leftrightarrow L_y^j|_x \overset{1-\gamma}{\approx} L_x^k.
\end{equation} 
If either \(y \notin M_i\) or \(x \notin P_y(i')\cap M_{i'}\) we take \(\pi_{s,t}=Id\) as an arbitrary choice. We show that the assumptions in \pref{claim:injective-function} hold, which implies that this function is well defined. If this permutation is well defined for \(1-\exp(-\poly(\varepsilon)\frac{d_1}{k})\) of the pairs \(x \subseteq y\) this implies item \((2\) in \pref{lem:from-1-percent-to-list-agreement}.

First we note that if \(x\) samples \(\overline{L}(y)\)'s agreement well then \(\overline{L}(y)|_x \subseteq \overline{L}_{\tau-\varepsilon^{100},\zeta}(x)\) and \(\overline{L}(y)|_u\) is \(8\gamma\)-separated. Moreover if \(\overline{L}(x)\) is \(\gamma\)-dense in \(\overline{L}_{\tau-\varepsilon^{100},\zeta}(x)\), and \(\overline{L}(y)|_x \subseteq \overline{L}_{\tau-\varepsilon^{100},\zeta}(x)\) then in particular \(\overline{L}(x)\) is \(\gamma\)-dense in \(\overline{L}(y)|_x\). Finally, by definition \(\overline{L}(y)\) is also \(9\gamma\)-separated. Hence \pref{claim:injective-function} implies that this is a well defined and injective function. As both \(\overline{L}(y)|_x\) and \(\overline{L}(y)\) have the same size \(\ell\), this is indeed a permutation.

Finally, we will prove the third item in \pref{lem:from-1-percent-to-list-agreement},
\begin{equation*}
    \Prob[s \subseteq t \subseteq u, s \in X(d_1),t \in X(d_2), u \in X(d_3)]{\pi_{s,t}\circ \pi_{t,u} = \pi_{s,u}} \geq 1-\exp(-\Omega(\poly(\varepsilon) \frac{d_1}{k}).
\end{equation*}
Note that the following events are all occur with probability \(1-\exp(-\Omega(\poly(\varepsilon) \frac{d_1}{k})\):
\begin{enumerate}
    \item \(u \in M_1, s \in M_2, t \in M_3\).
    \item \(t \in P_u(i_2), s \in P_t(i_1) \cap P_u(i_1)\).
    \item For every \(j\) and \(k\), 
    \begin{equation} \label{eq:dist-preserved-in-list-restriction}
        \Abs{\dist(L_t^j|_s,L_u^k|_s) - \dist(L_t^j,L_u^k|_t)} \leq \gamma.
    \end{equation}
\end{enumerate}
The first two items are by \pref{prop:good-parameters}. The last item is via the following claim, proven later on (we will need this claim also for later):
\begin{claim} \label{claim:most-sets-sample-well}
    Fix \(t\), \(L\) and \(\zeta\). Then the fraction of \(s \subseteq t, s\in X(d_1)\) that \emph{don't} sample \(L\)'s agreement well is at most \(|L|^2 \exp(-\poly(\varepsilon) \frac{d_1}{k})\).
\end{claim}
Indeed, if \(t\) samples \(\overline{L}(u)\) well and \(s\) samples \(\overline{L}(u)|_t \cup \overline{L}(t)\) well, then distances between members of \(\overline{L}(u)|_t \cup \overline{L}(t)\) are preserved and the third item holds (and both these events occur with probability  \(1-\exp(-\poly(\varepsilon) \frac{d_1}{k})\)).

Thus to prove the theorem we assume \(s \subseteq t \subseteq u\) are such that all these events occur and show that \(\pi_{s,t}\circ \pi_{t,u} = \pi_{s,u}\). Let \(k = \pi_{s,t}\circ \pi_{t,u}(j)\) and \(k'=\pi_{s,u}(j)\). Then it holds that \(L_u^j|_s \overset{1-\gamma}{\approx} L_s^{k'}\), and that \(L_u^j|_t \overset{1-\gamma}{\approx} L_t^{\pi_{t,u}(j)}\) and \(L_t^{\pi_{s,t}(\pi_{t,u}(j))}|_s \overset{1-\gamma}{\approx} L_s^{k}\). By \eqref{eq:dist-preserved-in-list-restriction}, distances are approximately preserved, so \(L_u^j|_s \overset{1-2\gamma}{\approx} L_t^{\pi_{t,u}(j)}|_s\). Using the triangle inequality (where \(L^{\pi_{t,u}(j)}_t|_s\) is the point in the middle) it holds that \(L_u^j|_s \overset{1-3\gamma}{\approx} L_s^{k}\), which implies that \(L_s^k \overset{1-4\gamma}{\approx} L_s^{k'}\). By \(10\gamma\)-separation of \(\overline{L}(u)\), \(k=k'\).
\end{proof}

The proof of \pref{claim:injective-function} is direct.
\begin{proof}[Proof of \pref{claim:injective-function}]
     By density of \(L_2\) in \(L_1\), it holds that for every \(g_j\) there exists some \(h_k\) such that \(g_j \overset{1-\gamma}{\approx} h_k\). Moreover, there is only one such \(k\): assume that for some \(j\) there are \(k,k'\) such that \(g_j \overset{1-\gamma}{\approx} h_k\) and \(g_j \overset{1-\gamma}{\approx} h_{k'}\). Then by the triangle inequality \(h_k \overset{1-2\gamma}{\approx} h_{k'}\), which by \(2\gamma\)-separation implies that \(k=k'\). Thus it is a well defined function.

The same argument shows that this is an injection. Indeed, Let \(k=\pi(j)=\pi(j')\). That is, \(h_j \overset{1-\gamma}{\approx} h_k \overset{1-\gamma}{\approx} h_{j'}\). Then \(h_j \overset{1-2\gamma}{\approx} h_{j'}\). by \(2\gamma\)-separation of \(\overline{L}(s)|_u\), this implies that \(j=j'\), i.e. the function is an injection.
\end{proof}

So is the proof of \pref{claim:most-sets-sample-well}.
\begin{proof}[Proof of \pref{claim:most-sets-sample-well}]
    Let us start with sampling well. There are \(\leq \binom{|L|}{2}\) pairs \(g,g' \in L\) so by a union bound it is enough to argue that for at most \(\exp(-\poly(\varepsilon)d_1)\) of the \(s \subseteq t\) it holds that \(\Abs{\dist(g,g')-\dist(g|_s,g'|_s)} > \varepsilon^{100}\). Let \(A \subseteq t\) be the set of \(v \in t\) such that \(g(v)\ne g'(v)\). By \pref{thm:sampling-in-HDXs} the fraction of \(s \subseteq t\) such that \(\Abs{\prob{A} - \Prob[v \subseteq s]{A}} > \varepsilon^{100}\) is \(\exp(-\poly(\varepsilon)d_1)\). On the other hand, \(\prob{A} = \dist(g,g')\) and \(\Prob[v \subseteq s]{A} = \dist(g|_s,g'|_s)\) giving us the that at most \(\binom{|L|}{2}\exp(-\poly(\varepsilon)d_1)\) don't sample \(L\) well.

    Now let us make sure that most \(s \subseteq t\) also sample \(L\)'s agreement well. Indeed, let \(g \in L\). By \pref{claim:edge-sampling} there is at most \(\exp(-\poly(\varepsilon)\frac{d_1}{k})\) of the \(s \subseteq t\) such that 
    \[\Abs{A_{\zeta}(g) - A_{\zeta}(g|_s)} > \varepsilon^{100}.\]
    Another union bound gives us us the claim.
\end{proof}

The proof of \pref{prop:good-parameters} will follow from these assertions. We prove them after the proof of the proposition.
\begin{claim} \torestate{\label{claim:good-tau-and-delta-for-every-t}
    Let \(d_1 \leq d\) and fix \(t \in X(d)\) such that \(\agr_{\mathcal{D}}(\mathcal{F}_t) \geq \varepsilon - \varepsilon^{100}\). There exists \(m\leq \frac{1}{\varepsilon^{10}}\) such that the following holds for \(\tau_m = \varepsilon^2(1-\varepsilon^{50})^m\) and \(\zeta_m = 3\eta 20^m\). There exists \(L \subseteq \overline{L}_{\tau_m,\zeta_m}(t)\) such that for \(1-\exp(-\poly(\varepsilon)\frac{d_1}{k})\) of the \(u \subseteq t, u \in X(d_1)\), \(L|_u\) is \(20\zeta_m\)-dense in \(\overline{L}_{\tau_{m+1},\zeta_m}(u)\).}
\end{claim}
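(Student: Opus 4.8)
The plan is to argue by contradiction: first pin down a level $m$ at which the list on $t$ is already ``saturated'', and then transfer saturation from $t$ down to its $d_1$-faces by sampling. All of this takes place inside the complete complex on the vertex set of $t$, a trivial ($\lambda=0$) $d$-dimensional local spectral expander; hence \pref{thm:sampling-in-HDXs} and \pref{claim:edge-sampling} apply, and every $u\subseteq t$ with $u\in X(d_1)$ has $\abs u=d_1+1\geq k^3$, so the soundness hypothesis of \pref{thm:main} is available for $\mathcal F_u=\set{f_r:r\subseteq u}$ and for $\mathcal F_t$ itself. Fix $\tau_i=\varepsilon^2(1-\varepsilon^{50})^i$ and $\zeta_i=3\eta 20^i$ as in the statement; throughout the range $i\leq\varepsilon^{-10}$ one has $\tau_i=(1\pm o(1))\varepsilon^2$ while $\zeta_i\leq\eta\exp(\poly(1/\varepsilon))\ll 1$ by the hypothesis on $\eta$. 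Call $m$ \emph{saturated} if $\overline L_{\tau_m,\zeta_m}(t)$ is $20\zeta_m$-dense in $\overline L_{\tau_{m+1},\zeta_m}(t)$.

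\textbf{Step 1: a saturated level exists.} If $m$ is not saturated there is, by definition, $g_m:t\to\Sigma$ with $\alpha_{\zeta_m}(g_m)\geq\tau_{m+1}\geq\varepsilon^2/2$ that is $20\zeta_m$-far from every function in $\overline L_{\tau_m,\zeta_m}(t)$; since $\tau_{i+1}\geq\tau_m$ and $\zeta_i\leq\zeta_m$ for $i<m$, each previously chosen $g_i$ lies in $\overline L_{\tau_m,\zeta_m}(t)$, so $g_m$ is $20\zeta_m$-far from all earlier $g_i$ as well. A Chernoff estimate in the complete complex shows that two functions at Hamming distance $>20\zeta_m$ have $\zeta_m$-supports meeting in at most an $\exp(-\Omega(\eta^2 k))$-fraction of $X(k)$, so $A_{\zeta_m}(g_m)\setminus\bigcup_{i<m}A_{\zeta_i}(g_i)$ has measure at least $\tau_{m+1}-m\exp(-\Omega(\eta^2k))\geq\varepsilon^{10}$ once $k\geq\exp(\poly(1/\varepsilon))$. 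Summing over $i\leq m$ gives $m\,\varepsilon^{10}\leq\Pr[\bigcup_{i\leq m}A_{\zeta_i}(g_i)]\leq 1$, so not all of $0,1,\dots,\lceil\varepsilon^{-10}\rceil$ can be unsaturated; fix a saturated $m$ in this range and take $L=\overline L_{\tau_m,\zeta_m}(t)$ (or, for union-bound purposes, a maximal $\zeta_m$-separated sublist, which the same estimate shows has size $\poly(1/\varepsilon)$).

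\textbf{Step 2: transferring saturation to subfaces.} Call $w\subseteq t$ \emph{bad} (at level $m$) if $\overline L_{\tau_m,\zeta_m}(t)|_w$ is \emph{not} $20\zeta_m$-dense in $\overline L_{\tau_{m+1},\zeta_m}(w)$; note $t$ itself is not bad, which is exactly saturation. We must bound the bad $u\in X(d_1)$, and the plan combines two sampling moves with one use of the prior agreement theorem. First, \emph{badness propagates downward}: if $w$ is bad with witness $h_w$ then, by \pref{claim:edge-sampling} and \pref{thm:sampling-in-HDXs}, all but an $\exp(-\poly(\varepsilon)\dim(w)/k)$-fraction of lower-dimensional $u'\subseteq w$ samples $L$ and $h_w$ well, and for each such $u'$ the restriction $h_w|_{u'}$ again witnesses badness; so the bad fraction does not shrink as the dimension drops. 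Second, \emph{saturation propagates downward}: if $w$ is not bad then all but an $\exp(-\poly(\varepsilon)\dim(w)/k)$-fraction of $u'\subseteq w$ are not bad. This is the step that invokes \pref{thm:main}'s hypothesis: were a non-negligible mass of $u'\subseteq w$ bad, rerandomize $f_r$ on $R:=\bigcup_{g\in\overline L_{\tau_m,\zeta_m}(t)}\supp_{\zeta_m}(g)$ (reduced to the separated sublist, slightly enlarged) to obtain $\mathcal F'$; for each bad $u'$ the edges of $A_{\zeta_m}(h_{u'})$ avoid $R$ (since $h_{u'}$ is $20\zeta_m$-far from the list, a typical $r\subseteq u'$ in $\supp_{\zeta_m}(h_{u'})$ cannot lie in any $\supp_{\zeta_m}(g)$), hence $\agr_{\mathcal D_{u'}}(\mathcal F'_{u'})\geq\tau_{m+1}-\exp(-\poly(\varepsilon)k)$; averaging over $u'\subseteq w$, $\agr_{\mathcal D_w}(\mathcal F')$ clears the soundness threshold $\poly(1/k)$, so the soundness of $\mathcal D$ yields a global $g^\ast$ on $w$ whose agreeing edges avoid $R$ and which is therefore far from $\overline L_{\tau_m,\zeta_m}(t)$; finally the ``agreement with the non-rerandomized coordinates'' argument (scaling $\zeta_m$ by a constant) upgrades this to $\alpha_{O(\zeta_m)}(g^\ast)\geq\tau_m-\poly(\varepsilon)$, so that $w$ is bad, a contradiction. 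Applying the second move with $w=t$, and interleaving the two moves along a chain of dimensions $d>\dots>d_1$ to amplify and then kill any residual bad mass, yields $\Pr_{u\in X(d_1),\,u\subseteq t}[u\text{ bad}]\leq\exp(-\poly(\varepsilon)d_1/k)$, which is the claim.

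\textbf{The main obstacle} is the quantitative gap in the second move of Step 2: the soundness hypothesis only bites once the rerandomized ensemble has \emph{absolute} agreement at least the threshold $\poly(1/k)$, whereas a priori the bad subfaces might occupy only an $\exp(-\poly(\varepsilon)d_1/k)$-fraction, far below $\poly(1/k)$. A single rerandomize-then-apply-soundness pass therefore only controls the bad fraction at scale $\poly(1/k)$, and pushing it down to $\exp(-\poly(\varepsilon)d_1/k)$ is exactly what forces the interplay with the downward-propagation move (to amplify a tiny-but-positive bad mass before soundness can act) and with the volume bound of Step 1 (a surviving bad subface would, after amplification, manufacture a fresh ``new'' function on $t$, contradicting saturation). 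Closing this loop, together with bookkeeping the constants relating the $20\zeta_m$-density radius, the $\zeta_m$-separation of the sublist, the enlargement of $R$, and the $\poly(\varepsilon)$ loss at each use of soundness---all while the chain stays short enough that $\tau_i$ never falls below $\tau_{m+1}$ and $\zeta_i$ never exceeds $1$---is the technical heart of the proof.
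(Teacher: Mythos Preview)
Your proposal has a genuine gap, and the paper takes a different route that avoids it.

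\textbf{The gap.} Your Step~1 notion of ``saturation'' is that $\overline L_{\tau_m,\zeta_m}(t)$ is $20\zeta_m$-dense in $\overline L_{\tau_{m+1},\zeta_m}(t)$, where $\tau_{m+1}=\tau_m(1-\varepsilon^{50})$ differs from $\tau_m$ only infinitesimally. This is too weak to drive the rerandomize-then-apply-soundness move in Step~2. After rerandomizing the supports of the list, soundness on $w$ produces some $g^\ast$ whose agreement is only guaranteed to be at the soundness output level, roughly $\poly(\tau)$ (e.g.\ $\tau^3$), \emph{not} $\tau_{m+1}$. Your sentence ``the `agreement with the non-rerandomized coordinates' argument upgrades this to $\alpha_{O(\zeta_m)}(g^\ast)\geq\tau_m-\poly(\varepsilon)$'' is doing all the work and is not justified: a function $g^\ast$ can perfectly well have agreement $\approx\tau_m^3$ while being far from every function of agreement $\geq\tau_{m+1}\approx\tau_m$, and this does not contradict your saturation. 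Even if one grants the move, you correctly identify that it controls the bad fraction only at scale $\poly(1/k)$; the ``interleave along a chain of dimensions'' idea for amplifying this to $\exp(-\poly(\varepsilon)d_1/k)$ is asserted rather than proved, and it is unclear how running down a chain would ever cross the $\poly(1/k)$ barrier.

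\textbf{What the paper does instead.} The paper never tries to push saturation from $t$ down to subfaces. It proceeds in the opposite direction: it \emph{lifts} witnesses from bad subfaces back up to $t$. The mechanism is \pref{claim:density-of-smaller-tau}, which uses the cubic gap $\tau^3$ versus $\tau$: if $L^{aux}_m$ is any $3\zeta_m$-dense, $3\zeta_m$-separated list inside $\overline L_{\tau_m^{3},\zeta_m}(t)$, then a single application of soundness on $t$ (not on subfaces) shows that the residual agreement outside the supports of $L^{aux}_m$ is at most $2\tau_m^{1.5}$; this quantity is then pushed to subfaces by \emph{sampling} alone (\pref{claim:edge-sampling}), which is what delivers the $\exp(-\poly(\varepsilon)d_1/k)$ bound you need. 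Thus for all but $\exp(-\poly(\varepsilon)d_1/k)$ of the $u$, the auxiliary list $L^{aux}_m|_u$ is already dense in $\overline L_{\tau_m,\zeta_m}(u)$. Now the paper grows $L_0\subsetneq L_1\subsetneq\cdots$ on $t$ one element at a time: as long as $L_{m-1}$ fails the claimed density on more than an $\exp(-\poly(\varepsilon)d_1/k)$ fraction of $u$, pick one such bad $u$ that also samples well and on which $L^{aux}_m|_u$ is dense (possible because both exceptional sets are $\exp(-\poly(\varepsilon)d_1/k)$), take the witness $h$ on $u$, and set $g_m\in L^{aux}_m$ to be the element whose restriction is close to $h$. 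Good sampling of $u$ transfers $\alpha_{\zeta_m}(g_m)\geq\tau_m$ back to $t$, and a potential argument (the measure of $\bigcup_j A_{9\zeta_{j-1}}(g_j)$) terminates the process in $\leq\varepsilon^{-10}$ steps. The cubic gap in \pref{claim:density-of-smaller-tau} is precisely what replaces your unproven ``upgrading'' step, and the fact that the $\exp$ bound comes from sampling rather than soundness is what resolves your ``main obstacle''.
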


\begin{claim} \label{claim:good-gamma-for-every-t-and-delta}
    Let \(\zeta,\tau > 0\) and let \(t \in X(d)\). Then there exists some \(\gamma = \frac{24}{23} 24^i \zeta\) for \(i=1,2,\dots,\varepsilon^{-10}\) and some \(L \subseteq \overline{L}_{\tau,\zeta}(t)\) such that \(L\) is \(\gamma\)-dense, \(\abs{L} \leq \frac{1}{\tau^3}\) and \(23\gamma\)-separated.
\end{claim}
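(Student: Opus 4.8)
The proof is a scale‑selection argument on the finite‑packing metric space $\bigl(\overline{L}_{\tau,\zeta}(t),\dist\bigr)$, in the spirit of the net‑with‑a‑gap trick of \cite{DinurHKLT2018}. If $\overline{L}_{\tau,\zeta}(t)=\emptyset$ the claim is vacuous, so assume it is nonempty. I would first record two ingredients.

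\emph{Ingredient 1 (a uniform size bound): any $24\zeta$-separated subset $L'\subseteq\overline{L}_{\tau,\zeta}(t)$ has $\abs{L'}\le 1/\tau^3$.} Indeed, for $g_1,g_2\in\overline{L}_{\tau,\zeta}(t)$ with $\dist(g_1,g_2)>24\zeta$, a uniformly random $k$-set $r\subseteq t$ satisfies $\dist(g_1|_r,g_2|_r)>2\zeta$ except with probability $\exp(-\Omega(\zeta k))$ by a Chernoff bound (using $d=\abs t\gg k$), so $\supp_\zeta(g_1)\cap\supp_\zeta(g_2)$, and hence $A_\zeta(g_1)\cap A_\zeta(g_2)$, has measure at most $\exp(-\Omega(\zeta k))$. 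Summing $\alpha_\zeta$ over $L'$ and using $\Pr[\bigcup_{g\in L'}A_\zeta(g)]\le 1$ gives $\abs{L'}\tau\le 1+\abs{L'}^2\exp(-\Omega(\zeta k))$. Since $k\ge\exp(\poly(1/\varepsilon))$ dominates $1/\tau$ in the parameter regime at hand, no subfamily of size $\lceil 2/\tau\rceil$ can satisfy this inequality, so $\abs{L'}<2/\tau\le 1/\tau^3$.

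\emph{Ingredient 2 (pigeonhole over radii).} Let $P(\rho)$ denote the maximum size of a $\rho$-separated subset of $\overline{L}_{\tau,\zeta}(t)$; it is integer-valued, non-increasing, $\ge 1$, and $\le 1/\tau^3$ for $\rho\ge 24\zeta$ by Ingredient 1. Evaluate $P$ at the radii $\sigma_j:=24^j\zeta$ for $j=1,\dots,\varepsilon^{-10}+1$. Since $23<24$ we have $\sigma_i<\sigma_{i+1}/23<\sigma_{i+1}$, so monotonicity gives $P(\sigma_i)\ge P(\sigma_{i+1}/23)\ge P(\sigma_{i+1})$ for each $i$. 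Were $P(\sigma_{i+1}/23)>P(\sigma_{i+1})$ for every $i\in\{1,\dots,\varepsilon^{-10}\}$, telescoping would give $P(\sigma_1)\ge P(\sigma_{\varepsilon^{-10}+1})+\varepsilon^{-10}\ge 1+\varepsilon^{-10}$, contradicting $P(\sigma_1)\le 1/\tau^3\le\varepsilon^{-10}$ (valid for the $\tau=\poly(\varepsilon)$ used in the proof; otherwise one simply takes more scales). Hence there is $i\in\{1,\dots,\varepsilon^{-10}\}$ with $P(\rho/23)=P(\rho)$, where $\rho:=\sigma_{i+1}=24^{i+1}\zeta$.

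\emph{Conclusion.} Put $\gamma:=\rho/23=\tfrac{24}{23}24^i\zeta$ and let $L$ be any maximal $\rho$-separated subset of $\overline{L}_{\tau,\zeta}(t)$; then $\abs L=P(\rho)\le 1/\tau^3$ and $L$ is $\rho=23\gamma$-separated. Moreover $L$ is $\gamma$-dense in $\overline{L}_{\tau,\zeta}(t)$: if some $h$ had $\dist(h,L)>\gamma=\rho/23$, then $L\cup\{h\}$ would be $(\rho/23)$-separated, forcing $P(\rho/23)\ge P(\rho)+1$ and contradicting the choice of $\rho$. This gives the claim. I expect Ingredient 1 to be the only genuine obstacle: one must fix the constant ($24$) in "far globally $\Rightarrow$ far on a random $k$-subset" and verify that the error $\exp(-\Omega(\zeta k))$ is negligible against $\poly(1/\tau)$ in the actual parameter regime — which is exactly where $k\ge\exp(\poly(1/\varepsilon))$ and the sampling properties of the complete complex $\Delta(t)$ are used. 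Once $P$ is known to be finite and bounded on $[24\zeta,\infty)$, Ingredient 2 and the conclusion are a routine pigeonhole.
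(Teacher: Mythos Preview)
Your approach is essentially the paper's: both use the scale-selection (``net with a gap'') trick from \cite{DinurHKLT2018}, bounding the $24\zeta$-packing number by $1/\tau^3$ via near-disjointness of the $A_\zeta(g)$'s (the paper isolates this as \pref{claim:lists-are-short}) and then pigeonholing over geometrically growing radii $24^j\zeta$. One small slip: in your conclusion you write ``let $L$ be any maximal $\rho$-separated subset\ldots then $|L|=P(\rho)$'', but a maximal separated set need not have maximum cardinality; you should instead take $L$ to be a $\rho$-separated set of size $P(\rho)$, after which your density argument (that $P(\rho/23)=P(\rho)$ forces $L$ to be $(\rho/23)$-dense) goes through cleanly --- and is in fact slightly more direct than the paper's, which chains density through a nested sequence $L_1\supseteq L_2\supseteq\cdots$ and sums the geometric series $\sum_{j\le i}24^j\zeta$.
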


\begin{proof}[Proof of \pref{prop:good-parameters}]
    By \pref{claim:edge-sampling} there are \(1-\exp(-\poly(\varepsilon)\frac{d_1}{k})\) faces \(t \in X(d)\) such that \(\agr_{\mathcal{D}}(\mathcal{F}_t) \geq \varepsilon - \varepsilon^{100}\) (we use \pref{claim:edge-sampling} on the set of \(\set{r_i}\) such that \(\set{f_{r_i}}\) pass the agreement test).
    By \pref{claim:good-tau-and-delta-for-every-t}, for every such \(t \in X(d)\) there exists \[\tau = \varepsilon^2(1-\varepsilon^{50})^i, \zeta = 3\eta 20^i\] (for \(i \in \set{1,2,\dots,\frac{1}{\varepsilon^{10}}}\)), and a list \(L'(t) \subseteq \overline{L}_{\tau,\zeta}(t)\) as in \pref{claim:good-tau-and-delta-for-every-t}. 
    
    Moreover, by \pref{claim:good-gamma-for-every-t-and-delta} there exists some 
    \[\gamma' \in [\frac{24}{23} 24\zeta,\frac{24}{23} 24^{\frac{1}{\varepsilon^{10}}}\zeta]\]
    and \(\overline{L}(t) \subseteq \overline{L}_{\tau,\zeta}(t)\) such that \(\overline{L}(t)\) has length \(\leq \poly(1/\varepsilon)\), \(\overline{L}(t)\) is \(\gamma'\)-dense in \(\overline{L}_{\tau,\zeta}(t)\) and \(23\gamma'\)-separated.

    Thus by the pigeonhole principle, there exists \(\tau,\zeta,\gamma',\ell\) and a set \(N_d = N_d(\tau,\zeta,\gamma',\ell) \subseteq X(d)\) of probability \(\prob{N_d} \geq \poly(\varepsilon)\), such that for every \(t \in X(d)\) there exists \(L'(t) \subseteq \overline{L}_{\tau,\zeta}(t)\) as in \pref{claim:good-gamma-for-every-t-and-delta} and \(\overline{L}(t) \subseteq \overline{L}_{\tau,\zeta}(t)\) of size \(\Abs{\overline{L}(t)}=\ell\), \(\overline{L}(t)\) is \(\gamma\)-dense in \(\overline{L}_{\tau,\zeta}(t)\) and \(23\gamma\)-separated.

    Let us define \(M_i \subseteq X(d_i)\) be the set of \(s \in X(d)\) such that there exists some \(t \in N_d, t\supset s\) such that:
    \begin{enumerate}
        \item \(s \subseteq t\) samples \(L'(t) \cup \overline{L}(t)\) well.
        \item \(L'(t)\) is \(20\zeta\)-dense in \(\overline{L}_{\tau-\varepsilon^{100},\zeta}(s)\).
        \item If \(i=2,3\) and \(i' < i\), then \(\Prob[u \subseteq s]{M_{i'}} \leq \exp(-\poly(\varepsilon)\frac{d_1}{k}))\) (This definition is recursive but \(M_i\) is always well defined).
    \end{enumerate}
    We show that for every \(s \in M_i\) the proposition holds for \(\tau,\zeta,\ell\) and \(\gamma=2\gamma'\). We take \(\overline{L}(s)=\overline{L}(t)|_s\) for some arbitrary \(t \in N_d\) that contains \(s\) and such that the items above holds for this \(t\). By the fact that \(\overline{L}(t)\) is sampled well we have that \(\overline{L}(s) \subseteq \overline{L}_{\tau-\varepsilon^{100},\zeta}(t)\). Moreover if \(\overline{L}(t)\) is \(23\gamma'=11.5\gamma\)-separated and in particular it holds that \(11\gamma'\)-separated (and in particular, all restrictions indeed result in distinct functions, i.e. \(\Abs{\overline{L}(s)}=\Abs{\overline{L}(t)|_s}=\ell\)). 
    
    Next we show density in \(\overline{L}_{\tau-\varepsilon^{100},\zeta}(s)\). Note that \(L'(t)|_s\) is \(20\zeta\)-dense in \(\overline{L}_{\tau-\varepsilon,\zeta}(s)\), and \(\overline{L}(t)\) is \(\gamma'\)-dense in \(L'(t)\) (not to say that this is a subset of \(L'(t)\), just that for every \(g' \in L'(t)\) there exists some \(g \in \overline{L}(t)\) such that \(g' \overset{1-\gamma'}{\approx} g\)). Thus by the fact that \(s\) samples \(L'(t) \cup \overline{L}(t)\) well, this implies that \(\overline{L}(t)\) is \(\gamma'+\varepsilon^{100}\)-dense in \(L'(t)\). Hence, it holds that \(\overline{L}(t)\) is \(\gamma'+\varepsilon^{100}+20\zeta\leq \gamma\)-dense in \(\overline{L}_{\tau-\varepsilon^{100},\zeta}(s)\).

    Finally, we show that when \(i=2,3\) and \(i' \leq i\), it holds that
    \(P_s(i')\) from the definition of the proposition has size \(1-\exp(-\Omega(\poly(\varepsilon)\frac{d_1}{k}))\). Let us go one by one:
    \begin{enumerate}
        \item For every \(t\), the fraction of \(u \subseteq s\) that don't sample \(\overline{L}(s)\) well is \(\exp(-\Omega(\poly(\varepsilon)\frac{d_1}{k}))\) by \pref{claim:most-sets-sample-well}. %
        \item The probability that \(u \subseteq s\) is in \(M_{i'}\) is  \(\exp(-\Omega(\poly(\varepsilon)\frac{d_1}{k}))\). Let us show that in this case \(\overline{L}(t)|_s\) is \(2\gamma\)-dense. \(u \in M_{i'}\), therefore there is some \(\overline{L}(u)\) of size \(\ell\) that is \(\gamma\)-dense in \(\overline{L}_{\tau-\varepsilon^{100},\zeta}(u)\), and \(11\gamma\)-separated (by what we already proved above). Thus if we show that for every \(g \in \overline{L}(u)\) there is some \(g' \in \overline{L}(t)|_s\) such that \(g \overset{1-\gamma}{\approx} g'\) it will follow that \(\overline{L}(t)|_s\) is \(2\gamma\)-dense. By \pref{claim:injective-function}, the fact that \(\overline{L}(u)\) is dense in \(\overline{L}(t)|_s\) and the fact that both are \(2\gamma\)-separated, there is an injective function \(\phi:\overline{L}(t)|_s \to \overline{L}(u)\) where \(\phi(g') = g\) if and only if \(g \overset{1-\gamma}{\approx} g'\). As both lists have the same size, this shows that this is a surjection, i.e. that for every \(g \in \overline{L}(u)\) there is some \(g' \in \overline{L}(t)|_s\).
    \end{enumerate}
    
    To conclude, let us show that \(\prob{M_i} \geq 1- \exp(-\Omega(\poly(\varepsilon)\frac{d_1}{k})\). Let us begin with \(M_1\). Let \(B_1\) be the fraction of \(s\)'s such that \(\Prob[t \supseteq s]{N_d} \leq \poly(\varepsilon)\). Its fraction is \(\prob{B_1} = O(\frac{d_1}{d\poly(\varepsilon)}) = \exp(-\poly(\varepsilon)\frac{d_1}{k})\) by \pref{cor:good-sampler-from-expansion}. Let \(B_2 \subseteq X(d_i)\) be the event that more than \(\frac{1}{3}\) of the \(t \supseteq s, t\in N_d\) have:
    \begin{enumerate}
        \item Either \(s\) doesn't sample \(\overline{L}(t) \cup L'(t)\) well.
        \item \(L'(t)\) is \(20\zeta\)-dense in \(\overline{L}_{\tau-\varepsilon^{100},\zeta}(s)\).
    \end{enumerate}
    On the one hand, by \pref{claim:most-sets-sample-well} and \pref{claim:good-tau-and-delta-for-every-t} there are at most \(\exp(-\poly(\varepsilon)\frac{d_1}{k})\) such pairs. On the other hand, every \(s \in B_2 \setminus B_1\) contributes \(\poly(\varepsilon)\) such pairs. By Markov's inequality, \(\prob{B_2 \setminus B_1} = \exp(-\poly(\varepsilon)\frac{d_1}{k})\). Hence, if \(s \notin B_1 \cup B_2 = B_1 \cup (B_2 \setminus B_1)\), then \(s \in M_1\) so \(\prob{M_1} =  1 - \exp(-\poly(\varepsilon)\frac{d_1}{k})\).

    Continuing with \(M_2\). The same arguments for \(M_1\) show that \(1-\exp(-\poly(\varepsilon)\frac{d_1}{k})\) of the \(s \in X(d_2)\) sample \(\overline{L}(t)\cup L'(t)\) well and that \(L'(t)\) is \(20\zeta\)-dense in \(\overline{L}_{\tau-\varepsilon^{100},\zeta}(s)\). So we concentrate on the last property. The probability that \(s \in X(d_2)\) contains more than \(\\prob{M_1}^{1/2}\)-fraction of \(u \notin M_1\) is at most \(\left ( \prob{M_1} \right )^{1/2}\) (by Markov's inequality), and this is also \(\exp(-\poly(\varepsilon)\frac{d_1}{k})\), albeit with a slightly worse \(\poly(\varepsilon)\) than before, since since \(\prob{M_1}=\exp(-\poly(\varepsilon)\frac{d_1}{k})\). Hence \(\prob{M_2} =1-\exp(-\poly(\varepsilon)\frac{d_1}{k})\). The proof for \(M_3\) is similar.
\end{proof}

\begin{proof}[Proof of \pref{claim:good-gamma-for-every-t-and-delta}]
    Consider the following method, resembling \cite{DinurHKLT2018}. Let \(L_1 \subseteq \overline{L}_{\tau,\zeta}(t)\) be a maximal \(24\zeta\)-separated list. Consider the following chain \(L_1 \supseteq L_2 \supseteq L_3 \dots\) where every \(L_i\) is a maximal \(24^i \zeta\)-separated set inside \(L_{i-1}\). Obviously, the sizes of these sets are monotonically decreasing (but they are always non-empty), therefore there must be some \(j\) such that \(L_j = L_{j+1}\). Take as \(L=L_i\) for \(i\) being the first such \(j\).

    Recall that \(\gamma_i = \frac{24}{23} 24^i\). By definition, \(L_{i+1}\) is \(23\gamma_i = 24^{i+1}\zeta\)-separated. Let us see that is it also \(26\cdot 24^i \zeta\)-dense in \(\overline{L}_{\tau,\zeta}(t)\). First note that every \(L_j\) is \(24^i \zeta\)-dense in \(L_{j-1}\) (if it weren't, we could have added to \(L_j\) another \(g \in L_{j-1}\setminus L_j\) that is far from the current \(L_j\), contradicting maximality). Thus for every \(g \in \overline{L}_{\tau,\zeta}(t)\) there is a sequence \(g=g_0,g_1,g_2,\dots, g_i\) where \(g_j \in L_j\) for \(j>0\), such that \(\dist(g_{j-1},g_j)\leq 24^j \zeta\). Thus concluding that
    \[\dist(g,g_i) \leq \sum_{j=1}^i 24^j  \zeta = \frac{24}{23}(24^i-1)\zeta \leq \frac{24}{23} \cdot 24^i = \gamma_i.\]

    Finally, we must show that \(i \leq \frac{1}{\tau^3}\). If we show that \(\Abs{L_1} \leq \frac{1}{\tau^3}\) this will follow, since \(\Abs{L_i}\geq 1\) and for \(j \leq i\), \(\Abs{L_j}\leq \Abs{L_{j-1}}-1\). We distinguish this part of the proof to a separate claim since we will need it in also later on.
    \begin{claim} \label{claim:lists-are-short}
        let \(\tau \gg \exp(-\Omega(\zeta^2 k))\) and let \(\zeta\) and \(\gamma \geq 3\zeta\). Let \(L \subseteq \overline{L}_{\tau,\zeta}(t)\) be a \(\gamma\)-separated set. Then \(\Abs{L}\leq\frac{1}{\tau^3}\).
    \end{claim}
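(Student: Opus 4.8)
The plan is to prove \pref{claim:lists-are-short} by a second moment computation on the "support multiplicity" of a random $k$-face inside $t$, turning the fact that separated functions have nearly disjoint $\zeta$-supports into a cardinality bound. First I would record the only property of membership in $\overline{L}_{\tau,\zeta}(t)$ that is needed: every $g \in \overline{L}_{\tau,\zeta}(t)$ has a large $\zeta$-support inside $t$. Indeed, because the agreement distribution is symmetric, the marginal of a single query is the uniform distribution over the $k$-faces $r\subseteq t$, and $\alpha_\zeta(g)=\Prob[\{r_1,r_2\}]{\{r_1,r_2\}\in A_\zeta(g)}\le \Prob[r_1]{r_1\in\supp_\zeta(g)}$, so for $r$ uniform over $k$-faces of $t$ we get $\Prob[r]{r\in\supp_\zeta(g)}\ge \alpha_\zeta(g)\ge\tau$.

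Next I would show that $\gamma$-separated functions have almost disjoint supports. Fix $g_1\ne g_2\in L$, so $\dist(g_1,g_2)>\gamma\ge 3\zeta$. If $r\in\supp_\zeta(g_1)\cap\supp_\zeta(g_2)$ then $f_r\overset{1-\zeta}{\approx}g_1|_r$ and $f_r\overset{1-\zeta}{\approx}g_2|_r$, hence $\dist(g_1|_r,g_2|_r)\le 2\zeta$. But $r$ is a uniformly random $(k+1)$-subset of $t$, and $\dist(g_1|_r,g_2|_r)$ is exactly the empirical density on $r$ of the disagreement set $A=\{v\in t: g_1(v)\ne g_2(v)\}$, whose true density is $\dist(g_1,g_2)>3\zeta$. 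A hypergeometric tail bound (Hoeffding/Chvátal for sampling without replacement, using $|t|\ge k^3\gg k$) therefore gives
\[
\Prob[r]{r\in\supp_\zeta(g_1)\cap\supp_\zeta(g_2)}\le \exp\bigl(-\Omega(\zeta^2 k)\bigr).
\]
This is precisely where the hypothesis $\gamma\ge 3\zeta$ enters: it leaves a gap of at least $\zeta$ between the threshold $2\zeta$ and the true density $>3\zeta$, which is what makes the concentration bound nontrivial.

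Finally I would combine these via Cauchy--Schwarz. Write $N=\abs{L}$ and, for a uniformly random $k$-face $r\subseteq t$, let $Z(r)=\#\{g\in L:\ r\in\supp_\zeta(g)\}$. Summing the first estimate over $g\in L$ gives $\Ex[r]{Z(r)}\ge N\tau$, and summing the second over unordered pairs gives $\Ex[r]{\binom{Z(r)}{2}}\le \binom{N}{2}\exp(-\Omega(\zeta^2 k))$. Since $Z\le N$ pointwise, $\Ex{Z^2}=\Ex{Z}+2\Ex{\binom{Z}{2}}\le N+N^2\exp(-\Omega(\zeta^2 k))$, and Jensen's inequality $\Ex{Z}^2\le\Ex{Z^2}$ yields $N^2\tau^2\le N+N^2\exp(-\Omega(\zeta^2 k))$, i.e.\ $\tau^2\le \tfrac1N+\exp(-\Omega(\zeta^2 k))$. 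The hypothesis $\tau\gg\exp(-\Omega(\zeta^2 k))$ lets us absorb the last term (it is $\le\tau^2/2$), so $\tfrac1N\ge \tau^2/2$, hence $N\le 2/\tau^2\le \tau^{-3}$ since $\tau$ is small. The step I expect to require the most care is the passage to the second moment: a plain union bound over pairs only controls the measure of the union of the supports and does not bound $N$ at all (pairwise almost-disjoint large sets can still be numerous), so it is essential to track $Z$ and its variance; the hypergeometric bound and the unwinding of the definitions of $\supp_\zeta,\ A_\zeta,\ \alpha_\zeta$ are then routine.
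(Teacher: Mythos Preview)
Your argument is correct and rests on the same concentration input as the paper's proof (that $\gamma\ge 3\zeta$-separated functions have $\zeta$-supports whose pairwise intersection has measure $\le \exp(-\Omega(\zeta^2 k))$), but you package the counting differently. The paper does \emph{not} use a second moment on the multiplicity $Z(r)$; instead it works directly with the agreement edge-sets $A_\zeta(g)$ (sets of \emph{pairs} $(r_1,r_2)$, so no passage to a marginal is needed) and argues by contradiction: assuming $|L|>\tau^{-3}$, it extracts a subset $L'$ of size exactly $\tau^{-3}+1$, forms the disjoint residuals $D(g)=A_\zeta(g)\setminus\bigcup_{g'\ne g}A_\zeta(g')$, and shows each has mass $\ge \tau - |L'|\cdot O(\tau^{10}) \ge \tau - O(\tau^7)$, so their sum exceeds $1$. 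Fixing $|L'|$ in advance is precisely what lets the subtracted term stay small, sidestepping the issue you flagged about union bounds going vacuous for large $N$.

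Your Cauchy--Schwarz route has two mild advantages: it avoids the proof by contradiction and the artificial choice of subset size, and it actually yields $|L|\le 2/\tau^2$, a notch sharper than $\tau^{-3}$. The paper's route is slightly more direct once one has the disjoint sets, and by staying with $A_\zeta(g)$ it never needs your (correct) observation that the single-query marginal is uniform.
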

\end{proof}

\begin{proof}[Proof of \pref{claim:lists-are-short}]
    Assume towards contradiction that \(|L_1|>\frac{1}{\tau^3}\) and take some subset \(L'\subseteq L_1\) of size \(\frac{1}{\tau^3}+1\). As it is at least \(3\zeta\)-separated, for every distinct \(g,g' \in L'\), the fraction of \(r \in X(k)\) such that \(\dist(g|_r,g'|_r) \leq 2\zeta\) (i.e. \(g|_r \overset{1-2\zeta}{\approx} g'|_r\)) is at most \(\exp(-\Omega(\zeta^2 k))\leq \tau^{10}\). This shows that there are almost no intersection between the supports of \(g\) and \(g'\), since \(\supp_\zeta (g) \cap \supp_{\zeta}(g') \subseteq \sett{r \in X(k)}{\dist(g|_r,g'|_r) \leq 2\zeta}\). That is, \(\prob{\supp_\zeta (g) \cap \supp_{\zeta}(g')} \leq \exp(-\Omega(\zeta^2 k))\leq \tau^{10}\).
    
    Denote by 
    \[D(g) = A_{\zeta}(g) \setminus \bigcup_{g' \in L', g' \ne g}A_{\zeta}(g').\]
    Then in particular 
    \begin{align}
        \prob{D(g)} &\geq \prob{A_{\zeta}(g)} - \sum_{g' \in L', g' \ne g}\prob{A_\zeta (g) \cap A_\zeta (g')} \\
        &\geq \prob{A_{\eta}(g)} - \sum_{g' \in L', g' \ne g} 2 \prob{\supp_{\zeta}(g) \cap \supp_{\zeta}(g')} \\
        &\geq \tau - 2\tau^6.
    \end{align}
    On the other hand, the sets \(\set{D(g)}_{g \in L'}\) are mutually disjoint, namely
    \(1 \geq \sum_{g \in L'} \prob{D(g)} \geq \frac{\tau-2\tau^6}{\tau^3+1}>1\) which is a contradiction.
\end{proof}

\subsection{Proof of \pref{claim:good-tau-and-delta-for-every-t}}
We start with a weaker claim than \pref{claim:good-tau-and-delta-for-every-t}, saying that \(\overline{L}_{\tau^3,\zeta}(t)\) restricted to \(s \subseteq t\) is dense in \(\overline{L}_{\tau,\zeta}(s)\) for most \(s\). This claim is where we use the agreement soundness of the original distribution.

\begin{claim} \label{claim:density-of-smaller-tau}
    Fix \(t\),\(\tau\), \(\zeta\) and \(\gamma \geq 3\zeta\). Let \(L \subseteq \overline{L}_{\tau^3,\zeta}(t)\) be a \(\gamma\)-dense and \(\gamma\)-separated list. Then
    \begin{equation*}
        \Prob[s \subseteq t,s \in X(d_1)]{L|_s \text{ is \((\gamma+4\zeta)\)-dense in } \overline{L}_{\tau,\zeta}(s)} \geq 1-\exp(-\Omega(\frac{d_1}{k})).
    \end{equation*}
\end{claim}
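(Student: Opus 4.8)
The plan is to prove the contrapositive by a censoring (rerandomization) argument combined with a concentration step. Call $s\in X(d_1)$, $s\subseteq t$, \emph{bad} if there is a witness $h_s:s\to\Sigma$ with $\alpha_\zeta^{\mathcal F_s}(h_s)\geq\tau$ but $\dist(h_s,g|_s)>\gamma+4\zeta$ for every $g\in L$, and let $B$ be the set of bad $s$; I want $\Prob[s]{B}\leq\exp(-\Omega(d_1/k))$. First I would set up the censored ensemble: let $R=\bigcup_{g\in L}\supp_{\gamma+2\zeta}(g)\cap 2^{t}\subseteq X(k)$, pass to the enlarged alphabet $\Sigma\times\{0,1\}$, and define $\mathcal F'$ by $f'_r=(f_r,0)$ for $r\notin R$ and $f'_r=(\mathrm{junk}_r,1)$ for $r\in R$, with $\mathrm{junk}_r$ drawn independently and uniformly. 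Then an accepting pair of $\mathcal F'$ consists, up to a $|\Sigma|^{-\sqrt k}$ additive error, of two $k$-faces both \emph{outside} $R$ on which the original ensemble $\mathcal F$ already agreed exactly.

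The heart of the argument is a two-sided estimate on $\agr_{\mathcal D}(\mathcal F'_t)$. For the lower bound, fix a bad $s$ and its witness $h_s$: a Chernoff bound for uniform $k$-subsets of the $(d_1{+}1)$-element set $s$ (exactly as in the proof of \pref{claim:lists-are-short}) together with the triangle inequality shows that all but a $|L|\exp(-\Omega(\zeta^2k))$ fraction of $r\in\supp_\zeta^{\mathcal F_s}(h_s)$ are disjoint from $R$ — indeed $r\in\supp_\zeta(h_s)\cap R$ forces $\dist(h_s|_r,g|_r)\leq\gamma+3\zeta$ for some $g\in L$, which for $\dist(h_s,g|_s)>\gamma+4\zeta$ happens for only an $\exp(-\Omega(\zeta^2k))$ fraction of $r$. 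Since $\gamma$-separation of $L$ and \pref{claim:lists-are-short} give $|L|=\poly(1/\varepsilon)$ while $\tau=\poly(\varepsilon)$, this fraction is $\ll\tau$, so the accepting pairs of $h_s$ that avoid $R$ still have mass $\geq\tau/2$; hence $\agr_{\mathcal D}(\mathcal F'_s)\geq\tau/2$ for every bad $s$, and averaging over $s\subseteq t$ yields $\agr_{\mathcal D}(\mathcal F'_t)\geq\tfrac12\tau\cdot\Prob[s]{B}$. For the upper bound, suppose $\agr_{\mathcal D}(\mathcal F'_t)\geq\rho_0$, where $\rho_0=\poly(\varepsilon)\geq\poly(1/k)$ is chosen with $\tfrac12\rho_0^{\,e}>\tau^3$ and $\rho_0<\tau/4$ (possible for $\tau=\poly(\varepsilon)$ small enough given the soundness exponent $e=O(1)$). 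Applying the assumed $(\eta,\poly(1/k),e)$-soundness of the test on $\Delta_k(|t|)$ to $\mathcal F'_t$ produces a function $g^*:t\to\Sigma$ — the $\{0,1\}$-tag forces the soundness function to agree only with untouched faces, so it is $\Sigma$-valued where it matters — with $\alpha_\zeta^{\mathcal F_t}(g^*)\geq\tfrac12\rho_0^{\,e}-o(1)>\tau^3$ and whose accepting $\eta$-support misses $R$. Thus $g^*\in\overline{L}_{\tau^3,\zeta}(t)$, so by $\gamma$-density of $L$ there is $g\in L$ with $\dist(g,g^*)\leq\gamma$; but then a typical $r\in\supp_\eta(g^*)$ satisfies $\dist(f_r,g|_r)\leq\eta+\gamma+o(1)\leq\gamma+2\zeta$ (using $\eta\leq\zeta/3$), i.e. $r\in R$, contradicting that the accepting support of $g^*$ misses $R$. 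Hence $\agr_{\mathcal D}(\mathcal F'_t)<\rho_0$.

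So far the two estimates only give $\Prob[s]{B}<2\rho_0/\tau=\poly(\varepsilon)$; the sharpening to $\exp(-\Omega(d_1/k))$ comes from concentration of $\agr_{\mathcal D}(\mathcal F')$ over sub-faces. By \pref{claim:edge-sampling} applied to $\mathcal F'$ (valid since $d_1\geq k^3$ dominates $(\mathrm{arity})\cdot k$), for all but an $\exp(-\poly(\tau)\,d_1/k)=\exp(-\poly(\varepsilon)\,d_1/k)$ fraction of $s\subseteq t$ one has $\agr_{\mathcal D}(\mathcal F'_s)\leq\agr_{\mathcal D}(\mathcal F'_t)+\tau/4<\rho_0+\tau/4<\tau/2$. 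Since every bad $s$ has $\agr_{\mathcal D}(\mathcal F'_s)\geq\tau/2$, the set $B$ is contained in this exceptional set, which gives $\Prob[s]{B}\leq\exp(-\poly(\varepsilon)\,d_1/k)$ as claimed.

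The main obstacle I anticipate is the simultaneous calibration of the censoring radius and the thresholds: the radius $\gamma+2\zeta$ must be small enough relative to $\gamma+4\zeta$ that the bad witness $h_s$ escapes $R$, yet large enough relative to $\gamma+\eta$ that $\gamma$-density of $L$ forces $R$ to swallow the $\zeta$-support of \emph{every} member of $\overline{L}_{\tau^3,\zeta}(t)$ — without density this last point fails and the claim is simply false — and then $\rho_0$ must fit between $(2\tau^3)^{1/e}$ and $\tau/4$, which is exactly where the exponent $3$ in ``$\tau^3$'' trades against the $O(1)$ soundness exponent. The remaining ingredients — the sampling/Chernoff tail bounds, the alphabet-tag bookkeeping, and bounding $|L|$ via \pref{claim:lists-are-short} — are routine.
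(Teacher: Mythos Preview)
Your proposal is correct and follows essentially the same approach as the paper: censor the ensemble around $R=\bigcup_{g\in L}\supp_{\gamma+2\zeta}(g)$, use soundness on the censored ensemble at the level of $t$ to bound the surviving agreement (the paper picks the concrete threshold $2\tau^{1.5}$ where you introduce an abstract $\rho_0$), observe that any witness $h_s$ that is $(\gamma+4\zeta)$-far from $L|_s$ keeps at least half of $A_\zeta(h_s)$ outside $R$, and then invoke \pref{claim:edge-sampling} to conclude that only an $\exp(-\poly(\varepsilon)d_1/k)$ fraction of $s$ can be bad. The only cosmetic differences are that the paper phrases the censoring as ``erasing'' rather than tagging with $\{0,1\}$, and skips your intermediate averaging bound, going directly from the global estimate to the sampling step.
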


\begin{proof}[Proof of \pref{claim:density-of-smaller-tau}]
    The claim will follow from these assertions.
    \begin{enumerate}
        \item If \(L\) is \(\gamma\)-dense and has size \(\poly(1/\varepsilon)\) in \(\overline{L}_{\tau^3,\zeta}(t)\) then 
        \[\Prob[r_1,r_2 \subseteq t, r_1,r_2 \sim D]{f_{r_1}=f_{r_2} \ve r_1,r_2 \notin \bigcup_{g \in L}\supp_{\gamma+2\zeta}(g)} \leq 2\tau^{1.5}.\]
        \item This quantity is sampled well. That is, let \(P = \sett{(r_1,r_2)}{f_{r_1}=f_{r_2} \ve r_1,r_2 \notin \bigcup_{g \in L}\supp_{\gamma+2\zeta}(g)}\). Then for \(1-\exp(-\Omega(\poly(\varepsilon)\frac{d_1}{k}))\) of the \(s \subseteq t\) it holds that
        \[\Abs{\Prob[r_1,r_2 \subseteq s]{P} - \prob{P}} \leq \varepsilon^{100}.\]
    \end{enumerate}
    Let us explain why. If there exists some \(h \in \overline{L}_{\tau,\zeta}(s)\) such that \(\dist(h,g) \geq \gamma+4\zeta\) for all \(g \in L|_s\), then \(A_{\zeta}(h) \geq \tau\) and moreover it holds that \(\prob{\supp_\zeta(h) \cap \supp_{\gamma+2\zeta}(g)} = \exp(-\poly(\varepsilon)k) \ll \poly(\varepsilon)\) (since whenever \(f_r\overset{1-\zeta}{\approx}h|_r\) and \(f_r\overset{1-\gamma-2\zeta}{\approx}g|_r\) then it holds that \(\dist(g|_r,h|_r)\leq \gamma+3\zeta\) which happens with \(\exp(-\poly(\varepsilon)k)\) probability at most). By \pref{claim:lists-are-short}, \(|L|=\poly(1/\varepsilon)\) so all these intersections are negligible. In particular, more than (say) half of the edges \((r_1,r_2) \in A_{\zeta}(h)\) have that \(r_1,r_2 \notin \bigcup_{g \in L}\supp_{\gamma+3\zeta+\varepsilon^{100}}(g)\). Thus \(\Prob[r_1,r_2 \subseteq s, r_1,r_2 \sim D]{P} \geq \frac{1}{2} A_\zeta(h) \geq \frac{1}{2}\tau\). On the other hand, by the first item \(\prob{P} \leq 2\tau^{1.5}\). By the second item \(\Prob[r_1,r_2 \subseteq s]{P} \leq \prob{P} + \varepsilon^{100}\leq 3\tau^{1.5}\) with probability as high as \(\exp(-\poly(\varepsilon)\frac{d_1}{k})\) and hence this cannot occur for more than \(\exp(-\poly(\varepsilon)\frac{d_1}{k})\).

\medskip

    The second item follows directly from \pref{claim:edge-sampling}. The main effort is to show the first item. Assume that it doesn't hold. Define \(\mathcal{F}'_t\) by erasing \(f'_r\) for every \(r \in \bigcup_{g \in L}\supp_{\gamma+3\zeta+\varepsilon^{100}}(g)\) (and taking \(f'_r=f_r\) for the rest of the faces)\footnote{Formally, when we ``erase'' a function, we actually set the \(f'_r\) to be such that it agrees with no other functions. This can be done easily by extending the alphabet \(\Sigma\) to a large enough set.}. Then it still holds that 
     \[\agr_{\mathcal{D}}(\mathcal{F}'_t) \geq 2\tau^{1.5}.\]
     By the agreement soundness (assuming that \(\tau^{1.5}\geq \varepsilon_0\)), there is some \(h \in \overline{L}_{\tau^3,\eta}(t)\) such that \(A_{\eta}^{\mathcal{F}'}(h) \geq \tau^3\). On the other hand, there must by some \(g \in L\) such that \(\dist(h,g) \leq \gamma\). This implies that \(\prob{\supp_{\eta}^{\mathcal{F}'}(h) \setminus \supp_{\gamma+2\zeta}(g)} \leq \exp(-\poly(\zeta^2 k)) \ll \tau^3\). This is a contradiction since \(A_{\eta}^{\mathcal{F}'}(h) \subseteq \supp_{\eta}^{\mathcal{F}'}(h) \setminus \supp_{3\zeta+\gamma}(g)\) (as we erasing \(g\)'s support).
\end{proof}

\restateclaim{claim:good-tau-and-delta-for-every-t}

\begin{proof}[Proof of \pref{claim:good-tau-and-delta-for-every-t}]
    The proof logic is the following. We construct a sequence of lists \(L_1,L_2,\dots\) where \(L_1 \subseteq \bar{L}_{\tau_1,\zeta_1}(t)\), \(L_2 \subseteq \bar{L}_{\tau_2,\zeta_2}(t)\) and so on. If \(L_m \subseteq \bar{L}_{\tau_m,\zeta_m}(t)\) has the property required in the claim, i.e. that for \(1-\exp(-\poly(\varepsilon) \frac{d_1}{k})\) of the \(u \subseteq t\), \(L_m |_u\) is \(20\zeta_m\) dense inside \(\bar{L}_{\tau_{m+1},\zeta_m}(u)\) we conclude; else we continue to the next list \(L_{m+1}\). What we need to make sure is that there is an \(m \leq \frac{1}{\varepsilon^{10}}\) such that this process concludes with \(L_m\). To do so, we define a potential function \(p:\set{1,2,\dots, \frac{1}{\varepsilon^{10}}} \to [0,1]\) that is monotone non-decreasing (and \(p \leq 1\)), and show that if we don't conclude at step \(m\), then \(p(m+1) \geq p(m) + \varepsilon^{10}\), i.e.\ one of the first \(\frac{1}{\varepsilon^{10}}\) first lists must satisfy the property else \(p(\frac{1}{\varepsilon^{10}}) > 1\). The potential function is 
    \begin{equation} \label{eq:potential-function}
        p(m) = \Prob[r_1,r_2 \subseteq t, r_1,r_2 \sim D]{\bigcup_{g_j \in L_m} A_{9\zeta_{m-1}}(g_j)}.
    \end{equation}
    That is, \(p(m)\) measures how much agreement is explainable by \(L_m\), and we show that while the list \(L_m\) is not dense for most \(u \subseteq t\), then in fact we can find a list \(L_{m+1}\) whose functions `explains' more agreement. Details follow.
    
    Recall that \(\tau_m = \varepsilon^2 (1-\varepsilon^{10})^m\) and that \(\zeta_m = 3\eta \cdot 20^m\). Fix some \(t\). For every \(\tau_m,\zeta_m\) we choose some (arbitrary) \(L^{aux}_{m}(t) \subseteq \overline{L}_{\tau_{m}^3,\zeta_m}(t)\) that is \(3\zeta\)-dense and \(3\zeta\)-separated. We will sequentially construct lists \(L_m=\set{g_1,g_2,\dots,g_m} \subseteq \overline{L}_{\tau_{m},\zeta_{m}}(t)\), for \(m=1,2,\dots\) until \(L_m\) satisfies the claim. The potential function \(p\) shall be defined as in \pref{eq:potential-function} with respect to these lists, once we construct them. %

    For \(m=0\) we take \(L_0 =\emptyset \subseteq \overline{L}_{\tau_0,\zeta_0}(t)\). Let us describe how to construct \(L_m\) using \(L_{m-1}\), provided that \(L_{m-1}\) is not \(\zeta_m=20\zeta_{m-1}\)-dense in \(\overline{L}_{\tau_{m},\zeta_{m-1}}(u)\) for \(1-\exp(-\poly(\varepsilon)\frac{d_1}{k})\) of the \(u \in X(d_1)\):
        \begin{enumerate}
            \item Find some \(u \subseteq t\) such that:
            \begin{itemize}
                \item \(L_{m-1}\) is not \(20\zeta_{m-1}\)-dense in \(\overline{L}_{\tau_{m},\zeta_{m-1}}(u)\).
                \item \(u\) samples \(L^{aux}_{m} \cup L_{m-1}\) well.
                \item \(L^{aux}_{m}|_u\) is \(7\zeta_{m}\)-dense in \(\overline{L}_{\tau_{m},\zeta_{m-1}}(u)\). 
                \item For every \(g \in L^{aux}_{m}(t)\) it holds that \(\Abs{A_{9\zeta_{m-1}}^{\mathcal{F}_t}(g) - A_{9\zeta_{m-1}}^{\mathcal{F}_u}(g|_u) }\leq \varepsilon^{100}\).\footnote{The fraction of \(u \subseteq t\) such that these items do not hold is \(\exp(-\Omega(\poly(\varepsilon)\frac{d_1}{k}))\) by \pref{claim:edge-sampling} and \pref{claim:density-of-smaller-tau}. Hence if \(L_m\) is not \(20\zeta_{m-1}\)-dense for for \(1-\exp(-\Omega(\poly(\varepsilon)\frac{d_1}{k}))\) we can find a \(u \subseteq t\) such that this holds.}
            \end{itemize} 
            \item Find some \(h \in \overline{L}_{\tau_{m},\zeta_{m-1}}(u)\) that is \(20\zeta_{m-1}\)-far from every element in \(L_{m-1}\). There exists such an \(h\) since \(L_{m-1}\) is not \(20\zeta_{m-1}\)-dense in \(\overline{L}_{\tau_{m},\zeta_{m-1}}(u)\).
            \item Take some \(g=g_{m} \in L^{aux}_{m}\) such that \(\dist(g|_u,h_u) \leq 7\zeta_{m-1}\). Set \(L_{m} = L_{m-1} \cup \set{g_{m}}\). This step is possible since \(L^{aux}_{m}(t)|_u\) is \(7\zeta_{m}\)-dense inside \(\overline{L}_{\tau_{m},\zeta_{m}}(u)\).
        \end{enumerate}
    Observe that because \(\dist(g_{m}|_u,h) \leq 7\zeta_{m-1}\) then the fraction of \(r \in X(k)\) such that \(\dist(g_{m}|_r,h|_r) > 8\zeta_{m-1}\) has probability \(\exp(-\Omega(\zeta^2 k)) \ll \varepsilon^{100}\). In addition, when \(f_r \overset{1-\zeta_m}{\approx} h_r\) and \(h_r \overset{1-8\zeta_{m-1}}{\approx} g_{m}|_r\) then \(h_r \overset{1-9\zeta_{m-1}}{\approx} g_{m}\). Hence it holds that \(\prob{A_{9\zeta_{m-1}}(g_{m}|_u)} \geq \prob{A_{\zeta_{m-1}}(h)} - \varepsilon^{100}\). By assumption on \(u\) it holds that \(\prob{A_{9\zeta_{m-1}}(g_{m})} \geq \prob{A_{9\zeta_{m-1}}(g_{m}|_u)} -\varepsilon^{100}\) so
    \[\prob{A_{9\zeta_{m-1}}(g_{m})} \geq \prob{A_{9\zeta_{m-1}}(h)} - 2\varepsilon^{100} \geq \tau_{m-1}-2\varepsilon^{100}\geq \tau_{m}.\]
    That is, \(g_{m} \in \overline{L}_{\tau_{m},\zeta_{m}}(t)\) (and thus every \(L_m \subseteq \overline{L}_{\tau_{m},\zeta_{m}}(t)\)).

    Moreover, note that for every \(j \leq m\) it holds that \(\dist(g_{m}|_u,g_j|_u)\geq \dist(h,g_j|_u)-\dist(g_{m}|_u,h) \geq 13\zeta_{m}\). The face \(u\) samples \(L^{aux}_{m} \cup L_{m-1}\) well, hence \(\dist(g_{m},g_j) \geq \dist(g_{m}|_u,g_j|_u) - \zeta_{m-1} \geq 12\zeta_{m-1}\). This is much larger than \(9\zeta_{m-1} + \zeta_{j-1}\). Hence (as before) it holds that \(\prob{\supp_{9\zeta_{j-1}}(g_j) \cap \supp_{9\zeta_{m-1}}(g_{m})} \ll \varepsilon^{100}\). This implies that \(\prob{A_{9\zeta_{j-1}}(g_j) \setminus \bigcup_{j' \ne j} A_{9 \zeta_{j'-1}}(g_{j'})} \geq \frac{1}{2} \prob{A_{9\zeta_{j-1}}(g_j)}\).
    
    Thus it holds that while \(m \leq \varepsilon^{-30}\), then \(p(m) \geq \frac{1}{2}\sum_{j=1}^m \prob{A_{9\zeta_{j-1}}(g_j)} \geq \frac{m}{2}\varepsilon^{10}\). For \(m > 2\varepsilon^{10}\) this is greater than \(1\) hence the process must stop beforehand.
\end{proof}
\section{Constructing a cover} \label{sec:construction-of-cover}
\begin{figure}
    \centering
    \includegraphics[scale=0.4]{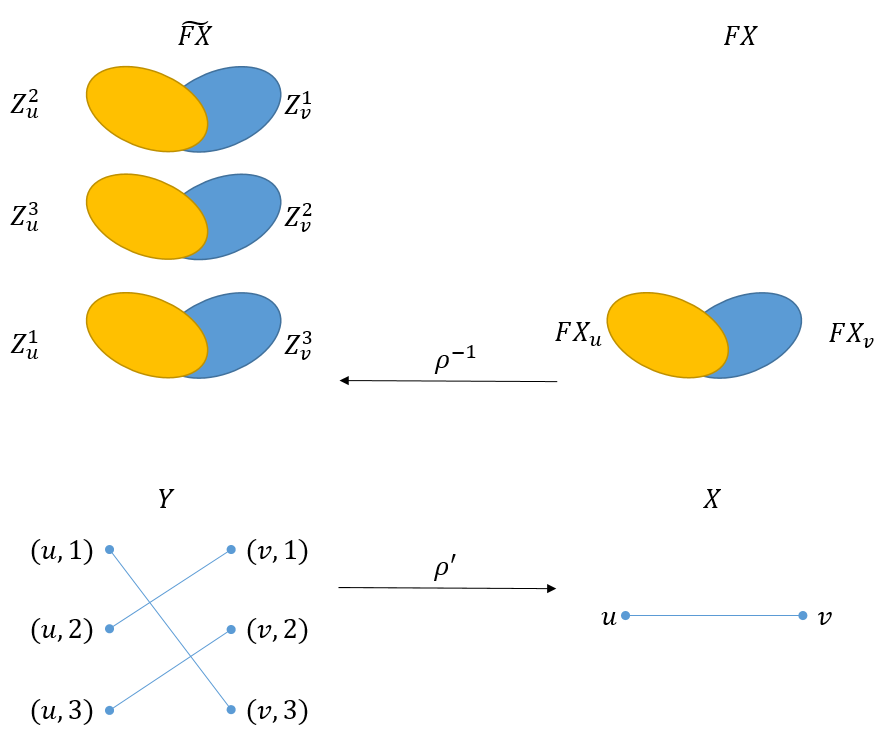}
    \caption{Constructing a cover of \(X\) from a cover of \(\FX{d_1}X\)}
    \label{fig:cover}
\end{figure}

In this subsection we prove \pref{lem:Y-is-an-honest-to-god-cover-of-X}. We do so in two parts: first we construct a simplicial complex \(Y\) and prove that it covers \(X\). In the second part we explicitly construct the isomorphism \(\iota:\FX{d_1}Y \to \widetilde{\FX{d_1}X}\). In addition to proving \pref{lem:Y-is-an-honest-to-god-cover-of-X}, we will use the explicit definition of \(\iota\) for proving \pref{claim:agreement-with-majority} later on.

Let \(X\) be a \(d\)-dimensional clique complex. Let \(d_1\) be such that \(3d_1 \leq d-2\). Let \(\FX{d_1}X=F(X,d_1)\) be as in \pref{def:face-complex} and assume that \(X\) is well connected as in \pref{def:well-connected-complex}. Fix \(\nu:\widetilde{\FX{d_1}X}\to \FX{d_1}X\) to be an \(\ell\)-cover. In this section, we will show how to construct an $\ell$-cover of \(\rho:Y \to X\) such that \(\widetilde{\FX{d_1}X}\cong \FX{d_1}Y\). 

The main idea will be to encode the vertices \(v \in X(0)\) by the links \(\FX{d_1}X_v \subseteq \FX{d_1}X\) (these are the \(\FX{d_1}X_v = F(X_v,d_1)\) as in \pref{def:face-complex} which are contained as sub complexes inside \(\FX{d_1}X\)). We observe that this encoding has the property that if \(v \sim u\) then \(\FX{d_1}X_v \cap \FX{d_1}X_u = \FX{d_1}X_{uv}\). We will use these non-empty intersections to define the edges in the cover $Y$. For every \(r \in X\), let \(Z_r = \nu^{-1}(\FX{d_1}X_r)\subset \widetilde {\FX{d_1}X}\). Using the well-connectedness of \(X\), we will show that
\begin{claim} \label{claim:Z_v-decomposes}~
    \begin{enumerate}
        \item For every non empty \(r \in X^{\leq d_1}\), \(Z_r\) decomposes to exactly \(\ell\) connected-components\footnote{When we say ``connected components'' we mean that the induced complex \(Z_r\) has these \(\ell\) connected components. Of course it may hold that there are paths from \(Z_i\) to \(Z_j\) in \(\widetilde{\FX{d_1}X}\), but they must go through some vertex \(v \in \widetilde{\FX{d_1}X} \setminus Z_r\).} \(Z_r^1,Z_r^2,...,Z_r^\ell \subseteq \widetilde{\FX{d_1}X}\), such that \(\nu\) is an isomorphism between each component and \(\FX{d_1}X_r\), \(\nu|_{Z_r^i}:Z_r^i \overset{\sim}{\to} \FX{d_1}X_r\). 
        \item For every non empty \(r,s \in X^{\leq d_1}\) such that \(r \subseteq s\) there is a permutation \(\pi_{r,s}:[\ell] \to [\ell]\) such that \(Z_{s}^{i} \subseteq Z_{r}^{j}\) if and only if \(\pi_{r,s}(i)=j\).
    \end{enumerate} 
\end{claim}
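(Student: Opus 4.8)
The plan is to reduce both items to a single structural observation: because $X$ is a clique complex, for every non-empty $r\in X^{\le d_1}$ and every vertex $v\in r$ the faces-complex ``links'' nest as \emph{induced} sub-complexes
\[\FX{d_1}X_r\ \subseteq\ \FX{d_1}X_v\ \subseteq\ \FX{d_1}X,\]
and likewise $\FX{d_1}X_s\subseteq\FX{d_1}X_r$ whenever $r\subseteq s$. A priori the adjacency of $\FX{d_1}X_r$ inside $\FX{d_1}X$ could be strictly coarser, since $\{p_1,p_2\}$ is an edge of $\FX{d_1}X_r$ only if $p_1\dunion p_2\dunion r\in X$ whereas it is an edge of $\FX{d_1}X$ already when $p_1\dunion p_2\in X$. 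But if each $p_i\cup r\in X$ and $p_0\dunion\cdots\dunion p_j\in X$, then every pair of vertices in $p_0\cup\cdots\cup p_j\cup r$ spans an edge of $X$, so $p_0\dunion\cdots\dunion p_j\dunion r\in X$ by clique-ness; since $\FX{d_1}X$ and all the links in question are themselves clique complexes, checking edges suffices, and the sub-complex relation is exactly the induced-sub-complex relation. By \pref{claim:restriction-of-cover} (and using that $\widetilde{\FX{d_1}X}$ is a clique complex, \pref{claim:cover-of-clique-complex}), it follows that $\nu$ restricts to $\ell$-covers $Z_r=\nu^{-1}(\FX{d_1}X_r)\to\FX{d_1}X_r$ and $Z_v=\nu^{-1}(\FX{d_1}X_v)\to\FX{d_1}X_v$.

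For item (1), fix a vertex $v\in r$. By $d_1$-well-connectedness $\FX{d_1}X_v$ is \emph{simply} connected, so the $\ell$-cover $Z_v\to\FX{d_1}X_v$ is trivial: $Z_v=Z_v^1\dunion\cdots\dunion Z_v^\ell$ with each $\nu|_{Z_v^i}\colon Z_v^i\overset{\sim}{\to}\FX{d_1}X_v$ an isomorphism. Intersecting with the induced sub-complex $\FX{d_1}X_r$ gives $Z_r=\bigcup_i\,(Z_r\cap Z_v^i)$ (a disjoint union), and $Z_r\cap Z_v^i=(\nu|_{Z_v^i})^{-1}(\FX{d_1}X_r)$ is carried isomorphically onto $\FX{d_1}X_r$ by $\nu$. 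Hence each $Z_r\cap Z_v^i$ is connected ($\FX{d_1}X_r$ is connected by well-connectedness) and non-empty (the standing bound $3d_1\le d-2$ forces $\dim X_r\ge d_1$, so $\FX{d_1}X_r$ has a vertex). Being $\ell$ disjoint, non-empty, connected sub-complexes covering $Z_r$, they are precisely its connected components; list them as $Z_r^1,\dots,Z_r^\ell$. This proves (1) and, in passing, that each component maps isomorphically onto $\FX{d_1}X_r$.

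For item (2), let $r\subseteq s$. As in the first paragraph $\FX{d_1}X_s$ is an induced sub-complex of $\FX{d_1}X_r$, so $Z_s\subseteq Z_r=\bigcup_j Z_r^j$. Each component $Z_s^i$ is connected, hence lies in a single $Z_r^j$; since the $Z_r^j$ are disjoint this $j$ is unique, so $\pi_{r,s}(i):=j$ is well defined and satisfies $Z_s^i\subseteq Z_r^j\iff\pi_{r,s}(i)=j$. To see $\pi_{r,s}$ is a bijection it is enough to check injectivity: if $Z_s^i,Z_s^{i'}\subseteq Z_r^j$, then applying the isomorphism $\nu|_{Z_r^j}\colon Z_r^j\overset{\sim}{\to}\FX{d_1}X_r$ we get $Z_s^i=(\nu|_{Z_r^j})^{-1}(\FX{d_1}X_s)=Z_s^{i'}$, whence $i=i'$.

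I expect the main obstacle to be the first step — showing that the link of a face in the faces complex really is an \emph{induced} sub-complex. This is what simultaneously licenses the application of \pref{claim:restriction-of-cover} and the deduction that the cover over $\FX{d_1}X_r$ is trivial (the latter being false for a general connected cover; it is precisely the detour through the \emph{simply} connected $\FX{d_1}X_v$ that rescues it). Once that identification is in hand, the remainder is routine bookkeeping with connected components of trivial covers.
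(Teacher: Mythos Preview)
Your proof is correct and follows essentially the same route as the paper: reduce to the vertex case via the simply-connected $\FX{d_1}X_v$, then intersect the resulting trivial cover with $\nu^{-1}(\FX{d_1}X_r)$ and use connectedness of $\FX{d_1}X_r$. Your explicit justification that $\FX{d_1}X_r$ sits inside $\FX{d_1}X_v$ (and $\FX{d_1}X$) as an \emph{induced} sub-complex via the clique property is a point the paper uses implicitly but does not spell out, so you have in fact filled a small gap; for item (2) you argue well-definedness plus injectivity whereas the paper argues existence plus uniqueness of the inverse map, but these are the same argument read in opposite directions.
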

We prove this claim at the end of this subsection. We define the cover \(Y\) of \(X\) as follows. 
Let \(\psi \in C^1(X,Sym(\ell))\) be
\begin{equation} \label{eq:cocycle-def}
    \psi(vu)=\pi_{uv,u}^{-1}\circ \pi_{uv,v}
\end{equation}
where \(\pi_{uv,u},\pi_{uv,v}\) are the permutations promised by the second item of \pref{claim:Z_v-decomposes}. Let \(Y=X^\psi\), i.e. \(Y\) is the clique complex such that \(Y(0)=X \times [\ell]\) and \(\set{(v,i),(u,j)} \in Y(1)\) if \(vu \in X(1)\) and \(j=\psi(vu).i\). Then
\begin{lemma} \label{lem:Y-is-a-cover-of-X}
    \(Y\) is a cover of \(X\). Moreover, it holds that \((v,i) \sim (u,j)\) if and only if \(vu \in X(1)\) and there exists some \(k\) such that \(Z_v^i \cap Z_u^j \supseteq Z_{vu}^k\).
\end{lemma}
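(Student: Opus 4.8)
The plan is to verify that the cochain $\psi$ of \eqref{eq:cocycle-def} is a genuine cocycle and then feed this into the cocycle$\leftrightarrow$cover dictionary already set up in \pref{claim:reconstruction of a cover}. The first step is to extract a \emph{composition law} from \pref{claim:Z_v-decomposes}: for any chain of non-empty faces $r\subseteq s\subseteq q$ with $q\in X^{\le d_1}$ (so all three lie in $X^{\le d_1}$), I claim $\pi_{r,q}=\pi_{r,s}\circ\pi_{s,q}$, where we extend $\pi$ to comparable \emph{ordered} pairs via the convention $\pi_{x,y}=\pi_{y,x}^{-1}$. Indeed $r\subseteq s$ gives $\FX{d_1}X_s\subseteq\FX{d_1}X_r$ and hence $Z_s=\nu^{-1}(\FX{d_1}X_s)\subseteq Z_r$, and likewise $Z_q\subseteq Z_s$; each $Z_q^i$ is connected by \pref{claim:Z_v-decomposes}, so it lies in a \emph{unique} connected component of $Z_s$, namely $Z_s^{\pi_{s,q}(i)}$, and in a unique component of $Z_r$, namely $Z_r^{\pi_{r,q}(i)}$; but $Z_q^i\subseteq Z_s^{\pi_{s,q}(i)}\subseteq Z_r^{\pi_{r,s}(\pi_{s,q}(i))}$, and uniqueness forces $\pi_{r,q}(i)=\pi_{r,s}(\pi_{s,q}(i))$.

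Since the cocycle identity \eqref{eq:cover-condition} only involves $2$-faces, it suffices to prove $\psi\in Z^1(X^{\le 2},Sym(\ell))=Z^1(X,Sym(\ell))$, and $X^{\le 2}$ is a $2$-dimensional simplicial complex. On the flag complex $G(X^{\le2})$ I would set $\phi(\alpha,\beta)=\pi_{\alpha,\beta}^{-1}$ for comparable faces $\alpha\subsetneq\beta$ (all of dimension $\le 2$, hence in $X^{\le d_1}$ as $d_1\ge 2$). For a flag triangle $\alpha\subsetneq\beta\subsetneq\gamma$, the required identity $\phi(\beta\gamma)\circ\phi(\alpha\beta)=\phi(\alpha\gamma)$ is, after inverting, exactly the composition law $\pi_{\alpha,\beta}\circ\pi_{\beta,\gamma}=\pi_{\alpha,\gamma}$ from the previous paragraph, so $\phi\in Z^1(G(X^{\le2}),Sym(\ell))$. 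Applying \pref{cor:cocycles-in-flag-complex} gives $\psi_\phi(uv)=\phi(uv,v)\circ\phi(u,uv)\in Z^1(X^{\le2},Sym(\ell))$, and unwinding the definitions shows $\psi_\phi(uv)=\pi_{v,uv}\circ\pi_{u,uv}^{-1}=\pi_{uv,v}^{-1}\circ\pi_{uv,u}$, which is precisely $\psi(uv)$ as read off from \eqref{eq:cocycle-def} (both being asymmetric, it is enough to compare one orientation per edge). Hence $\psi\in Z^1(X,Sym(\ell))$, and \pref{claim:reconstruction of a cover} gives that $\rho_\psi\colon X^\psi\to X$ is an $\ell$-cover; by \pref{claim:cover-of-clique-complex} (recall $X$ is a clique complex) $X^\psi$ is a clique complex, and its $1$-skeleton is the one described in \pref{def:induced-cover}, so $X^\psi$ is exactly the $Y$ of the statement.

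For the ``moreover'' clause I would fix $vu\in X(1)$. By \pref{def:induced-cover}, $(v,i)\sim(u,j)$ in $Y$ iff $j=\psi(vu)(i)=\pi_{uv,u}^{-1}(\pi_{uv,v}(i))$. On the other hand $Z_{vu}\subseteq Z_v$ and $Z_{vu}\subseteq Z_u$, so each component $Z_{vu}^k$ lies in the unique component $Z_v^{\pi_{v,vu}(k)}$ of $Z_v$ and in the unique component $Z_u^{\pi_{u,vu}(k)}$ of $Z_u$; hence there is a $k$ with $Z_{vu}^k\subseteq Z_v^i\cap Z_u^j$ iff $k=\pi_{vu,v}(i)$ and $j=\pi_{vu,u}^{-1}(\pi_{vu,v}(i))$, i.e. iff $j=\psi(vu)(i)$. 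If instead $vu\notin X(1)$, both sides are vacuously false (an edge of the cover $Y$ projects to an edge of $X$, while the right-hand side carries the conjunct $vu\in X(1)$), which completes the equivalence.

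The entire substantive input is \pref{claim:Z_v-decomposes}, established elsewhere in this subsection; given that, the rest is formal. The one place that calls for genuine care is the index and orientation bookkeeping in the last two paragraphs: one must keep the convention $\pi_{x,y}=\pi_{y,x}^{-1}$, the orientation of edges implicit in \eqref{eq:cocycle-def} and in \pref{def:induced-cover}, and the composition order in \eqref{eq:cover-condition} all mutually consistent, which is exactly why routing the cocycle check through the flag-complex calculus of \pref{cor:cocycles-in-flag-complex} (rather than computing $\coboundary\psi$ by hand) is worthwhile.
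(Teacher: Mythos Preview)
Your proof is correct and follows essentially the same route as the paper's: both define an auxiliary cochain $\phi$ on the flag complex of $X^{\le 2}$ using the permutations $\pi_{r,s}$ from \pref{claim:Z_v-decomposes}, verify $\phi\in Z^1$ via the containment relation $Z_q^i\subseteq Z_s^{\pi_{s,q}(i)}\subseteq Z_r^{\pi_{r,s}(\pi_{s,q}(i))}$, invoke \pref{cor:cocycles-in-flag-complex} to deduce $\psi\in Z^1(X,Sym(\ell))$, and then read off the ``moreover'' clause directly from the definition of $\psi$ and the characterization of the $\pi$'s. Your version is slightly more explicit in isolating the composition law $\pi_{r,q}=\pi_{r,s}\circ\pi_{s,q}$ and in tracking the orientation conventions, but the argument is the same.
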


For the proof of \pref{lem:Y-is-a-cover-of-X} we use the fact that the \(\pi_{r,s}\) in \pref{claim:Z_v-decomposes} give us a cocycle on the flag complex of \(\FX{d_1}X\) (see below). For this we recall the following corollary: 
\restatecorollary{cor:cocycles-in-flag-complex}
In this notation \(\phi(u,uv)=\pi_{uv,u}\) and thus \(\psi(uv)=\pi_{uv,u}^{-1}\circ \pi_{uv,v}\).

\begin{proof}[Proof of \pref{lem:Y-is-a-cover-of-X}]
    Let us show that \(Y\) is a cover. This is equivalent to showing that \(\psi\) is a cocycle. Let \(X^{\leq 2}\) be the two skeleton of \(X\) and let \(G\) be its flag-complex as in \pref{def:flag-complex}. Let \(\phi:GX(1) \to Sym(\ell)\) be given by \(\phi(r,s)= \pi_{s,r}\) for every \(r \subseteq s\) (where \(\pi_{s,r}\) are the permutations in \pref{claim:Z_v-decomposes}). If we show that \(\phi \in Z^1(GX,Sym(\ell))\) then by \pref{cor:cocycles-in-flag-complex} \(\psi\) is also a cocycle. This amounts to show that for every \(uvw \in X\) it holds that \(\pi_{uvw,uv}\circ \pi_{uv,u} = \pi_{uvw,u}\).

    Indeed, fix \(i\) and let \(k=\pi_{uvw,u}(i)\), and \(k'= \pi_{uvw,uv}\circ \pi_{uv,u}(i)\). By definition this implies that \(Z_{uvw}^k \subseteq Z_u^i\) and that \(Z_{uvw}^{k'} \subseteq Z_{uv}^{\pi_{u,uv}(i)} \subseteq Z_u^i\). The second item of \pref{claim:Z_v-decomposes} says that \(Z_{uvw}^{k'} \subseteq Z_u^i\) implies that  \(k' = \pi_{u,uvw}(i) = k\). 
    
    Next, for the ``moreover'' statement, note that \((v,i) \sim (u,j)\) if and only if there is some \(k\) such that \(\pi_{v,uv}(i)=k\) and such that \(\pi_{u,uv}(j)=k\). This occurs if and only if \(Z_{uv}^k \subset Z_u^i \cap Z_v^j\).
\end{proof}

\begin{proof}[Proof of \pref{claim:Z_v-decomposes}]
Let us start with the first item. We begin by explaining why it is enough to prove it assuming \(r\) is a vertex. Suppose that for every vertex \(v \in X(0)\) it holds that \(Z_v\) decomposes to \(\ell\) connected-components. Let \(r \in X^{\leq d_1}\) and let \(v \in r\) be an arbitrarily chosen vertex inside \(r\). Take \(Z_r^i\) to be the subcomplex induced by \(Z_r^i(0) := Z_v^i(0) \cap \nu^{-1}(\FX{d_1}X_r(0))\). As every \(Z_v^i\) is isomorphic to \(\FX{d_1}X_v\) then in particular, the induced sub-complex \(Z_r^i\) is isomorphic to \(\FX{d_1}X_{r}\). By well connectedness of \(X\), \(\FX{d_1}X_r\) is connected and thus we get that these are \(\ell\) connected-components (they cannot be connected to one another because each lies in a different \(Z_v^i\)).

Now we show the first item for every vertex \(v \in X(0)\). Towards this end, we note that the restriction \(\nu|_{Z_v}:Z_v \to \FX{d_1}X_v\) is also a covering map (for a proof of this statement, see \pref{claim:restriction-of-cover}). From the well connectivity of \(X\), \(\FX{d_1}X_v\) is simply connected, so \(Z_v\) must decompose to \(\ell\) disjoint connected-components. 

We move to the second item. Fix \(r \subseteq s\). We need to show that for every \(i \in [\ell]\) there is a unique \(j\) such that \(Z_s^j \subseteq Z_r^i\), so let us fix some \(i \in [\ell]\). By the first item, the restriction \(\nu|_{Z_r^i}\) is an isomorphism between \(Z_r^i\) and \(\FX{d_1}X_r\). In particular, it follows from the definition of an isomorphism that \(\nu^{-1}(\FX{d_1}X_s) \cap Z_r^i\) is isomorphic (by \(\nu|_{Z_r^i}\)) to \(\FX{d_1}X_s\subset \FX{d_1}X_r\), and thus \(Z_s^j\) is one of the connected components \(Z_s^j \subseteq \nu^{-1}(\FX{d_1}X_s)\) promised by the first item we already proved. This is the index \(j\) we needed. Note that it is unique since \(Z_s^j = \nu^{-1}(\FX{d_1}X_s) \cap Z_r^i\) (so there are no more faces in \(\nu^{-1}(\FX{d_1}X_s)\) and  \(Z_r^i\) that are not already in \(Z_s^j\)).
\end{proof}

\subsection[The isomorphism between FY to the cover of FX]{The isomorphism between \(\FX{d_1}Y\) to \(\widetilde{\FX{d_1}X}\)}
\begin{figure}
    \centering
    \includegraphics[scale=0.6]{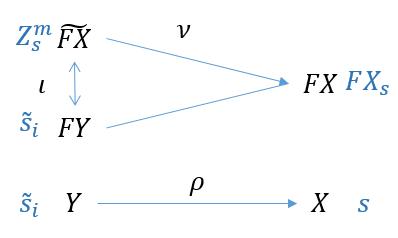}
    \caption{Commutative diagram}
    \label{fig:commutative}
\end{figure}
\begin{figure}
    \centering
    \includegraphics[scale=0.4]{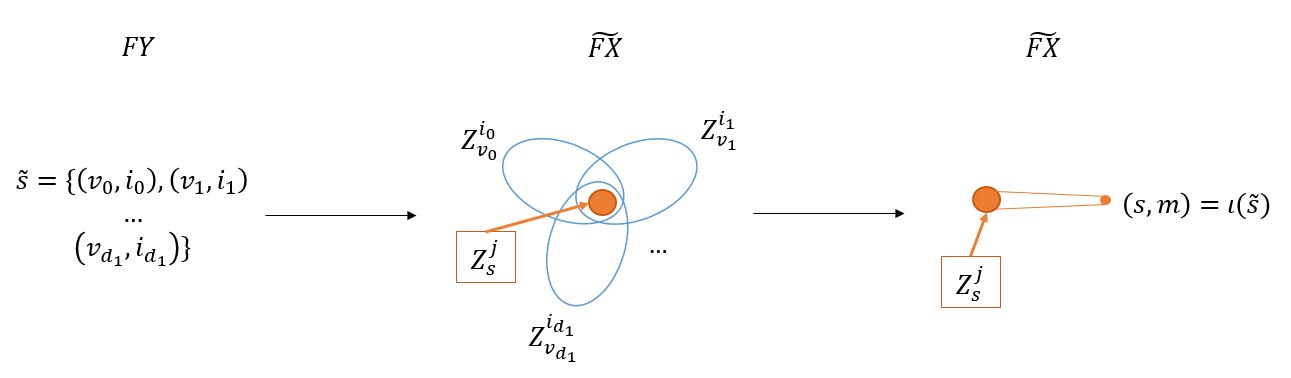}
    \caption{From \(\tilde{s}\) to \(\iota(\tilde{s})=(s,m)\)}
    \label{fig:iota-description}
\end{figure}
We continue with the notation in the previous subsection. In this subsection we describe the isomorphism \(\iota:\FX{d_1}Y \to \widetilde{\FX{d_1}X}\) explicitly. See \pref{fig:commutative} for the relations between \(X,Y,\FX{d_1}X,\FX{d_1}Y\) and \(\widetilde{\FX{d_1}X}\).

The idea is simple. Given some \(\tilde{s} = \set{(v_0,i_0),(v_1,i_1),...,(v_{d_1},i_{d_1})} \in \FX{d_1}Y(0)\) , such that \(\rho(\tilde{s})=s = \set{v_0,v_1,\ldots,v_{d_1}}\), 
we need to specify which \(j \in [\ell]\) is such that \(\iota(\tilde{s})=(s,j)\). Using arguments similar to \pref{lem:Y-is-a-cover-of-X}, we will show that the face \(\tilde{s}\) is such that \(\bigcap_{m=0}^{d_1} Z_{v_m}^{i_m}\) contains a ``copy'' \(Z_s^j\) of \(\FX{d_1}X_s\).
We will define \(\iota(\tilde{s})=(s,m)\) which is connected to this copy. See \pref{fig:iota-description} for a pictorial description of \(\iota\). We first claim that \(\iota\) is well defined.
\begin{claim} \label{claim:unique-link-of-s}
    Let \(\tilde{s} \in \FX{d_1}Y(0)\),  \(\tilde{s} = \set{(v_0,i_0),(v_1,i_1),...,(v_{d_1},i_{d_1})}\) such that \(\rho(\tilde{s})=s\). Then there is a unique \(j \in [\ell]\) such that \(Z_s^j \subseteq \bigcap_{m=0}^{d_1} Z_{v_m}^{i_m}\).
\end{claim}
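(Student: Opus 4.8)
The plan is to deduce the statement from two ingredients already at hand: the chain rule (cocycle property) satisfied by the component–indexing permutations $\pi_{r,s}$ of \pref{claim:Z_v-decomposes}(2), and the defining incidence relation of the complex $Y=X^\psi$. To keep orientations straight I would first fix notation: for $r\subseteq s$ in $X^{\leq d_1}$ write $\sigma_{r,s}:=\pi_{r,s}$, so $\sigma_{r,s}(i)=j$ exactly when $Z^i_s\subseteq Z^j_r$, and extend by $\sigma_{s,r}:=\sigma_{r,s}^{-1}$. Since a connected component of $Z_t$ sits inside a \emph{unique} component of $Z_s$ and of $Z_r$ whenever $r\subseteq s\subseteq t$, one gets the chain rule $\sigma_{r,t}=\sigma_{r,s}\circ\sigma_{s,t}$ (this is exactly the cocycle identity used in the proof of \pref{lem:Y-is-a-cover-of-X}). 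If $d_1=0$ the claim is immediate, because then $Z_s=Z_{v_0}$ and its components are pairwise disjoint, forcing $j=i_0$; so assume $d_1\geq 1$.

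For each $m\in\{0,\dots,d_1\}$ set $j_m:=\sigma_{s,v_m}(i_m)$; by definition $j_m$ is the unique index with $Z^{j_m}_s\subseteq Z^{i_m}_{v_m}$. The crux is to show $j_0=j_1=\cdots=j_{d_1}$. Fix $p,q$ and put $e=v_pv_q\in X(1)\subseteq X^{\leq d_1}$ (this is a face of $X$ since $s$ is, and $e\subseteq s$). Applying the chain rule along $v_p\subseteq e\subseteq s$ gives $\sigma_{s,v_p}=\sigma_{s,e}\circ\sigma_{e,v_p}$, hence $j_p=\sigma_{s,e}\bigl(\sigma_{e,v_p}(i_p)\bigr)$, and likewise $j_q=\sigma_{s,e}\bigl(\sigma_{e,v_q}(i_q)\bigr)$. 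As $\sigma_{s,e}$ is a bijection, $j_p=j_q$ is equivalent to $\sigma_{e,v_p}(i_p)=\sigma_{e,v_q}(i_q)$, i.e.\ to $\sigma_{v_p,e}^{-1}(i_p)=\sigma_{v_q,e}^{-1}(i_q)$. Now I would unwind the definition $\psi(v_pv_q)=\pi_{e,v_q}^{-1}\circ\pi_{e,v_p}$ from \eqref{eq:cocycle-def} into $\sigma$-notation: it becomes (up to the orientation convention for $\psi$) the permutation $\sigma_{v_q,e}\circ\sigma_{v_p,e}^{-1}$. Since $\tilde s\in\FX{d_1}Y(0)=Y(d_1)=X^\psi(d_1)$, the incidence rule of \pref{def:induced-cover} says precisely that $i_q=\psi(v_pv_q).i_p$ for every pair $v_p,v_q\in s$; rearranging this equality yields exactly $\sigma_{v_p,e}^{-1}(i_p)=\sigma_{v_q,e}^{-1}(i_q)$, which is what we needed. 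Hence $j_p=j_q$, and letting $j$ be this common value we get $Z^j_s\subseteq Z^{i_m}_{v_m}$ for all $m$, so $Z^j_s\subseteq\bigcap_{m=0}^{d_1}Z^{i_m}_{v_m}$, giving existence.

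For uniqueness: if $Z^{j'}_s\subseteq\bigcap_{m}Z^{i_m}_{v_m}$, then in particular $Z^{j'}_s\subseteq Z^{i_0}_{v_0}$, and since by \pref{claim:Z_v-decomposes}(2) there is exactly one component of $Z_s$ contained in $Z^{i_0}_{v_0}$, namely $Z^{j_0}_s=Z^j_s$, we conclude $j'=j$.

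I expect the only genuinely delicate part to be the bookkeeping in the middle paragraph: reconciling the $\pi$-notation of \eqref{eq:cocycle-def} (which uses the convention $\pi_{x,y}=\pi_{y,x}^{-1}$ and writes the larger face first) with the $\sigma$-notation and with the face relation of $X^\psi$, and checking the orientation of $\psi$ is handled consistently. This last point is in fact harmless, because the face condition is equivalent to the symmetric statement $\sigma_{v_p,e}^{-1}(i_p)=\sigma_{v_q,e}^{-1}(i_q)$, so the argument does not care whether one reads $\psi(v_pv_q)$ or $\psi(v_qv_p)$. Everything else is a direct consequence of the chain rule for the $\sigma_{r,s}$ and the uniqueness clause of \pref{claim:Z_v-decomposes}.
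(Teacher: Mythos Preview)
Your proof is correct and follows essentially the same route as the paper. The paper fixes $v_0$, lets $j$ be the unique index with $Z_s^j\subseteq Z_{v_0}^{i_0}$, and for each $m$ uses the ``moreover'' clause of \pref{lem:Y-is-a-cover-of-X} (which gives a $k$ with $Z_{v_0 v_m}^k\subseteq Z_{v_0}^{i_0}\cap Z_{v_m}^{i_m}$) together with the containment $\FX{d_1}X_s\subseteq \FX{d_1}X_{v_0 v_m}$ to conclude $Z_s^j\subseteq Z_{v_0 v_m}^k\subseteq Z_{v_m}^{i_m}$; uniqueness is handled exactly as you do. Your version simply rephrases this in the permutation language $\sigma_{r,s}$ and invokes the chain rule explicitly rather than arguing with the subcomplexes, but the underlying step (passing through the edge $e=v_pv_q$ and using the edge relation of $Y=X^\psi$) is identical.
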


We prove this claim below. Assuming it is true, let us show that this is actually an isomorphism.
\begin{lemma}\label{lem:iota-is-an-isomorphism}
    \(\iota:\FX{d_1}Y \to \widetilde{\FX{d_1}X}\) is an isomorphism.
\end{lemma}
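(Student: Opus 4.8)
The plan is to check first that $\iota$ is a bijection on vertices, and then, exploiting that both complexes in sight are clique complexes, to verify that $\iota$ and $\iota^{-1}$ preserve edges.

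Well-definedness of $\iota$ is exactly \pref{claim:unique-link-of-s}; I would first upgrade that claim slightly, observing that $\bigcap_{m}Z_{v_m}^{i_m}$ equals the component $Z_s^j$ rather than merely containing it, since $X$ is a clique complex and hence $\bigcap_m \FX{d_1}X_{v_m}=\FX{d_1}X_s$, which makes the intersection a connected subcomplex of $Z_s$ of full size. Injectivity: if $\iota(\tilde s)=\iota(\tilde s')=(s,j)$ then $\rho(\tilde s)=\rho(\tilde s')=s$, so $\tilde s=\set{(v_m,i_m)}$ and $\tilde s'=\set{(v_m,i'_m)}$ have the same projections, and $Z_s^j\subseteq Z_{v_m}^{i_m}\cap Z_{v_m}^{i'_m}$ for every $m$; since the components $Z_{v_m}^1,\dots,Z_{v_m}^\ell$ are pairwise disjoint by \pref{claim:Z_v-decomposes}, we get $i_m=i'_m$ and hence $\tilde s=\tilde s'$. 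Surjectivity is then a counting argument: $\rho:Y\to X$ is an $\ell$-cover by \pref{lem:Y-is-a-cover-of-X}, so \pref{claim:inverse-image-of-l-cover} gives $\abs{\FX{d_1}Y(0)}=\abs{Y(d_1)}=\ell\cdot\abs{X(d_1)}=\abs{\widetilde{\FX{d_1}X}(0)}$.

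Both $\FX{d_1}Y$ and $\widetilde{\FX{d_1}X}$ are clique complexes: $Y=X^\psi$ is a clique complex and the faces complex of a clique complex is again a clique complex, while $\widetilde{\FX{d_1}X}$ covers the clique complex $\FX{d_1}X$, so \pref{claim:cover-of-clique-complex} applies. Hence it suffices to prove, for all $\tilde s_1,\tilde s_2\in Y(d_1)$, that $\tilde s_1\dunion\tilde s_2\in Y$ if and only if $\iota(\tilde s_1)\sim\iota(\tilde s_2)$ in $\widetilde{\FX{d_1}X}$; write $\iota(\tilde s_a)=(s_a,j_a)$. Both sides force $s_1\dunion s_2\in X$ (the left because $\rho$ induces a simplicial map $\FX{d_1}Y\to\FX{d_1}X$, the right because $\nu$ is a homomorphism), so I assume this. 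Since $X$ is pure with $d\geq 3d_1+2>2d_1+1$, I can choose a vertex $v\notin s_1\cup s_2$ with $\set{v}\dunion s_1\dunion s_2\in X$; then $s_1,s_2$ are adjacent vertices of $\FX{d_1}X_v$, and $\iota(\tilde s_1),\iota(\tilde s_2)$ are vertices of $Z_v=\nu^{-1}(\FX{d_1}X_v)$. Because $\FX{d_1}X_v$ is simply connected (well-connectedness of $X$), \pref{claim:Z_v-decomposes} shows $Z_v=\bigsqcup_e Z_v^e$ with each $Z_v^e\cong\FX{d_1}X_v$; consequently $\iota(\tilde s_1)\sim\iota(\tilde s_2)$ in $\widetilde{\FX{d_1}X}$ exactly when $\iota(\tilde s_1)$ and $\iota(\tilde s_2)$ lie in the same sheet $Z_v^e$. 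On the $Y$-side one checks that $\tilde s_1\dunion\tilde s_2\in Y$ holds iff there is a preimage $(v,e)$ of $v$ with $\set{(v,e)}\dunion\tilde s_1\in Y$ and $\set{(v,e)}\dunion\tilde s_2\in Y$: the forward direction lifts $v\in X_{s_1\dunion s_2}(0)$ into the link $Y_{\tilde s_1\dunion\tilde s_2}\cong X_{s_1\dunion s_2}$, and the converse uses that inside $Y_{(v,e)}\cong X_v$ the lift of $s_1\dunion s_2$ is unique and, since $\tilde s_1,\tilde s_2$ are the unique lifts there of $s_1,s_2$, must equal $\tilde s_1\dunion\tilde s_2$. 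Moreover, for each $a$ the index $e$ with $\set{(v,e)}\dunion\tilde s_a\in Y$ is unique, so both conditions reduce to an equality of two labels; the proof is completed by the \emph{matching statement} that the sheet $Z_v^e$ containing $\iota(\tilde s_a)$ is indexed by precisely this same $e$. This last point I would establish by unwinding \pref{claim:unique-link-of-s} along an edge $\set{v,w}$ with $w\in s_a$ and translating the component inclusions via the permutations of \pref{claim:Z_v-decomposes}, exactly as in the proof of \pref{lem:Y-is-a-cover-of-X}. The reverse edge implication is symmetric, so $\iota$ is a simplicial isomorphism.

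The step I expect to be the main obstacle is this matching statement: reconciling the $Z_v$-sheet index of $\iota(\tilde s_a)$ with the cocycle $\psi$ used to build $Y$. It is fiddly both because of the bookkeeping with the family of permutations $\pi_{r,s}$, and because the faces $s_1\dunion s_2$ and $\set{v}\dunion s_a$ lie outside $X^{\leq d_1}$, so \pref{claim:Z_v-decomposes} cannot be invoked for them and everything must be routed through vertices and edges of $X$ and through the links of single vertices (where the cover $\nu$ is trivial).
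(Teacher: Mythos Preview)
Your outline is correct and complete, and it diverges from the paper's argument in an interesting way. The paper also reduces to edges via the clique-complex observation, but for edge preservation it proves only the \emph{forward} direction (that $\iota$ is a graph homomorphism): given $\{\tilde s_1,\tilde s_2\}\in\FX{d_1}Y(1)$ it extends to a triangle $\{\tilde s_1,\tilde s_2,\tilde s_3\}$ in $\FX{d_1}Y$ and uses the third $d_1$-face $s_3$ to exhibit a common neighbor $(s_3,k)\in Z_{s_1}^{j_1}\cap Z_{s_2}^{j_2}$ of $\iota(\tilde s_1)$ and $\iota(\tilde s_2)$ in $\widetilde{\FX{d_1}X}$; the reverse direction then comes for free because a bijective graph homomorphism between two $\ell$-covers of the same base preserves degrees and hence is an isomorphism. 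Your route instead introduces an auxiliary \emph{vertex} $v$ and proves both directions symmetrically via the matching statement. Be aware that when you carry out that matching statement you will still need to locate $(s_a,m_a)$ inside $Z_v$ by passing through some neighbor $(s',k)\in Z_{s_a}^{j'_a}\cap Z_v$, i.e.\ through an auxiliary $s'\in\FX{d_1}X_{s_a\cup\{v\}}$; conveniently $s'=s_{3-a}$ works, so no dimension beyond $2d_1+2$ is required. Thus your ``single vertex'' ultimately routes through one extra $d_1$-face just as the paper's triangle does. The trade-off: the paper avoids stating and proving the matching statement by invoking the degree trick, while your approach yields a self-contained two-sided argument that does not rely on that trick and in fact needs only $d\geq 2d_1+2$ rather than $d\geq 3d_1+2$.
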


\begin{proof}[Proof of \pref{lem:iota-is-an-isomorphism}]
    Let us start by explaining why this is a bijection. First note that for every \(s \in \FX{d_1}X(0)\), \(\iota\) maps the \(\ell\)-elements of \(\rho^{-1}(s) \subseteq \FX{d_1}Y(0)\) to the \(\ell\)-elements of \(\nu^{-1}(s) \subseteq \widetilde{\FX{d_1}X}\). Thus if we show that \(\iota\) is injective, it will also follow that it is a bijection. Let \(\tilde{s}_1, \tilde{s}_2 \in \rho^{-1}(s)\) be distinct elements and we denote \(\iota(\tilde{s}_i)=(s,m_i)\). We'll show that \(m_1 \ne m_2\). Take some vertex \(v \in X(0)\) in \(s \in \FX{d_1}X(0)\), that is \(v \in s=\rho(\tilde{s}_i)\). In particular there are two distinct vertices \((v,i_1),(v,i_2) \in Y(0)\) such that \((v,i_1) \in \tilde{s}_1\) and \((v,i_2) \in \tilde{s}_2\). Note that \(\iota(\tilde{s}_i)\) is the vertex \((s,m_i)\) that is connected to the \(Z_s^{j_i}\) that sits in the intersection \(\bigcap_{(u,k) \in \tilde{s}_i} Z_u^k\). In particular \(Z_s^{j_1} \subseteq Z_v^{i_1}\) and \(Z_s^{j_2} \subseteq Z_v^{i_2}\), thus \(j_1 \ne j_2\), i.e. \((s,m_1)\) and \((s,m_2)\) are connected to different copies of the link of \(s\). In particular this means that \(m_1 \ne m_2\) and the map is injective.
    
    It remains to prove that this is an isomorphism of simplicial complexes. First, we note that both \(\FX{d_1}Y\) and \(\widetilde{\FX{d_1}X}\) are clique complexes:
    \begin{enumerate}
        \item \(X\) is a clique complex, hence by \pref{claim:cover-of-clique-complex} its cover \(Y\) is also a clique complex. The face complex \(\FX{d_1}Y\) of a clique complex is also a clique complex.
        \item \(\FX{d_1}X\) is a clique complex since it is a face complex of a clique complex. By \pref{claim:cover-of-clique-complex} the cover \(\widetilde{\FX{d_1}X}\) is also a clique complex.
    \end{enumerate}
    Thus it is enough to show that the bijection \(\iota\) is a graph isomorphism between \(\FX{d_1}Y^{\leq 1}\) to \((\widetilde{\FX{d_1}X})^{\leq 1}\).
    
    Let us begin by showing that \(\iota\) is a graph homomorphism. Let \(\set{\tilde{s}_1,\tilde{s}_2} \in \FX{d_1}Y(1)\) be an edge. Let \(\iota(\tilde{s}_1) = (s_1,m_1), \iota(\tilde{s}_2)=(s_2,m_2)\). We need to show that \(\set{(s_1,j_1), (s_2,j_2)} \in \widetilde{\FX{d_1}X}(1)\). Since \(\set{\tilde{s}_1,\tilde{s}_2} \in \FX{d_1}Y(1)\), then in particular they are contained in some triangle \(\set{\tilde{s}_1,\tilde{s}_2,\tilde{s}_3} \in \FX{d_1}Y(2)\)\footnote{\(Y\) is pure so \(\FX{d_1}Y\) is also pure.} and denote by \(\nu(\tilde{s}_3)=s_3\). Let \(Z_{s_1}^{j_1}, Z_{s_2}^{j_2}\) be the copies of \(\FX{d_1}X_{s_1},\FX{d_1}X_{s_2}\) such that \((s_1,j_1), (s_2,j_2)\) are connected to \((s_1,m_1),(s_2,m_2)\) respectively (these copies are uniquely determined by the definition of \(\iota\)). If we show that the same copy of \(s_3\) is in the intersection of \(Z_{s_1}^{j_1},Z_{s_1}^{j_2} \subseteq \widetilde{\FX{d_1}X}\), namely that there is some \((s_3,k) \in Z_{s_1}^{j_1} \cap Z_{s_2}^{j_2}\) this will imply that \(\set{(s_1,j_1), (s_2,j_2)} \in \widetilde{\FX{d_1}X}(1)\) (because this implies that both \((s_1,j_1),(s_2,j_2)\) are in the link of \((s_3,k) \in \widetilde{\FX{d_1}X}\). This link is isomorphic by \(\nu\) to the link of \(s_3 \in \FX{d_1}X\). There we know that \(\set{s_1,s_2}\) is an edge, so in particular \(\set{(s_1,j_1), (s_2,j_2)} \in \widetilde{\FX{d_1}X}(1)\)).
    
    Take two vertices \((v_1,k_1),(v_2,k_2) \in Y\) such that \((v_i,k_i) \in \tilde{s}_i\). Because \(\set{\tilde{s}_1,\tilde{s}_2, \tilde{s}_3}  \in \FX{d_1}Y(2)\) implies that \(\tilde{s}_1 \dunion \tilde{s}_2 \dunion \tilde{s}_3 \in Y\). This means that 
    \begin{enumerate}
        \item \(\set{(v_1,k_1),(v_2,k_2)} \in Y(1)\), and that
        \item \(s_3 \in \FX{d_1}X_{v_1 v_2}\).
    \end{enumerate} 
    By \pref{lem:Y-is-a-cover-of-X} the fact that \(\set{(v_1,k_1),(v_2,k_2)} \in Y(1)\) implies that there is a copy \(Z_{v_1,v_2}^{p} \subseteq Z_{v_1}^{k_1} \cap Z_{v_1}^{k_2}\). The fact that \(s_3 \in \FX{d_1}X_{v_1 v_2}\) implies that there is some \((s_3,k) \in Z_{v_1,v_2}^{p}\) and in particular \((s_3,k) \in Z_{v_1}^{k_1},Z_{v_1}^{k_2}\). We now show that this \((s_3,k)\) is the vertex we are looking for. Let \((s_3,k') \in Z_{s_1}^{j_1}\) be the copy of \(s_3\) in \( Z_{s_1}^{j_1}\). by definition of \(\iota\), it holds that \(Z_{s_1}^{j_1} \subseteq Z_{v_1}^{k_1}\). As \(Z_{v_1}^{k_1}\) is isomorphic by \(\nu\) to \(\FX{d_1}X_{v_1}\) and both \((s_3,k),(s_3,k') \in Z_{v_1}^{k_1}\) are sent to \(s_3\) by \(\nu\), this implies that \((s_3,k)=(s_3,k')\). The same argument applies also to \(Z_{s_2}^{j_2}\). We've proven there is an \((s_3,k) \in Z_{s_1}^{k_1} \cap Z_{s_3}^{k_2}\) and this shows that this is a grpah homomorphism.
    
    To be convinced that this is also an isomorphism, note that both \(\widetilde{\FX{d_1}X}\) and \(\FX{d_1}Y\) are $\ell$-covers of \(\FX{d_1}X\). In particular, the degree of every \(\tilde{s} \in \FX{d_1}Y\) is equal to the degree of \(\iota(\tilde{s}) \in \widetilde{\FX{d_1}X}\) (both are equal to the degree of \(\rho(s)\) in \(\FX{d_1}X\)). Any bijective graph isomorphism that preserves degrees must be an isomorphism. 
\end{proof}

\begin{proof}[Proof of \pref{claim:unique-link-of-s}]
    Let \(j\) be such that \(Z_s^j \subseteq Z_{v_0}^{i_0}\) (there is a unique such \(j\) by the second item of \pref{claim:Z_v-decomposes}). We need to show that \(Z_s^j \subseteq Z_{v_m}^{i_m}\) for all \(m=1,2,...,d_1\). Indeed, fix \(m\) and let \(k\) be such that \(Z_{v_0 v_m}^{k} \subseteq Z_{v_0}^{i_0} \cap Z_{v_m}^{i_m}\) (there is such a \(k\) by \pref{lem:Y-is-a-cover-of-X}). Then by the fact that \(\FX{d_1}X_{s} \subseteq \FX{d_1}X_{v_0 v_m} \subseteq \FX{d_1}X_{v_0}\) and \(\FX{d_1}X_{s} \subseteq \FX{d_1}X_{v_0 v_m} \subseteq \FX{d_1}X_{v_m}\), it holds by the isomorphism that \(\nu\) induces that \(Z_s^j \subseteq Z_{v_0 v_m}^k \subseteq Z_{v_0}^{i_0}\), and that \(Z_s^j \subseteq Z_{v_0 v_m}^k \subseteq Z_{v_m}^{i_m}\). In particular \(Z_s^j \subseteq Z_{v_m}^{i_m}\). This shows existence of \(Z_s^j\). For uniqueness we note that there is a unique \(j\) such that \(Z_s^j \subseteq Z_{v_0}^{i_0}\), and \(\bigcap_{m=0}^{d_1} Z_{v_m}^{i_m} \subseteq Z_{v_0}^{i_0}\) so there must be only a single such \(j\).
\end{proof}
\section[A global function on Y]{A global function on \(Y\)} \label{sec:global-func}
We remind the reader of our convention that \(\pi_{x,y}=\pi_{y,x}^{-1}\).
Let us rephrase \pref{cor:good-cocycle}.
\begin{lemma} \label{lem:good-exact-cover}
    There exists a cocycle \(\psi \in Z^1(X,Sym(\ell))\) such that \(\Prob[s_1,s_2, t=s_1 \dunion s_2]{\psi(s_1, s_2)=\pi_{t,s_2}^{-1} \circ \pi_{t_1,s_1} } = 1-\exp(-\Omega(d_1/k)),\) where \(\pi_{t,s_i}\) are the permutations over the list of functions of \(s_i\) and \(t\) respectively as in \pref{lem:from-1-percent-to-list-agreement}.
\end{lemma}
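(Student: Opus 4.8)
The plan is to recognize that this lemma is exactly the restatement of \pref{cor:good-cocycle} that the surrounding text promises, so the proof amounts to re-running the short argument behind that corollary. Concretely, consider the non-abelian $1$-cochain $\psi\in C^1(\FX{d_1}X,Sym(\ell))$ which assigns to an edge $(s_1,s_2)$ of $\FX{d_1}X$ the permutation $\psi(s_1,s_2)=\pi_{t,s_2}^{-1}\circ\pi_{t,s_1}$, where $t=s_1\dunion s_2$ and the $\pi_{t,s_i}$ are the list-matching permutations produced by \pref{lem:from-1-percent-to-list-agreement}. The first step is the coboundary bound $\wt(\coboundary\psi)\le\exp(-\poly(\varepsilon)\tfrac{d_1}{k})$: this is precisely \pref{lem:propreties-of-permutations-on-FX}, whose proof threads the third (triangle-consistency) item of \pref{lem:from-1-percent-to-list-agreement} through the flag-complex transfer principle of \pref{claim:cochains-in-flag-complex-correspondence} applied to $G\FX{d_1}X$.

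The second step is to apply swap-cosystolic expansion. Since $X$ is $(d,k,\alpha)$-suitable with respect to $d_1$ (\pref{def:suitable-complexes}, item~\ref{item:cob-exp}), the faces complex $\FX{d_1}X$ is a $\beta$-cosystolic expander with $\beta=\exp(-\alpha\tfrac{d_1}{k})$. By \pref{def:def-of-cosyst-exp} there is a genuine cocycle $\psi'\in Z^1(\FX{d_1}X,Sym(\ell))$ with $\dist(\psi,\psi')\le\wt(\coboundary\psi)/\beta\le\exp(-\poly(\varepsilon)\tfrac{d_1}{k}+\alpha\tfrac{d_1}{k})$. Because \pref{thm:main} is free to pick $\alpha=\poly(\varepsilon)$ as small as we like, we choose it strictly below the $\poly(\varepsilon)$ rate of the first step, so that $\dist(\psi,\psi')\le\exp(-\Omega(d_1/k))$ (absorbing the remaining $\poly(\varepsilon)$ into the exponent's constant). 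Unwinding the definition of $\dist$, the cocycle $\psi'$ agrees with $\psi$ on a $1-\exp(-\Omega(d_1/k))$ fraction of edges; that is, a random pair $s_1,s_2$ with $t=s_1\dunion s_2$ satisfies $\psi'(s_1,s_2)=\pi_{t,s_2}^{-1}\circ\pi_{t,s_1}$ with that probability, which is the claim (reading the statement's ``$\pi_{t_1,s_1}$'' as $\pi_{t,s_1}$ with $t=s_1\dunion s_2$, and ``$Z^1(X,Sym(\ell))$'' as $Z^1(\FX{d_1}X,Sym(\ell))$, since $s_1,s_2$ range over $X(d_1)=\FX{d_1}X(0)$).

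I do not expect a genuine obstacle here, since both ingredients — the near-cocycle bound and the expansion step — are already in hand as \pref{lem:propreties-of-permutations-on-FX} and \pref{cor:good-cocycle}. The one point that needs care is the parameter accounting: one must keep the cosystolic constant $\beta=\exp(-\alpha d_1/k)$ large compared with $\wt(\coboundary\psi)$, which is precisely why the suitability definition and \pref{thm:main} keep $\alpha$ a sufficiently small $\poly(\varepsilon)$; and one should fix the inverse conventions ($\pi_{x,y}=\pi_{y,x}^{-1}$) consistently so that the $\psi'$ chosen here is literally the cochain inducing, via \pref{def:induced-cover}, the cover $\nu_{\psi'}\colon\widetilde{\FX{d_1}X}\to\FX{d_1}X$ that the rest of this section goes on to use.
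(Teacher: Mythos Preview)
Your proposal is correct and follows essentially the same approach as the paper: the lemma is explicitly introduced in the text as a rephrasing of \pref{cor:good-cocycle}, and the paper's argument for that corollary is precisely the two-step combination of \pref{lem:propreties-of-permutations-on-FX} with the $\beta=\exp(-\alpha d_1/k)$ swap-cosystolic expansion that you describe. Your identification of the typos (``$\pi_{t_1,s_1}$'' for $\pi_{t,s_1}$ and ``$Z^1(X,Sym(\ell))$'' for $Z^1(\FX{d_1}X,Sym(\ell))$) is also correct.
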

Let us denote the \(\ell\)-cover that this cocycle induces by \(\rho:Y\to X\) such that we have an isomorphism $\iota:\FX{d_1}Y\to\widetilde{\FX{d_1}X}$. 
Recall that we defined, in Section \ref{sec:globalfunc},
functions \(h_{\tilde{s}}:\tilde{s} \to \Sigma\) by
\begin{equation}
    h_{\tilde{s}}((v,i))=L_s^j(\rho(v,i))=L_s^j(v)
\end{equation}
where \(\iota(\tilde{s})=(s,j)\). Using these functions we defined \(G:Y(0) \to \Sigma\) via \(G(v,i)=plurality \sett{ h_{\tilde{s}}((v,i))}{(v,i) \in \tilde{s}}\), i.e. the most popular assignment of \((v,i)\) from all the \(h_{\tilde{s}}\) where \(\tilde{s} \ni (v,i)\).

The final component we need for proving our main theorem is that most \(h_{\tilde{s}}\) agree with \(G\) on most vertices.
\restateclaim{claim:agreement-with-majority}

For the proof of the lemma, we need the following agreement theorem taken from \cite{DiksteinH2023}.
\begin{theorem}[{\cite[Claim 8.5]{DiksteinH2023}}]\label{thm:agreement-high-acceptance}
    There exists a universal constant \(c>0\) such that the following holds. Let \(Y\) be a \(\frac{1}{d_1^3}\)-one sided high dimensional expander. Let \(DU_n\) be the \((d_1,\frac{3}{2}d_1)\)-non-lazy-up-down walk in \(Y\) as in \pref{def:non-lazy-up-down}. Let \(\mathcal{H}=\set{h_{\tilde{s}}}_{\tilde{s} \in Y(d_1)}\) be an ensemble of functions such that
    \[\Prob[\tilde{s}_1,\tilde{s}_2 \sim DU_n]{h_{\tilde{s}_1}|_{\tilde{s}_1 \cap \tilde{s}_2} \overset{1-\eta}{\approx} h_{\tilde{s}_2}|_{\tilde{s}_1 \cap \tilde{s}_2} } \geq 1-\gamma,\]
    then 
    \[\Prob[\tilde{s} \in Y(d_1)]{G|_{\tilde{s}} \overset{1-8(\eta + \gamma)}{\approx} h_{\tilde{s}}} \geq 1-\frac{1}{d_1^c}-\gamma^c.\]
\end{theorem}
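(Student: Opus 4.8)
The plan is to prove this by \emph{plurality decoding}. Recall that $G$ is the plurality function of the ensemble, $G(v) = \mathrm{plurality}_{\tilde s\ni v}\, h_{\tilde s}(v)$, so what must be shown is that for all but a $\frac{1}{d_1^c}+\gamma^c$ fraction of $\tilde s\in Y(d_1)$, $h_{\tilde s}$ agrees with $G|_{\tilde s}$ outside an $8(\eta+\gamma)$ fraction of $\tilde s$. Write $w(v)=\Pr_{\tilde s\ni v}[h_{\tilde s}(v)=G(v)]=\max_\sigma \Pr_{\tilde s\ni v}[h_{\tilde s}(v)=\sigma]$ for the plurality weight at $v$, and $\beta(\tilde s)=\Pr_{v\in \tilde s}[h_{\tilde s}(v)\ne G(v)]$ for the bad fraction of $\tilde s$; the goal is to bound $\Pr_{\tilde s}[\beta(\tilde s)>8(\eta+\gamma)]$.

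\textbf{Step 1 (local agreement at a vertex).} First I would show $\mathbb E_v[1-w(v)]=O(\eta+\gamma)$, where $v$ is drawn from the induced vertex measure of $Y$. Conditioning $DU_n$ on a common vertex $v\in \tilde s_1\cap \tilde s_2$ and passing to the link $Y_v$ turns the walk into the $(d_1-1,\tfrac32 d_1-1)$ non-lazy up–down walk of $Y_v$, which by \pref{def:non-lazy-up-down} decomposes through a swap walk; since $Y$ (hence $Y_v$) is a $\tfrac{1}{d_1^3}$-one sided high dimensional expander, \pref{thm:eignevalues-of-walk} together with the swap-walk eigenvalue bound give that this conditioned walk has second eigenvalue at most an absolute constant $\rho_0<1$. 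Averaging the hypothesis over $v$ uniform in $\tilde s_1\cap\tilde s_2$ gives $\mathbb E_v\big[\Pr_{\text{corr.\ pair through }v}[h_{\tilde s_1}(v)=h_{\tilde s_2}(v)]\big]\ge 1-(\eta+\gamma)$; bounding the inner probability by expander mixing as $\le (1-\rho_0)\,\mathrm{CP}_v+\rho_0$, with $\mathrm{CP}_v=\sum_\sigma \Pr_{\tilde s\ni v}[h_{\tilde s}(v)=\sigma]^2\le w(v)$, yields $\mathbb E_v[w(v)]\ge 1-\tfrac{\eta+\gamma}{1-\rho_0}$. Because the marginal on $(\tilde s,v)$ of ``random $d_1$-face, random vertex in it'' agrees with the one coming from $DU_n$, this is exactly $\mathbb E_{\tilde s}[\beta(\tilde s)]=\mathbb E_v[1-w(v)]=O(\eta+\gamma)$.

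\textbf{Step 2 (upgrading to a per-face statement).} This is where the main work lies, since the average bound only yields a \emph{constant} fraction of good faces by Markov. The idea is that a face $\tilde s$ with $\beta(\tilde s)>8(\eta+\gamma)$ forces its $DU_n$-partner $\tilde s'$ to form a ``bad pair'' with it, i.e.\ $h_{\tilde s}$ and $h_{\tilde s'}$ disagree on more than an $\eta$ fraction of $\tilde s\cap \tilde s'$ — an event of probability $\le\gamma$ by hypothesis. For a bad coordinate $v$ of $\tilde s$ one has $\Pr_{\tilde s'\ni v}[h_{\tilde s'}(v)=h_{\tilde s}(v)]\le 1-w(v)$ (as $\mu_v(G(v))=w(v)$ and $h_{\tilde s}(v)\ne G(v)$), so a correlated $\tilde s'$ clashes with $\tilde s$ at $v$ with probability $\approx w(v)$; since $\tilde s\cap\tilde s'$ is (essentially) a uniform $\approx\tfrac12 d_1$-subset of $\tilde s$, a hypergeometric concentration shows that the fraction of $\tilde s\cap\tilde s'$ on which $\tilde s,\tilde s'$ clash is, except with probability $\exp(-\mathrm{poly}(\eta+\gamma)d_1)$, at least $\big(\text{avg }w\text{ over bad coords of }\tilde s\big)\cdot 8(\eta+\gamma)(1-o(1))>\eta$ — here is where the slack factor $8$ in the conclusion is spent. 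Hence $\gamma\ge\Pr[\text{bad pair}]\gtrsim \Pr[\beta(\tilde s)>8(\eta+\gamma)]\cdot(1-\text{errors})$, where the error terms absorb the walk-mixing discrepancy, the hypergeometric tail, and the ``atypical'' faces whose bad coordinates carry unusually small $w(v)$; the sampling statements \pref{thm:sampling-in-HDXs} and \pref{cor:good-sampler-from-expansion} are what control these. A short bootstrap — first deriving $\Pr[\beta(\tilde s)>8(\eta+\gamma)]\le\text{const}$ and then re-running the comparison against a partner known to be good with that probability — pushes the final bound down to $\tfrac{1}{d_1^c}+\gamma^c$.

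\textbf{Expected main obstacle.} Step~1 is a one-line expander-mixing computation once the conditioned walk is identified. The difficulty is entirely in Step~2: converting $\mathbb E[\beta]=O(\eta+\gamma)$ into the strong tail bound. The delicate points are (i) showing the faces with too many low–plurality-weight vertices are rare enough, which seems to require either a second-moment estimate beyond $\mathbb E_v[1-w(v)]$ or the bootstrapping above, and (ii) making each sampling approximation (vertex-in-face marginal, $\tilde s\cap\tilde s'$ as a near-uniform subset of $\tilde s$, $L^2$-mixing of the vertex-conditioned walk) precise enough that the accumulated error is genuinely $\mathrm{poly}(1/d_1)$-small — this is exactly why one needs $\lambda=1/d_1^3$ (far below $1/d_1$) and $d_1$ large.
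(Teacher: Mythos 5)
You should first note that the paper itself contains no proof of this statement: it is quoted as a black box from \cite{DiksteinH2023} (Claim 8.5) and used in \pref{sec:global-func}, so there is no internal argument to compare yours against; I can only judge the sketch on its own terms. Your Step 1 is the standard opening move and is essentially fine: conditioning the non-lazy up–down walk on a vertex $v$, identifying the conditioned walk inside the link $Y_v$ (whose non-trivial eigenvalue is an absolute constant $\approx \tfrac12$ plus $O(\mathrm{poly}(d_1)\lambda)$ terms, which is why $\lambda = 1/d_1^3$ suffices), and running an expander-mixing/collision argument does give $\E_v[1-w(v)] = O(\eta+\gamma)$, hence $\E_{\tilde s}[\beta(\tilde s)] = O(\eta+\gamma)$.

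The genuine gap is in Step 2, and it is not a matter of bookkeeping. Your key deduction — ``a face $\tilde s$ with $\beta(\tilde s) > 8(\eta+\gamma)$ forms an $\eta$-disagreeing pair with its $DU_n$-partner with probability bounded away from $0$'' — is unjustified as written. The bound $\Pr_{\tilde s'\ni v}[h_{\tilde s'}(v)=h_{\tilde s}(v)]\le 1-w(v)$ (or the cleaner $\le \min(w(v),1-w(v))\le \tfrac12$) refers to a \emph{fresh} face drawn from the global distribution conditioned on containing $v$; in your argument $\tilde s'$ is a walk-partner of the \emph{fixed} face $\tilde s$, conditioned on $v$ lying in the (fixed-size) intersection. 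In a sparse complex these two distributions can be entirely different: all faces in the one-step neighbourhood of $\tilde s$ may agree with $h_{\tilde s}$ at $v$ while the global plurality at $v$ is another symbol, so no bad pair is produced at all. Expansion lets you compare neighbourhood statistics with global statistics only \emph{on average over $\tilde s$}, and averaging sends you back to Markov, i.e.\ to a constant bound on $\Pr[\beta(\tilde s)>8(\eta+\gamma)]$ rather than $\tfrac{1}{d_1^c}+\gamma^c$. A second, related defect: even granting a per-vertex disagreement probability, the disagreement events at the different bad vertices of $\tilde s\cap\tilde s'$ are all driven by the single draw of $\tilde s'$, so a hypergeometric tail over $v$ controls only the number of bad-for-$G$ coordinates landing in the intersection, not the concentration of the pairwise disagreement fraction; that needs a second-moment or link-level argument which again is only available in averaged form. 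Your proposed bootstrap (compare a bad face against a partner already known to be good with constant probability) is the right kind of repair — if $\tilde s'$ is good and the intersection samples both faces' bad sets, the pair disagrees on roughly $\beta(\tilde s)-O(\eta+\gamma)$ of the intersection — but as stated it proves a tail bound only at a threshold $C(\eta+\gamma)$ for a constant $C$ substantially larger than $8$, it still presupposes the constant-probability goodness of the partner (which is Markov from Step 1, so the constants must be tracked), and it is exactly the part you have not written down. In short, the plurality-decoding skeleton is the plausible one, but the step that carries the entire difficulty of the cited claim is the one your sketch leaves open.
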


\begin{proof}[Proof of \pref{claim:agreement-with-majority}]
\begin{figure}
    \centering
    \includegraphics[scale=0.6]{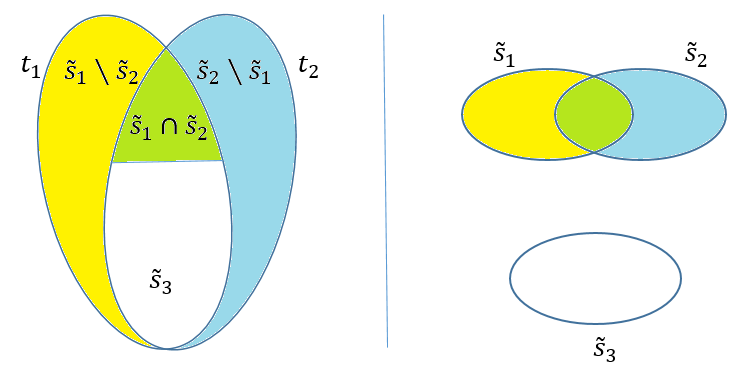}
    \caption{The sets sampled by \(P\). Note that \(\tilde{s_1}\) is the yellow and green parts, \(\tilde{s_1}\) is the blue and green parts, and the \(\tilde{t_i}\)'s are the \(s_i\)'s together with \(\tilde{s_3}\) (the white part which is disjoint from \(\tilde{s_1}\) and \(\tilde{s_2}\)).}
    \label{fig:dist-P}
\end{figure}
    We want to use \pref{thm:agreement-high-acceptance}, and thus we need to show that
    \[\Prob[\tilde{s}_1,\tilde{s}_2 \sim DU_n]{h_{\tilde{s}_1}|_{\tilde{s}_1 \cap \tilde{s}_2} \overset{1-O(\eta)}{\approx} h_{\tilde{s}_2}|_{\tilde{s}_1 \cap \tilde{s}_2} } \geq 1-\exp(-\Omega(d_1/k)),\]

    Let us define the distribution \((\tilde{s}_1,\tilde{s}_2,\tilde{s}_3) \sim P\) where:
    \begin{enumerate}
        \item \(\tilde{s_1}\) and \(\tilde{s_2}\) are sampled via the non-lazy-up-down-walk. That is, \(\tilde{s_1} \cup \tilde{s}_2 \in X(\frac{3}{2}d_1)\).
        \item \(\tilde{s}_3 \dunion (\tilde{s_1} \cup \tilde{s}_2) \in X\), i.e. \(\tilde{s}_3\) is sampled via a swap walk step from \((\tilde{s_1} \cup \tilde{s}_2)\).
    \end{enumerate}
    Let us denote by \(\tilde{t_i} = \tilde{s_i} \dunion \tilde{s_3}\) for \(i=1,2\), and let us denote \(\tilde{r} = \tilde{s}_1 \cap \tilde{s}_2\). See \pref{fig:dist-P} for an illustration of the sets sampled. We also use the convention that \(\rho(\tilde{s}_i) = s_i, \rho(\tilde{t_i})=t_i, \rho(\tilde{r})=r\) (hopefully, it will also be clear from context which faces belong to \(Y\) and which belong to \(X\)).
    
    Let \(j_i\) be the index that has that \(L_{s_i}^{j_i}\circ \rho = h_{\tilde{s_i}}\) for \(i=1,2,3\).  and let \(j_i' = \pi_{t_i,s_i}(j_i)\). We define the following two ``bad'' events
    \begin{enumerate}
        \item \(E = \set{L_{s_i}^{j_i} \overset{1-\eta}{\napprox} L_{t_i}^{j_i'}|_{s_i} \text{ or } L_{s_3}^{j_3} \overset{1-\eta}{\napprox} L_{t_i}^{j_i'}|_{s_3} \text{ for } i=1 \text{ or } i=2}\).
        \item \(F = \set{L_{t_1}^{j_1'}|_{s_3} \overset{1-2\eta}{\approx} L_{t_2}^{j_2'}|_{s_3} \ve L_{t_1}^{j_1'}|_{r} \overset{1-9\eta}{\napprox} L_{t_2}^{j_2'}|_{r}}\).
    \end{enumerate}
    Let us spell out these events. The event \(E\) considers the lists of \(s_1,s_2,s_3\) and \(t_1,t_2\). The first statement in the event, i.e. that \(L_{s_i}^{j_i} \overset{1-\eta}{\napprox} L_{t_i}^{j_i'}|_{s_i}\), roughly states that at least one of the \(i=1,2\), the lists of \(s_i\) and \(t_i\) are not compatible (at a specific index \(j_i\)). The second statement, namely that \(L_{s_3}^{j_3} \overset{1-\eta}{\napprox} L_{t_i}^{j_i'}|_{s_3}\) is the event that for at least one of the \(i=1,2\), the entry of \(j_3\) in \(s_3\)'s list, is not compatible with \(j_i'=\pi_{t,s_i}(j_i)\). We will see that with very high probability \(j_3 = \pi_{s_3,t}(j_i')\) so this part also essentially says that the one of the lists of \(t_i\) and \(s_3\) are not compatible with one another.
    
    The event \(F\) says that ``even though'' \(L_{t_1}^{j_1'}\) and \( L_{t_2}^{j_2'}\) agree on most vertices of \(s_3\), they do not agree on most vertices of \(r\), the other part of their intersection.

    We claim that both events happen with only small probability (and prove this later on in this section):
    \begin{claim} \label{claim:E-is-small}~ 
        \(\prob{E} \leq \exp(-\Omega(d_1/k))\).
    \end{claim}
    \begin{claim} \label{claim:F-is-small}~ 
        \(\prob{F} \leq \exp(-\Omega(d_1))\).
    \end{claim}
    Let us explain why
    \begin{equation}
        \prob{h_{\tilde{s_1}} \overset{1-13\eta}{\napprox} h_{\tilde{s_2}}} \leq \prob{E} + \prob{F} = \exp(-\Omega(d_1/k)),
    \end{equation}
    or in other words, why when \(\neg E \ve \neg F\) occur, then \(h_{\tilde{s_1}} \overset{1-13\eta}{\approx} h_{\tilde{s_2}}\).
    
    Indeed, if \(E\) doesn't occur then for \(i=1,2\), \(L_{s_i}^{j_i} \overset{1-\eta}{\approx} L_{t_i}^{j_i'}|_{s_i}\),  and thus it holds that \(L_{s_i}^{j_i}|_r \overset{1-2\eta}{\approx} L_{t_i}^{j_i'}|_r\) (since \(r\) is at \(\frac{1}{2}\) the size of \(\tilde{s_i}\) for both \(i=1,2\)). For similar reasons, \(L_{t_1}^{j_1'}|_{s_3} \overset{1-\eta}{\approx} L_{s_3}^{j_3} \overset{1-\eta}{\approx} L_{t_2}^{j_2'}|_{s_3}\) and hence \(L_{t_1}^{j_1'}|_{s_3} \overset{1-2\eta}{\approx} L_{t_2}^{j_2'}|_{s_3}\). By \(\neg F\) this implies that \(L_{t_1}^{j_1'} |_r \overset{1-9\eta}{\approx} L_{t_2}^{j_2'}|_r\). In total, it holds that
    \[h_{\tilde{s_1}}|_{\tilde{r}} = (L_{s_1}^{j_1} \circ \rho)|_{\tilde{r}} \overset{1-2\eta}{\approx} (L_{t_1}^{j_1'} \circ \rho)|_{\tilde{r}} \overset{1-9\eta}{\approx} (L_{t_2}^{j_2'} \circ \rho)|_{\tilde{r}} \overset{1-2\eta}{\approx} L_{s_1}^{j_2'} \circ \rho|_{\tilde{r}} = h_{\tilde{s_2}}|_{\tilde{r}},\]
    so \(h_{\tilde{s_1}}|_{\tilde{r}} \overset{1-13\eta}{\approx} h_{\tilde{s_2}}|_{\tilde{r}}\). The claim follows from \pref{thm:agreement-high-acceptance}.
\end{proof}

\begin{proof}[Proof of \pref{claim:E-is-small}]
Let us begin with bounding the probability that \(L_{s_i}^{j_i} \overset{1-\eta}{\napprox} L_{t_i}^{j_i'}|_{s_i}\) for some \(i=1,2\). We note that \(s_i,t_i\) are chosen according to the usual joint distribution of \((s\subset t)\) in \(X\). Thus the probability that \(\prob{L_{s_i}^{j_i} \overset{1-\eta}{\napprox} L_{t_i}^{j_i'}}\) for both \(i=1,2\) is at most
\[1-\Prob[t\in X(\frac{3}{2}d_1),s\subset t, s\in X(d_1)]{\forall j=1,2,...,\ell, \; L_s^j \overset{1-\eta}{\approx} L_t^{\pi_{t,s}(j)}|_s} = \exp(-\Omega(d_1/k))\]
by \pref{lem:from-1-percent-to-list-agreement}. When this occurs then \(L_{s_i}^{x} \overset{1-\eta}{\approx} L_{t_i}^{\pi_{t_i,s_i}(x)}|_{s_i}\) for all indexes \(x\), and in particular for \(j_i\).

We move towards bounding the probability that \(L_{s_3}^{j_3} \overset{1-\eta}{\napprox} L_{t_i}^{j_i'}|_{s_3}\) for some \(i=1,2\). The pair \(s_3,t_i\) is also distributed according to the joint distribution of \((s\subset t)\) in \(X\). Hence it holds that \(\prob{L_{s_3}^{j_3} \overset{1-\eta}{\napprox} L_{t_i}^{\pi_{t_i,s_3}(j_3)}} = \exp(-\Omega(d_1/k))\).

Moreover, by \pref{lem:good-exact-cover}, with probability \(1- \exp(-\Omega(d_1/k))\) it holds that \(\psi(s_i, s_3)=\pi_{t_i,\tilde{s_3}}^{-1} \circ \pi_{t_i,\tilde{s_i}}\) (where \(\psi\) is the cocycle that defined the cover \(\rho\)). Recall that \(\iota(\tilde{s_i})=(s_i,j_i)\) where \(\iota:\FX{d_1}Y \to \widetilde{\FX{d_1}X}\) and \(\tilde{s_i},\tilde{s_3} \in \FX{d_1}Y(0)\) are neighbors in \(\FX{d_1}Y\) which implies that for \(i=1,2\), \((s_i,j_i) \sim (s_3,\psi(s_i,s_3) . j_i)\) are neighbors in \(\widetilde{\FX{d_1}X}\). Thus with probability \(1- \exp(-\Omega(d_1/k))\), 
\[j_3 = \pi_{t_i,s_3}^{-1} \circ \pi_{t_i,s_i}(j_i)\]
for \(i=1,2\). When this occurs, then
\[\pi_{t_i,s_3}(j_3) = \pi_{t_i,s_i}(j_i)=j_i',\]
and as above \(\prob{L_{s_3}^{j_3} \overset{1-\eta}{\napprox} L_{t_i}^{j_i'}|_{s_3}} \leq \exp(-\Omega(d_1/k))\).

\end{proof}

The proof of \pref{claim:F-is-small} is just by basic probabalistic arguments, and doesn't rely on any of the machinery we developed.
\begin{proof}[Proof of \pref{claim:F-is-small}]
    Fix any \(t_1,t_2\), and denote \(t' = t_1 \cap t_2\). Fix any \(f=L_{t_1}^{j_1'}|_{t'}, g=L_{t_2}^{j_2'}|_{t'}:t' \to \Sigma\). The choice of \(r, s_3 \subseteq t'\) is uniform among all pairs such that \(r \dunion s_3 = t'\). We will show that
    \[\Prob[\tilde{s_3},r]{f|_{s_3} \overset{1-2\eta}{\approx} g|_{s_3} \ve f|_{r} \overset{1-9\eta}{\napprox} g|_{r}} \leq \exp(-\Omega(\eta^2 |s|))\]
    which proves the claim since \(|s| =\Omega(d_1)\) and \(\eta\) is a constant.
    
    Indeed, if \(f|_{r} \overset{1-9\eta}{\napprox} g|_{r}\), then \(f \overset{1-3\eta}{\napprox} g\) (since \(r\) is a third of the size of \(t'\)). In this case, the probability of choosing \(s_3 \subseteq t'\) of size greater or equal to half of \(t'\) such that \(L_{t_1}^{j_1'}|_{s_3} \overset{1-2\eta}{\approx} L_{t_2}^{j_2'}|_{s_3}\) is at most \(\exp(-\Omega(\eta^2 d_1))\) by a standard Chernoff bound.
\end{proof}

\section{Covers and Counterexamples}
\subsection{Proof of \pref{lem:cex}} \label{sec:proof-of-counterexample}
In this section we elaborate on the counterexample we presented in the introduction. The setup is the following. \(X\) is a simplicial complex that admits a connected \(2\)-cover \(\rho:Y \to X\). We also assume that \(X\) is a \(\lambda=2^{-7k-1}\) two-sided high dimensional expander. Such complexes were constructed in \cite{LubotzkySV2005b} (in the referenced paper they only give a bound on one sided expansion, but see also \cite{DinurK2017} for obtaining two-sided expansion). Let \((r,r') \sim \mathcal{D}\) be any distribution over \(k\)-faces of \(X\) such that the marginal of \(\mathcal{D}\) is the distribution over \(k\)-faces in \(X\). 

\restatelemma{lem:cex}

\begin{proof}[Proof of \pref{lem:cex}]
Without loss of generality we denote \(Y(0)=\sett{(v,0),(v,1)}{v \in X(0)}\), the covering map is \(\rho(v,i)=v\).

By \pref{claim:inverse-image-of-l-cover} every face \(r \in X(k)\) has exactly two preimages under $\rho$, which we denote \(\tilde{r}_1,\tilde{r}_2 \in Y(k)\).

We will sample a random ensemble \(\mathcal{F}\), and show that at least one random ensemble has agreement at least \(\frac{1}{2}\) by showing that \(\Ex[\mathcal{F}]{\agr_{\mathcal{D}}(\mathcal{F})} \geq \frac{1}{2}\).

The randomized ensemble is constructed as follows. For every \(r \in X(k)\) we select one of the two preimage \(\tilde{r} \in Y(k)\) of \(r\), independently and uniformly at random. Then for every \(v \in r\), 
\[f_r(v) = \begin{cases} 0 & (v,0) \in \tilde{r} \\ 1 & (v,1) \in \tilde{r} \end{cases}.\]
In the introduction we stated that \(f_r\) comes from a function \(H|_{\tilde{r}}\). This function is \(H:Y(0) \to \set{0,1}\), \(H(v,i)=i\). 
Let us verify that \(\Ex[\mathcal{F}]{\agr_{\mathcal{D}}(\mathcal{F})} \geq \frac{1}{2}\). The expectation is over the choice of preimages \(\tilde{r}\) for every \(r \in X(k)\).
\[\Ex[\mathcal{F}]{\agr_{\mathcal{D}}(\mathcal{F})} = \Ex[\mathcal{F}]{\Ex[r,r' \sim D]{\one_{f_r = f_{r'}}}} = \Ex[r,r' \sim D]{\Ex[\mathcal{F}]{\one_{f_r = f_{r'}}}}.\]
Hence it is enough to show that for every \(r,r' \sim D\), \(\Ex[\mathcal{F}]{\one_{f_r = f_{r'}}} \geq \frac{1}{2}\), or equivalently that \(\Prob[\mathcal{F}]{f_r = f_{r'}} \geq \frac{1}{2}\).

Fix \(r,r' \sim D\) an edge in the agreement distribution. If \(r,r'\) have an empty intersection or \(r=r'\) this is trivial, so let us assume this is not the case.
Let \(\tilde{r}, \tilde{r}\) be the preimages of \(r,r'\) randomly chosen, respectively. Let \(v \in r \cap r'\) and suppose without loss of generality that \((v,0) \in \tilde{r}\) be the preimage of \(v\) inside the preimage of \(r\). As \(\tilde{r}\) is chosen independently, with probability \(\frac{1}{2}\) it holds that \((v,0) \in \tilde{r}\). In this case, because \(\rho|_{X_{(v,0) }}:X_{(v,0) }(0) \to X_{v}(0)\) is an isomorphism it holds that for every \(u \in r \cap r'\), there is a preimage \((u,i)\) such that \((u,i) \in \tilde{r} \cap \tilde{r}'\). In particular \(f_r(u)=f_{r'}(u)\) by definition. Thus indeed \(\Prob[\mathcal{F}]{f_r = f_{r'}} \geq \frac{1}{2}\).

Thus we take \(\mathcal{F}\) be any such assignment where \(\agr_{\mathcal{D}}(\mathcal{F}) \geq \frac{1}{2}\) (which exists since the expectation over the agreement is at least \(\frac{1}{2}\)). The first item in the lemma holds for this ensemble.

\medskip

We conclude by showing that for \emph{any} ensemble constructed as above, and any \(G:X(0) \to \set{0,1}\) it holds that 
\[\Prob[r \in X(k)]{f_r \overset{1-\zeta}{\approx} G|_r} =\exp(-\Omega_\zeta(k)).\]

Fix a global function \(G:X(0) \to \set{0,1}\). Recall that \(H:Y \to \set{0,1}\) is \(H(v,i)=i\). Let \(\tilde{G}:Y(0) \to \set{0,1}\) be \(\tilde{G}((v,i))=G(v)\). Note that if \(G|_r \overset{1-\zeta}{\approx} f_r\) and \(f_r\) answers according to the preimage \(\tilde{r}\), then \(H|_{\tilde{r}} \overset{1-\zeta}{\approx} \tilde{G}|_{\tilde{r}}\). This is because if \(G(v)=f_r(v)=j\) and \(f_r\) answers according to \(\tilde{r}\), this means that \((v,j) \in \tilde{r}\), and hence \(\tilde{G}(v,j)=G(v)=f_r(v)=H(v,j)\). We conclude that
\begin{equation} \label{eq:counterexample-equivalence-inequality}
    \Prob[r \in X(k)]{G|_r \overset{1-\zeta}{\approx} f_r} \leq 2\Prob[\tilde{r} \in Y(k)]{\tilde{G}|_{\tilde{r}} \overset{1-\zeta}{\approx} H|_{\tilde{r}}},
\end{equation}
since the choice of \(\tilde{r} \in Y(k)\) is done by first chosing \(r \in X(k)\) and then chosing one preimage uniformly at random. Thus we will argue that \(\Prob[\tilde{r} \in Y(k)]{\tilde{G}|_{\tilde{r}} \overset{1-\zeta}{\approx} H|_{\tilde{r}}} = \exp(-\Omega_\zeta(k))\).

We observe that \(\dist(H,\tilde{G})=\frac{1}{2}\) because for every \(v \in X(0)\), \(H,\tilde{G}\) agree on exactly one of \((v,0),(v,1)\). By \pref{claim:expansion-of-cover} it holds that \(Y\) is also a \(2^{-7k}\)-two sided spectral expander. 

Let \(A = \sett{(v,i) \in Y(0)}{H(v,i) \ne \tilde{G}(v,i)}\). Then \(\prob{A} = \dist(H,\tilde{G})=\frac{1}{2}\), and a face \(\tilde{r}\) is such that \(\tilde{G}|_{\tilde{r}} \overset{1-\zeta}{\approx} H|_{\tilde{r}}\) if and only if \(\cProb{(v,i)\in Y(0)}{A}{(v,i) \in \tilde{r}} < \zeta\).

Thus by \pref{thm:sampling-in-HDXs}, it holds that 
\[\Prob[\tilde{r} \in Y(k)]{\tilde{G}|_{\tilde{r}} \overset{1-\zeta}{\approx} H|_{\tilde{s}} } = \exp(-\Omega(\zeta^2 k))\]
and by \eqref{eq:counterexample-equivalence-inequality}, 
\[\Prob[r \in X(k)]{G|_r \overset{1-\zeta}{\approx} g_r} = \exp(-\Omega_\zeta( k )).\]    
\end{proof}

\begin{remark}
A similar argument can generalized for complexes \(X\) with connected \(\ell\)-covers and an alphabet of \(\Sigma = [\ell]\). We can also consider test distributions that samples \(q\) \(k\)-sets instead of \(2\), provided that every \(\set{r_i} \subseteq \supp \mathcal{D}\) are contained in a simply connected sub complex of \(X\).

In this case we can show that there exists an ensemble with agreement at least \(\frac{1}{\ell^{q-1}}\), but 
\[\Prob[r \in X(k)]{G|_r \overset{1-\zeta}{\approx} g_r} = \exp(-\Omega_\zeta( k ))\]  
will hold for any \(G:X(0) \to \Sigma\) and any \(\zeta < \frac{\ell-1}{\ell}\).
\end{remark}

\subsection[Gap amplification fails below 1/2]{Gap amplification fails below $1/2$} \label{sec:bog}
In \cite{Bogdanov-comment05}, Bogdanov addresses the question of whether the soundness of a PCP can be reduced to any $\epsilon>0$ by a step of gap amplification a la \cite{Dinur07}. The starting point is a constraint graph $\calG = (V,E,\set{\Sigma_v},\set{\pi_{uv}})$ where $V$ is a set of vertices, $E$ a set of edges, $\Sigma_v$ a finite set of labels to be assigned to $v$ and $\pi_{uv} \subset \Sigma_u\times\Sigma_v$ a constraint specifying which pairs of labels are considered satisfying.
The value of $\calG$ is the maximal fraction of satisfiable constraints, namely 
\[ val(\calG)=\max_{H:V\to\bits} \Pr_{uv\in E}[(H(u),H(v))\in \pi_{uv}].\] 
Our initial assumption is that it is NP-hard to decide if $val(\calG)=1$ or $val(\calG)<1-\eps_0$.

The gap amplification, or ``powering'' step (with parameter $\ell\in\mathbb{N}$), constructs a family $X$ of subsets of $V$ corresponding to balls in $\calG$ as follows. For each $v\in V$, we define $s_v = \sett{u\in V}{\dist(u,v) \leq \ell}$. We let $X = \sett{s_v}{v\in V}$. Furthermore, for a set $s_v$ we will only allow local functions $f_{s_v}:s_v\to\bits$ that satisfy all $\calG$ constraints inside $s_v$. Formally, let
\begin{equation}\label{eq:valid-assignments}
    \Sigma_v = \sett{f_v:s_v\to\bits}{f_v\hbox{ satisfies all $\calG$ constraints in }s_v}.
\end{equation}
A valid ensemble of local functions on $X$ is $\sett{f_{s_v}:s_v\to\bits}{f_{s_v}\in \Sigma_v}$. An agreement test for $X$ specifies a distribution over pairs of sets in $X$, but we do not specify the precise distribution because the counterexample works for any distribution. 
\begin{itemize}
    \item The completeness of this transformation is that whenever there is an assignment satisfying all of the constraints of $\calG$ then there is a valid ensemble of local functions $\set{f_s}$ with $\agr(\set{f_s})=1$. 
    \item The soundness of this transformation is that whenever there is a valid ensemble $\set{f_s}$ with $\agr(\set{f_s})>\epsilon$ there must be some global $H:V\to\bits$ that satisfies more than $1-\eps_0$ of the constraints of $\calG$, namely, $val(\calG)>1-\eps_0$ which means, by our assumption, that $val(\calG)=1$. 
\end{itemize}
The ``soundness'' of the gap amplification transformation is the value of $\eps$ for which we can guarantee that 
\[ \agr(\set{f_s})>\eps \quad\Longrightarrow \quad val(\calG)>1-\eps_0.\]

Contrapositively, a counterexample to gap amplification is an ensemble that has high agreement yet $val(\calG)<1-\eps_0$. 

In Bogdanov's example, the graph $\calG$ is chosen to be a high-girth expander graph, the alphabet is $\Sigma_v=\bits$ for all $v\in V$, and all edges carry inequality constraints, namely $\pi_{uv}=\set{(0,1),(1,0)}$ for all $uv\in E$. The expansion asserts that $val(\calG) \approx 1/2$ since every cut in the graph violates about half of the edges or more. 

Looking at valid ensembles, we observe that the set $\Sigma_v$ of possible local functions for a given $s_v$ consists of only two assignments, per \eqref{eq:valid-assignments}. One where $v\in s_v$ is given $0$ and one where $v\in s_v$ is given $1$. The unique constraints along the edges propagate the values uniquely from $v$ to the rest of $s_v$. Bogdanov chooses an ensemble at random by choosing, for each $s_v$ one of the two available local functions independently at random. Clearly, for each edge separately,  there is probability $1/2$ of choosing a matching pair, so the overall local agreement is $\approx 1/2$, whereas, with high probability there is no global assignment agreeing with even an $\epsilon$ faction of labels, assuming $\calG$ is a good expander.\\

Now let us recast this in terms of covers (and then generalize).
We first construct the double cover $\calG'$ of $\calG$. Let $\calG'$ be a bipartite graph with vertices $V'=V\times\bits$ and edges $E' = \sett{\set{(u,0),(v,1)} }{uv\in E}$ (and still putting inequality constraints on all edges). 

Observe that this is a $2$-cover a la \pref{def:cover}. Moreover, since $\calG'$ is bipartite, we actually have $val(\calG)=1$. This is because all constraints are satisfied by the global assignment $H:V'\to\bits$ defined by $H(v,b)=b$.  

Now we follow the recipe of the proof of \pref{lem:cex} (see Figure). 
\begin{center}
\begin{tikzcd}[arrows=rightarrow]
Y  \ar[d,"\rho"] \quad & \quad \calG'\ar[l,"powering"]\ar[d,"\rho"] \\
		 X \quad & \quad\calG\ar[l,"powering"]  
\end{tikzcd}
\end{center}
Assuming that the girth of $\calG$ is larger than $\ell$, the family $Y = \sett{\tilde s_{(v,b)}}{(v,b)\in V'}$ is a $2$-cover of the family $X$ in that there is a $2$-to-$1$ map $\rho:V'\to V$ such that the image of every $\tilde s\in Y$ is some $s\in X$, and the preimage of each $s\in X$ is two disjoint sets $\tilde s_1,\tilde s_2\in Y$\footnote{This is also formally a cover, only that \(X,Y\) are cell complexes, not simplicial complexes.}. Specifically $\rho(\tilde s_{(v,b)}) = s_v$, and $\rho^{-1}(s_v) = \set{\tilde s_{(v,0)},\tilde s_{(v,1)}}$.
From the global assignment $H:V'\to\bits$ we can construct a perfectly agreeing ensemble (on $Y$) by setting $ f_{\tilde s_{(v,b)}} = H|_{\tilde s_{(v,b)}}$.

We choose, as in the proof of \pref{lem:cex}, for each $s_v\in X$ an assignment
let \[f_{s_v}  =
\begin{cases}
			f_{\tilde s_{(v,0)}}, & \text{with probability 1/2}\\
                f_{\tilde s_{(v,1)}}, & \text{with probability 1/2}
		 \end{cases}
\;.   \]
Finally, observe that this is exactly the same assignment as in the direct description of Bogdanov's counterexample.
\subsubsection*{Discussion}
One possible conclusion of this section is that the failure of gap amplification to go below $1/2$ is due to the graph-based nature of the family $X$ which allows existence of connected covers with perfect assignments. For a better gap amplification we need to start with PCP instances $\calG$ that are free of such covers. 

An additional observation is that Bogdanov's example generalizes to any unique games instance $\calG$, where the size of the alphabet $t=\card \Sigma$ dictates the size of the $t$-to-$1$ cover. Even for $\calG$ whose constraints are not unique, one can easily sparsify them into unique constraints and continue the construction as before. So without revising the powering construction itself, no choice of $\calG$ can help. 
\printbibliography

\end{document}